\documentclass[a4paper,UKenglish,cleveref, autoref]{lipics-v2019}
\title{On Computing the Hamiltonian Index of Graphs} 
\pdfoutput=1
\nolinenumbers

\usepackage[absolute]{textpos}

\author{Geevarghese Philip}{Chennai Mathematical Institute, Chennai, India and
  UMI ReLaX}{gphilip@cmi.ac.in}{https://orcid.org/0000-0003-0717-7303}{%
  The European Research Council (ERC) under the European Union's Horizon 2020
  research and innovation programme (grant agreement No 819416), and the
  Norwegian Research Council via grants MULTIVAL and CLASSIS.\\
}

\author{Rani M. R.}{National Institute of Technology, Calicut, India}{rani\_p150067cs@nitc.ac.in}{https://orcid.org/0000-0002-4918-8150}{}

\author{Subashini R.}{National Institute of Technology, Calicut, India}{suba@nitc.ac.in}{https://orcid.org/0000-0002-9724-3484}{}

\authorrunning{G. Philip, Rani M. R, and Subashini R.}

\Copyright{Geevarghese Philip, Rani M. R and Subashini R.}

\ccsdesc[100]{Parameterized complexity and exact algorithms~Fixed parameter tractability}
\ccsdesc[100]{Graph theory~Paths and connectivity problems}
\ccsdesc[100]{Graph theory~Graph algorithms}
\ccsdesc[100]{Design and analysis of algorithms~Graph algorithms analysis}

\keywords{Hamiltonian Index, Supereulerian Graphs, Iterated Line Graphs,
  Parameterized Complexity, Fixed-Parameter Tractability, Eulerian Steiner
  Subgraphs, Spanning Eulerian Subgraphs, Treewidth}



\usepackage{xspace}

\usepackage[T1]{fontenc}
\usepackage[OT1]{eulervm}
\usepackage{mathtools}
\usepackage{amssymb}
\usepackage{cite}

\usepackage{todonotes}


\newaliascnt{observation}{theorem}
\newtheorem{observation}[observation]{Observation}
\aliascntresetthe{observation}

\theoremstyle{plain}
\newtheorem*{theorem*}{Theorem}
\newaliascnt{fact}{theorem}

\aliascntresetthe{fact}

\newcommand{\name}[1]{\textsc{#1}}

\newcommand{\upcite}[1]{\textup{\textrm{\cite{#1}}}}

\newcommand{\Oh}[1]{\ensuremath{\mathcal{O}(#1)}}
\newcommand{\OhStar}[1]{\ensuremath{\mathcal{O}^{\star}(#1)}\xspace}

\newcommand{\yes}{\textbf{yes}\xspace}
\newcommand{\no}{\textbf{no}\xspace}

\newcommand{\HI}{\name{Hamiltonian Index}\xspace}
\newcommand{\HIs}{\name{HI}\xspace} 
\newcommand{\ST}{\name{Steiner Tree}\xspace}
\newcommand{\SES}{\name{Spanning Eulerian Subgraph}\xspace}
\newcommand{\SESs}{\name{SES}\xspace} 
\newcommand{\ESS}{\name{Eulerian Steiner Subgraph}\xspace}
\newcommand{\ESSs}{\name{ESS}\xspace} 
\newcommand{\DES}{\name{Dominating Eulerian Subgraph}\xspace}
\newcommand{\DESs}{\name{DES}\xspace} 
\newcommand{\EHC}{\name{Edge Hamiltonian Cycle}\xspace}
\newcommand{\EHCs}{\name{EHC}\xspace} 
\newcommand{\EHP}{\name{Edge Hamiltonian Path}\xspace}
\newcommand{\EHPs}{\name{EHP}\xspace} 

\newcommand{\FPT}{\ensuremath{\mathsf{FPT}}\xspace}

\newcommand{\NPH}{\ensuremath{\mathsf{NP}}-hard\xspace}
\newcommand{\NPC}{\ensuremath{\mathsf{NP}}-complete\xspace}

\newcommand{\ILG}[2]{\ensuremath{L^{#1}(#2)}\xspace} 

\newcommand{\vstar}{\ensuremath{v^{\star}}\xspace}

\newcommand{\TT}{\ensuremath{\mathcal{T}}\xspace}

\newcommand{\LL}{\ensuremath{\mathcal{L}}\xspace}

\newcommand{\CC}{\ensuremath{\mathcal{C}}\xspace}

\newcommand{\PP}{\ensuremath{\mathcal{P}}\xspace}

\newcommand{\calA}{\ensuremath{\mathcal{A}}\xspace}
\newcommand{\calB}{\ensuremath{\mathcal{B}}\xspace}


\newcommand{\defparproblem}[4]{
  \vspace{1mm}
\noindent\fbox{
  \begin{minipage}{0.96\textwidth}
  \begin{tabular*}{\textwidth}{@{\extracolsep{\fill}}lr} \textsc{#1}  & {\bf{Parameter:}} #3
\\ \end{tabular*}
  {\bf{Input:}} #2  \\
  {\bf{Question:}} #4
  \end{minipage}
  }
  \vspace{1mm}
}


\addto\extrasUKenglish{}


\begin{document}

\maketitle
\begin{textblock}{20}(15.3, 13.5)
\includegraphics[scale=0.20]{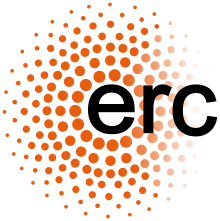}
\end{textblock}
\begin{textblock}{20}(15.3, 12.7)
\includegraphics[scale=0.05]{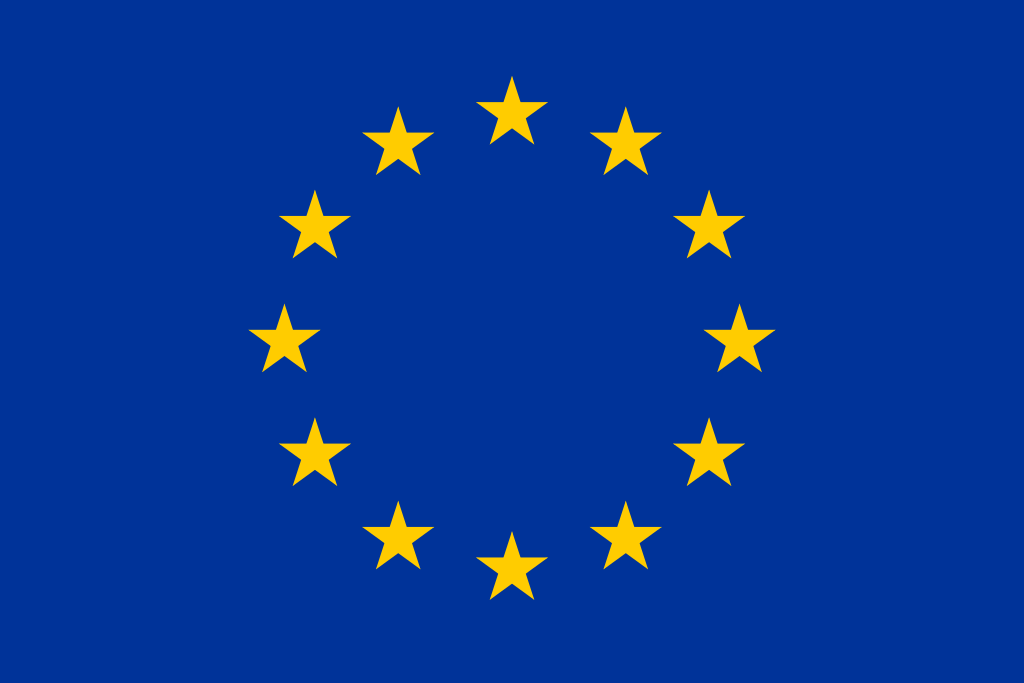}
\end{textblock}
\begin{abstract}
  For an integer \(r \geq 0\) the \(\mathit{r}\)\emph{-th iterated line graph}
  \ILG{r}{G} of a graph \(G\) is defined by: (i) \(\ILG{0}{G} = G\) and (ii)
  \(\ILG{r}{G} = L(\ILG{(r- 1)}{G})\) for \(r > 0\), where \(L(G)\) denotes the
  line graph of \(G\). The \emph{Hamiltonian Index} \(h(G)\) of \(G\) is the
  smallest \(r\) such that \ILG{r}{G} has a Hamiltonian cycle [Chartrand, 1968].
  Checking if \(h(G) = k\) is \NPH for any fixed integer \(k \geq 0\) even for
  subcubic graphs \(G\) [Ryj\'{a}\v{c}ek et al., 2011]. We study the
  parameterized complexity of this problem with the parameter treewidth,
  \(tw(G)\), and show that we can find \(h(G)\) in time\footnote{The \OhStar{}
    notation hides polynomial factors in input size.}
  \(\OhStar{(1 + 2^{(\omega + 3)})^{tw(G)}}\) where \(\omega\) is the matrix
  multiplication exponent. Prior work on computing \(h(G)\) includes various
  \(\OhStar{2^{\Oh{tw(G)}}}\)-time algorithms for checking if \(h(G) = 0\)
  holds; i.e., whether \(G\) has a Hamiltonian Cycle~[Cygan et al., FOCS 2011;
  Bodlaender et al., Inform. Comput., 2015; Fomin et al., JACM 2016]; an
  \(\OhStar{tw(G)^{\Oh{tw(G)}}}\)-time algorithm for checking if \(h(G) = 1\)
  holds; i.e., whether \(L(G)\) has a Hamiltonian Cycle~[Lampis et al., Discrete
  Appl. Math., 2017]; and, most recently, an
  \(\OhStar{(1 + 2^{(\omega + 3)})^{tw(G)}}\)-time algorithm for checking if
  \(h(G) = 1\) holds~[Misra et al., CSR 2019]. Our algorithm for computing
  \(h(G)\) generalizes these results.
  
  The \NPH \ESS problem takes as input a graph \(G\) and a specified subset
  \(K\) of \emph{terminal} vertices of \(G\) and asks if \(G\) has an
  Eulerian\footnote{That is: connected, and with all vertices of even degree.}
  subgraph \(H\) containing all the terminals. A key ingredient of our
  algorithm for finding \(h(G)\) is an algorithm which solves \ESS in
  \(\OhStar{(1 + 2^{(\omega + 3)})^{tw(G)}}\) time. To the best of our knowledge
  this is the first \FPT algorithm for \ESS. Prior work on the special case of
  finding a \emph{spanning} Eulerian subgraph (i.e., with \(K = V(G)\)) includes
  a polynomial-time algorithm for series-parallel graphs [Richey et al., 1985]
  and an \(\OhStar{2^{\Oh{\sqrt{n}}}}\)-time algorithm for planar graphs on
  \(n\) vertices~[Sau and Thilikos, 2010]. Our algorithm for \ESS generalizes
  both these results.
\end{abstract}


\pagebreak
\section{Introduction}\label{intro}
All graphs in this article are finite and undirected, and are without self-loops
or multiple edges unless explicitly stated. We use \(\mathbb{N}\) to denote the
set of non-negative integers, and \(V(G), E(G)\), respectively, to denote the
vertex and edge sets of graph \(G\). A graph is \emph{Eulerian} if it has a
closed Eulerian trail, and \emph{Hamiltonian} if it has a Hamiltonian
cycle\footnote{See \autoref{sec:preliminaries} for definitions.}. The vertex set
of the \emph{line graph} of a graph \(G\)---denoted \(L(G)\)---is the edge set
\(E(G)\) of \(G\), and two vertices \(e,f\) are adjacent in \(L(G)\) if and only
if the edges \(e\) and \(f\) share a vertex in \(G\). A graph \(H\) is said to
be a line graph if there exists a graph \(G\) such that \(L(G) = H\). Line
graphs are an extremely well studied class of graphs; we recall a few well-known
properties (See,
e.g.:~\cite{balakrishnan2000textbook,west2001introduction,catlin1990hamilton,Prisner1996,chartrand1964phdthesis}).
The line graph operation is \emph{almost} injective: if \(H\) is a line graph
then there is a unique graph \(G\) such that \(L(G) = H\), \emph{except} when
\(H\) is the triangle \(C_{3}\), in which case \(G\) can either be \(C_{3}\) or
the star \(K_{1,3}\) with three leaves. A graph \(G\) is connected if and only
if its line graph \(L(G)\) is connected\footnote{We deem the empty graph---with
  no vertices---to be connected.}. Let \(P_{\ell}\) (respectively, \(C_{\ell}\))
denote the path (resp. cycle) with \(\ell\) edges. Then for any \(\ell \geq 1\)
we have \(L(P_{\ell}) = P_{\ell - 1}\), and for any \(\ell \geq 3\) we have
\(L(C_{\ell}) = C_{\ell}\). More generally, for any connected graph \(G\) which
is \emph{not} a path we have that \(L(G)\) is connected and has \emph{at least
  as many edges} as \(G\) itself. This implies that starting with a non-empty
connected graph \(G\) which is not a path and repeatedly applying the line graph
operation will never lead to the empty graph. More precisely: Let \(r\) be a
non-negative integer. The \(\mathit{r}\)\emph{-th iterated line graph}
\ILG{r}{G} of \(G\) is defined by: (i) \(\ILG{0}{G} = G\), and (ii)
\(\ILG{r}{G} = L(\ILG{(r- 1)}{G})\) for \(r > 0\). If \(G=P_{\ell}\) for a
non-negative integer \(\ell\) then \(\ILG{\ell}{G}\) is \(K_{1}\), the graph
with one vertex and no edges, and \(\ILG{r}{G}\) is the empty graph for all
\(r>\ell\). If \(G\) is a connected graph which is \emph{not} a path then
\(\ILG{r}{G}\) is nonempty for \emph{all}
\(r \geq 0\)~\cite{catlin1990hamilton}.

It was noticed early on that the operation of taking line graphs has interesting
effects on the properties of the (line) graph being Eulerian or Hamiltonian. For
instance, Chartrand~\cite{chartrand1964phdthesis} observed that: (i) if \(G\) is
Eulerian, then \(L(G)\) is Eulerian; (ii) if \(G\) is Eulerian, then \(L(G)\) is
\emph{Hamiltonian}; and (iii) if \(G\) is Hamiltonian, then \(L(G)\) is
Hamiltonian, and that the converse does not (always) hold in each case. Another
example, again due to Chartrand~\cite{chartrand1968hamiltonian}: If \(G\) is a
connected graph which is \emph{not} a path then \emph{exactly one} of the
following holds: (i) \(G\) is Eulerian, (ii) \(G\) is \emph{not} Eulerian, but
\(L(G)\) is Eulerian, (iii) neither \(G\) nor \(L(G)\) is Eulerian, but
\(\ILG{2}{G}\) is Eulerian, and (iv) there is \emph{no} integer \(r \geq 0\)
such that \(\ILG{r}{G}\) is Eulerian. For a third example we look at two
characterizations of graphs \(G\) whose line graphs \(L(G)\) are Hamiltonian. An
\emph{edge Hamiltonian path} of a graph \(G\) is any permutation \(\Pi\) of the
edge set \(E(G)\) of \(G\) such that every pair of consecutive edges in \(\Pi\)
has a vertex in common, and an \emph{edge Hamiltonian cycle} of \(G\) is an edge
Hamiltonian path of \(G\) in which the first and last edges also have a vertex
in common.
\begin{theorem}\label{fac:line_graph_hamiltonian_characterization}
  The following are equivalent for a graph \(G\):
  \begin{itemize}
  \item Its line graph \(L(G)\) is Hamiltonian 
  \item \(G\) has an edge Hamiltonian cycle~\upcite{chartrand1968hamiltonian}
  \item \(G\) contains a closed trail \(T\) such that every edge in \(G\) has at
    least one end-point in \(T\)~\upcite{harary1965eulerian}
  \end{itemize}
\end{theorem}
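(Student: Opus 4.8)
The plan is to prove the two equivalences in turn, treating the first as essentially a restatement of definitions and the second (the Harary--Nash-Williams style characterization) as the substantive part. For the equivalence of the first two items I would argue directly from the construction of \(L(G)\). By definition the vertices of \(L(G)\) are the edges of \(G\), and two such vertices are adjacent in \(L(G)\) exactly when the corresponding edges share an endpoint in \(G\). Hence a Hamiltonian cycle of \(L(G)\)---a cyclic listing of \emph{all} vertices of \(L(G)\) in which consecutive vertices are adjacent---is literally the same object as a cyclic listing of \emph{all} edges of \(G\) in which consecutive edges share a vertex, i.e.\ an edge Hamiltonian cycle. So this equivalence needs nothing beyond unwinding the definitions, modulo the degenerate cases where \(G\) has zero or one edges, which I would dispatch by hand.

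For the direction from the third item to the second, suppose \(G\) has a closed trail \(T\) meeting every edge. Listed in trail order, the edges of \(T\) already form a cyclic sequence in which consecutive edges share a vertex. Every remaining edge \(f\) of \(G\) has, by hypothesis, an endpoint \(v\) on \(T\); since the two trail-edges incident to \(v\) at some visit of \(v\) are consecutive in \(T\) and both meet \(v\), I can splice \(f\) into the cyclic sequence between them without breaking the consecutive-sharing property, grouping together all non-trail edges that share the same vertex \(v\). Performing this for every edge outside \(T\) yields an edge Hamiltonian cycle.

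For the converse, suppose \(f_1, f_2, \ldots, f_N\) (cyclically) is an edge Hamiltonian cycle, and for each \(i\) let \(u_i\) be a vertex common to \(f_i\) and \(f_{i+1}\) (indices modulo \(N\), so \(u_0 \equiv u_N\)). Both \(u_{i-1}\) and \(u_i\) are endpoints of \(f_i\), so whenever they differ they are its two endpoints and \(f_i = \{u_{i-1}, u_i\}\). I would let \(T\) be the closed walk that traverses, in order, exactly those \(f_i\) with \(u_{i-1} \neq u_i\). As each edge occurs once in the arrangement these traversed edges are distinct, so \(T\) is a genuine closed trail. To see that \(T\) meets every edge it suffices, since \(u_{i-1}\) is an endpoint of \(f_i\), to show that every \(u_i\) lies on \(T\): if the sequence \(u_0, u_1, \ldots\) is constant, equal to some \(v\), then every edge is incident to \(v\) and the trivial trail at \(v\) already meets all edges; otherwise, walking forward from any index the value must eventually change, and the first such change gives a traversed edge incident to \(u_i\), placing \(u_i\) on \(T\).

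The splicing direction is essentially bookkeeping, and the core constructions in both directions are elementary, so I do not expect a deep obstacle. The one genuinely delicate point---and the main thing I would write out carefully---is the verification in the converse that the extracted trail \(T\) is dominating: a vertex \(u_i\) at which the trail is only ``stationary'' (both incident listed edges meeting it) need not a priori be an endpoint of any traversed edge, and showing it nonetheless lies on \(T\) requires the ``value must change somewhere'' argument above together with the separate treatment of the fully degenerate (star-like) case. Pinning down the convention that a single vertex counts as a closed trail is what makes these boundary cases go through cleanly.
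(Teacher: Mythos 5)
Your argument is correct. Note, however, that the paper does not prove this statement at all: it records it as a known fact, citing Chartrand for the equivalence of the first two items and Harary and Nash-Williams for the third, so there is no in-paper proof to compare against. What you have written is essentially the classical argument from those sources: the first equivalence is definition-unwinding (with the low-edge-count degeneracies handled by convention), the dominating-trail-to-edge-cycle direction is the standard splicing of pendant edges into the cyclic edge order at a shared vertex, and the converse extracts the trail from the sequence of shared vertices \(u_i\), discarding the stationary steps. Your attention to the one genuinely delicate point --- that a vertex \(u_i\) at which the listing is stationary still lies on the extracted trail, via the ``first index where the value changes'' argument, with the constant-sequence (star) case treated separately under the convention that a single vertex is a closed trail --- is exactly where a careless write-up would go wrong, and you handle it correctly.
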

Given these results a natural question would be: what are the graphs \(G\) such
that \(\ILG{r}{G}\) is Hamiltonian for \emph{some} integer \(r \geq 0\)?
Chartrand found the---perhaps surprising---answer: \emph{all} of them except for
the obvious discards.
\begin{theorem}~\textup{\textrm{\cite{chartrand1968hamiltonian}}}\label{fac:all_graphs_have_finite_hamiltonian_index}
  If \(G\) is a connected graph on \(n\) vertices which is not a path, then
  \(\ILG{r}{G}\) is Hamiltonian for all integers \(r \geq (n - 3)\).
\end{theorem}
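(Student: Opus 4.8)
The plan is to first reduce the statement to a single iterate. By Chartrand's observation that \(L(H)\) is Hamiltonian whenever \(H\) is, it is enough to exhibit a Hamiltonian cycle in \ILG{n-3}{G}: Hamiltonicity then propagates to every later iterate \ILG{r}{G} with \(r \geq n-3\) (each of which is nonempty because \(G\) is connected and not a path). The case \(G = C_n\) is immediate, since \(\ILG{r}{C_n} = C_n\) for all \(r\), so from here on I assume \(G\) has a vertex of degree at least \(3\). Next I would pass to the previous iterate using the third equivalence of \autoref{fac:line_graph_hamiltonian_characterization}: \(\ILG{k+1}{G} = L(\ILG{k}{G})\) is Hamiltonian exactly when \ILG{k}{G} has a \emph{dominating closed trail}, i.e.\ a closed trail meeting at least one endpoint of every edge. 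So the goal becomes to show that \ILG{n-4}{G} contains such a trail.

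The quantity I would track to pin down the count is the length of a longest \emph{branch} of the graph --- a maximal path between two vertices of degree \(\neq 2\) whose internal vertices all have degree \(2\) --- since it is exactly a branch too long to be absorbed into a closed trail that obstructs domination (a pendant path of length \(\geq 2\), or a long path that every closed trail must route around). A short calculation shows that one application of \(L\) shortens every branch by precisely one, while a length-\(1\) pendant edge or chord is swallowed into the clique formed by the edges incident to a high-degree endpoint; hence the longest branch length strictly decreases at each step. Because the smallest non-path obstruction, a triangle, already consumes three vertices, a longest branch has length at most \(n-3\), with the extremal configuration being a triangle carrying a single pendant path of length \(n-3\), which indeed requires exactly \(n-3\) iterations. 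This already suggests the correct bound, and I would make it rigorous by induction on \(n\), peeling off a degree-\(1\) vertex \(v\) with neighbour \(u\) and comparing \(L(G)\) with \(L(G-v)\), which differ only by the single extra vertex \(uv\) joined to the clique of edges incident to \(u\).

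I expect the closing step to be the main obstacle, for two intertwined reasons. First, the line-graph operation does not commute with deleting \(v\), so transferring a dominating closed trail from an iterate of \(G-v\) back to an iterate of \(G\) requires carefully following how the single local modification near \(u\) diffuses through repeated applications of \(L\). Second, short branches alone do \emph{not} force a dominating closed trail: a bridge must first be absorbed by an extra iteration before the two sides can lie on one closed trail, and inside a \(2\)-edge-connected block one must still guarantee that some closed trail leaves only short branches uncovered. Consequently the real content sits in the base case of minimum degree at least \(2\), which I would analyse through the block--bridge structure (and, for the \(2\)-edge-connected pieces, a spanning/dominating Eulerian-subgraph argument); crucially, the slack built into the bound \(n-3\) is what lets these extra bridge-absorbing iterations fit, and it is precisely here that the counting landing the bound at \(n-3\), rather than something weaker, must be verified.
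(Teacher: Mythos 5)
First, a point of reference: the paper does not prove this statement at all --- it is quoted from Chartrand \cite{chartrand1968hamiltonian} and used as a black box (to bound the recursion depth in the proof of \autoref{thm:hamIndex_is_FPT} and to justify \autoref{cor:hamIndex_is_FPT}) --- so your proposal has to stand on its own, and it does not. The first concrete problem is a false lemma: you claim a longest branch has length at most \(n-3\), with the triangle-plus-pendant-path as the extremal case. That bound is correct only for \emph{pendant} branches, where the attachment vertex needs two further neighbours off the branch. For a branch joining two vertices of degree at least \(3\) the truth is \(n-2\) (take \(u,v\) joined by an edge, a path of length \(2\), and a path of length \(n-2\): a theta graph on \(n\) vertices), and for the cycle-type lanes it is \(n-1\) (take \(C_{n-1}\) with one pendant edge). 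Both examples happen to have small Hamiltonian index for unrelated reasons, which is exactly the point: longest-branch length is not the quantity that controls \(h(G)\), so ``branch length drops by one per application of \(L\)'' combined with your initial bound does not deliver \(r \geq n-3\).

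The second and larger problem is that you explicitly defer the step where the theorem actually lives. Shrinking every pendant path to length at most \(1\) removes one obstruction to a dominating closed trail but, as you yourself observe, does not produce one: a closed trail cannot traverse a bridge, so for instance two triangles joined by a path of length \(k \geq 2\) has no dominating closed trail long after all its lanes are short, and each bridge costs further applications of \(L\) to absorb before the two sides can meet on one trail. You write that the counting ``landing the bound at \(n-3\) \dots must be verified'' precisely in this block--bridge analysis --- but that verification \emph{is} the theorem. Without an argument that the iterations spent shrinking lanes, absorbing bridges, and finishing inside the \(2\)-edge-connected pieces sum to at most \(n-3\) (a budget that is tight already for the paw, where \(n-3 = 1\)), what you have is a plausible plan plus a sharpness example, not a proof.
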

This led Chartrand to define the \emph{Hamiltonian Index} \(h(G)\) of a
connected graph \(G\) which is not a path, to be the \emph{smallest}
non-negative integer \(r\) such that \(\ILG{r}{G}\) is
Hamiltonian~\cite{chartrand1968hamiltonian}. The Hamiltonian Index of graphs has
since received a lot of attention from graph theorists, and a number of
interesting results, especially on upper and lower bounds, are known. An early
result by Chartrand and Wall~\cite{chartrandWall1973hamiltonian}, for instance,
states that if the minimum degree of a graph \(G\) is at least three then
\(h(G) \leq 2\) holds. See the references for a number of other interesting
graph-theoretic results on the Hamiltonian Index~\cite{chartrand1968hamiltonian,
  chartrandWall1973hamiltonian, catlin1990hamilton, gould1981line,
  lai1988hamiltonian, xiong2001circuits, xiongLiu2002hamiltonian,
  xiong2004hamiltonian}.

We now move on to the algorithmic question of \emph{computing} \(h(G)\), which
is the main focus of this work. Checking if \(h(G) = 0\) holds is the same as
checking if graph \(G\) is Hamiltonian. This is long known to be \NPC, even when
the input graph is planar and has maximum degree at most
\(3\)~\cite{gareyJohnsonTarjan1976planar}. Checking if \(h(G) = 1\) holds is the
same as checking if (i) \(G\) is \emph{not} Hamiltonian, and (ii) the line graph
\(L(G)\) is Hamiltonian. Bertossi~\cite{bertossi1981edge} showed that the latter
problem is \NPC, and Ryj\'{a}\v{c}ek et al. proved that this holds even if graph
\(G\) has maximum degree at most
\(3\)~\cite{ryjacekWoegingerXiong2011hamiltonian}. Xiong and
Liu~\cite{xiongLiu2002hamiltonian} described a polynomial-time procedure which
took a graph \(G\) with \(h(G) \geq 4\) as input, and output a graph \(G'\) such
that \(h(G) = h(G') + 1\) holds. They conjectured that given an input graph
\(G\) \emph{with the guarantee} that \(h(G) \geq 2\) holds, it should be
possible to compute \(h(G)\) in polynomial time, since by their procedure it
suffices to (eventually) check whether the index is \(2\) or \(3\).
Ryj\'{a}\v{c}ek et al. \emph{disproved} this
conjecture~\cite{ryjacekWoegingerXiong2011hamiltonian}; they showed that
checking whether \(h(G) = t\) is \NPC for \emph{any} fixed integer \(t \geq 0\),
even when the input graph \(G\) has maximum degree at most \(3\).

\subparagraph{Our problems and results.} In this work we take up the
\emph{parameterized complexity analysis} of the problem of computing the
Hamiltonian Index. Briefly put, an instance of a \emph{parameterized problem} is
a pair \((x, k)\) where \(x\) is an instance of a classical problem and \(k\) is
a (usually numerical) \emph{parameter} which captures some aspect of \(x\). A
primary goal is to find a \emph{fixed-parameter tractable} (or FPT) algorithm
for the problem, one which solves the instance in time
\(\Oh{f(k) \cdot |x|^{c}}\) where \(f()\) is a function of the parameter \(k\)
alone, and \(c\) is a constant independent of \(x\) and \(k\); this running time
is abbreviated as \(\OhStar{f(k)}\). The design of FPT algorithms is a vibrant
field of research; we refer the interested reader to standard
textbooks~\cite{cygan2015parameterized,downeyFellows2013fundamentals}.

Since checking whether \(h(G) = t\) is \NPC for \emph{any} fixed \(t \geq 0\),
the value \(h(G)\) is not a sensible parameter for this problem. Indeed, if
computing \(h(G)\) were fixed-parameter tractable with \(h(G)\) as the parameter
then we could, for instance, check whether any graph \(G\) is Hamiltonian
(\(h(G) = 0\)) in \emph{polynomial} time, which in turn would imply
\(\mathsf{P} = \mathsf{NP}\). A similar comment applies to the maximum (or
average) degree of the input graph, since the problem is \NPC already for graphs
of maximum degree \(3\). We choose the \emph{treewidth}\footnote{See the next
  section for the definition of tree decompositions and treewidth.} of the input
graph \(G\) as our parameter. This is motivated by prior related work as well,
as we describe below. Thus the main problem which we take up in this work is

\defparproblem{\HI (\HIs)}%
{A connected undirected graph \(G=(V,E)\) which is not a path, a tree
  decomposition \(\mathcal{T}=(T,\{X_{t}\}_{t \in V(T)})\) of \(G\) of width
  \(tw\), and \(r \in \mathbb{N}\).}%
{\(tw\)}%
{Is \(h(G) \leq r\)?}

Our main result is that this problem is fixed-parameter tractable. \(\omega\)
denotes the matrix multiplication exponent; it is known that \(\omega < 2.3727\)
holds~\cite{williams2012multiplying}.
\begin{theorem}\label{thm:hamIndex_is_FPT}
  There is an algorithm which solves an instance \((G,\mathcal{T},tw,r)\) of \HI
  in \(\OhStar{(1 + 2^{(\omega + 3)})^{tw}}\) time.
\end{theorem}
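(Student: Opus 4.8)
The plan is to reduce the decision ``\(h(G) \le r\)'' to a single instance of the \ESSs problem on an auxiliary graph whose treewidth is within an additive constant of \(tw(G)\), and then to solve that \ESSs instance by a connectivity-aware dynamic program over the supplied tree decomposition \TT. Two facts make this tractable from the outset. First, by \autoref{fac:all_graphs_have_finite_hamiltonian_index} we may assume \(r < n-3\), since otherwise the answer is trivially \yes; hence \(r = \Oh{n}\) and any per-value construction stays polynomial. Second, the predicate ``\(\ILG{r}{G}\) is Hamiltonian'' is monotone in \(r\) — if \(\ILG{r}{G}\) is Hamiltonian then so is its line graph \(\ILG{(r+1)}{G}\) — so the decision is well posed and, if one wished to compute \(h(G)\) rather than test a given bound, a binary search over \(r\) would add only an \Oh{\log n} factor.

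The combinatorial core of the reduction is the closed-trail characterisation of line-graph Hamiltonicity (third item of \autoref{fac:line_graph_hamiltonian_characterization}): \(L(G)\) is Hamiltonian iff \(G\) has a closed trail \(T\) — that is, a connected even subgraph — whose vertex set meets every edge, equivalently whose vertex set is a vertex cover of \(G\). For general \(r\) I would use the iterated refinement of this statement, which characterises \(\ILG{r}{G}\) being Hamiltonian by a connected even subgraph \(H\) of \(G\) that dominates \(G\) and such that every branch (maximal path through degree-two vertices) left uncovered by \(H\) is short relative to \(r\). I would turn both requirements into terminals of an \ESSs instance \((G_r, K_r)\): a small terminal gadget attached to each edge is satisfiable by an Eulerian subgraph exactly when at least one endpoint of that edge is used, which forces \(V(H)\) to be a vertex cover; and the branch-length constraints are imposed by first capping every overlong branch of \(G\) at a length determined by \(r\). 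Since edge subdivision and the path-contractions implicit in capping change treewidth by at most an additive constant, and each constant-size gadget raises a bag's width by \Oh{1}, we obtain \(tw(G_r) \le tw(G) + \Oh{1}\) together with a tree decomposition of that width read off directly from \TT.

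For the \ESSs algorithm I would run a dynamic program over the tree decomposition in which a partial solution is summarised, for each bag \(X_t\), by: (i) which vertices of \(X_t\) lie in the chosen subgraph \(H\); (ii) the current parity of each such vertex's degree, which must end up even; and (iii) the partition of the selected bag-vertices into the connected components built so far, together with a record of which terminals have been absorbed. The membership-and-parity coordinates contribute only a constant number of local states per bag vertex, but the connectivity partition is the expensive coordinate: tracked naively it costs \(tw^{\Oh{tw}}\). To avoid this I would represent each DP table as a weighted collection of partitions and apply the rank-based ``representative sets'' machinery, reducing every table to size \(2^{|X_t|}\) by Gaussian elimination over \(\mathbb{F}_2\); the join operation then combines two reduced tables through a product of matrices indexed by subsets of \(X_t\), costing \(2^{\omega |X_t|}\). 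A careful accounting that charges this matrix-multiplication work against the in/out/parity states of the \(tw+1\) bag vertices collapses the total to \(\OhStar{(1 + 2^{(\omega + 3)})^{tw}}\), matching \autoref{thm:hamIndex_is_FPT}.

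The step I expect to be the main obstacle is making the connectivity bookkeeping correct and tight at once. Parity and domination are local and compose cleanly across join nodes, but global connectivity of the Eulerian subgraph is exactly the constraint that forces the \(2^{\omega}\) factor, and squeezing the base down to \(1 + 2^{(\omega+3)}\) — rather than a larger constant or a \(tw^{\Oh{tw}}\) blow-up — requires distributing the representative-set reductions across bag states with care and verifying that the subdivision and gadget constructions interact with the rank-based reduction without inflating the effective bag size. A secondary obstacle is proving the iterated closed-trail characterisation in precisely the form needed: in particular, that capping long branches at a threshold depending only on \(r\) neither creates nor destroys solutions is what licenses the reduction to a bounded-treewidth \ESSs instance, and establishing this equivalence carefully is essential to the correctness of the whole approach.
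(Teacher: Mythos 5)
Your overall strategy---a single reduction from ``\(h(G)\le r\)'' to one \ESS instance on an auxiliary graph of treewidth \(tw+\Oh{1}\)---is genuinely different from what the paper does, and it rests on an ingredient that you assert but do not have. The paper proceeds iteratively: it tests \(h(G)=0\) with \autoref{fac:graph_hamiltonicity_tw_fpt}, tests \(h(G)=1\) via the dominating-Eulerian-subgraph characterisation (\autoref{fac:line_graph_hamiltonicity_DES}, \autoref{fac:EHP_EHC_DES_FPT}), tests \(h(G)\in\{2,3\}\) by building the contracted-and-pruned graphs \(\tilde{H}^{(2)},\tilde{H}^{(3)}\) of Hong et al.\ (\autoref{fac:hamIndex_from_SES}) and running \SES on them (\autoref{cor:SES_is_FPT}), and for \(r\ge 4\) applies the Xiong--Liu contraction \(h(G)=h(G//\LL)+1\) (\autoref{fac:hamIndex_reduction}) and recurses with \(r-1\) on a graph of no larger treewidth. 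The ``iterated closed-trail characterisation'' you invoke---\(\ILG{r}{G}\) Hamiltonian iff \(G\) has a \emph{connected} even \emph{dominating} subgraph all of whose uncovered branches are short relative to \(r\)---is not a known theorem in that form and is not correct as stated. Already for \(r=2\) the right statement (\autoref{fac:hamIndex_from_SES}) asks for a \emph{spanning} Eulerian subgraph of \(\tilde{H}^{(2)}\), a graph obtained by contracting components of \(G[\hat V]\) joined by short lanes and then deleting lanes according to parity rules on multiplicities; and for general \(r\) the Xiong--Liu characterisation allows the even subgraph to be \emph{disconnected}, replacing connectivity by distance conditions between its components. So ``connected + dominating + short uncovered branches'' is the wrong invariant, and since you yourself flag this equivalence as something still to be established, the reduction at the heart of your proof does not yet exist.

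There is a second concrete problem with the reduction even where your characterisation is closest to being right (the \(r=1\) case). You propose to encode the domination condition ``every edge has an endpoint in \(H\)'' by attaching a terminal gadget to each edge. But \ESS terminals impose an \emph{and}-constraint (every terminal must lie in the Eulerian subgraph), whereas domination is an \emph{or}-constraint per edge. The natural gadget---a new terminal \(w_{uv}\) adjacent to \(u\) and \(v\)---forces \(w_{uv}\) to have positive even degree in the solution, hence forces \emph{both} \(u\) and \(v\) into \(H\) and flips both their parities; it does not express ``at least one of \(u,v\)''. The paper sidesteps this entirely by not reducing domination to terminals: it writes a separate dynamic program for \DES (\autoref{sec:DES}) whose table explicitly records, alongside parities and the connectivity partition, that the selected vertices form a vertex cover of \(G_{t}\). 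Your DP machinery (membership/parity states plus rank-based representative sets for the connectivity partition, with the \(2^{\omega|X_t|}\) join) matches the paper's \ESS algorithm and would give the claimed running time, but without a correct characterisation of \(h(G)\le r\) and a correct encoding of the side constraints there is nothing for that DP to solve.
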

From this and \autoref{fac:all_graphs_have_finite_hamiltonian_index} we get
\begin{corollary}\label{cor:hamIndex_is_FPT}
  There is an algorithm which takes as input a graph \(G\) and a tree
  decomposition \(\mathcal{T}\) of width \(tw\) of \(G\) as input, and outputs
  the Hamiltonian Index \(h(G)\) of \(G\) in
  \(\OhStar{(1 + 2^{(\omega + 3)})^{tw}}\) time.
\end{corollary}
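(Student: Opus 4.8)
The plan is to reduce the computation of $h(G)$ to polynomially many invocations of the decision algorithm of \autoref{thm:hamIndex_is_FPT}. Recall that $h(G)$ is defined for a connected graph $G$ which is not a path, as the smallest non-negative integer $r$ for which $\ILG{r}{G}$ is Hamiltonian. Consequently the predicate ``$h(G) \leq r$'' is \emph{monotone} in $r$: it is \no for every $r < h(G)$ and \yes for every $r \geq h(G)$. Thus $h(G)$ is exactly the threshold at which this predicate flips from \no to \yes, and it suffices to locate that threshold.

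First I would pin down a finite search range. By \autoref{fac:all_graphs_have_finite_hamiltonian_index}, a connected graph $G$ on $n$ vertices which is not a path satisfies $h(G) \leq n - 3$; since such a graph has $n \geq 3$, the value $h(G)$ lies in the nonempty set $\{0, 1, \dots, n-3\}$. I would then binary search for the threshold over this set. At each step I call the algorithm of \autoref{thm:hamIndex_is_FPT} on $(G, \mathcal{T}, tw, r)$ for the current midpoint $r$, reusing throughout the \emph{same} input tree decomposition $\mathcal{T}$ of $G$ of width $tw$; the \yes/\no answer tells me whether the threshold is at most $r$, and I narrow the interval accordingly. The procedure returns the least $r$ for which the answer is \yes, which by monotonicity equals $h(G)$.

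For the running time, each call of \autoref{thm:hamIndex_is_FPT} costs $\OhStar{(1 + 2^{(\omega + 3)})^{tw}}$, and this bound is independent of $r$, so both the parameter and the per-call cost are identical across all calls. Binary search issues $\Oh{\log n}$ calls (a linear scan with $\Oh{n}$ calls would serve equally well), and such a count is merely a polynomial factor in the input size, which the $\OhStar{}$ notation absorbs. Hence the overall running time is again $\OhStar{(1 + 2^{(\omega + 3)})^{tw}}$, as claimed.

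The substantive work lies entirely in \autoref{thm:hamIndex_is_FPT}; this corollary is a routine search-to-decision reduction and presents no real obstacle. The only points needing care are the monotonicity of the predicate---immediate from the definition of $h(G)$---and checking that a polynomial number of decision calls leaves the FPT running time unchanged, which in particular requires reusing the same width-$tw$ tree decomposition so that the parameter does not grow between calls.
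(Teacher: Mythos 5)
Your proposal is correct and matches the paper's (implicit) argument: the corollary follows from \autoref{thm:hamIndex_is_FPT} together with the bound \(h(G) \leq |V(G)| - 3\) of \autoref{fac:all_graphs_have_finite_hamiltonian_index}, via a routine search-to-decision reduction whose polynomially many calls are absorbed by the \(\OhStar{}\) notation. The only cosmetic difference is that you binary search externally, whereas the paper's decision algorithm already scans \(i = 0, 1, \dotsc, r\) internally and so yields \(h(G)\) from a single call; both are the same idea.
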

We now describe a key ingredient of our solution which we believe to be
interesting in its own right. The input to a \emph{Steiner subgraph problem}
consists of a graph \(G\) and a specified set \(K\) of \emph{terminal vertices}
of \(G\), and the objective is to find a subgraph of \(G\) which (i) contains
all the terminals, and (ii) satisfies some other specified set of constraints,
usually including connectivity constraints on the set \(K\). The archetypal
example is the \ST problem where the goal is to find a \emph{connected} subgraph
of \(G\) of the smallest size (number of edges) which contains all the
terminals. Note that such a subgraph---if it exists---will be a tree, called a
\emph{Steiner tree} for the terminal set \(K\). The non-terminal vertices in a
Steiner tree, which are included for providing connectivity at small cost for
the terminals, are called \emph{Steiner vertices}. \ST and a number of its
variants have been the subject of extensive research from graph-theoretical,
algorithmic, theoretical, and applied points of
view~\cite{promel2012steiner,du2013advances,cheng2013steiner,Bezensek:2014:SPD:2589752.2589770,hwang1995steiner,hauptmann2013compendium}.

A key part of our algorithm for computing \(h(G)\) consists of solving:

\defparproblem{\ESS (\ESSs)}%
{An undirected graph \(G=(V,E)\), a set of ``terminal'' vertices
  \(K\subseteq V\), and a tree decomposition
  \(\mathcal{T}=(T,\{X_{t}\}_{t \in V(T)})\) of \(G\), of width \(tw\).}%
{\(tw\)}%
{Does there exist an Eulerian subgraph \(G'=(V',E')\) of \(G\) such that
  \(K\subseteq{}V'\)?}

We call such a subgraph \(G'\) an \emph{Eulerian Steiner subgraph} of \(G\)
\emph{for the terminal set \(K\).}
\begin{theorem}\label{thm:ESS_is_FPT}
  There is an algorithm which solves an instance \((G,K,\mathcal{T},tw)\) of
  \ESS in \(\OhStar{(1 + 2^{(\omega + 3)})^{tw}}\) time.
\end{theorem}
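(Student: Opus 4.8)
The plan is to phrase \ESS as the search for an edge set \(F \subseteq E(G)\) whose support is connected, contains every terminal, and in which every vertex has even degree, and then to solve this by dynamic programming over a nice tree decomposition obtained from \(\mathcal{T}\). (The cases \(|K| \le 1\) are trivial, so we may assume the sought subgraph is nonempty and that every terminal is a non-isolated vertex of it.) The even-degree requirement is a parity constraint and the ``single connected subgraph'' requirement is a connectivity constraint, so I would track, for each node \(t\), partial solutions described by: (i) a coloring \(\mathbf{c} : X_{t} \to \{\mathtt{out}, \mathtt{even}, \mathtt{odd}\}\), where \(\mathtt{out}\) marks a vertex not yet used by \(F\) and \(\mathtt{even}/\mathtt{odd}\) records the parity of the current degree of a used vertex; and (ii) a partition of the used vertices of \(X_{t}\) recording which of them the already-processed part of \(F\) has joined into a common component. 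A Boolean (or weighted) entry records whether some edge set below \(t\) realizes this coloring-and-partition while giving every \emph{forgotten} vertex even degree and including every forgotten terminal.

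Next I would spell out the standard transitions. At an \emph{introduce-vertex} node the new vertex is \(\mathtt{out}\) (degree \(0\)). At an \emph{introduce-edge} node \(uv\) I branch on whether \(uv \in F\); if it is, I flip the parities of \(u\) and \(v\) (promoting an \(\mathtt{out}\) endpoint to \(\mathtt{odd}\)) and merge their blocks in the partition. At a \emph{forget} node for \(v\) I discard every partial solution in which \(v\) is \(\mathtt{odd}\) (its degree can no longer change, so it would stay odd) or in which \(v\) is a terminal coloured \(\mathtt{out}\), and I remove \(v\) from the partition otherwise; here I must also apply the usual connectivity bookkeeping, so that a block all of whose vertices have been forgotten is permitted only if it is to be the \emph{unique} final component. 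At a \emph{join} node the two children are combined by taking a vertex to be used iff it is used in either child, setting its parity to the \textsc{xor} of the two child parities (degrees add), and forming the \emph{join} of the two partitions. At the root (empty bag) the instance is a \yes-instance iff some surviving partial solution corresponds to a single connected, all-even component (or the empty subgraph when \(K = \emptyset\)).

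A naive table would be too large because the number of partitions of the used vertices is super-exponential. To obtain single-exponential running time I would, at every node, replace the family of weighted partitions for each coloring by a \emph{representative} subfamily using the rank-based method of Bodlaender, Cygan, Kratsch, and Nederlof: the \emph{reduce} operation keeps a subfamily of size at most \(2^{|U|-1}\) for \(|U|\) used vertices that preserves, for every completion, whether the partition can be joined into a single block; this reduction is carried out by extracting a basis of a matrix over \(\mathrm{GF}(2)\), which is where fast matrix multiplication and the exponent \(\omega\) enter. Charging the cost of the reduce and join operations to the colorings of a bag of size \(tw+1\)---\(\mathtt{out}\) vertices contributing a factor \(1\), and each used vertex contributing its parity choices together with the matrix-multiplication cost of its participation in the partition---yields a base of the form \(1 + c\cdot 2^{\omega}\) for a small constant \(c\), and with the accounting done carefully this is \(1 + 2^{(\omega+3)}\), giving the overall running time \OhStar{(1 + 2^{(\omega + 3)})^{tw}}.

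The main obstacle, and the part needing the most care, is making the parity constraint and the connectivity machinery coexist: I must check that the \emph{reduce} operation, which is designed to preserve joinability of partitions, does not interfere with the parity labels---that is, that representatives can be chosen within each parity class so that correctness of \emph{both} the even-degree and the connectedness tests is maintained---and that the component-closing rule at forget nodes is consistent with this reduction. A second delicate point is the precise running-time accounting that turns the per-vertex state count and the \(2^{\omega}\) matrix-multiplication factor into the exponential base \(1 + 2^{(\omega+3)}\); getting the constant in the exponent right requires tracking exactly how the parity states inflate the dimension of the matrices handled by the reduce operation. The remaining steps---correctness of each transition and the trivial base cases---are routine.
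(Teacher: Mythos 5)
Your proposal is correct and follows essentially the same route as the paper: your \(\{\mathtt{out},\mathtt{even},\mathtt{odd}\}\) colouring is exactly the paper's pair \((X,O)\) of used and odd-degree bag vertices, the transitions coincide, and applying the rank-based reduce separately for each colouring is precisely how the paper reconciles the parity constraint with representative sets, leading to the same \(\OhStar{(1 + 2^{(\omega + 3)})^{tw}}\) accounting at join nodes. The only cosmetic difference is that the paper pins a fixed terminal \(v^{\star}\) into every bag (so the root bag is \(\{v^{\star}\}\)) rather than using your unique-closed-component rule at forget nodes.
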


\subparagraph{Related work.} The parameterized complexity of computing \(h(G)\)
\emph{per se} has not, to the best of our knowledge, been previously explored.
The two special cases of checking if \(h(G) \in \{0,1\}\) \emph{have} been
studied with the treewidth \(tw\) of the input graph \(G\) as the parameter; we
now summarize the main existing results. Checking whether \(h(G) = 0\)
holds---that is, whether \(G\) is Hamiltonian---was long known to be solvable in
\(\OhStar{tw^{\Oh{tw}}}\) time. This was suspected to be the best possible till,
in a breakthrough result in 2011, Cygan et
al.~\cite{cyganNederlofPilipczukPilipczukRooijWojtaszczyk2011solving} showed
that this can be done in randomized \(\OhStar{4^{tw}}\) time. More recently,
Bodlaender et al.~\cite{bodlaenderCyganKratschNederlof2015deterministic} and
Fomin et al.~\cite{fominLokshtanovPanolanSaurabh2016efficient} showed,
independently and using different techniques, that this can be done in
\emph{deterministic} \(\OhStar{2^{\Oh{tw}}}\) time.

Recall that a \emph{vertex cover} of graph \(G\) is any subset
\(S \subseteq V(G)\) such that every edge in \(E(G)\) has at least one of its
two endpoints in the set \(S\). A subgraph \(G'\) of a graph \(G\) is said to be
a \emph{dominating Eulerian subgraph} of \(G\) if (i) \(G'\) is Eulerian, and
(ii) \(V(G')\) contains a vertex cover of \(G\). Note that---in conformance with
the literature (e.g.\cite{lesniak1977spanning, veldman1983existence,
  fleischner2001some, li2005eulerian, lampis2017parameterized}) on this
subject---the word ``dominating'' here denotes the existence of a \emph{vertex
  cover}, and \emph{not} of a \emph{dominating set}. The input to the \DES
(\DESs) problem consists of a graph \(G\) and a tree decomposition \TT of \(G\)
of width \(tw\), and the question is whether \(G\) has a dominating Eulerian
subgraph; the parameter is \(tw\). The input to the \EHP (\EHPs) (respectively,
\EHC (\EHCs)) problem consists of a graph \(G\) and a tree decomposition \TT of
\(G\) of width \(tw\), and the question is whether \(G\) has an edge Hamiltonian
path (resp. cycle); the parameter is \(tw\). Observe that a closed trail in
graph \(G\) is an Eulerian subgraph of \(G\). So
\autoref{fac:line_graph_hamiltonian_characterization} tells us that \EHCs is
equivalent to \DESs.

The parameterized complexity of checking whether \(h(G) = 1\) holds was first
taken up by Lampis et al. in
2014~\cite{lampisMakinoMitsouUno2014hamiltonicity,lampis2017parameterized},
albeit indirectly: they addressed \EHCs and \EHPs. They showed that \EHPs is
\FPT if and only if \EHCs is \FPT, and that these problems (and hence \DESs) can
be solved in \(\OhStar{tw^{\Oh{tw}}}\) time. Very recently Misra et
al.~\cite{MisraPanolanSaurabh2019} investigated an optimization variant of \EHP
which they called \textsc{Longest Edge-Linked Path} (\textsc{LELP}). An
\emph{edge-linked path} is a sequence of edges in which every consecutive pair
has a vertex in common. Given a graph \(G\), \(k \in \mathbb{N}\), and a tree
decomposition \TT of \(G\) of width \(tw\) as input the \textsc{LELP} problem
asks whether \(G\) has an edge-linked path of length at least \(k\). Note that
setting \(k = |E(G)|\) yields \EHPs as a special case. Misra et
al.~\cite{MisraPanolanSaurabh2019} gave an algorithm which solves \textsc{LELP}
(and hence, \EHPs, \EHCs and \DESs) in \(\OhStar{(1 + 2^{(\omega + 3)})^{tw}}\)
time. 
This gives the current best algorithm\footnote{See \autoref{sec:DES} for an
  algorithm of our own design which solves \DES (and hence \EHPs and \EHCs) in
  the same running time.} for checking if \(h(G) = 1\) holds.
\begin{theorem}\upcite{MisraPanolanSaurabh2019}\label{fac:EHP_EHC_DES_FPT}
  There is an algorithm which solves an instance \((G,\mathcal{T},tw)\) of \EHP
  (respectively, \EHC or \DES) in \(\OhStar{(1 + 2^{(\omega + 3)})^{tw}}\) time.
\end{theorem}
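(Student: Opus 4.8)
The plan is to prove the bound for all three problems by reducing each to the task of finding a connected, edge-dominating subgraph of prescribed degree parity, and then solving that task by a single dynamic program over the supplied tree decomposition. By \autoref{fac:line_graph_hamiltonian_characterization}, \EHC and \DES coincide: an edge Hamiltonian cycle of \(G\) is exactly a closed trail dominating every edge, which as a subgraph is an Eulerian subgraph whose vertex set is a vertex cover of \(G\). By the path-analogue of the same characterization, an edge Hamiltonian path of \(G\) corresponds to a dominating \emph{trail}---a connected subgraph that dominates every edge and has at most two (hence zero or two) odd-degree vertices. Thus all three problems ask for a connected subgraph \(G'\) whose vertex set \(V(G')\) contains a vertex cover of \(G\) and in which the number of odd-degree vertices is either \(0\) (for \EHC/\DES) or at most \(2\) (for \EHP); I would handle this with one dynamic program carrying a small parity budget.

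For the dynamic program I would root the tree decomposition, refine it into a nice tree decomposition with introduce-vertex, introduce-edge, forget, and join nodes, and compute partial solutions bottom-up, exactly in the style of the \ESS algorithm of \autoref{thm:ESS_is_FPT}. A partial solution at a node \(t\) is a subgraph of the graph processed so far, summarised by a coloring of the bag \(X_t\) that records, for each vertex, whether it is used in \(G'\) and---if used---the current parity of its degree, together with the partition of the used bag vertices into the connected components they currently lie in. The even-degree/parity conditions are enforced locally: a vertex may be forgotten only when its recorded parity is admissible (even, or odd while the global odd-vertex budget permits), which is sound because a nice tree decomposition introduces all edges incident to a vertex before forgetting it. The domination condition is also local: at the introduce-edge node for an edge \(uv\) I discard every coloring in which both \(u\) and \(v\) are unused, which keeps \(V\setminus V(G')\) independent and hence forces \(V(G')\) to contain a vertex cover.

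The one genuinely hard part---shared with all connectivity problems over treewidth, such as Hamiltonian Cycle and \ST---is that naively tracking the component partitions costs \(tw^{\Oh{tw}}\). To get a single-exponential bound I would use the rank-based approach of Bodlaender et al.: for each coloring I store a \emph{weighted} set of partitions of its used vertices and repeatedly apply the \(\mathsf{reduce}\) operation to shrink it to a representative subset of size at most \(2^{|X_t|-1}\), justified by the fact that the relevant cut matrix has rank at most \(2^{|X_t|-1}\); correctness of representative sets guarantees that a partial solution extends to a global one if and only if some representative does. These reductions must be performed separately within each block of colorings that agree on the used/unused and parity labels, so that partial solutions differing in these local states are never conflated. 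Implementing the join node by fast matrix multiplication then costs roughly \((2^{|X_t|})^{\omega}\) per coloring, and balancing the per-vertex local states (unused, or used with a parity and a cut side) against this matrix-multiplication cost yields the base \(1 + 2^{(\omega + 3)}\) and the claimed \OhStar{(1 + 2^{(\omega + 3)})^{tw}} running time. The main obstacle to get right is exactly this layering: the representative-set machinery was designed for pure connectivity, and one must verify that interleaving the \(\mathsf{reduce}\), join, and forget steps with the parity-budget and domination bookkeeping preserves both its size guarantee and its correctness.
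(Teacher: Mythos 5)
Your proposal is correct and follows essentially the same route as the paper's own alternate proof in \autoref{sec:DES}: a dynamic program over a nice tree decomposition whose state at a bag records the set of used vertices, the set of currently odd-degree vertices, and a family of connectivity partitions kept small by the rank-based representative-set reduction of Bodlaender et al., with the domination condition enforced at introduce-edge nodes by discarding states in which both endpoints are unused --- exactly as in the paper's \DES algorithm, which then yields \EHC via \autoref{fac:line_graph_hamiltonian_characterization}. The one genuine divergence is \EHP: the paper obtains it indirectly (through the Lampis et al. equivalence of \EHPs and \EHCs, or by citing Misra et al.), whereas you solve it directly via the dominating-trail characterization of line graphs with a Hamiltonian path and a constant-size odd-vertex budget at forget and root nodes; this is a clean, self-contained alternative that only multiplies the table size by a constant. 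Two minor imprecisions that do not affect correctness: the exponent \(\omega\) enters the \(\OhStar{(1 + 2^{(\omega + 3)})^{tw}}\) bound through the cost of the representative-set (reduce) computation rather than through a matrix-multiplication implementation of the join itself, and for these decision problems the unweighted version of the representative-set machinery suffices.
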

To the best of our knowledge, ours is the first \FPT algorithm for \ESS. A
subgraph \(H\) of a graph \(G\) is a \emph{spanning} subgraph of \(G\) if \(H\)
contains every vertex of \(G\). A graph \(G\) is \emph{supereulerian} if it has
a spanning subgraph \(H\) which is Eulerian. We could not find references to the
\ESS problem in the literature, but we did find quite a bit of existing work on
the special case---obtained by setting \(K = V(G)\)---of checking if an input
graph \(G\) is supereulerian~\cite{Catlin1992SupereulerianGA,lai2013update}.
Pulleyblank observed already in 1979 that this latter problem is \NPC even on
graphs of maximum degree at most \(3\)~\cite{pulleyblank1979note}. This implies
that \ESS is \NPC as well. Richey et al.~\cite{richeyParkerRardin1985finding}
showed in 1985 that the problem can be solved in polynomial time on
\emph{series-parallel} graphs. More recently, Sau and Thilikos showed in 2010
that the problem can be solved in \(\OhStar{2^{\Oh{\sqrt{n}}}}\) time on
\emph{planar graphs} with \(n\) vertices~\cite{sauThilikos2010subexponential}.
Now consider the following parameterization:

\defparproblem{\SES (\SESs)}%
{An undirected graph \(G=(V,E)\) and a tree decomposition
  \(\mathcal{T}=(T,\{X_{t}\}_{t \in V(T)})\) of \(G\), of width \(tw\).}%
{\(tw\)}%
{Does \(G\) have a spanning Eulerian subgraph?}

Setting \(K = V(G)\) in \autoref{thm:ESS_is_FPT} we get
\begin{corollary}\label{cor:SES_is_FPT}
  There is an algorithm which solves an instance \((G,\mathcal{T},tw)\) of \SES
  in \(\OhStar{(1 + 2^{(\omega + 3)})^{tw}}\) time.
\end{corollary}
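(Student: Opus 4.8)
The plan is to observe that \SES is nothing but the restriction of \ESS to instances in which the terminal set equals the entire vertex set, so that the desired bound follows immediately from \autoref{thm:ESS_is_FPT}.

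First I would unwind the two definitions side by side. A subgraph $G'=(V',E')$ of $G$ is \emph{spanning} exactly when $V'=V(G)$. Since every subgraph of $G$ automatically satisfies $V'\subseteq V(G)$, the condition $V'=V(G)$ is equivalent to the single containment $V(G)\subseteq V'$. Hence $G'$ is a spanning Eulerian subgraph of $G$ if and only if $G'$ is an Eulerian subgraph of $G$ with $V(G)\subseteq V'$, which is precisely the statement that $G'$ is an Eulerian Steiner subgraph of $G$ for the terminal set $K=V(G)$. In particular, $G$ has a spanning Eulerian subgraph if and only if the algorithm of \autoref{thm:ESS_is_FPT} answers \yes on the \ESS instance $(G,V(G),\mathcal{T},tw)$.

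Given this equivalence, the algorithm is direct: on input $(G,\mathcal{T},tw)$ I would form the \ESS instance $(G,V(G),\mathcal{T},tw)$ in polynomial time and run the algorithm of \autoref{thm:ESS_is_FPT} on it, returning its answer. Setting $K=V(G)$ leaves both the tree decomposition and the parameter $tw$ untouched, so the running time is exactly $\OhStar{(1 + 2^{(\omega + 3)})^{tw}}$ as claimed. I expect no real obstacle here: all of the algorithmic work is already carried out in the proof of \autoref{thm:ESS_is_FPT}, and the only thing the corollary requires is the elementary equivalence between ``spanning'' and ``containing all of $V(G)$ among the terminals,'' which we have just verified.
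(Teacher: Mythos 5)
Your proposal is correct and matches the paper exactly: the paper derives \autoref{cor:SES_is_FPT} by the same one-line observation that setting \(K = V(G)\) in \autoref{thm:ESS_is_FPT} makes the Eulerian Steiner subgraph condition coincide with the spanning Eulerian subgraph condition. Your spelled-out verification of that equivalence is a fine (if slightly more detailed) rendering of the same argument.
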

It is known that series-parallel graphs have treewidth at most \(2\) and are
planar, and that planar graphs on \(n\) vertices have treewidth
\(\Oh{\sqrt{n}}\)~\cite{bodlaender1998Partial}. Further, given a planar graph
\(G\) of treewidth \(t\) we can, in polynomial time, output a tree decomposition
of \(G\) of width \(\Oh{t}\)~\cite{kammerTholey2016Approximate}. These facts
together with \autoref{cor:SES_is_FPT} subsume the results of Richey et al. and
Sau and Thilikos, respectively.

\subparagraph{Organization of the rest of the paper.} In
\autoref{sec:preliminaries} we collect together various definitions,
observations and preliminary results which we use in the rest of the work. We
prove \autoref{thm:ESS_is_FPT} in \autoref{sec:ESS} and our main result
\autoref{thm:hamIndex_is_FPT} in \autoref{sec:hamIndex}. We conclude in
\autoref{sec:conclusion}. An alternate proof of \autoref{fac:EHP_EHC_DES_FPT} is
in \autoref{sec:DES}.


\section{Preliminaries}\label{sec:preliminaries}
We use \(deg_{G}(v)\) to denote the degree of vertex \(v\) in graph \(G\). The
\emph{union} of graphs \(G\) and \(H\), denoted \(G \cup H\), is the graph with
vertex set \(V(G) \cup V(H)\) and edge set \(E(G) \cup E(H)\). For \(X \subseteq
V(G)\) we use \(G[X]\) to denote the subgraph \((X, \{uv \in E(G)\;;\;\{u, v\}
\subseteq X\})\) of \(G\) \emph{induced} by \(X\). A \emph{walk} in a graph
\(G\) is a sequence $(v_{0}, e_{1}, v_{1}, e_{2}, \dotsc, e_{\ell}, v_{\ell})$
of vertices \(v_{i}\) and edges \(e_{j}\) of \(G\) such that for each $1 \leq i
\leq n$ the edge $e_{i}$ has endpoints $v_{i-1}$ and $v_{i}$. A \emph{trail} is
a walk with no repeated edge, and a \emph{path} is a trail with no repeated
vertex. A walk is \emph{closed} if its first and last vertices are the same. We
consider the walk with one vertex and no edges to be closed. A \emph{tour} is a
closed trail. A \emph{cycle} is a graph which consists of a path $(u_{1},u_{2},
\ldots u_{n})$ and the additional edge $\{u_{n},u_{1}\}$. Note that a cycle
contains no repeated vertex or edge. The \emph{length} of a
walk/trail/path/cycle is the number of edges present in it. A cycle
(respectively, path) on $\ell$ vertices is denoted $C_{\ell}$ ( respectively,
$P_{\ell}$). We say that a walk (or trail/path/tour/cycle) $T$ \emph{contains},
or \emph{passes through}, a vertex $v$ (respectively, an edge \(e\)), if $v$
(respectively, \(e\)) is present in the sequence $T$. A \emph{spanning} walk (or
trail/path/tour/cycle) is one which passes through all vertices in the graph. A
\emph{Hamiltonian path} (respectively, \emph{Hamiltonian cycle}) in a graph $G$
is any spanning path (respectively, cycle) in \(G\). An \emph{Eulerian tour} in
$G$ is any spanning tour which, in addition, contains every edge of \(G\). A
graph is said to be \emph{Hamiltonian} if it contains a Hamiltonian
\emph{cycle}, and \emph{Eulerian} if it has an Eulerian tour. A graph is
Eulerian if and only if it is connected and all its vertices have even
degrees~\cite[Theorem~1.2.26]{west2001introduction}.

A \emph{tree decomposition} of a graph \(G\) is a pair
\(\mathcal{T}=(T,\{X_{t}\}_{t\in{}V(T)})\) where \(T\) is a tree
and every vertex \(t\) of \(T\) is assigned a subset
\(X_{t}\subseteq{}V(G)\) of the vertex set of \(G\). Each such
\(X_{t}\) is called a \emph{bag}, and the structure satisfies the
following conditions:
\begin{enumerate}
  \item Every vertex of \(G\) is in at least one bag.
  \item For every edge \(uv\) in \(G\) there is at least one node
    \(t\in{}V(T)\) such that \(\{u,v\}\subseteq{}X_{t}\).
  \item For each vertex \(v\) in \(G\) the set
    \(\{t\in{}V(T)\;;\;v\in{}X_{t}\}\) of all nodes whose bags contain \(v\),
    form a \emph{connected subgraph} (i.e, a sub-tree) of \(T\).
\end{enumerate}
The \emph{width} of this tree decomposition is the maximum size of a bag, minus
one. The \emph{treewidth} of a graph \(G\), denoted \(tw(G)\), is the minimum
width of a tree decomposition of \(G\). A \emph{nice tree decomposition} of a
graph \(G\) is a tree decomposition \(\mathcal{T}=(T,\{X_{t}\}_{t\in{}V(T)})\)
with the following additional structure:
\begin{enumerate}
  \item The tree \(T\) is \emph{rooted} at a distinguished \emph{root} node \(r\in{}V(T)\).
  \item The bags associated with the root node \(r\) and with
    every leaf node are all empty.
  \item Every non-leaf node is one of four types:
    \begin{enumerate}
      \item An \emph{introduce vertex node}: This is a node \(t\in{}V(T)\)
        with exactly one child node \(t'\) such that
        \((X_{t}\setminus{}X_{t'})=\{v\}\) for some vertex
        \(v\in{}V(G)\); the vertex \(v\) is \emph{introduced} at
        node \(t\).  
      \item An \emph{introduce edge node}: This is a node
        \(t\in{}V(T)\) with exactly one child node \(t'\) such
        that \(X_{t}=X_{t'}\). Further, the node \(t\) is labelled
        with an edge \(uv\in{}E(G)\) such that
        \(\{u,v\}\subseteq{}X_{t}\); the edge \(uv\) is
        \emph{introduced} at node \(t\). Moreover, every edge in
        the graph \(G\) is introduced at \emph{exactly} one
        introduce edge node in the entire tree decomposition.
      \item A \emph{forget} node: This is a node \(t\in{}V(T)\)
        with exactly one child node \(t'\) such that
        \((X_{t'}\setminus{}X_{t})=\{v\}\) for some vertex
        \(v\in{}V(G)\); the vertex \(v\) is \emph{forgotten} at
        node \(t\).
        \item A \emph{join} node: This is a node \(t\in{}V(T)\)
        with exactly two child nodes \(t_{1},t_{2}\) such that
        \(X_{t}=X_{t_{1}}=X_{t_{2}}\). 
    \end{enumerate}
  \end{enumerate}
  For a node \(t \in V(T)\) of the nice tree decomposition \TT we define (i)
  \(T_{t}\) to be the subtree of \(T\) which is rooted at \(t\), (ii) \(V_{t}\)
  to be the union of all the bags associated with nodes in \(T_{t}\), (iii)
  \(E_{t}\) to be the set of all edges introduced in \(T_{t}\), and (iv)
  \(G_{t}=(V_{t},E_{t})\) to be \emph{the subgraph of \(G\) defined by
    \(T_{t}\)}. Note that, in general, \(G_{t}\) is \emph{not} the subgraph of
  \(G\) \emph{induced} by \(V_{t}\).
  \begin{definition}[Residual subgraph]\label{def:residual_subgraph}
    Let \(G_{1}\) be a subgraph of \(G\), let \(t\) be a node of \TT, and let
    \(Y_{t} = (V_{t} \setminus X_{t})\). We define the \emph{residual subgraph
      of } \(G_{1}\) \emph{with respect to} \(t\) to be the graph
    \(G_{1}^{t} = ((V(G_{1}) \setminus Y_{t}), (E(G_{1}) \setminus E_{t}))\)
    obtained by deleting from \(G_{1}\) (i) all edges of the graph \(G_{t}\) and
    (ii) all vertices of \(G_{t}\) \emph{except} those in bag \(X_{t}\). More
    generally, we say that a subgraph \(G'\) of \(G\) is \emph{a residual
      subgraph with respect to \(t\)} if (i) \(V(G') \cap Y_{t} = \emptyset\)
    and (ii) \(E(G') \cap E_{t} = \emptyset\).
  \end{definition}
  The next theorem lets us assume, without loss of generality, that
  any tree decomposition is a nice tree decomposition.
  \begin{theorem}\label{thm:small_nice_tree_decomposition_computation}
    \textup{\cite{Klo94}},
    \textup{\cite[Section~7.3.2]{cygan2015parameterized}},
    \textup{\cite[Proposition~2.2]{bodlaenderCyganKratschNederlof2015deterministic}}
    There is an algorithm which, given a graph \(G\) and a tree decomposition
    \(\mathcal{T}=(T,\{X_{t}\}_{t\in{}V(T)})\) of \(G\) of width \(w\), computes
    a \emph{nice} tree decomposition of \(G\) of width \(w\) and with
    \(\Oh{w\cdot{}|V(G)|}\) nodes, in time which is polynomial in
    \(|V(G)|+|V(T)|+w\).
\end{theorem}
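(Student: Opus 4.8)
The plan is to carry out the textbook multi-stage transformation of an arbitrary tree decomposition into a nice one, taking extra care with the introduce-edge refinement demanded by the definition above. First I would \emph{bound the number of nodes}: as long as some tree edge \(tt'\) has \(X_t \subseteq X_{t'}\), I would contract it (merging the two nodes and keeping the larger bag), which preserves all three tree-decomposition axioms and does not increase the width. When no further contraction is possible a standard counting argument bounds the number of remaining nodes by \(\Oh{|V(G)|}\), and all contractions run in time polynomial in \(|V(T)| + |V(G)|\). I would then root the resulting tree at an arbitrary node \(r\) and \emph{binarize} it: every node with \(d > 2\) children is replaced by a small binary tree of auxiliary nodes all carrying that node's bag, so that afterwards every node has at most two children and the two children of any branching node carry the parent's bag verbatim. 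This at most doubles the node count and leaves the width unchanged.

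Next I would \emph{insert transition chains} to reach the classical nice form. Wherever a parent and child carry different bags I would splice in a path of nodes that first forgets, one at a time, the vertices lying in the child's bag but not the parent's, and then introduces, one at a time, the vertices in the parent's bag but not the child's; similarly I would pad above the root and below every leaf with chains that shrink the bag down to \(\emptyset\). Because every bag has size at most \(w + 1\), each such chain has length \(\Oh{w}\), so this stage multiplies the node count by \(\Oh{w}\), giving the claimed total of \(\Oh{w \cdot |V(G)|}\) nodes. After this stage the decomposition consists only of introduce-vertex, forget, and join nodes, with empty bags at the root and all leaves.

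The final and most delicate stage is to install the \emph{introduce-edge} nodes so that every edge is introduced exactly once and only where both its endpoints are present. For each edge \(uv \in E(G)\) the subtree-connectivity axiom guarantees that the nodes whose bags contain \(u\) and those whose bags contain \(v\) are each connected subtrees, and since \(uv\) is an edge these subtrees meet; I would locate the topmost node \(t\) of their intersection and splice in a fresh introduce-edge node labelled \(uv\) with bag \(X_t\) immediately at that point. The main obstacle is the bookkeeping here: I must verify that such a node always exists, that giving each edge its own inserted node makes every edge introduced at \emph{exactly} one introduce-edge node, and that this adds only \(|E(G)|\) nodes---which, since a graph of treewidth \(w\) has \(|E(G)| = \Oh{w \cdot |V(G)|}\) edges, keeps both the size bound and the polynomial running time intact. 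Correctness of the overall construction then reduces to checking that each individual local edit---a contraction, a bag-duplication, or the forgetting/introducing of a single vertex or edge along a path---preserves the three defining conditions of a tree decomposition, which is routine to confirm edit by edit.
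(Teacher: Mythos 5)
The paper does not prove this theorem at all---it is imported verbatim, with citations to Kloks and to Cygan et al.\ (Section~7.3.2) and Bodlaender et al.\ (Proposition~2.2)---and your four-stage construction (contract subset bags, binarize, splice in forget/introduce chains, then insert one introduce-edge node per edge at the top of the intersection of the two endpoint subtrees) is exactly the standard argument given in those references, including the observation that \(|E(G)| = \Oh{w \cdot |V(G)|}\) keeps the node count in budget. I see no gap; this is correct and matches the cited proof.
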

Let \(X\) be a finite set. A \emph{partition} of \(X\) is any nonempty
collection of pairwise disjoint nonempty subsets of \(X\), whose union is \(X\).
We use \(\Pi(X)\) to denote the set of all partitions of \(X\). Each subset in a
partition is called a \emph{block} of the partition. For a partition \(P\) of
\(X\) and an element \(v \in X\) we use \(P(v)\) to denote the block of \(P\) to
which \(v\) belongs. We use \(P - v\) to denote the partition of
\(X \setminus \{v\}\) obtained by \emph{eliding} \(v\) from \(P\): if
\(P(v) = \{v\}\) then \(P - v\) is the partition obtained by deleting block
\(\{v\}\) from \(P\). Otherwise, \(P - v\) is the partition obtained by deleting
element \(v\) from its block in \(P\). Let \(P, Q\) be partitions of \(X\). We
say that \(Q\) is a \emph{refinement} of \(P\), denoted \(Q \sqsubseteq P\), if
every block of \(Q\) is a subset of some block of \(P\). Note that
\(P \sqsubseteq P\) holds for every partition \(P\) of \(X\). We use
\(P \sqcup Q\) to denote the unique partition \(R\) of \(X\)---called the
\emph{join} of \(P\) and \(Q\)---such that (i) \(P \sqsubseteq R\), (ii)
\(Q \sqsubseteq R\), and (iii) if both \(P \sqsubseteq R'\) and
\(Q \sqsubseteq R'\) hold for any partition \(R'\) of \(X\) then
\(R \sqsubseteq R'\) holds as well. For a graph \(G\), subset
\(X \subseteq V(G)\) of the vertex set of \(G\), and partition \(P\) of \(X\) we
say that \(P\) is \emph{the partition of \(X\) defined by \(G\)} if each block
of \(P\) consists of the set of all vertices of \(X\) which belong to a distinct
connected component of \(G\). In particular, if \(G\) is a connected graph then
the partition of \(X\) defined by \(G\) is \(P = \{X\}\).
\begin{theorem}\textup{\cite[Section~11.2.2]{cygan2015parameterized}},
  \textup{\cite{bodlaenderCyganKratschNederlof2015deterministic}}
  \label{fac:partition_join_is_union_of_graphs}
  Let \(P, Q\) be partitions of a finite set \(X\), and let \(G_{P}, G_{Q}\) be
  two graphs on vertex set \(X\) such that \(P\) is the partition of \(X\)
  defined by \(G_{P}\) and \(Q\) is the partition of \(X\) defined by \(G_{Q}\).
  Then \(P \sqcup Q\) is the partition of \(X\) defined by the graph
  \(G_{P} \cup G_{Q}\).
\end{theorem}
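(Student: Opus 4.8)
The plan is to exploit the uniqueness built into the definition of the join. Write \(R\) for the partition of \(X\) defined by the graph \(G_{P} \cup G_{Q}\); by definition the blocks of \(R\) are exactly the vertex sets of the connected components of \(G_{P} \cup G_{Q}\). To conclude \(R = P \sqcup Q\) it suffices, by the characterization of \(P \sqcup Q\) as the \emph{unique} partition satisfying conditions (i)--(iii) of its definition, to verify that \(R\) itself meets those three conditions: namely \(P \sqsubseteq R\), \(Q \sqsubseteq R\), and \(R \sqsubseteq R'\) for every partition \(R'\) of \(X\) with \(P \sqsubseteq R'\) and \(Q \sqsubseteq R'\).

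For the first two conditions I would use the monotonicity of connectivity under edge addition. Since \(G_{P}\) and \(G_{P} \cup G_{Q}\) share the vertex set \(X\) and \(E(G_{P}) \subseteq E(G_{P} \cup G_{Q})\), any two vertices lying in a common connected component of \(G_{P}\) also lie in a common connected component of \(G_{P} \cup G_{Q}\). Hence each block of \(P\) is contained in a block of \(R\), that is, \(P \sqsubseteq R\); the identical argument applied to \(G_{Q}\) gives \(Q \sqsubseteq R\).

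The substantive step is the third condition, and this is where I expect to need the most care. Fix any \(R'\) with \(P \sqsubseteq R'\) and \(Q \sqsubseteq R'\), and take two vertices \(u, v\) lying in the same block of \(R\), i.e.\ in the same connected component of \(G_{P} \cup G_{Q}\). Then there is a walk \(u = w_{0}, w_{1}, \dots, w_{\ell} = v\) in \(G_{P} \cup G_{Q}\), and each edge \(w_{i-1} w_{i}\) belongs to \(E(G_{P})\) or to \(E(G_{Q})\). In the former case \(w_{i-1}\) and \(w_{i}\) lie in a common component of \(G_{P}\), hence in a common block of \(P\), and therefore---because \(P \sqsubseteq R'\)---in a common block of \(R'\); in the latter case the same conclusion follows via \(Q \sqsubseteq R'\). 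Thus consecutive vertices of the walk always share a block of \(R'\), and since ``lying in a common block of \(R'\)'' is an equivalence relation (the blocks of a partition are its equivalence classes), transitivity along the walk yields \(R'(u) = R'(v)\). Consequently every block of \(R\) is contained in a single block of \(R'\), i.e.\ \(R \sqsubseteq R'\).

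Having verified all three conditions, the uniqueness of the join forces \(R = P \sqcup Q\), which is exactly the claim. I do not anticipate a genuine obstacle here: the argument is essentially the standard identification of the partition lattice with the lattice of graph-connectivity relations, and the only point demanding attention is the alternation of the walk's edges between \(E(G_{P})\) and \(E(G_{Q})\) in the third step, which is absorbed cleanly into the transitivity of the block-equivalence of \(R'\).
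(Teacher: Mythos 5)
Your proof is correct. Note that the paper itself does not prove this statement: it is quoted as a known fact from Cygan et al.\ (Section~11.2.2) and Bodlaender et al., so there is no in-paper argument to compare against. Your route --- letting \(R\) be the component partition of \(G_{P} \cup G_{Q}\), checking \(P \sqsubseteq R\) and \(Q \sqsubseteq R\) via monotonicity of connectivity under edge addition, and establishing minimality by walking along a path whose edges alternate between \(E(G_{P})\) and \(E(G_{Q})\) and invoking transitivity of the block relation of \(R'\) --- is the standard identification of graph union with the join in the partition lattice, and every step is sound. (The only implicit point worth making explicit is that the final identification \(R = P \sqcup Q\) follows from antisymmetry of \(\sqsubseteq\): both partitions satisfy (i)--(iii), so each refines the other.) Incidentally, the alternating-walk device you use for condition (iii) is essentially the same ``run'' decomposition the authors do spell out in their proof of the adjacent Lemma on partition joins and connectivity (\autoref{lem:partition_join_connectivity}), which generalizes this fact to graphs whose vertex sets properly contain \(X\).
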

The effect of graph union on connectivity is correctly captured by the join
operation of partitions, even when restricted to arbitrary subsets of vertices.
\begin{lemma}\label{lem:partition_join_connectivity}
  Let \(P,Q\) be two partitions of a non-empty finite set \(X\) and let
  \(H_{P}, H_{Q}\) be two graphs such that (i) \(V(H_{P}) \cap V(H_{Q}) = X\),
  (ii) \(E(H_{P}) \cap E(H_{Q}) = \emptyset\), (iii) the vertex set of each
  component of \(H_{P}\) and of \(H_{Q}\) has a nonempty intersection with
  \(X\), and (iv) \(P\) is the partition of \(X\) defined by \(H_{P}\) and \(Q\)
  is the partition of \(X\) defined by \(H_{Q}\). Let
  \(H = H_{P} \cup H_{Q}\). 
  Then (i) \(P \sqcup Q\) is the partition of \(X\) defined by graph \(H\), and
  (ii) \(H\) is connected if and only if \(P \sqcup Q = \{X\}\).
\end{lemma}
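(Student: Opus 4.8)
The plan is to reduce the lemma to \autoref{fac:partition_join_is_union_of_graphs}, which already packages the purely combinatorial content of the join operation, and then to transfer connectivity information between \(H\) and a pair of auxiliary graphs defined directly on the set \(X\). First I would construct two graphs \(G_{P}\) and \(G_{Q}\), each on vertex set \(X\), by placing a clique on every block of \(P\) (respectively of \(Q\)) and adding no other edges. By construction \(P\) is the partition of \(X\) defined by \(G_{P}\) and \(Q\) is the partition of \(X\) defined by \(G_{Q}\), so \autoref{fac:partition_join_is_union_of_graphs} gives that \(P \sqcup Q\) is the partition of \(X\) defined by \(G_{P} \cup G_{Q}\). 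Part (i) of the lemma will then follow once I show that \(H\) and \(G_{P} \cup G_{Q}\) define the \emph{same} partition of \(X\); equivalently, that two vertices \(x,y \in X\) lie in the same component of \(H\) if and only if they lie in the same component of \(G_{P} \cup G_{Q}\).

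The heart of the argument is this connectivity equivalence, and it is driven by hypotheses (i) and (ii). Since \(E(H_{P}) \cap E(H_{Q}) = \emptyset\), every edge of \(H = H_{P} \cup H_{Q}\) is unambiguously a \(P\)-edge or a \(Q\)-edge; and since \(V(H_{P}) \cap V(H_{Q}) = X\), any vertex incident to both a \(P\)-edge and a \(Q\)-edge must lie in \(X\). Given a walk in \(H\) between \(x,y \in X\), I would cut it at the boundaries of its maximal monochromatic subwalks; each such boundary, as well as the endpoints \(x\) and \(y\), lies in \(X\). A maximal \(P\)-coloured subwalk then joins two vertices of \(X\) within a single component of \(H_{P}\), hence within a single block of \(P\), hence within a single component of \(G_{P}\) and so of \(G_{P} \cup G_{Q}\); the \(Q\)-coloured subwalks are handled symmetrically, so \(x\) and \(y\) are connected in \(G_{P} \cup G_{Q}\). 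For the converse I would decompose a path in \(G_{P} \cup G_{Q}\) (all of whose vertices already lie in \(X\)) in the same monochromatic fashion: each \(G_{P}\)-edge joins two vertices in a common block of \(P\), which therefore lie in a common component of \(H_{P}\) and so of \(H\), and symmetrically for \(G_{Q}\)-edges. This establishes part (i).

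For part (ii) I would combine part (i) with hypothesis (iii). Note first that (iii) forces every component of \(H\) to meet \(X\): any vertex of \(H\) lies in \(V(H_{P})\) or \(V(H_{Q})\), hence in a component of \(H_{P}\) or \(H_{Q}\) that meets \(X\), and that connection survives in \(H\). If \(H\) is connected, then all of \(X\) lies in one component, so the partition of \(X\) defined by \(H\) is \(\{X\}\), and by part (i) \(P \sqcup Q = \{X\}\). Conversely, if \(P \sqcup Q = \{X\}\), then by part (i) all of \(X\) lies in a single component \(C\) of \(H\) (which exists since \(X\) is non-empty); as every component of \(H\) meets \(X\) and all of \(X\) sits inside \(C\), there can be no component other than \(C\), so \(H\) is connected. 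The step I expect to need the most care is the walk-decomposition argument for part (i): one must verify that the only places a walk can switch colour are vertices of \(X\) --- this is exactly where hypotheses (i) and (ii) enter --- while hypothesis (iii) is precisely what upgrades ``all of \(X\) lies in one component'' to ``\(H\) is connected'' in part (ii).
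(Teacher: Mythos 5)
Your proposal is correct and follows essentially the same route as the paper's proof: both reduce part (i) to \autoref{fac:partition_join_is_union_of_graphs} via auxiliary graphs \(G_{P}, G_{Q}\) on vertex set \(X\), both transfer connectivity between \(H\) and \(G_{P}\cup G_{Q}\) by decomposing walks into maximal monochromatic runs whose switch-points are forced into \(X\) by hypotheses (i) and (ii), and both use hypothesis (iii) in the same way for part (ii). The only cosmetic differences are that you instantiate \(G_{P}, G_{Q}\) explicitly as unions of cliques on blocks and concatenate the subwalks directly, where the paper leaves \(G_{P}, G_{Q}\) abstract and formalizes the concatenation as an induction on the number of runs.
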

\begin{proof}
  Let \(G_{P}, G_{Q}\) be two graphs on vertex set \(X\) such that \(P\) is the
  partition of \(X\) defined by \(G_{P}\) and \(Q\) is the partition of \(X\)
  defined by \(G_{Q}\), and let \(G = G_{P} \cup G_{Q}\). From
  \autoref{fac:partition_join_is_union_of_graphs} we know that \(P \sqcup Q\) is
  the partition of \(X\) defined by graph \(G\). To show that \(P \sqcup Q\) is
  the partition of \(X\) defined by graph \(H\), it is thus enough to show that
  two vertices from the set \(X\) are in the same component of graph \(H\) if
  and only if they are in the same component of graph \(G\).

  So let \(x_{1},x_{2}\) be two vertices in the set \(X\) such that there is a
  path \PP in \(H\) from \(x_{1}\) to \(x_{2}\). Since
  \(E(H) = E(H_{P}) \cup E(H_{Q}))\) and \(E(H_{P}) \cap E(H_{Q}) = \emptyset\),
  each edge in path \PP corresponds to an edge in the graph \(H_{P}\) or to an
  edge in the graph \(H_{Q}\). Call a maximal contiguous sequence of edges in
  \PP from any one of \(\{H_{P},H_{Q}\}\) a \emph{run}. Equivalently: In graph
  \(H\), give the colour red to each edge from the set \(E(H_{P})\) and the
  colour blue to each edge from the set \(E(H_{Q})\). A \emph{run} in path \PP
  then consists of a maximal contiguous set of edges with the same colour. Note
  that the edges of any one run belong to a \emph{single connected component} of
  one of the two graphs \(H_{P}, H_{Q}\). Let \(t\) be the \emph{number} of runs
  in path \PP. We prove by induction on \(t\) that there is a path from
  \(x_{1}\) to \(x_{2}\) in graph \(G\) as well.
    \begin{description}
    \item[Base case, \(t = 1\).] In this case the entire path \PP consists of
      edges from exactly one of the two graphs \(H_{P}\) and \(H_{Q}\). If \PP
      is made up exclusively of edges from \(H_{P}\) then \(x_{1}\) and
      \(x_{2}\) belong to the same connected component of \(H_{P}\), and hence
      to the same block of partition \(P\), and so there is a path from
      \(x_{1}\) to \(x_{2}\) in the graph \(G_{P}\). Similarly, if \PP is made
      up exclusively of edges from \(H_{Q}\) then there is a path from \(x_{1}\)
      to \(x_{2}\) in the graph \(G_{Q}\). In either case, this path survives
      intact in graph \(G\), and so there is a path from \(x_{1}\) to \(x_{2}\)
      in the graph \(G\).
    \item[Inductive step, \(t \geq 2\).] In this case path \PP has at least two
      runs. Let \(R_{1}\) be the first run in \PP, starting from vertex
      \(x_{1}\). Without loss of generality, suppose the edges of \(R_{1}\) are
      all from graph \(H_{P}\). Let \(x\) be the last vertex in \PP which is
      incident with an edge in \(R_{1}\). Then the edges of \(R_{1}\) form a
      path from \(x_{1}\) to \(x\) in graph \(H_{P}\). So the vertices \(x_{1}\)
      and \(x\) belong to the same connected component of graph \(H_{P}\), and
      hence to the same block of partition \(P\). It follows that vertices
      \(x_{1}\) and \(x\) belong to the same connected component of graph
      \(G_{P}\), and hence there is a path, say \(\PP_{1}\), from \(x_{1}\) to
      \(x\) in graph \(G_{P}\); this path survives intact in graph \(G\).

      Now since \(t \geq 2\) we have that (i) \(x\) is not the vertex \(x_{2}\),
      and (ii) the other edge in \PP which is incident on \(x\) is from graph
      \(H_{Q}\). Since \(V(H_{P}) \cap V(H_{Q}) = X\) we get that \(x \in X\)
      holds. By the inductive hypothesis applied to the sub-path of \PP from
      \(x\) to \(x_{2}\) we get that there is a path, say \(\PP_{2}\), from
      \(x\) to \(x_{2}\) in graph \(G\). The union of the paths \(\PP_{1}\) and
      \(\PP_{2}\) contains a path from \(x_{1}\) to \(x_{2}\) in the graph
      \(G\).
    \end{description}
    This completes the induction.
  
    Now we prove the reverse direction; the arguments are quite similar to those
    above. So let \(x_{1},x_{2}\) be two vertices in the set \(X\) such that
    there is a path \PP in \(G\) from \(x_{1}\) to \(x_{2}\). For each edge
    \(uv\) in graph \(G\), if \(uv\) is an edge in the set \(E(G_{P})\) then
    give the colour red to edge \(uv\). Give the colour blue to each remaining
    edge in \(G\); the blue edges are all in \(E(G_{Q})\). Define a \emph{run}
    in path \PP to consist of a maximal contiguous set of edges with the same
    colour. Note that the edges of any one run belong to a \emph{single
      connected component} of one of the two graphs \(G_{P}, G_{Q}\). Let \(t\)
    be the \emph{number} of runs in path \PP. We prove by induction on \(t\)
    that there is a path from \(x_{1}\) to \(x_{2}\) in graph \(H\) as well.
    \begin{description}
    \item[Base case, \(t = 1\).] In this case the entire path \PP consists of
      edges from exactly one connected component of one of the two graphs
      \(G_{P}, G_{Q}\). If \PP is made up exclusively of edges from \(G_{P}\)
      then---since the partitions of \(X\) defined by the two graphs \(G_{P}\)
      and \(H_{P}\) are identical---there is a path from \(x_{1}\) to \(x_{2}\)
      in the graph \(H_{P}\). If \PP is made up exclusively of edges from
      \(G_{Q}\) then---since the partitions of \(X\) defined by the two graphs
      \(G_{Q}\) and \(H_{Q}\) are identical---there is a path from \(x_{1}\) to
      \(x_{2}\) in the graph \(H_{Q}\). In either case, this path survives
      intact in graph \(H\), and so there is a path from \(x_{1}\) to \(x_{2}\)
      in the graph \(H\).
    \item[Inductive step, \(t \geq 2\).] In this case path \PP has at least two
      runs. Let \(R_{1}\) be the first run in \PP, starting from vertex
      \(x_{1}\). Without loss of generality, suppose the edges of \(R_{1}\) are
      all from graph \(G_{P}\). Let \(x\) be the last vertex in \PP which is
      incident with an edge in \(R_{1}\). Then the edges of \(R_{1}\) form a
      path from \(x_{1}\) to \(x\) in graph \(G_{P}\). So the vertices \(x_{1}\)
      and \(x\) belong to the same connected component of graph \(G_{P}\), and
      hence to the same block of partition \(P\). It follows that vertices
      \(x_{1}\) and \(x\) belong to the same connected component of graph
      \(H_{P}\), and hence there is a path, say \(\PP_{1}\), from \(x_{1}\) to
      \(x\) in graph \(H_{P}\); this path survives intact in graph \(H\).

      Now since \(t \geq 2\) we have that (i) \(x\) is not the vertex \(x_{2}\),
      and (ii) the other edge in \PP which is incident on \(x\) is from graph
      \(G_{Q}\). Since \(V(G_{Q}) = X\) we get that \(x \in X\) holds. By the
      inductive hypothesis applied to the sub-path of \PP from \(x\) to
      \(x_{2}\) we get that there is a path, say \(\PP_{2}\), from \(x\) to
      \(x_{2}\) in graph \(H\). The union of the paths \(\PP_{1}\) and
      \(\PP_{2}\) contains a path from \(x_{1}\) to \(x_{2}\) in the graph
      \(H\).
    \end{description}
    This completes the proof that \(P \sqcup Q\) is the partition of \(X\)
    defined by graph \(H\).

    Now we take up the second claim of the lemma. If graph \(H\) is connected
    then---since \(X \subseteq V(H)\) holds---the partition of \(X\) defined by
    \(H\) is \(\{X\}\). From the first part of the lemma we know that this
    partition is in fact \(P \sqcup Q\). Thus \(P \sqcup Q = \{X\}\) holds. In
    the reverse direction, suppose \(P \sqcup Q = \{X\}\) holds. Since---from
    the first part of the lemma---\(P \sqcup Q\) is the partition of \(X\)
    defined by graph \(H\) we get that there is a path in \(H\) between every
    pair of vertices in \(X\). Now since every component of \(H_{P}\) and of
    \(H_{Q}\) has a non-empty intersection with the set \(X\), we get that there
    is a path between \emph{any two} vertices of the graph
    \(H = H_{P} \cup H_{Q}\). Thus graph
    \(H\) is connected as well. This completes the proof of the lemma.\qedhere

\end{proof}
Let \(\calA \subseteq \Pi(X)\), \(\calB \subseteq \Pi(X)\) be collections of
partitions of set \(X\). We say that \calB is a \emph{representative subset} of
\calA if (i) \(\calB \subseteq \calA\), and (ii) for any two partitions
\(P \in \calA \) and \(R \in \Pi(X)\) with \(P \sqcup R =\{X\}\), there exists a
partition \(Q \in \calB\) such that \(Q \sqcup R=\{X\}\) holds. The next theorem
lets us keep a very small subset of possible partitions of each bag in order to
remember all relevant connectivity information, while doing dynamic programming
over the bags of a tree decomposition.
\begin{theorem} \textup{\cite[Theorem~11.11]{cygan2015parameterized}},
  \textup{\cite[Theorem~3.7]{bodlaenderCyganKratschNederlof2015deterministic}}
  \label{thm:computing_representative_subsets}
  There is an algorithm which, given a set of partitions
  \(\calA \subseteq \Pi(X)\) of a finite set \(X\) as input, runs in time
  \(\vert \calA \vert \cdot 2^{(\omega - 1)(\vert X \vert)} \cdot |X|^{\Oh{1}}\)
  and outputs a representative subset \(\calB \subseteq \calA\) of size at most
  \(2^{\vert X \vert -1}\).
\end{theorem}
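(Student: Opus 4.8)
The plan is to use the rank-based method for connectivity problems. Encode the target relation as the \emph{connectivity matrix} $M$ over the two-element field $\mathrm{GF}(2)$, indexed by $\Pi(X) \times \Pi(X)$, with $M[P,R] = 1$ exactly when $P \sqcup R = \{X\}$ and $M[P,R]=0$ otherwise. The entire argument rests on a low-rank factorization of $M$. To produce it, fix an element \vstar of $X$ and call a \emph{cut} any ordered partition $(X_{1}, X_{2})$ of $X$ into two (possibly empty) parts with $\vstar \in X_{1}$; there are exactly $2^{|X|-1}$ cuts. Define the \emph{cut matrix} $C$, indexed by $\Pi(X)$ against the set of cuts, by setting $C[P,(X_{1},X_{2})] = 1$ iff every block of $P$ is contained wholly in $X_{1}$ or wholly in $X_{2}$ --- in which case the cut is \emph{consistent} with $P$.

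First I would prove the identity $M = C C^{T}$ over $\mathrm{GF}(2)$. The entry $(CC^{T})[P,R] = \sum_{c} C[P,c]\,C[R,c]$ counts, modulo $2$, the cuts consistent with both $P$ and $R$. A cut is consistent with both iff each block of the join $P \sqcup R$ is monochromatic, which holds because, by \autoref{lem:partition_join_connectivity}, the blocks of $P \sqcup R$ are exactly the sets forced together by the blocks of $P$ and $R$. Since a partition with $c$ blocks admits precisely $2^{c-1}$ consistent cuts (each block chooses a side, with \vstar's side fixed), this count is odd iff $c = 1$, i.e. iff $P \sqcup R = \{X\}$. Hence $(CC^{T})[P,R] = M[P,R]$, and because $C$ has only $2^{|X|-1}$ columns we obtain $\operatorname{rank}_{\mathrm{GF}(2)}(M) \le 2^{|X|-1}$.

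Next, the extraction and its correctness. Restricting $C$ to the rows indexed by the input collection $\calA$, let $\calB \subseteq \calA$ index a maximal $\mathrm{GF}(2)$-linearly independent set of these rows; then $|\calB| \le \operatorname{rank}(C) \le 2^{|X|-1}$, as required. To see that $\calB$ is representative, take any $P \in \calA$ and $R \in \Pi(X)$ with $P \sqcup R = \{X\}$ and write $C[P,\cdot] = \sum_{Q \in S} C[Q,\cdot]$ for some $S \subseteq \calB$. Multiplying on the right by $C[R,\cdot]^{T}$ and using $M = CC^{T}$ gives $1 = M[P,R] = \sum_{Q \in S} M[Q,R]$ over $\mathrm{GF}(2)$, so $M[Q,R] = 1$, i.e. $Q \sqcup R = \{X\}$, for at least one $Q \in S \subseteq \calB$. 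This is exactly the representativeness condition.

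Finally, the running time, which is where I expect the real work. Each row $C[P,\cdot]$ is computed by testing consistency against all $2^{|X|-1}$ cuts in $2^{|X|}\,|X|^{\Oh{1}}$ time, so assembling $C$ costs $|\calA| \cdot 2^{|X|}\,|X|^{\Oh{1}}$, which is absorbed into the target bound since $\omega > 2$. The bottleneck is extracting a row basis: naive Gaussian elimination over an $|\calA| \times 2^{|X|-1}$ matrix costs $|\calA| \cdot 4^{|X|}$, far too slow. The main obstacle is to replace this with a blocked elimination that invokes fast matrix multiplication: maintaining a reduced partial basis of at most $m = 2^{|X|-1}$ rows and reducing the rows of $\calA$ in batches of size $m$ against it, with each batch handled by $\Oh{m^{\omega}}$-time rectangular products, yields amortized cost $\Oh{m^{\omega-1}}$ per input row and the claimed $|\calA| \cdot 2^{(\omega-1)|X|}\,|X|^{\Oh{1}}$ bound overall. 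Making this elimination fit the fast-matrix-multiplication framework cleanly --- rather than the conceptually simpler identity $M = CC^{T}$, which already delivers both the rank bound and representativeness --- is the step I expect to be delicate.
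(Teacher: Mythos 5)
The paper does not prove this statement at all: it is imported as a black box from the cited sources (Cygan et al., Theorem~11.11, and Bodlaender et al., Theorem~3.7), so there is no in-paper proof to compare against. Your reconstruction is correct and is essentially the proof given in those references --- the \(\mathrm{GF}(2)\) factorization \(M = CC^{T}\) via the \(2^{|X|-1}\) cuts fixing the side of \vstar, the parity count \(2^{c-1}\) of consistent cuts, the row-basis extraction with \(\calB\subseteq\calA\), and the \(\Oh{|\calA|\cdot m^{\omega-1}}\) row-basis computation (Bodlaender et al. delegate that last step to known fast Gaussian-elimination results rather than re-deriving the blocked scheme, but your sketch of it is the right one).
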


\section{An \FPT Algorithm for \ESS}\label{sec:ESS}
In this section we prove \autoref{thm:ESS_is_FPT}: we describe an algorithm
which takes an instance \((G, K, \TT, tw)\) of \ESS as input and tells
in \(\OhStar{(1 + 2^{(\omega + 3)})^{tw}}\) time whether graph \(G\) has a
subgraph which is (i) Eulerian, and (ii) contains every vertex in the terminal
set \(K\). As a first step our algorithm applies
\autoref{thm:small_nice_tree_decomposition_computation} to \(\TT\) to
obtain a \emph{nice} tree decomposition in polynomial time. So we assume,
without loss of generality, that \(\TT\) is itself a nice tree
decomposition of width \(tw\). The rest of our algorithm for \ESS consists of
doing dynamic programming (DP) over the bags of this nice tree decomposition,
and is modelled after the algorithm of Bodlaender et
al.~\cite{bodlaenderCyganKratschNederlof2015deterministic} for \ST; see also the
exposition of this algorithm in the textbook of Cygan et
al.~\cite[Sections~7.3.3 and~11.2.2]{cygan2015parameterized}.

We make one further modification to the given nice tree decomposition \(\TT\):
we pick an arbitrary terminal \(\vstar \in K\) and add it to every bag of
\(\TT\); \emph{from now on we use \(\TT\) to refer to the resulting
  ``nearly-nice'' tree decomposition in which the bags at all the leaves and the
  root are equal to \(\{\vstar\}\)}. Note that \(\vstar\) is neither introduced
nor forgotten at any bag of \TT. This step increases the width of \(\TT\) by at
most $1$ and ensures that every bag of \(\TT\) contains at least one terminal
vertex.

Recall that \(G_{t}\) denotes the graph defined by the vertices \(V_{t}\) and
edges \(E_{t}\) of \(G\) which have been ``seen'' in the subtree \(T_{t}\) of
\TT rooted at a node \(t\). If the graph \(G\) has an Eulerian subgraph
$G'=(V',E')$ which contains all the terminals $K$ then it interacts with the
structures defined by node $t$ in the following way: The part of $G'$ contained
in $G_{t}$ is a collection \(\CC = \{C_{1},\dotsc,C_{\ell}\}\) of pairwise
vertex-disjoint connected subgraphs of $G_{t}$. This collection is never empty
because the bag $X_{t}$ contains at least one terminal vertex, \emph{viz.}
\(\vstar\). Indeed, since $G'$ is connected we get that \emph{each element}
\(C_{i}\) of \(\CC\) has a non-empty intersection with $X_{t}$. Further, every
terminal vertex in the set $K \cap V_{t}$ belongs to exactly one element of
\(\CC\).

\begin{definition}[Valid partitions, witness for validity]\label{def:valid_partitions_witnesses}
  For a bag \(X_{t}\) and subsets \(X \subseteq X_{t}\), \(O \subseteq X\), we
  say that a partition \(P=\{X^{1},X^{2},\ldots X^{p}\}\) of \(X\) is
  \emph{valid for the combination} \((t,X,O)\) if there exists a subgraph
  \(G'_{t} = (V'_{t},E'_{t})\) of \(G_{t}\) such that
  \begin{enumerate}
  \item \(X_{t} \cap V(G'_{t})=X\).
  \item \(G'_{t}\) has exactly \(p\) connected components \(C_{1},C_{2},\ldots,
    C_{p}\) and for each \(i \in \{1,2,\ldots,p\}\), \(X^{i} \subseteq
    V(C_{i})\). That is, the vertex set of each connected component of
    \(G'_{t}\) has a non-empty intersection with set \(X\), and \(P\) is the
    partition of \(X\) defined by the subgraph \(G'_{t}\).
  \item Every terminal vertex from $K \cap V_{t}$ is in $V(G'_{t})$.
  \item The set of odd-degree vertices in $G'_{t}$ is exactly the set $O$.
  \end{enumerate}
  Such a subgraph \(G'_{t}\) of $G_{t}$ is a \emph{witness} for partition \(P\)
  being valid for the combination \((t,X,O)\) or, in short: \(G'_{t}\) \emph{is
    a witness for \(((t,X,O), P)\)}.
\end{definition}

Note that the fourth condition implies in particular that every vertex
$v \in V'_{t} \backslash X_{t}$ has an even degree in $G'_{t}$. The intuition
behind this definition is that (i) the subgraph $G'_{t}$ of $G_{t}$ is the
intersection of an (unknown) Eulerian Steiner subgraph \(G'\) of \(G\) with the
``uncovered'' subgraph \(G_{t}\), (ii) the set \(X \subseteq X_{t}\) is the
subset of vertices of \(G'_{t}\) which could potentially gain new neighbours as
we uncover the rest of the subgraph \(G'\), and (iii) the set \(O \subseteq X\)
is exactly the subset of vertices of \(G'_{t}\) which have odd degrees in the
uncovered part, and hence will \emph{definitely} gain new neighbours as we
uncover the rest of \(G'\). By the time we uncover all of \(G'\) (e.g., at the
root node of \TT) there will be (i) no vertices in the set \(O\) and (ii)
exactly one set in the partition \(P\).

\begin{definition}[Completion]\label{def:completion}
  For a bag $X_{t}$ and subsets $X \subseteq X_{t}$, $O \subseteq X$ let \(P\)
  be a partition of \(X\) which is valid for the combination \((t,X,O)\). Let
  \(H\) be a residual subgraph with respect to \(t\) such that \(V(H) \cap X_{t}
  = X\). We say that \(((t,X,O), P)\) \emph{completes} \(H\) if there exists a
  subgraph \(G'_{t}\) of \(G_{t}\) which is a witness for \(((t,X,O), P)\), such
  that the graph \(G'_{t} \cup H\) is an Eulerian Steiner subgraph of \(G\) for
  the terminal set \(K\). We say that \(G'_{t}\) is a \emph{certificate} for
  \(((t,X,O), P)\) completing \(H\).
\end{definition}

\begin{lemma}\label{lem:completion_odd_subset}
  Let \((G, K, \TT, tw)\) be an instance of \ESS. Let \(t\) be an arbitrary node
  of \TT, let $X \subseteq X_{t}$, $O \subseteq X$, let \(P\) be a partition of
  \(X\) which is valid for the combination \((t,X,O)\), and let \(H\) be a
  residual subgraph with respect to \(t\) with \(V(H) \cap X_{t} = X\). If
  \(((t,X,O), P)\) completes \(H\) then the set of odd-degree vertices in \(H\)
  is exactly the set \(O\).
\end{lemma}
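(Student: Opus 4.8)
The plan is to prove the claim by a parity argument on vertex degrees, exploiting that a certificate and the residual subgraph \(H\) contribute \emph{disjoint} edge sets to their union, so that degrees simply add. First I would unpack the hypothesis that \(((t,X,O), P)\) completes \(H\): by \autoref{def:completion} there is a witness \(G'_{t}\) for \(((t,X,O), P)\) — so \(G'_{t}\) in particular satisfies the four conditions of \autoref{def:valid_partitions_witnesses} — such that \(G'_{t} \cup H\) is an Eulerian Steiner subgraph of \(G\). The only property I will actually use is that \(G'_{t} \cup H\) is \emph{Eulerian}, and hence that every one of its vertices has even degree.

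Next I would extract two disjointness facts. Since \(G'_{t}\) is a subgraph of \(G_{t} = (V_{t}, E_{t})\) we have \(E(G'_{t}) \subseteq E_{t}\) and \(V(G'_{t}) \subseteq V_{t}\), whereas \(H\) being residual with respect to \(t\) gives \(E(H) \cap E_{t} = \emptyset\) and \(V(H) \cap Y_{t} = \emptyset\). From \(E(G'_{t}) \cap E(H) = \emptyset\) I obtain, for every vertex \(v\), the additive identity \(deg_{G'_{t} \cup H}(v) = deg_{G'_{t}}(v) + deg_{H}(v)\) (with the convention that a degree is \(0\) when the vertex is absent from the graph). From \(V(H) \cap Y_{t} = \emptyset\), the assumption \(V(H) \cap X_{t} = X\), and \(V_{t} = X_{t} \cup Y_{t}\), I would deduce \(V(H) \cap V_{t} = X\); in particular every vertex of \(H\) lying outside \(X\) is absent from \(V(G'_{t}) \subseteq V_{t}\), and therefore contributes \(deg_{G'_{t}}(v) = 0\).

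Finally I would run the case analysis over the vertices of \(H\), using \(O \subseteq X\) and condition~4 of \autoref{def:valid_partitions_witnesses} (the odd-degree vertices of \(G'_{t}\) are exactly \(O\)). For \(v \in V(H) \setminus X\) the additive identity together with \(deg_{G'_{t}}(v) = 0\) gives \(deg_{H}(v) = deg_{G'_{t} \cup H}(v)\), which is even, so no such vertex is odd in \(H\). For \(v \in X\), the value \(deg_{G'_{t}}(v)\) is odd precisely when \(v \in O\), and since \(deg_{G'_{t} \cup H}(v)\) is even, the additive identity forces \(deg_{H}(v)\) to be odd precisely when \(v \in O\). Hence the set of odd-degree vertices of \(H\) is exactly \(O\), as claimed.

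The whole argument is bookkeeping, so I expect no real difficulty; the one step deserving care is deducing \(V(H) \cap V_{t} = X\) rather than resting content with the weaker given fact \(V(H) \cap X_{t} = X\). It is precisely this stronger equality that guarantees that the ``new'' vertices of \(H\) (those outside \(X\)) carry \emph{no} \(G'_{t}\)-edges and so inherit their even parity directly from the Eulerian union, rather than possibly receiving uncounted contributions from \(G'_{t}\).
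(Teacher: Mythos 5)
Your proof is correct and follows essentially the same route as the paper's: both unpack the completion hypothesis to obtain a witness \(G'_{t}\) with \(G'_{t}\cup H\) Eulerian, use the edge-disjointness of \(E(G'_{t})\) and \(E(H)\) (from the definition of a residual subgraph) to make degrees additive, and then read off the parity of \(deg_{H}(v)\) from the evenness of the union and the fact that the odd-degree vertices of \(G'_{t}\) are exactly \(O\). Your explicit derivation of \(V(H)\cap V_{t}=X\) and the separate handling of vertices of \(H\) outside \(X\) is a slightly more detailed bookkeeping of a case the paper treats implicitly, but it is the same argument.
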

\begin{proof}
  Let \(H_{odd} \subseteq V(H)\) be the set of odd-degree vertices in \(H\).
  Since \(((t,X,O), P)\) completes \(H\) we know that there is a subgraph
  \(G'_{t} = (V'_{t}, E'_{t})\) of \(G_{t}\) which is a witness for \(((t,X,O),
  P)\), such that the graph \(G^{\star} = G'_{t} \cup H\) is an Eulerian Steiner
  subgraph of \(G\) for the terminal set \(K\). Since \(G'_{t}\) is a witness
  for \(((t,X,O), P)\) we get that the set of odd-degree vertices in \(G'_{t}\)
  is exactly the set \(O\). Since \(H\) is a residual subgraph with respect to
  \(t\) we have that \(E'_{t} \cap E(H) = \emptyset\). Thus the degree of any
  vertex \(v\) in the graph \(G^{\star}\) is the sum of its degrees in the two
  subgraphs \(H\) and \(G'_{t}\): \(deg_{G^{\star}}(v) = deg_{H}(v) +
  deg_{G'_{t}}(v)\). And since \(G^{\star}\) is Eulerian we have that
  \(deg_{G^{\star}}(v)\) is even for every vertex \(v \in V(G^{\star})\).

  Now let \(v \in H_{odd} \subseteq V(H)\) be a vertex of odd degree in \(H\).
  Then \(v \in V(G^{\star})\) and we get that
  \(deg_{G'_{t}}(v) = deg_{G^{\star}}(v) - deg_{H}(v)\) is odd. Thus
  \(v \in O\), and so \(H_{odd} \subseteq O\). Conversely, let
  \(x \in O \subseteq V'_{t}\) be a vertex of odd degree in \(G'_{t}\). Then
  \(x \in V(G^{\star})\) and we get that
  \(deg_{H}(x) = deg_{G^{\star}}(x) - deg_{G'_{t}}(x)\) is odd. Thus
  \(x \in H_{odd}\), and so \(O \subseteq H_{odd}\). Thus the set of odd-degree
  vertices in \(H\) is exactly the set \(O\).
\end{proof}

The next lemma tells us that it is safe to apply the representative set
computation to collections of valid partitions.
\begin{lemma}\label{lem:repsets_preserve_completion}
  Let \((G, K, \TT, tw)\) be an instance of \ESS, and let \(t\) be an arbitrary
  node of \TT. Let $X \subseteq X_{t}$, $O \subseteq X$, and let \calA be a
  collection of partitions of \(X\), each of which is valid for the combination
  \((t,X,O)\). Let \calB be a representative subset of \calA, and let \(H\) be
  an arbitrary residual subgraph of \(G\) with respect to \(t\) such that
  \(V(H) \cap X_{t} = X\) holds. If there is a partition \(P \in \calA\) such
  that \(((t,X,O), P)\) completes \(H\) then there is a partition
  \(Q \in \calB\) such that \(((t,X,O), Q)\) completes \(H\).
\end{lemma}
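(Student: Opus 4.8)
The plan is to transfer the completing partition \(P \in \calA\) to a partition \(Q \in \calB\) by rephrasing ``\(G'_{t} \cup H\) is connected'' in the language of partition joins, applying the defining property of representative subsets, and then rephrasing back. First I would fix a witness \(G'_{t}\) for \(((t,X,O),P)\) such that \(G^{\star} = G'_{t} \cup H\) is an Eulerian Steiner subgraph of \(G\) for \(K\), and let \(R\) be the partition of \(X\) defined by the residual subgraph \(H\). The goal of the first half is to show \(P \sqcup R = \{X\}\).

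To invoke \autoref{lem:partition_join_connectivity} with \(H_{P} = G'_{t}\) and \(H_{Q} = H\), I would check its four hypotheses. The conditions \(V(G'_{t}) \cap V(H) = X\) and \(E(G'_{t}) \cap E(H) = \emptyset\) follow routinely: \(G'_{t}\) is a subgraph of \(G_{t} = (V_{t},E_{t})\), while \(H\) being residual with respect to \(t\) gives \(V(H) \cap (V_{t}\setminus X_{t}) = \emptyset\) and \(E(H) \cap E_{t} = \emptyset\), so that \(V(H) \cap V_{t} = X \subseteq V(G'_{t})\). Each component of \(G'_{t}\) meets \(X\) by condition (2) of \autoref{def:valid_partitions_witnesses}. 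The one genuinely new point is that every component of \(H\) meets \(X\): if some component \(D\) of \(H\) had \(V(D) \cap X = \emptyset\), then \(D\) would share neither a vertex nor an edge with \(G'_{t}\) (using \(V(G'_{t}) \cap V(H) = X\) and edge-disjointness), so \(D\) would be split off as a separate component of \(G^{\star}\) — contradicting the connectivity of the (Eulerian, hence connected) graph \(G^{\star}\), which contains \(\vstar \in X\). With all hypotheses verified, \autoref{lem:partition_join_connectivity} gives that \(G^{\star}\) is connected if and only if \(P \sqcup R = \{X\}\); since \(G^{\star}\) is connected, \(P \sqcup R = \{X\}\). The representative-subset property applied to \(P \in \calA\) and \(R \in \Pi(X)\) then yields \(Q \in \calB\) with \(Q \sqcup R = \{X\}\).

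It remains to certify that \(((t,X,O),Q)\) completes \(H\). Since \(Q \in \calB \subseteq \calA\) it is valid for \((t,X,O)\), so I would fix a witness \(G''_{t}\) and show \(G''_{t} \cup H\) is an Eulerian Steiner subgraph. Connectivity follows from \autoref{lem:partition_join_connectivity} once more: its hypotheses hold for \(G''_{t}\) and \(H\) exactly as before (the fact that every component of \(H\) meets \(X\) is independent of the chosen witness), and \(Q \sqcup R = \{X\}\). For even degrees I would use that the odd-degree set of \(G''_{t}\) is \(O\) (validity of \(Q\)), that the odd-degree set of \(H\) is also \(O\) (\autoref{lem:completion_odd_subset}, which applies since \(P\) completes \(H\)), and that \(O \subseteq X = V(G''_{t}) \cap V(H)\); splitting each vertex into the cases ``only in \(G''_{t}\)'', ``only in \(H\)'', and ``in both'', and using \(deg_{G''_{t}\cup H}(v)=deg_{G''_{t}}(v)+deg_{H}(v)\) on the overlap, each case yields an even degree. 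Finally, all terminals are covered: those in \(K \cap V_{t}\) lie in \(V(G''_{t})\) by validity of \(Q\), while those in \(K \setminus V_{t}\) are not in \(V(G'_{t}) \subseteq V_{t}\) and hence, being covered by \(G^{\star}\), already lie in \(V(H)\). Thus \(G''_{t} \cup H\) is an Eulerian Steiner subgraph and \(G''_{t}\) certifies that \(((t,X,O),Q)\) completes \(H\).

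I expect the main obstacle to be the careful bookkeeping in applying \autoref{lem:partition_join_connectivity} — in particular establishing that every component of \(H\) intersects \(X\) (which must be extracted from the connectivity of \(G^{\star}\) rather than assumed) and then the three-case even-degree check, where the inclusion \(O \subseteq X\) together with the edge-disjointness \(E(G''_{t}) \cap E(H) = \emptyset\) are precisely what make the parities cancel on the overlap \(X\).
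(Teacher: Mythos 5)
Your proposal is correct and follows essentially the same route as the paper's proof: translate the connectivity of \(G'_{t} \cup H\) into \(P \sqcup R = \{X\}\) via \autoref{lem:partition_join_connectivity}, invoke the representative-subset property to obtain \(Q \in \calB\) with \(Q \sqcup R = \{X\}\), and then check that any witness for \(((t,X,O),Q)\) together with \(H\) yields an Eulerian Steiner subgraph (terminals, parity, connectivity). You are in fact somewhat more careful than the paper on two points it leaves implicit — verifying that every component of \(H\) meets \(X\) before applying the join lemma, and citing \autoref{lem:completion_odd_subset} for the fact that the odd-degree set of \(H\) is exactly \(O\) — but these are refinements of the same argument, not a different one.
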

\begin{proof}
  Suppose there is a partition \(P \in \calA\) such that \(((t,X,O), P)\)
  completes the residual subgraph \(H\). Then there exists a subgraph \(G'_{t} =
  (V'_{t},E'_{t})\) of $G_{t}$--- \(G'_{t}\) being a witness for \(((t,X,O),
  P)\)---such that (i) \(X_{t} \cap V(G'_{t})=X\), (ii) \(P\) is the partition
  of \(X\) defined by \(G'_{t}\), (iii) every terminal vertex from \(K \cap
  V_{t}\) is in \(V(G'_{t})\), (iv) the set of odd-degree vertices in \(G'_{t}\)
  is exactly the set \(O\), and (v) the graph \(G'_{t} \cup H\) is an Eulerian
  Steiner subgraph of \(G\) for the terminal set \(K\). Observe that every
  terminal vertex in the set \(K \setminus V_{t}\) is in the set \(V(H)\). Let
  \(R\) be the partition of the set \(X\) defined by the residual subgraph
  \(H\). Since the union of \(G'_{t}\) and \(H\) is connected we
  get---\autoref{lem:partition_join_connectivity}---that \(P \sqcup R = \{X\}\)
  holds. Since \calB is a representative subset of \calA we get that there
  exists a partition \(Q \in \calB\) such that \(Q \sqcup R = \{X\}\) holds.
  Since \(\calB \subseteq \calA\) we have that the partition \(Q\) of \(X\) is
  valid for the combination \((t,X,O)\). So there exists a subgraph
  \(G^{\star}_{t} = (V^{\star}_{t},E^{\star}_{t})\) of
  $G_{t}$---\(G^{\star}_{t}\) being a witness for \(((t,X,O), Q)\)---such that
  (i) \(X_{t} \cap V(G^{\star}_{t})=X\), (ii) \(Q\) is the partition of \(X\)
  defined by \(G^{\star}_{t}\), (iii) every terminal vertex from \(K \cap
  V_{t}\) is in \(V(G^{\star}_{t})\), and (iv) the set of odd-degree vertices in
  \(G^{\star}_{t}\) is exactly the set \(O\).
  Now the graph \(G^{\star}_{t} \cup H\):
  \begin{enumerate}
  \item Contains all the terminal vertices \(K\), because every terminal
    vertex from \(K \cap V_{t}\) is in \(V(G^{\star}_{t})\), and every terminal
    vertex in the set \(K \setminus V_{t}\) is in the set \(V(H)\).
  \item Has all degrees even, because (i) the edge sets \(E(G^{\star}_{t})\) and
    \(E(H)\) are disjoint, and (ii) the sets of odd-degree vertices in the two
    graphs \(G^{\star}_{t}\) and \(H\) are identical---namely, the set \(O\).
  \item Is connected---by \autoref{lem:partition_join_connectivity}---because
    \(Q \sqcup R = \{X\}\) holds.
  \end{enumerate}
  Thus the subgraph \(G^{\star}_{t}\) of \(G_{t}\) is a witness for
  \(((t,X,O), Q)\) such that the graph \(G^{\star}_{t} \cup H\) is an Eulerian
  Steiner subgraph of \(G\) for the terminal set \(K\). Hence \(((t,X,O), Q)\)
  completes the residual subgraph \(H\).
\end{proof}

\begin{lemma}\label{lem:validity_condition_at_root}
  Let \((G, K, \TT, tw)\) be an instance of \ESS, let \(r\) be the root node of
  \TT, and let \(\vstar\) be the terminal vertex which is present in every bag
  of \TT. Then \((G, K, \TT, tw)\) is a \yes instance of \ESS if and only if the
  partition \(P = \{\{\vstar\}\}\) is valid for the combination
  \((r,X = \{\vstar\}, O = \emptyset)\).
\end{lemma}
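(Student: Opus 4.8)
The plan is to prove the biconditional in \autoref{lem:validity_condition_at_root} by unwinding the definitions at the root node \(r\), where the tree decomposition has the special structure \(X_{r} = \{\vstar\}\) (recall \(\vstar\) was added to every bag, and the root bag was originally empty). The crucial observation is that at the root \(r\) we have \(V_{r} = V(G)\) and \(E_{r} = E(G)\), so \(G_{r} = G\) itself, and moreover \(Y_{r} = V_{r} \setminus X_{r} = V(G) \setminus \{\vstar\}\) while \(E_{r} = E(G)\). This means the only residual subgraph with respect to \(r\) is the trivial graph \(H\) on the single vertex \(\vstar\) with no edges. First I would record these structural facts so that the notions of ``valid partition'' and ``completion'' collapse to simple statements at the root.

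For the forward direction, I would suppose that \((G,K,\TT,tw)\) is a \yes instance, so there is an Eulerian Steiner subgraph \(G'=(V',E')\) of \(G\) containing all of \(K\); in particular \(\vstar \in V'\). I would then exhibit \(G'\) itself (viewed as a subgraph of \(G_{r} = G\)) as a witness for the partition \(P = \{\{\vstar\}\}\) being valid for \((r, \{\vstar\}, \emptyset)\). I need to verify the four conditions of \autoref{def:valid_partitions_witnesses}: (1) \(X_{r} \cap V(G') = \{\vstar\} \cap V' = \{\vstar\} = X\), using \(\vstar \in V'\); (2) since \(G'\) is Eulerian it is connected, so it has a single component, and \(\{\vstar\}\) is the partition of \(\{\vstar\}\) defined by \(G'\); (3) every terminal in \(K \cap V_{r} = K\) lies in \(V(G') = V'\) by hypothesis; and (4) since \(G'\) is Eulerian all its degrees are even, so the set of odd-degree vertices is \(\emptyset = O\). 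Thus \(P = \{\{\vstar\}\}\) is valid.

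For the reverse direction, I would suppose \(P = \{\{\vstar\}\}\) is valid for \((r, \{\vstar\}, \emptyset)\), so there is a witness \(G'_{r}\) satisfying the four conditions. By condition (4) the odd-degree set of \(G'_{r}\) is empty, so \(G'_{r}\) has all even degrees; by condition (2) it is connected (a single block in \(P\)); hence \(G'_{r}\) is Eulerian. By condition (3) every terminal in \(K \cap V_{r} = K\) is contained in \(V(G'_{r})\), so \(G'_{r}\) is an Eulerian subgraph of \(G\) containing all terminals, witnessing a \yes instance. I expect the main subtlety, rather than any computational obstacle, to be the careful bookkeeping at the root: one must argue cleanly that \(G_{r} = G\) and that the completion machinery of \autoref{def:completion} degenerates here because the residual subgraph is the single isolated vertex \(\vstar\) (with trivial partition \(\{\{\vstar\}\}\) satisfying \(\{\{\vstar\}\} \sqcup \{\{\vstar\}\} = \{\{\vstar\}\} = \{X\}\)). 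A tidy alternative phrasing of the reverse direction is to observe that validity at the root is equivalent, via \autoref{def:completion}, to \(((r,\{\vstar\},\emptyset), P)\) completing the trivial residual graph on \(\vstar\); either way the argument is short once the degeneracy of the root is made explicit.
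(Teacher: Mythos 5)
Your proof is correct and follows essentially the same route as the paper's: in the forward direction you take the Eulerian Steiner subgraph \(G'\) itself as the witness for \(((r,\{\vstar\},\emptyset),\{\{\vstar\}\})\) and check the four conditions of \autoref{def:valid_partitions_witnesses}, and in the reverse direction you read off connectivity, even degrees, and terminal coverage from the witness to conclude it is an Eulerian Steiner subgraph. The added remarks about residual subgraphs and completion are harmless but unnecessary here; that machinery belongs to \autoref{lem:completion_at_root}, not this lemma.
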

\begin{proof}
  Let \((G, K, \TT, tw)\) be a \yes instance of \ESS and let \(G'\) be an
  Eulerian Steiner subgraph of \(G\) for the terminal set \(K\). Then the
  terminal vertex \(\vstar\) is in \(V(G')\). Since \(r\) is the root node of
  \TT we have that \(X_{r} = \{v^{\star}\}\), \(V_{r} = V(G)\) and
  \(G_{r} = G\). We set \(G'_{r} = G'\). Then (i)
  $X_{r} \cap V(G'_{r}) = \{v^{\star}\} = X$, (ii) $G'_{r} = G'$ has exactly one
  connected component \(C_{1} = V(G')\) and the partition
  \(P = \{\{v^{\star}\}\}\) of \(X = \{v^{\star}\}\) is the partition of \(X\)
  defined by $G'_{r}$, (iii) every terminal vertex from $K \cap V_{r} = K$ is in
  \(V(G'_{r}) = V(G')\), and (iv) the set of odd-degree vertices in $G'_{r}$ is
  exactly the empty set $O$.
  Thus the partition \(P = \{\{v^{\star}\}\}\) is valid for the combination
  \((r,X = \{v^{\star}\}, O = \emptyset)\). This completes the forward
  direction.

  For the reverse direction, suppose the partition \(P = \{\{v^{\star}\}\}\) is
  valid for the combination \((r,X = \{v^{\star}\}, O = \emptyset)\). Then by
  definition there exists a subgraph \(G'_{r} = (V'_{r},E'_{r})\) of $G_{r} = G$
  such that (i) $X_{r} \cap V(G'_{r}) = X = \{v^{\star}\}$, (ii) $G'_{r}$ has
  exactly one connected component \(C_{1} = V(G'_{r})\), (iii) every terminal
  vertex from $K \cap V_{r} = K$ is in $V(G'_{r})$, and (iv) the set of
  odd-degree vertices in $G'_{r}$ is exactly the empty set $O$.
  Thus \(G'_{r}\) is a connected subgraph of \(G\) which contains every terminal
  vertex, and whose degrees are all even. But \(G'_{r}\) is then an Eulerian
  Steiner subgraph of \(G\), and so \((G, K, \TT, tw)\) is a \yes instance of
  \ESS.
\end{proof}

\begin{lemma}\label{lem:completion_at_root}
  Let \((G, K, \TT, tw)\) be an instance of \ESS, let \(r\) be the \emph{root}
  node of \TT, and let \(v^{\star}\) be the terminal vertex which is present in
  every bag of \TT. Let \(H = (\{v^{\star}\}, \emptyset)\),
  \(X = \{v^{\star}\}\), \(O = \emptyset\), and \(P = \{\{v^{\star}\}\}\).
  Then \((G, K, \TT, tw)\) is a \yes instance if and only if \(((r,X,O), P)\)
  completes \(H\).
\end{lemma}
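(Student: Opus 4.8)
The plan is to reduce \autoref{lem:completion_at_root} to the already-proven \autoref{lem:validity_condition_at_root}, since at the root node the two statements are essentially about the same object. The key observation is that the specified residual subgraph $H = (\{v^{\star}\}, \emptyset)$ is the trivial single-vertex graph with no edges, and that at the root node $r$ we have $X_{r} = \{v^{\star}\}$, $V_{r} = V(G)$, and $G_{r} = G$. I would first verify that $H$ is indeed a legitimate residual subgraph with respect to $r$ satisfying $V(H) \cap X_{r} = \{v^{\star}\} = X$: this holds because $H$ has no edges (so $E(H) \cap E_{r} = \emptyset$ trivially) and its only vertex $v^{\star}$ lies in the bag $X_{r}$, so $V(H) \cap Y_{r} = \emptyset$ where $Y_{r} = V_{r} \setminus X_{r}$.

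\medskip

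For the forward direction, suppose $((r,X,O), P)$ completes $H$. By \autoref{def:completion} there is a witness $G'_{r}$ for $((r,X,O), P)$ such that $G'_{r} \cup H$ is an Eulerian Steiner subgraph of $G$ for $K$. Since $H$ contributes only the isolated vertex $v^{\star}$ (which is already in $V(G'_{r})$ because $X_{r} \cap V(G'_{r}) = X = \{v^{\star}\}$) and no edges, we have $G'_{r} \cup H = G'_{r}$. Hence $G'_{r}$ itself is an Eulerian Steiner subgraph, so $(G, K, \TT, tw)$ is a \yes instance. In particular, the mere existence of the witness $G'_{r}$ shows that $P = \{\{v^{\star}\}\}$ is valid for $(r, X, O)$, and then \autoref{lem:validity_condition_at_root} already gives the \yes instance; alternatively the direct argument above suffices.

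\medskip

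For the reverse direction, suppose $(G, K, \TT, tw)$ is a \yes instance. By \autoref{lem:validity_condition_at_root} the partition $P = \{\{v^{\star}\}\}$ is valid for $(r, X = \{v^{\star}\}, O = \emptyset)$, so there is a witness $G'_{r}$ for $((r,X,O), P)$. Following the proof of that lemma, $G'_{r}$ can be taken to be a connected subgraph of $G$ containing every terminal, with all degrees even; that is, $G'_{r}$ is an Eulerian Steiner subgraph of $G$ for $K$. Because $H$ adds nothing, $G'_{r} \cup H = G'_{r}$ is Eulerian and contains $K$, so $G'_{r}$ is a certificate for $((r,X,O), P)$ completing $H$. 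Hence $((r,X,O), P)$ completes $H$.

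\medskip

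I expect no genuine obstacle here: the content is that $H$ being the empty-edge single-vertex graph makes the union $G'_{r} \cup H$ coincide with $G'_{r}$, so ``completes $H$'' collapses to ``is itself an Eulerian Steiner subgraph,'' which is exactly the validity condition characterized in \autoref{lem:validity_condition_at_root}. The only point requiring a line of care is confirming that $v^{\star} \in V(G'_{r})$ so that adjoining $H$ truly changes nothing (rather than adding a spurious isolated vertex that would violate connectivity); this follows immediately from condition (i) of validity, namely $X_{r} \cap V(G'_{r}) = \{v^{\star}\}$.
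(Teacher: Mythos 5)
Your proof is correct and follows essentially the same route as the paper: both reduce the claim to \autoref{lem:validity_condition_at_root} and observe that since $H = (\{v^{\star}\}, \emptyset)$ contributes no new vertices or edges (as $v^{\star} \in V(G'_{r})$ by the validity condition), the union $G'_{r} \cup H$ equals $G'_{r}$, so completion of $H$ collapses to $G'_{r}$ itself being an Eulerian Steiner subgraph. The only cosmetic difference is that you label the directions oppositely to the paper and argue the ``completes $H$ implies \yes'' direction slightly more explicitly where the paper dismisses it as trivial.
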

\begin{proof}
  Note that \(G_{r} = G\). It is easy to verify by inspection that \(H\) is a
  residual subgraph with respect to \(r\).

  Let \((G, K, \TT, tw)\) be a \yes instance of \ESS and let \(G'\) be an
  Eulerian Steiner subgraph of \(G\) for the terminal set \(K\). Then the
  terminal vertex \(v^{\star}\) is in \(V(G')\). From
  \autoref{lem:validity_condition_at_root} we get that partition \(P\) is valid
  for the combination \((r,X,O)\), and from the proof of
  \autoref{lem:validity_condition_at_root} we get that the Eulerian Steiner
  subgraph \(G'\) is itself a witness for \(((r,X,O), P)\). Now
  \(((V(G') \cup V(H)), (E(G') \cup E(H))) = (V(G'), E(G')) = G'\), and so
  \(G' \cup H\) is an Eulerian Steiner subgraph of
  \(G\) for the terminal set \(K\). Thus \(((r,X,O), P)\) completes \(H\).

  The reverse direction is trivial: if \(((r,X,O), P)\) completes \(H\) then
  by definition there exists an Eulerian Steiner subgraph of \(G\) for the
  terminal set \(K\), and so \((G, K, \TT, tw)\) is a \yes instance.
\end{proof}

A na\"ive implementation of our algorithm would consist of computing, for each
node \(t\) of the tree decomposition \TT---starting at the leaves and working up
towards the root---and subsets \(O \subseteq X \subseteq X_{t}\), the set of all
partitions \(P\) which are valid for the combination \((t, X, O)\). At the root
node \(r\) the algorithm would apply \autoref{lem:validity_condition_at_root} to
decide the instance \((G, K, \TT, tw)\). Since a bag \(X_{t}\) can have up to
\(tw + 2\) elements (including the special terminal \(v^{\star}\)) the running
time of this algorithm could have a factor of \(tw^{tw}\) in it, since \(X_{t}\)
can have these many partitions. To avoid this we turn to the completion-based
alternate characterization of \yes
instances---\autoref{lem:completion_at_root}---and the
fact---\autoref{lem:repsets_preserve_completion}---that representative subset
computations do not ``forget'' completion properties. After computing a set
\calA of valid partitions for each combination \((t, X, O)\) we compute a
representative subset \(\calB \subseteq \calA\) and throw away the remaining
partitions \(\calA \setminus \calB\). Thus the number of partitions which we
need to remember for any combination \((t, X, O)\) never exceeds \(2^{tw}\). We
now describe the steps of the DP algorithm for each type of node in \TT. We use
\(VP[t, X, O]\) to denote the set of \textbf{v}alid \textbf{p}artitions for the
combination \((t, X, O)\) which we store in the DP table for node \(t\).

\begin{description}
\item[Leaf node \(t\):] In this case \(X_{t} = \{v^{\star}\}\). Set
  \(VP[t, \{v^{\star}\}, \{v^{\star}\}] = \emptyset\),
  \(VP[t, \{v^{\star}\}, \emptyset] = \{\{\{v^{\star}\}\}\}\), and
  \(VP[t, \emptyset, \emptyset] = \{\emptyset\}\).
\item[Introduce vertex node \(t\):] Let \(t'\) be the child node of \(t\), and
  let \(v\) be the vertex introduced at \(t\). Then \(v \notin X_{t'}\) and
  \(X_{t} = X_{t'} \cup \{v\}\). For each \(X \subseteq X_{t}\) and
  \(O \subseteq X\),
  \begin{enumerate}
  \item If \(v\) is a terminal vertex, then 
    \begin{itemize}
    \item if \(v \notin X\) or if \(v \in O\) then set
      \(VP[t, X, O] = \emptyset\)
    \item if \(v \in (X \setminus O)\) then for each partition \(P'\) in
      \(VP[t', X \setminus \{v\}, O]\), add the partition
      \(P = (P' \cup \{\{v\}\})\) to the set \(VP[t, X, O]\)
    \end{itemize}
  \item If \(v\) is \emph{not} a terminal vertex, then
    \begin{itemize}
    \item if \(v \in O\) then set \(VP[t, X, O] = \emptyset\)
    \item if \(v \in (X \setminus O)\) then for each partition \(P'\) in
      \(VP[t', X \setminus \{v\}, O]\), add the partition
      \(P = P' \cup \{\{v\}\}\) to the set \(VP[t, X, O]\)
    \item if \(v \notin X\) then set \(VP[t, X, O] = VP[t', X, O]\)
    \end{itemize}
  \item Set \(\calA = VP[t, X, O]\). Compute a representative subset
    \(\calB \subseteq \calA\) and set \(VP[t, X, O] = \calB\).
  \end{enumerate}
\item[Introduce edge node \(t\):] Let \(t'\) be the child node of \(t\), and let
  \(uv\) be the edge introduced at \(t\). Then \(X_{t} = X_{t'}\) and
  \(uv \in (E(G_{t}) \setminus E(G_{t'}))\). For each \(X \subseteq X_{t}\) and
  \(O \subseteq X\),
  \begin{enumerate}
  \item Set \(VP[t, X, O] = VP[t', X, O]\).
  \item If \(\{u,v\} \subseteq X\) then:
    \begin{enumerate}
    \item Construct a set of \emph{candidate partitions} \PP as follows.
      Initialize \(\PP = \emptyset\).
      \begin{itemize}
      \item if \(\{u,v\} \subseteq O\) then add all the partitions in
        \(VP[t', X, O \setminus \{u,v\}]\) to \PP.
      \item if \(\{u,v\} \cap O = \{u\}\) then add all the partitions in
        \(VP[t', X, (O \setminus \{u\}) \cup \{v\}]\) to \PP.
      \item if \(\{u,v\} \cap O = \{v\}\) then add all the partitions in
        \(VP[t', X, (O \setminus \{v\}) \cup \{u\}]\) to \PP.
      \item if \(\{u,v\} \cap O = \emptyset\) then add all the partitions in
        \(VP[t', X, O \cup \{u, v\}]\) to \PP.
      \end{itemize}
    \item For each candidate partition \(P' \in \PP\), if vertices \(u, v\) are
      in different blocks of \(P'\)---say
      \(u \in P'_{u}, v \in P'_{v}\;;\;P'_{u} \neq P'_{v}\)---then merge these
      two blocks of \(P'\) to obtain \(P\). That is, set
      \(P = (P' \setminus \{P'_{u}, P'_{v}\}) \cup (P'_{u} \cup P'_{v})\). Now
      set \(\PP = (\PP \setminus \{P'\}) \cup P\).
    \item Add all of \(\PP\) to the list \(VP[t, X, O]\).
    \end{enumerate}
  \item Set \(\calA = VP[t, X, O]\). Compute a representative subset
    \(\calB \subseteq \calA\) and set \(VP[t, X, O] = \calB\).
  \end{enumerate}
\item[Forget node \(t\):] Let \(t'\) be the child node of \(t\), and let \(v\)
  be the vertex forgotten at \(t\). Then \(v \in X_{t'}\) and
  \(X_{t} = X_{t'} \setminus \{v\}\). Recall that \(P(v)\) is the block of
  partition \(P\) which contains element \(v\), and that \(P - v\) is the
  partition obtained by eliding \(v\) from \(P\). For each \(X \subseteq X_{t}\)
  and \(O \subseteq X\),
  \begin{enumerate}
  \item Set
    \(VP[t, X, O] = \{P' - v \;;\; P' \in VP[t', X \cup \{v\}, O],\,|P'(v)| >
    1\}\).
  \item If \(v\) is \emph{not} a terminal vertex then set
    \(VP[t, X, O] = VP[t, X, O] \cup VP[t', X, O]\).
  \item Set \(\calA = VP[t, X, O]\). Compute a representative subset
    \(\calB \subseteq \calA\) and set \(VP[t, X, O] = \calB\).
  \end{enumerate}
\item[Join node \(t\):] Let \(t_{1}, t_{2}\) be the children of \(t\). Then
  \(X_{t} = X_{t_{1}} = X_{t_{2}}\). For each
  \(X \subseteq X_{t}, O \subseteq X\):
  \begin{enumerate}
  \item Set \(VP[t, X, O] = \emptyset\)
  \item For each \(O_{1} \subseteq O\) and
    \(\hat{O} \subseteq (X \setminus O)\): 
    \begin{enumerate}
    \item Let \(O_{2} = O \setminus O_{1}\).
    \item For each pair of partitions
      \(P_{1} \in VP[t_{1}, X, O_{1} \cup \hat{O}], P_{2} \in VP[t_{2}, X, O_{2}
      \cup \hat{O}]\), add their join \(P_{1} \sqcup P_{2}\) to the set
      \(VP[t, X, O]\).
    \end{enumerate}
  \item Set \(\calA = VP[t, X, O]\). Compute a representative subset
    \(\calB \subseteq \calA\) and set \(VP[t, X, O] = \calB\).
  \end{enumerate}
\end{description}

We now show that this DP correctly computes a solution in the stated time bound.
We assume that the tree decomposition in the input instance is modified as
described earlier. We prove the correctness of the algorithm by induction on the
structure of this tree decomposition \TT. The key insight in the proof is that
the processing at every node in \TT preserves the following
  \paragraph*{Correctness Criteria}\label{correctness_criteria}
  Let \(t\) be a node of \TT, let \(X \subseteq X_{t}, O \subseteq X\), and let
  \(VP[t, X, O]\) be the set of partitions computed by the DP for the
  combination \((t, X, O)\).
  \begin{enumerate}
  \item{\textbf{Soundness:}} Every partition \(P \in VP[t, X, O]\) is valid for
    the combination \((t, X, O)\).
  \item{\textbf{Completeness:}} For any residual subgraph \(H\) with respect to
    \(t\) with \(V(H) \cap X_{t} = X\), if there exists a partition \(P\) of
    \(X\) such that \(((t, X, O), P)\) completes \(H\) then the set
    \(VP[t, X, O]\) contains a partition \(Q\) of \(X\) such that
    \(((t, X, O), Q)\) completes \(H\). Note that
    \begin{itemize}
    \item the two partitions \(P,Q\) must both be valid for the combination
      \((t, X, O)\); and,
    \item \(Q\) can potentially be the same partition as \(P\).
    \end{itemize}
  \end{enumerate}

  The processing at each of the non-leaf nodes computes a representative subset
  as a final step. This step does not negate the correctness criteria.
  \begin{observation}\label{obs:repset_computation_preserves_correctness}
    Let \(t\) be a node of \TT, let \(X \subseteq X_{t}, O \subseteq X\), and
    let \calA be a set of partitions which satisfies the correctness criteria
    for the combination \((t, X, O)\). Let \calB be a representative subset of
    \calA. Then \calB satisfies the correctness criteria for the combination
    \((t, X, O)\).
  \end{observation}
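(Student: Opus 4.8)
The plan is to verify the two correctness criteria---soundness and completeness---separately for \calB, leaning on the fact that \calB is a subset of \calA together with the completion-preservation guarantee already established in \autoref{lem:repsets_preserve_completion}.

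Soundness is immediate. Since \(\calB \subseteq \calA\) and, by the soundness of \calA, every partition in \calA is valid for the combination \((t, X, O)\), every partition in \calB is valid for \((t, X, O)\) as well.

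For completeness, I would fix an arbitrary residual subgraph \(H\) with respect to \(t\) with \(V(H) \cap X_{t} = X\), and suppose there is some partition \(P\) of \(X\) such that \(((t, X, O), P)\) completes \(H\). The subtle point---and the only place where any care is needed---is that this \(P\) need \emph{not} belong to \calA; the completeness criterion only hands us an arbitrary completing partition of \(X\). So I would first invoke the completeness of \calA: since \emph{some} partition of \(X\) completes \(H\), there is a partition \(P' \in \calA\) such that \(((t, X, O), P')\) completes \(H\). Now the completing partition lives inside \calA, which is exactly the hypothesis required by \autoref{lem:repsets_preserve_completion}.

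It then remains to check that the hypotheses of \autoref{lem:repsets_preserve_completion} are met and to apply it. By the soundness of \calA, every partition in \calA is valid for \((t, X, O)\), so \calA is a collection of valid partitions as the lemma demands; \calB is a representative subset of \calA by assumption; and \(H\) is a residual subgraph with \(V(H) \cap X_{t} = X\). The lemma then yields a partition \(Q \in \calB\) such that \(((t, X, O), Q)\) completes \(H\), which is precisely the conclusion required by the completeness criterion for \calB. I do not expect a genuine obstacle here: the whole argument is a two-step chaining---completeness of \calA to bring the completing partition into \calA, followed by \autoref{lem:repsets_preserve_completion} to push it down into \calB---and the real combinatorial work (the join/connectivity argument via \autoref{lem:partition_join_connectivity}) has already been discharged inside that lemma.
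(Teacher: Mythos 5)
Your proof is correct and follows essentially the same route as the paper's: soundness from \(\calB \subseteq \calA\), and completeness via \autoref{lem:repsets_preserve_completion}. The one refinement you add---first invoking the completeness of \calA to move the arbitrary completing partition \(P\) into \calA before the lemma can be applied---is a step the paper's two-line proof leaves implicit, and spelling it out is a genuine (if minor) improvement in rigour.
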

  \begin{proof}
    Since \(\calB \subseteq \calA\) holds we get that \calB satisfies the
    soundness criterion. From \autoref{lem:repsets_preserve_completion} we get
    that \calB satisfies the completeness criterion as well.
  \end{proof}

\begin{lemma}\label{lem:leaf_node_ok}
  Let \(t\) be a leaf node of the tree decomposition \TT and let
  \(X \subseteq X_{t}, O \subseteq X\) be arbitrary subsets of \(X_{t}, X\)
  respectively. The collection \calA of partitions computed by the DP for the
  combination \((t, X, O)\) satisfies the correctness criteria.
\end{lemma}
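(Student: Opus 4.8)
The plan is to verify the two Correctness Criteria directly, exploiting the fact that a leaf carries essentially no structure. At a leaf \(t\) we have \(X_{t} = \{\vstar\}\), and since no vertex other than \(\vstar\) ever appears and no edge is introduced in the single-node subtree \(T_{t}\), the graph \(G_{t}\) defined by \(t\) is the isolated vertex \((\{\vstar\}, \emptyset)\). Consequently \(G_{t}\) has exactly two subgraphs: the empty graph \((\emptyset, \emptyset)\) and \(G_{t}\) itself. The first step is therefore to enumerate, by brute force over these two subgraphs, the partitions that are valid (in the sense of \autoref{def:valid_partitions_witnesses}) for each admissible combination \((t, X, O)\) with \(O \subseteq X \subseteq \{\vstar\}\), and to compare the result against the three entries the DP writes at the leaf.

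For soundness I would exhibit an explicit witness for each partition the DP stores. For the combination \((\{\vstar\}, \emptyset)\) the witness is \(G_{t}\) itself: its single vertex \(\vstar\) has (even) degree \(0\), forms one connected component, and covers the only terminal in \(K \cap V_{t} = \{\vstar\}\); hence the stored partition \(\{\{\vstar\}\}\) is valid. For \((\emptyset, \emptyset)\) the witness is the empty graph, yielding the empty partition. For \((\{\vstar\}, \{\vstar\})\) I would observe that no witness can exist, since \(\vstar\) would have to have odd degree while every subgraph of the edgeless \(G_{t}\) gives it degree \(0\); this matches the DP setting \(VP[t, \{\vstar\}, \{\vstar\}] = \emptyset\), so soundness holds vacuously there.

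For completeness the key observation is that in each of the three cases the set \(VP[t, X, O]\) computed at the leaf is in fact the \emph{entire} set of partitions valid for \((t, X, O)\) — there is at most one such partition, and it is exactly the one the DP records. Thus if some partition \(P\) satisfies that \(((t, X, O), P)\) completes a residual subgraph \(H\), then \(P\) is in particular valid (by \autoref{def:completion}) and therefore already lies in \(VP[t, X, O]\); taking \(Q = P\) discharges the completeness requirement. Since the leaf processing performs no representative-subset computation, I would not need to invoke \autoref{obs:repset_computation_preserves_correctness} here.

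The only delicate point — and the one I expect to require the most care — is the degenerate combination \((\emptyset, \emptyset)\), where both the witness graph and the partition itself are empty. Here one must fix the conventions for the empty subgraph and for the partition of the empty set, and read the terminal-coverage requirement (condition 3 of \autoref{def:valid_partitions_witnesses}) consistently in the case \(X = \emptyset\), where no bag vertex is placed in the witness. I would argue that this entry behaves purely as a formal base case and never yields a spurious positive answer, because \(\vstar\) lies in every bag and the root test of \autoref{lem:validity_condition_at_root} demands \(\vstar \in X\). Apart from this bookkeeping, the lemma is a finite verification with no genuine combinatorial obstacle.
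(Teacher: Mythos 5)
Your handling of the two combinations with \(X = \{\vstar\}\) is correct and matches the paper's proof: \(G_{t}\) itself witnesses \(\{\{\vstar\}\}\) for \((t,\{\vstar\},\emptyset)\), and no subgraph of the edgeless \(G_{t}\) can give \(\vstar\) odd degree, so both criteria hold vacuously for \((t,\{\vstar\},\{\vstar\})\).

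The gap is in the combination \((X,O) = (\emptyset,\emptyset)\). You claim the empty graph is a witness ``yielding the empty partition,'' but under \autoref{def:valid_partitions_witnesses} it is not: condition 3 requires every terminal of \(K \cap V_{t}\) to lie in the witness, and \(\vstar \in K \cap V_{t}\) at a leaf, while condition 1 with \(X = \emptyset\) forces \(\vstar\) \emph{out} of the witness. These two conditions are jointly unsatisfiable, so \emph{no} partition is valid for \((t,\emptyset,\emptyset)\) --- this is exactly the paper's argument, and it makes both criteria hold vacuously with \(\calA\) treated as empty. Your proposed fallback --- that the entry ``never yields a spurious positive answer because the root test demands \(\vstar \in X\)'' --- is a global argument that does not establish the \emph{local} soundness invariant the induction needs: soundness requires every stored partition to be valid for its own combination \((t,X,O)\), not merely that the answer at the root comes out right. (You are not entirely to blame for stumbling here: the DP description writes \(VP[t,\emptyset,\emptyset] = \{\emptyset\}\) while the paper's proof of this lemma takes \(\calA = \emptyset\) for that entry; the proof's reading is the one consistent with the stated invariant.) To repair your argument, drop the empty-witness claim and instead observe that conditions 1 and 3 conflict whenever \(X = \emptyset\) at a leaf, concluding vacuity of both criteria for that combination.
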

\begin{proof}
  Here \(X_{t} = \{v^{\star}\}\). Note that the graph \(G_{t}\) consists of (i)
  the one vertex \(v^{\star}\), and (ii) no edges. We verify the conditions for
  all the three possible cases:
  \begin{itemize}
  \item \(X = \{v^{\star}\}, O = \{v^{\star}\}\). The algorithm sets
    \(\calA = \emptyset\). The soundness criterion holds vacuously.

    Observe that there is no subgraph \(G_{t'}\) of \(G_{t}\) in which vertex
    \(v^{\star}\) has an odd degree. This means that there can exist no subgraph
    \(G_{t'}\) of \(G_{t}\) for which the fourth condition in the definition of
    a valid partition---~\autoref{def:valid_partitions_witnesses}---holds. Thus
    there is no partition which is valid for the combination \((t, X, O)\).
    Hence the completeness criterion holds vacuously as well.
  \item \(X = \{v^{\star}\}, O = \emptyset\). The algorithm sets
    \(\calA = \{\{\{v^{\star}\}\}\}\). It is easy to verify by inspection that
    the subgraph \(G_{t'} = G_{t}\) of \(G_{t}\) is a witness for the partition
    \(\{\{v^{\star}\}\}\) being valid for the combination \((t, X, O)\). Hence
    the soundness criterion holds.

    Since \(X\) is the set \(\{v^{\star}\}\), the \emph{only} valid partition
    for the combination \((t, X, O)\) is \(\{\{v^{\star}\}\}\). Hence the
    completeness criterion holds trivially.
  \item \(X = \emptyset, O = \emptyset\). The algorithm sets
    \(\calA = \emptyset\). The soundness criterion holds vacuously.

    Since \(v^{\star} \in V_{t}\) is a terminal vertex and \(X = \emptyset\)
    holds, there can exist no subgraph \(G_{t'}\) of \(G_{t}\) for which both
    the conditions (1) and (3) of the definition of a valid
    partition---~\autoref{def:valid_partitions_witnesses}---hold simultaneously.
    Thus there is no partition which is valid for the combination \((t, X, O)\).
    Hence the completeness criterion holds vacuously as well. \qedhere
  \end{itemize}
\end{proof}

\begin{lemma}\label{lem:introduce_vertex_node_ok}
  Let \(t\) be an introduce vertex node of the tree decomposition \TT and let
  \(X \subseteq X_{t}, O \subseteq X\) be arbitrary subsets of \(X_{t}, X\)
  respectively. The collection \calA of partitions computed by the DP for the
  combination \((t, X, O)\) satisfies the correctness criteria.
\end{lemma}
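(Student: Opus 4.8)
The plan is to argue by induction on the tree decomposition, taking as the induction hypothesis that the DP table at the child node \(t'\) of \(t\) already satisfies both the soundness and completeness parts of the Correctness Criteria. Since the processing at \(t\) ends by replacing \(\calA\) with a representative subset \(\calB\), and since \autoref{obs:repset_computation_preserves_correctness} guarantees that this last step preserves the criteria, it suffices to prove that the set \(\calA = VP[t, X, O]\) computed \emph{before} the representative-subset step satisfies the criteria. The single structural fact driving the whole argument is that \(v\) is introduced without any incident edges, so \(v\) is an \emph{isolated} vertex of \(G_{t}\); consequently \(E_{t} = E_{t'}\) and \(Y_{t} = Y_{t'}\) (because \(V_{t} = V_{t'} \cup \{v\}\) and \(X_{t} = X_{t'} \cup \{v\}\)), which means that a graph is a residual subgraph with respect to \(t\) exactly when it is a residual subgraph with respect to \(t'\). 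I would record at the outset the two consequences I use repeatedly: in any witness \(G'_{t}\) the vertex \(v\) (if present) has degree \(0\), hence \(v \notin O\) and \(\{v\}\) is a singleton block of the defined partition; and, since \(X = X_{t} \cap V(G'_{t})\), the vertex \(v\) lies in \(V(G'_{t})\) if and only if \(v \in X\).

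For soundness I would follow the branches of the algorithm. When \(v \in X \setminus O\) the algorithm lifts each child partition \(P' \in VP[t', X \setminus \{v\}, O]\) to \(P = P' \cup \{\{v\}\}\); taking a child witness \(G'_{t'}\) for \(P'\) and adding the isolated vertex \(v\) yields a subgraph \(G'_{t}\) of \(G_{t}\) for which all four conditions of \autoref{def:valid_partitions_witnesses} are immediate (the terminal condition for \(v\) holding precisely because \(v \in V(G'_{t})\) when \(v\) is a terminal, and being irrelevant otherwise). When \(v\) is a non-terminal outside \(X\) the algorithm copies \(VP[t', X, O]\), and a child witness is already a witness at \(t\) since \(v \notin V_{t'}\) forces \(v \notin V(G'_{t'})\). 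The remaining branches set \(\calA = \emptyset\), so soundness is vacuous.

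For completeness, fix a residual subgraph \(H\) with respect to \(t\) with \(V(H) \cap X_{t} = X\) and a valid partition \(P\) with \(((t, X, O), P)\) completing \(H\), witnessed by some \(G'_{t}\) with \(G'_{t} \cup H\) Eulerian and spanning \(K\). In the branches where \(\calA = \emptyset\) I would show no such \(P\) can exist: if \(v \in O\) then the isolated-degree observation forces \(v \notin O\), and if \(v\) is a terminal with \(v \notin X\) then the terminal condition forces \(v \in V(G'_{t})\) and hence \(v \in X\) --- either way a contradiction, so completeness is vacuous. In the surviving branches I would reduce to the child. The crucial point is that \(H\) is simultaneously a residual subgraph with respect to \(t'\) and, when \(v \in X\), one has \(v \in V(H)\), so that deleting the isolated \(v\) from \(G'_{t}\) gives a child witness \(G'_{t'}\) with \(G'_{t'} \cup H = G'_{t} \cup H\) as graphs (the vertex \(v\) is reabsorbed through \(H\)); likewise \(V(H) \cap X_{t'} = X \setminus \{v\}\). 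Thus \(((t', X \setminus \{v\}, O), P - v)\) completes \(H\), the induction hypothesis at \(t'\) supplies a \(Q' \in VP[t', X \setminus \{v\}, O]\) completing \(H\), and lifting it to \(Q = Q' \cup \{\{v\}\}\) --- exactly what the algorithm stores --- gives, by the same graph-identity, a completion of \(H\) by \(((t, X, O), Q)\). The non-terminal \(v \notin X\) branch is handled identically but without the lifting, since there \(G'_{t}\) itself lives in \(G_{t'}\) and \(VP[t, X, O] = VP[t', X, O]\).

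I expect the main obstacle to be the completion bookkeeping in the last paragraph: one must verify with care that the union graph is genuinely unchanged when passing between \(G'_{t}\) and \(G'_{t'}\) (the identity \(G'_{t} \cup H = G'_{t'} \cup H\) hinges on \(v\) already belonging to \(V(H)\)), and that the terminal, parity, and connectivity conditions of \autoref{def:valid_partitions_witnesses} transfer cleanly in both directions across the introduced vertex. Everything else is routine checking of the four witness conditions against the branches of the algorithm.
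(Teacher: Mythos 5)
Your proposal is correct and follows essentially the same route as the paper's proof: the same branch-by-branch analysis driven by the observation that the introduced vertex \(v\) is isolated in \(G_{t}\), the same vacuous-case arguments for the empty branches, and the same completeness reduction to the child via the identity \(G'_{t} \cup H = G'_{t'} \cup H\) (justified by \(v \in V(H)\) whenever \(v \in X\)), followed by lifting the child's partition with the singleton block \(\{v\}\). The paper spells out the witness conditions in more detail at each step, but the ideas and their order are the same as yours.
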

\begin{proof}
  Let \(t'\) be the child node of \(t\), and let \(v\) be the vertex introduced
  at \(t\). Then \(v \notin X_{t'}\) and \(X_{t} = X_{t'} \cup \{v\}\) hold.
  Note that no edges incident with \(v\) have been introduced so far; so we have
  that \(deg_{G_{t}}(v) = 0\) holds. We analyze each choice made by the
  algorithm:
  \begin{enumerate}
  \item If \(v \in O\) holds then the algorithm sets \(\calA = \emptyset\). The
    soundness criterion holds vacuously.

    Since \(deg_{G_{t}}(v) = 0\) holds, there can exist no subgraph \(G_{t'}\)
    of \(G_{t}\) for which the fourth condition of the definition of a valid
    partition---~\autoref{def:valid_partitions_witnesses}---holds. Thus there is
    no partition which is valid for the combination \((t, X, O)\). Hence the
    completeness condition holds vacuously as well.

  \item If \(v \in (X \setminus O)\) holds then the algorithm takes each
    partition \(P'\) in \(VP[t', X \setminus \{v\}, O]\) and adds the partition
    \(P = (P' \cup \{\{v\}\})\) to the set \(\calA\). By inductive assumption we
    have that the set \(VP[t', X \setminus \{v\}, O]\) of partitions is sound
    and complete for the combination \((t', X \setminus \{v\}, O)\).

    Let \(P = (P' \cup \{\{v\}\})\) be an arbitrary partition in the set
    \(\calA\), where \(P'\) is a partition from the set
    \(VP[t', X \setminus \{v\}, O]\). Then the partition \(P'\) is valid for the
    combination \((t', X \setminus \{v\}, O)\), and so there exists a subgraph
    \(H\) of the graph \(G_{t'}\) such that \(H\) is a witness for
    \(((t', X \setminus \{v\}, O), P')\). It is easy to verify by inspection
    that the graph \(G_{t}' = ( V(H) \cup \{v\}, E(H))\) is a subgraph of
    \(G_{t}\) which satisfies all the four conditions of
    \autoref{def:valid_partitions_witnesses} for being a witness for
    \(((t, X, O), P)\). Thus the soundness condition holds for the set
    \(\calA\).
    
    Now we prove completeness. So let \(H\) be a residual subgraph with respect
    to \(t\) with \(V(H) \cap X_{t} = X\), for which there exists a partition
    \(P=\{X^{1},X^{2},\ldots X^{p}\}\) of \(X\) such that \(((t, X, O), P)\)
    completes \(H\). We need to show that the set \calA computed by the
    algorithm contains some partition \(Q\) of \(X\) such that
    \(((t, X, O), Q)\) completes \(H\). Observe that there exists a subgraph
    \(G'_{t}\) of \(G_{t}\)---a witness for \(((t,X,O), P)\)---such that the
    following hold:
    \begin{enumerate}
    \item \(X_{t} \cap V(G'_{t})=X\).
    \item \(G'_{t}\) has exactly \(p\) connected components
      \(C_{1}, C_{2}, \dotsc, C_{p}\) and for each
      \(i \in \{1, 2, \dotsc, p\}$, $X^{i} \subseteq V(C_{i})\) holds.
    \item Every terminal vertex from \(K \cap V_{t}\) is in \(V(G'_{t})\).
    \item The set of odd-degree vertices in \(G'_{t}\) is exactly the set \(O\).
    \item The graph \(G'_{t} \cup H\) is an
      Eulerian Steiner subgraph of \(G\) for the terminal set \(K\).
    \end{enumerate}
    Since \(deg_{G_{t}}(v) = 0\) holds, we get that \(deg_{G'_{t}}(v) = 0\)
    holds as well. Thus vertex \(v\) forms a connected component by itself in
    graph \(G'_{t}\). Without loss of generality, let this component by
    \(C_{p}\). Then we get that \(X^{p} = V(C_{p}) = \{v\}\), and that
    \(P'=\{X^{1},X^{2},\ldots X^{(p - 1)}\}\) is a partition of the set
    \(X \setminus \{v\}\).

    Since \(v \in X\) and \(V(H) \cap X_{t} = X\) hold, and since the graph
    \(G'_{t} \cup H\) is Eulerian, we get that
    vertex \(v\) has a positive even degree in graph \(H\). Since \(H\) is a
    residual subgraph with respect to \(t\) we have that (i)
    \(V(H) \cap (V_{t} \setminus X_{t}) = \emptyset\) and (ii)
    \(E(H) \cap E_{t} = \emptyset\) hold. Since \(X_{t} = X_{t'} \cup \{v\}\)
    holds, we get that \(V_{t'} = V_{t} \setminus \{v\}\) and hence
    \(V_{t'} \setminus X_{t'} = V_{t} \setminus X_{t}\) holds. Hence
    \(V(H) \cap (V_{t'} \setminus X_{t'}) = \emptyset\) holds. Further, since
    \(E_{t'} \subseteq E_{t}\) holds we get that
    \(E(H) \cap E_{t'} = \emptyset\) holds as well. Thus \(H\) is a residual
    subgraph with respect to node \(t'\) which (i) contains vertex \(v\) and
    (ii) satisfies \(V(H) \cap X_{t'} = (X \setminus \{v\})\).

    Now let \(G'_{t'}\) be the graph obtained from \(G'_{t}\) by deleting vertex
    \(v\). Then \(G'_{t'}\) is a subgraph of \(G_{t'}\), and it is
    straightforward to verify that the following hold:
    \begin{enumerate}
    \item \(X_{t'} \cap V(G'_{t'}) = (X \setminus \{v\})\).
    \item \(G'_{t'}\) has exactly \(p - 1\) connected components
      \(C_{1}, C_{2}, \dotsc, C_{(p - 1)}\) and for each
      \(i \in \{1, 2, \dotsc, p - 1\}$, $X^{i} \subseteq V(C_{i})\) holds.
    \item Every terminal vertex from \(K \cap V_{t'}\) is in \(V(G'_{t'})\).
    \item The set of odd-degree vertices in \(G'_{t'}\) is exactly the set
      \(O\).
    \item The graph \(G'_{t'} \cup H\) is identical to the graph
      \(G'_{t} \cup H\), and
      hence is an Eulerian Steiner subgraph of \(G\) for the terminal set \(K\).
    \end{enumerate}
    Thus \(H\) is a residual subgraph with respect to \(t'\) with
    \(V(H) \cap X_{t'} = (X \setminus \{v\})\), and
    \(P'=\{X^{1},X^{2},\ldots X^{(p - 1)}\}\) is a partition of
    \(X \setminus \{v\}\) such that \(((t', X \setminus \{v\}, O), P')\)
    completes \(H\). From the inductive assumption we know that the set
    \(VP[t', X \setminus \{v\}, O]\) contains a partition
    \(Q'=\{Y^{1},Y^{2},\ldots Y^{q}\}\) of \(X \setminus \{v\}\) such that
    \(((t', X \setminus \{v\}, O), Q')\) completes \(H\). So there is a subgraph
    \(G''_{t'}\) of \(G_{t'}\)---a witness for
    \(((t', X \setminus \{v\}, O), Q')\)---such that the following hold:
    \begin{enumerate}
    \item \(X_{t'} \cap V(G''_{t'})=X \setminus \{v\}\).
    \item \(G''_{t'}\) has exactly \(q\) connected components
      \(D_{1}, D_{2}, \dotsc, D_{q}\) and for each
      \(i \in \{1, 2, \dotsc, q\}$, $Y^{i} \subseteq V(D_{i})\) holds.
    \item Every terminal vertex from \(K \cap V_{t'}\) is in \(V(G''_{t'})\).
    \item The set of odd-degree vertices in \(G''_{t'}\) is exactly the set
      \(O\).
    \item The graph \(G''_{t'} \cup H\) is an
      Eulerian Steiner subgraph of \(G\) for the terminal set \(K\).
    \end{enumerate}

    Now the algorithm adds the partition
    \(Q = Q' \cup \{\{v\}\} = =\{Y^{1},Y^{2},\ldots Y^{q}, \{v\}\}\) of set
    \(X\) to the set \calA. It is straightforward to verify that the graph
    \(\hat{G}_{t} = (V(G''_{t'}) \cup \{v\}, E(G''_{t'}))\) is a subgraph of
    graph \(G_{t}\) for which the following hold:
    \begin{enumerate}
    \item \(X_{t} \cap V(\hat{G}_{t})=X\).
    \item \(\hat{G}_{t}\) has exactly \(q+1\) connected components
      \(D_{1}, D_{2}, \dotsc, D_{q}, D_{q+1} = (\{v\}, \emptyset)\) and for each
      \(i \in \{1, 2, \dotsc, q+1\}$, $Y^{i} \subseteq V(D_{i})\) holds.
    \item Every terminal vertex from \(K \cap V_{t}\) is in \(V(\hat{G}_{t})\).
    \item The set of odd-degree vertices in \(\hat{G}_{t}\) is exactly the set
      \(O\).
    \item The graph \(\hat{G}_{t} \cup H\) is
      an Eulerian Steiner subgraph of \(G\) for the terminal set \(K\).
    \end{enumerate}
    Thus \calA contains a partition \(Q\) of \(X\) such that \(((t, X, O), Q)\)
    completes \(H\), as was required to be shown for completeness.
  \item If \(v\) is a terminal vertex and \(v \notin X\) holds then the
    algorithm sets \(\calA = \emptyset\). The soundness criterion holds
    vacuously.

    Since \(v \in V_{t}\) is a terminal vertex and \(v \notin X\) holds, there
    can exist no subgraph \(G_{t'}\) of \(G_{t}\) for which both the conditions
    (1) and (3) of the definition of a valid
    partition---\autoref{def:valid_partitions_witnesses}---hold simultaneously.
    Thus there is no partition which is valid for the combination \((t, X, O)\).
    Hence the completeness condition holds vacuously as well.
  \item If \(v\) is \emph{not} a terminal vertex and \(v \notin X\) holds then
    the algorithm sets \(\calA = VP[t', X, O]\). It is straightforward to verify
    using Definitions~\ref{def:residual_subgraph},
    \ref{def:valid_partitions_witnesses}, and~\ref{def:completion} that:
    \begin{itemize}
    \item a partition \(P\) of set \(X\) is valid for the combination
      \((t, X, O)\) if and only if it is valid for the combination
      \((t', X, O)\);
    \item a subgraph of \(G_{t}\) is a witness for \(((t, X, O), P)\) if and
      only if it is (i) a subgraph of \(G_{t'}\) and (ii) a witness for
      \(((t', X, O), P)\);
    \item a graph \(H\) is a residual subgraph with respect to \(t\) with
      \(V(H) \cap X_{t} = X\) if and only if \(H\) is a residual subgraph with
      respect to \(t'\) with \(V(H) \cap X_{t'} = X\); and,
    \item for any residual subgraph \(H\) with respect to \(t\) with
      \(V(H) \cap X_{t} = X\) and any partition \(P\) of \(X\),
      \(((t, X, O), P)\) completes \(H\) if and only if \(((t', X, O), P)\)
      completes \(H\).
    \end{itemize}
    By the inductive assumption we have that the set \(VP[t', X, O]\) of
    partitions is sound and complete for the combination \((t', X, O)\). It
    follows from the above equivalences that the set \(\calA = VP[t', X, O]\) is
    sound and complete for the combination \((t, X, O)\). \qedhere
  \end{enumerate}
\end{proof}

\begin{lemma}\label{lem:introduce_edge_node_ok}
  Let \(t\) be an introduce edge node of the tree decomposition \TT and let
  \(X \subseteq X_{t}, O \subseteq X\) be arbitrary subsets of \(X_{t}, X\)
  respectively. The collection \calA of partitions computed by the DP for the
  combination \((t, X, O)\) satisfies the correctness criteria.
\end{lemma}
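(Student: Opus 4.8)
The plan is to prove, by induction over \TT, that the set \(\calA\) built for \((t,X,O)\) (that is, \(VP[t,X,O]\) just before the representative-subset step) satisfies both the \textbf{soundness} and \textbf{completeness} criteria; \autoref{obs:repset_computation_preserves_correctness} then accounts for the final step in which \(VP[t,X,O]\) is replaced by a representative subset. Let \(t'\) be the child of \(t\) and let \(uv\) be the edge introduced at \(t\), so that \(X_{t}=X_{t'}\), \(V_{t}=V_{t'}\), and \(uv\in E_{t}\setminus E_{t'}\). The whole argument turns on one structural dichotomy: a witness \(G'_{t}\) at \(t\) either \emph{omits} the new edge \(uv\) or \emph{uses} it. In the first case \(G'_{t}\) is already a subgraph of \(G_{t'}\) and is captured by the copy \(VP[t',X,O]\) in step~1; in the second case one passes to \(G'_{t'}=G'_{t}-uv\), a witness over \(G_{t'}\), and this is exactly what steps~2(a)--2(b) reconstruct.

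First I would record the two elementary effects of toggling \(uv\) in a subgraph containing both \(u\) and \(v\). \emph{Parity}: deleting (or adding) \(uv\) flips the degree parity of exactly \(u\) and \(v\); so if \(G'_{t}\) has odd-degree set \(O\) then \(G'_{t'}=G'_{t}-uv\) has odd-degree set \(O'\) obtained from \(O\) by toggling membership of \(u\) and \(v\) --- which is precisely the four-way bookkeeping of step~2(a) (e.g. \(\{u,v\}\subseteq O\) corresponds to \(O'=O\setminus\{u,v\}\)). \emph{Connectivity}: by \autoref{lem:partition_join_connectivity}, re-adding \(uv\) to \(G'_{t'}\) leaves the induced partition of \(X\) unchanged if \(u,v\) already share a block, and otherwise merges their two blocks --- exactly the merge performed in step~2(b). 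These two facts make the algorithm's case split faithful to the underlying graph operations.

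With this in hand, \textbf{soundness} is direct. A partition copied in step~1 is valid at \(t\) because its child-witness \(G'_{t'}\subseteq G_{t'}\subseteq G_{t}\) satisfies all four conditions of \autoref{def:valid_partitions_witnesses} verbatim (no vertex is added and \(V_{t}=V_{t'}\), so the terminal condition transfers). A partition \(P\) produced in step~2 arises from some \(P'\in VP[t',X,O']\) (valid by induction, with witness \(G'_{t'}\)) via the block-merge, where \(\{u,v\}\subseteq X\subseteq V(G'_{t'})\); I would check that \(G'_{t}:=G'_{t'}+uv\) is a subgraph of \(G_{t}\) whose induced partition of \(X\) and whose odd-degree set are exactly \(P\) and \(O\) by the parity and connectivity facts above, so \(G'_{t}\) witnesses the validity of \(P\) for \((t,X,O)\).

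The \textbf{completeness} direction is where the one genuinely clever step lives, and I expect it to be the main obstacle. Suppose \(((t,X,O),P)\) completes a residual subgraph \(H\) (with \(V(H)\cap X_{t}=X\)) via a witness \(G'_{t}\), so \(G'_{t}\cup H\) is an Eulerian Steiner subgraph. If \(G'_{t}\) omits \(uv\) then \(G'_{t}\subseteq G_{t'}\), \(H\) is still residual with respect to \(t'\) (as \(E_{t'}\subseteq E_{t}\) and \(X_{t'}=X_{t}\)), and \(((t',X,O),P)\) completes \(H\); inductive completeness at \(t'\) and the copy in step~1 supply a \(Q\in\calA\). If instead \(G'_{t}\) uses \(uv\) (so \(\{u,v\}\subseteq X\) and step~2 is active), the key is to \emph{move} the edge from the witness side to the residual side: set \(G'_{t'}=G'_{t}-uv\) and \(H'=H+uv\). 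The crucial observation is that \(H'\) satisfies \autoref{def:residual_subgraph} \emph{with respect to} \(t'\) --- legitimate precisely because \(uv\in E_{t}\setminus E_{t'}\) has not yet been introduced at \(t'\) and may therefore belong to the completion part there --- while \(G'_{t'}\cup H'=G'_{t}\cup H\) stays Eulerian Steiner. Hence \(((t',X,O'),P')\) completes \(H'\), with \(O',P'\) the toggled/unmerged data of the previous paragraph, and inductive completeness at \(t'\) yields a \(Q'\in VP[t',X,O']\) completing \(H'\). Running the very operations of step~2 on \(Q'\) (re-adding \(uv\) through the merge) produces \(Q\in\calA\), and reversing the edge-move via \((G''_{t'}+uv)\cup H=G''_{t'}\cup H'\) shows that \(((t,X,O),Q)\) completes \(H\). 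The delicate points to verify with care are that \(H'\) genuinely meets \autoref{def:residual_subgraph} at \(t'\) and that the parity toggle and the block-merge applied on the \emph{witness} and on the \emph{candidate partition} stay mutually consistent, so that the \(Q\) output by the algorithm is exactly the partition defined by \(G''_{t'}+uv\).
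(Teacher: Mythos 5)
Your proposal is correct and follows essentially the same route as the paper's proof: soundness by adding \(uv\) to a child witness with the corresponding parity toggle and block merge, and completeness by the dichotomy on whether the witness uses \(uv\), with the key step of transferring the edge from the witness to the residual subgraph (\(H'=H+uv\), \(G'_{t'}=G'_{t}-uv\)) before invoking the inductive hypothesis at \(t'\). The paper merely spells out the four parity cases and the same-block/different-block subcases that you describe in aggregate.
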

\begin{proof}
  Let \(t'\) be the child node of \(t\), and let \(uv\) be the edge introduced
  at \(t\). Then \(X_{t} = X_{t'}\), \(V_{t} = V_{t'}\) and
  \(uv \in (E(G_{t}) \setminus E(G_{t'}))\). The algorithm initializes
  \(\calA = VP[t', X, O]\). By the inductive assumption we have that every
  partition \(P' \in \calA = VP[t', X, O]\) is valid for the combination
  \((t', X, O)\). Note that while edge \(uv\) is \emph{available} for use in
  constructing a witness for \(((t, X, O), P)\), it is not \emph{mandatory} to
  use this edge in any such witness. Applying this observation, it is
  straightforward to verify that if a subgraph \(G_{t'}'\) of \(G_{t'}\) is a
  witness for \(((t', X, O), P')\) then it is also (i) a subgraph of \(G_{t}\),
  and (ii) a witness for \(((t, X, O), P')\). Thus all partitions in
  \(VP[t', X, O]\) are valid for the combination \((t, X, O)\).

  The algorithm adds zero or more partitions to \(\calA\) depending on how the
  set \(\{u, v\}\) intersects the sets \(X\) and \(O\). We analyze each choice
  made by the algorithm:
  \begin{enumerate}
  \item If \(u \notin X\) or \(v \notin X\) holds then the algorithm does not
    make further changes to \(\calA\): it sets \(\calA = VP[t', X, O]\). Since
    (i) the criteria for
    validity---\autoref{def:valid_partitions_witnesses}---are based only on
    graphs whose intersection with \(X_{t}\) is exactly the set \(X\), and (ii)
    the new edge \(uv\) does not have both end points in this set, it is
    intuitively clear that the relevant set of valid partitions should not
    change in this case. Formally, it is straightforward to verify using
    Definitions~\ref{def:residual_subgraph},
    \ref{def:valid_partitions_witnesses}, and~\ref{def:completion} that:
    \begin{itemize}
    \item a partition \(P\) of set \(X\) is valid for the combination
      \((t, X, O)\) if and only if it is valid for the combination
      \((t', X, O)\);
    \item a subgraph of \(G_{t}\) is a witness for \(((t, X, O), P)\) if and
      only if it is (i) a subgraph of \(G_{t'}\) and (ii) a witness for
      \(((t', X, O), P)\);
    \item a graph \(H\) is a residual subgraph with respect to \(t\) with
      \(V(H) \cap X_{t} = X\) if and only if \(H\) is a residual subgraph with
      respect to \(t'\) with \(V(H) \cap X_{t'} = X\); and,
    \item for any residual subgraph \(H\) with respect to \(t\) with
      \(V(H) \cap X_{t} = X\) and any partition \(P\) of \(X\),
      \(((t, X, O), P)\) completes \(H\) if and only if \(((t', X, O), P)\)
      completes \(H\).
    \end{itemize}
    By the inductive assumption we have that the set \(VP[t', X, O]\) of
    partitions is sound and complete for the combination \((t', X, O)\). It
    follows from the above equivalences that the set \(\calA = VP[t', X, O]\) is
    sound and complete for the combination \((t, X, O)\).
  \item If \(\{u,v\} \subseteq O\) then for each partition
    \(P' \in VP[t', X, O \setminus \{u,v\}]\),
    \begin{itemize}
    \item If vertices \(u,v\) are in the same block of \(P'\) then the algorithm
      adds \(P = P'\) to the set \calA.
    \item If vertices \(u,v\) are in different blocks of \(P'\) then the
      algorithm merges these two blocks of \(P'\) and adds the resulting
      partition \(P\)---with one fewer block than \(P'\)---to the set \calA.
    \end{itemize}
    In either case, by the inductive assumption we have that partition \(P'\) is
    valid for the combination \((t', X, O \setminus \{u,v\})\). Let \(G_{t'}''\)
    be (i) a subgraph of \(G_{t'}\) and (ii) a witness for
    \(((t', X, O \setminus \{u,v\}), P')\), and let
    \(G_{t}' = (V(G_{t'}''), E(G_{t'}'') \cup \{uv\})\) be the graph obtained
    from \(G_{t'}''\) by adding the edge \(uv\). Then \(G_{t}'\) is a subgraph
    of \(G_{t}\). Vertices \(u,v\) have even degrees in \(G_{t'}''\), and hence
    they have odd degrees in \(G_{t}'\). It is straightforward to verify that
    \(G_{t}'\) is a witness for \(((t, X, O), P)\). Thus the addition of
    partition \(P\) to \calA preserves the soundness of \calA.

    Now we prove completeness. So let \(H\) be a residual subgraph with respect
    to \(t\) with \(V(H) \cap X_{t} = X\), for which there exists a partition
    \(P=\{X^{1},X^{2},\ldots X^{p}\}\) of \(X\) such that \(((t, X, O), P)\)
    completes \(H\). We need to show that the set \calA computed by the
    algorithm contains some partition \(Q\) of \(X\) such that
    \(((t, X, O), Q)\) completes \(H\). Observe that there exists a subgraph
    \(G'_{t}\) of \(G_{t}\)---a witness for \(((t,X,O), P)\)---such that the
    following hold:
    \begin{enumerate}
    \item \(X_{t} \cap V(G'_{t})=X\).
    \item \(G'_{t}\) has exactly \(p\) connected components
      \(C_{1}, C_{2}, \dotsc, C_{p}\) and for each
      \(i \in \{1, 2, \dotsc, p\}$, $X^{i} \subseteq V(C_{i})\) holds.
    \item Every terminal vertex from \(K \cap V_{t}\) is in \(V(G'_{t})\).
    \item The set of odd-degree vertices in \(G'_{t}\) is exactly the set
      \(O\).
    \item The graph \(G'_{t} \cup H\) is an
      Eulerian Steiner subgraph of \(G\) for the terminal set \(K\).
    \end{enumerate}
    Note that by the definition of a residual subgraph, graph \(H\) (i) does
    \emph{not} contain edge \(uv\), and (ii) is a residual subgraph with respect
    to node \(t'\) as well. We consider two cases.
    \begin{itemize}
    \item Suppose edge \(uv\) is not present in graph \(G'_{t}\). Then it is
      straightforward to verify that \(G'_{t}\) is a witness for
      \(((\mathbf{t'},X,O), P)\) as well. By the inductive hypothesis there
      exists some partition \(Q\) of \(X\) in the set \(VP[t', X, O]\) such that
      \(((t', X, O), Q)\) completes \(H\). So there exists a subgraph
      \(G'_{t'}\) of \(G_{t'}\) which is a certificate for \(((t',X,O), Q)\)
      completing \(H\). It is straightforward to verify that \(G'_{t'}\) is a
      certificate for \(((t,X,O), Q)\) completing \(H\) as well. The algorithm
      adds partition \(Q\) to the set \calA during the initialization, so the
      completeness criterion is satisfied in this case.
    \item Suppose edge \(uv\) \emph{is} present in graph \(G'_{t}\). 
      Let \(H' = (V(H), (E(H) \cup \{uv\}))\) be the graph obtained by adding
      edge \(uv\) to graph \(H\), and let
      \(G'_{t'} = (V(G'_{t}), (E(G'_{t}) \setminus \{uv\}))\) be the graph
      obtained by deleting edge \(uv\) from graph \(G'_{t}\). Then it is
      straightforward to verify that (i) the set of odd-degree vertices in
      \(G'_{t'}\) is exactly the set \(O \setminus \{u,v\}\), (ii) \(H'\) is a
      residual subgraph for node \(t'\), and (iii) \(G'_{t'}\) is a subgraph of
      \(G_{t'}\) such that the graph \(G'_{t'} \cup H' = G'_{t} \cup H\) is an
      Eulerian Steiner subgraph of \(G\) for the
      terminal set \(K\). Let \(P'\) be the partition of \(X\) defined by graph
      \(G'_{t'}\). Then \(G'_{t'}\) is a witness for
      \(((t', X, O \setminus \{u,v\}), P')\) such that the union of \(G'_{t'}\)
      and the residual subgraph \(H'\) of \(t'\) is an Eulerian Steiner subgraph
      of \(G\) for the terminal set \(K\). That is,
      \(((t', X, O \setminus \{u,v\}), P')\) completes \(H'\). So by the
      inductive assumption there exists some partition \(Q'\) of \(X\) in the
      set \(VP[t', X, O \setminus \{u,v\}]\) such that
      \(((t, X, O \setminus \{u,v\}), Q')\) completes \(H'\). So there exists a
      subgraph \(\hat{G}'\) of \(G_{t'}\) such that (i) \(\hat{G}'\) is a
      witness for \(((t, X, O \setminus \{u,v\}), Q')\) and (ii)
      \(\hat{G}' \cup H'\) is an Eulerian
      Steiner subgraph of \(G\) for the terminal set \(K\).

      Note that \(Q'\) is the partition of set \(X\) defined by the graph
      \(\hat{G}'\). Suppose both \(u\) and \(v\) are in the same block of
      partition \(Q'\). Then adding the edge \(uv\) to \(\hat{G}'\) does not
      change the partition of \(X\) defined by \(\hat{G}'\). It follows that
      the graph \(\hat{G} = (V(\hat{G}'), E(\hat{G}') \cup \{uv\})\) is a
      subgraph of \(G_{t}\) such that (i) \(\hat{G}\) is a witness for
      \(((t, X, O , Q')\) and (ii)
      \(\hat{G} \cup H\) is an Eulerian Steiner
      subgraph of \(G\) for the terminal set \(K\). Thus \(((t, X, O, Q')\)
      completes the residual subgraph \(H\). Now notice that our algorithm
      adds the partition \(Q'\) to the set \calA. Thus the completeness
      criterion holds in this case.

      In the remaining case, vertices \(u\) and \(v\) are in distinct blocks
      of partition \(Q'\). Let \(Q\) be the partition obtained from \(Q'\) by
      merging together the two blocks to which vertices \(u\) and \(v\)
      belong, respectively, and leaving the other blocks as they are. Let
      \(\hat{G}\) be defined as in the previous paragraph. Then the partition
      of \(X\) defined by \(\hat{G}\) is \(Q\). It follows that \(\hat{G}\) is
      a subgraph of \(G_{t}\) such that (i) \(\hat{G}\) is a witness for
      \(((t, X, O , Q)\) and (ii)
      \(\hat{G} \cup H\) is an Eulerian Steiner
      subgraph of \(G\) for the terminal set \(K\). Thus \(((t, X, O, Q)\)
      completes the residual subgraph \(H\). Now notice that our algorithm
      adds the partition \(Q\) to the set \calA. Thus the completeness
      criterion holds in this case as well.
    \end{itemize}

  \item If \(\{u,v\} \cap O = \{u\}\) then for each partition
    \(P' \in VP[t', X, (O \setminus \{u\}) \cup \{v\}]\),
    \begin{itemize}
    \item If vertices \(u,v\) are in the same block of \(P'\) then the
      algorithm adds \(P = P'\) to the set \calA.
    \item If vertices \(u,v\) are in different blocks of \(P'\) then the
      algorithm merges these two blocks of \(P'\) and adds the resulting
      partition \(P\)---with one fewer block than \(P'\)---to the set \calA.
    \end{itemize}
    In either case, by the inductive assumption we have that partition \(P'\)
    is valid for the combination \((t', X, (O \setminus \{u\}) \cup \{v\})\).
    Let \(G_{t'}''\) be (i) a subgraph of \(G_{t'}\) and (ii) a witness for
    \(((t', X, (O \setminus \{u\}) \cup \{v\}), P')\), and let
    \(G_{t}' = (V(G_{t'}''), E(G_{t'}'') \cup \{uv\})\) be the graph obtained
    from \(G_{t'}''\) by adding the edge \(uv\). Then \(G_{t}'\) is a subgraph
    of \(G_{t}\). In \(G_{t'}''\) the degree of vertex \(u\) is even, and the
    degree of vertex \(v\) is odd. So in \(G_{t}'\) vertex \(u\) has an odd
    degree, and vertex \(v\) has an even degree. It is straightforward to
    verify that \(G_{t}'\) is a witness for \(((t, X, O), P)\). Thus the
    addition of partition \(P\) to \calA preserves the soundness of \calA.

    Now we prove completeness. So let \(H\) be a residual subgraph with
    respect to \(t\) with \(V(H) \cap X_{t} = X\), for which there exists a
    partition \(P=\{X^{1},X^{2},\ldots X^{p}\}\) of \(X\) such that
    \(((t, X, O), P)\) completes \(H\).  We
    need to show that the set \calA computed by the algorithm contains some
    partition \(Q\) of \(X\) such that \(((t, X, O), Q)\) completes \(H\).
    Observe that there exists a subgraph \(G'_{t}\) of \(G_{t}\)---a witness
    for \(((t,X,O), P)\)---such that the following hold:
    \begin{enumerate}
    \item \(X_{t} \cap V(G'_{t})=X\).
    \item \(G'_{t}\) has exactly \(p\) connected components
      \(C_{1}, C_{2}, \dotsc, C_{p}\) and for each
      \(i \in \{1, 2, \dotsc, p\}$, $X^{i} \subseteq V(C_{i})\) holds.
    \item Every terminal vertex from \(K \cap V_{t}\) is in \(V(G'_{t})\).
    \item The set of odd-degree vertices in \(G'_{t}\) is exactly the set
      \(O\).
    \item The graph \(G'_{t} \cup H\) is an
      Eulerian Steiner subgraph of \(G\) for the terminal set \(K\).
    \end{enumerate}
    Note that by the definition of a residual subgraph, graph \(H\) (i) does
    \emph{not} contain edge \(uv\), and (ii) is a residual subgraph with
    respect to node \(t'\) as well. We consider two cases.
    \begin{itemize}
    \item Suppose edge \(uv\) is not present in graph \(G'_{t}\). Then it is
      straightforward to verify that \(G'_{t}\) is a witness for
      \(((\mathbf{t'},X,O), P)\) as well. By the inductive hypothesis there
      exists some partition \(Q\) of \(X\) in the set \(VP[t', X, O]\) such
      that \(((t, X, O), Q)\) completes \(H\). This same partition \(Q\) is
      present in the set \calA as well.
    \item Suppose edge \(uv\) \emph{is} present in graph \(G'_{t}\). 
      Let \(H' = (V(H), (E(H) \cup \{uv\}))\) be the graph obtained by adding
      edge \(uv\) to graph \(H\), and let
      \(G'_{t'} = (V(G'_{t}), (E(G'_{t}) \setminus \{uv\}))\) be the graph
      obtained by deleting edge \(uv\) from graph \(G'_{t}\). Then it is
      straightforward to verify that (i) the set of odd-degree vertices in
      \(G'_{t'}\) is exactly the set \((O \setminus \{u\}) \cup \{v\}\), (ii)
      \(H'\) is a residual subgraph for node \(t'\), and (iii) \(G'_{t'}\) is a
      subgraph of \(G_{t'}\) such that the graph
      \(G'_{t'} \cup H' = G'_{t} \cup H\) is an Eulerian Steiner subgraph of
      \(G\) for the
      terminal set \(K\). Let \(P'\) be the partition of \(X\) defined by graph
      \(G'_{t'}\). Then \(G'_{t'}\) is a witness for
      \(((t', X, (O \setminus \{u\}) \cup \{v\}), P')\) such that the union of
      \(G'_{t'}\) and the residual subgraph \(H'\) of \(t'\) is an Eulerian
      Steiner subgraph of \(G\) for the terminal set \(K\). That is,
      \(((t', X, (O \setminus \{u\}) \cup \{v\}), P')\) completes \(H'\). So by
      the inductive assumption there exists some partition \(Q'\) of \(X\) in
      the set \(VP[t', X, (O \setminus \{u\}) \cup \{v\}]\) such that
      \(((t, X, (O \setminus \{u\}) \cup \{v\}), Q')\) completes \(H'\). So
      there exists a subgraph \(\hat{G}'\) of \(G_{t'}\) such that (i)
      \(\hat{G}'\) is a witness for
      \(((t, X, (O \setminus \{u\}) \cup \{v\}), Q')\) and (ii)
      \(\hat{G}' \cup H'\) is an Eulerian
      Steiner subgraph of \(G\) for the terminal set \(K\).

      Note that \(Q'\) is the partition of set \(X\) defined by the graph
      \(\hat{G}'\), and that the set of odd-degree vertices in \(\hat{G}'\) is
      exactly the set \((O \setminus \{u\}) \cup \{v\}\). Suppose both \(u\) and
      \(v\) are in the same block of partition \(Q'\). Then adding the edge
      \(uv\) to \(\hat{G}'\) (i) does not change the partition of \(X\) defined
      by \(\hat{G}'\), and (ii) \emph{does} change the set of odd-degree
      vertices to \(O\). It follows that the graph
      \(\hat{G} = (V(\hat{G}'), E(\hat{G}') \cup \{uv\})\) is a subgraph of
      \(G_{t}\) such that (i) \(\hat{G}\) is a witness for \(((t, X, O , Q')\)
      and (ii) \(\hat{G} \cup H\) is an Eulerian
      Steiner subgraph of \(G\) for the terminal set \(K\). Thus
      \(((t, X, O, Q')\) completes the residual subgraph \(H\). Now notice that
      our algorithm adds the partition \(Q'\) to the set \calA. Thus the
      completeness criterion holds in this case.

      In the remaining case, vertices \(u\) and \(v\) are in distinct blocks of
      partition \(Q'\). Let \(Q\) be the partition obtained from \(Q'\) by
      merging together the two blocks to which vertices \(u\) and \(v\) belong,
      respectively, and leaving the other blocks as they are. Let \(\hat{G}\) be
      defined as in the previous paragraph. Then the partition of \(X\) defined
      by \(\hat{G}\) is \(Q\). It follows that \(\hat{G}\) is a subgraph of
      \(G_{t}\) such that (i) \(\hat{G}\) is a witness for \(((t, X, O , Q)\)
      and (ii) \(\hat{G} \cup H\) is an Eulerian
      Steiner subgraph of \(G\) for the terminal set \(K\). Thus
      \(((t, X, O, Q)\) completes the residual subgraph \(H\). Now notice that
      our algorithm adds the partition \(Q\) to the set \calA. Thus the
      completeness criterion holds in this case as well.
    \end{itemize}
  \item The case when \(\{u,v\} \cap O = \{v\}\) is symmetrical to the previous
    case, so we leave out the arguments for this case.
  \item If \(\{u,v\} \cap O = \emptyset\) then for each partition
    \(P' \in VP[t', X, O \cup \{u, v\}]\),
    \begin{itemize}
    \item If vertices \(u,v\) are in the same block of \(P'\) then the algorithm
      adds \(P = P'\) to the set \calA.
    \item If vertices \(u,v\) are in different blocks of \(P'\) then the
      algorithm merges these two blocks of \(P'\) and adds the resulting
      partition \(P\)---with one fewer block than \(P'\)---to the set \calA.
    \end{itemize}
    In either case, by the inductive assumption we have that partition \(P'\) is
    valid for the combination \((t', X, O \cup \{u,v\})\). Let \(G_{t'}''\) be
    (i) a subgraph of \(G_{t'}\) and (ii) a witness for
    \(((t', X, O \cup \{u, v\}), P')\), and let
    \(G_{t}' = (V(G_{t'}''), E(G_{t'}'') \cup \{uv\})\) be the graph obtained
    from \(G_{t'}''\) by adding the edge \(uv\). Then \(G_{t}'\) is a subgraph
    of \(G_{t}\). Vertices \(u,v\) have odd degrees in \(G_{t'}''\), and hence
    they have even degrees in \(G_{t}'\). It is straightforward to verify that
    \(G_{t}'\) is a witness for \(((t, X, O), P)\). Thus the addition of
    partition \(P\) to \calA preserves the soundness of \calA.

    Now we prove completeness. So let \(H\) be a residual subgraph with
    respect to \(t\) with \(V(H) \cap X_{t} = X\), for which there exists a
    partition \(P=\{X^{1},X^{2},\ldots X^{p}\}\) of \(X\) such that
    \(((t, X, O), P)\) completes \(H\).  We
    need to show that the set \calA computed by the algorithm contains some
    partition \(Q\) of \(X\) such that \(((t, X, O), Q)\) completes \(H\).
    Observe that there exists a subgraph \(G'_{t}\) of \(G_{t}\)---a witness
    for \(((t,X,O), P)\)---such that the following hold:
    \begin{enumerate}
    \item \(X_{t} \cap V(G'_{t})=X\).
    \item \(G'_{t}\) has exactly \(p\) connected components
      \(C_{1}, C_{2}, \dotsc, C_{p}\) and for each
      \(i \in \{1, 2, \dotsc, p\}$, $X^{i} \subseteq V(C_{i})\) holds.
    \item Every terminal vertex from \(K \cap V_{t}\) is in \(V(G'_{t})\).
    \item The set of odd-degree vertices in \(G'_{t}\) is exactly the set
      \(O\).
    \item The graph \(G'_{t} \cup H\) is an
      Eulerian Steiner subgraph of \(G\) for the terminal set \(K\).
    \end{enumerate}
    Note that by the definition of a residual subgraph, graph \(H\) (i) does
    \emph{not} contain edge \(uv\), and (ii) is a residual subgraph with
    respect to node \(t'\) as well. We consider two cases.
    \begin{itemize}
    \item Suppose edge \(uv\) is not present in graph \(G'_{t}\). Then it is
      straightforward to verify that \(G'_{t}\) is a witness for
      \(((\mathbf{t'},X,O), P)\) as well. By the inductive hypothesis there
      exists some partition \(Q\) of \(X\) in the set \(VP[t', X, O]\) such
      that \(((t, X, O), Q)\) completes \(H\). This same partition \(Q\) is
      present in the set \calA as well.
    \item Suppose edge \(uv\) \emph{is} present in graph \(G'_{t}\). 
      Let \(H' = (V(H), (E(H) \cup \{uv\}))\) be the graph obtained by adding
      edge \(uv\) to graph \(H\), and let
      \(G'_{t'} = (V(G'_{t}), (E(G'_{t}) \setminus \{uv\}))\) be the graph
      obtained by deleting edge \(uv\) from graph \(G'_{t}\). Then it is
      straightforward to verify that (i) the set of odd-degree vertices in
      \(G'_{t'}\) is exactly the set \(O \setminus \{u, v\}\), (ii) \(H'\) is a
      residual subgraph for node \(t'\), and (iii) \(G'_{t'}\) is a subgraph of
      \(G_{t'}\) such that the graph \(G'_{t'} \cup H' = G'_{t} \cup H\) is an
      Eulerian Steiner subgraph of \(G\) for the
      terminal set \(K\). Let \(P'\) be the partition of \(X\) defined by
      graph \(G'_{t'}\). Then \(G'_{t'}\) is a witness for
      \(((t', X, O \setminus \{u, v\}), P')\) such that the union of
      \(G'_{t'}\) and the residual subgraph \(H'\) of \(t'\) is an Eulerian
      Steiner subgraph of \(G\) for the terminal set \(K\). That is,
      \(((t', X, O \setminus \{u, v\}), P')\) completes \(H'\). So by the
      inductive assumption there exists some partition \(Q'\) of \(X\) in the
      set \(VP[t', X, O \setminus \{u, v\}]\) such that
      \(((t, X, O \setminus \{u, v\}), Q')\) completes \(H'\). So there exists
      a subgraph \(\hat{G}'\) of \(G_{t'}\) such that (i) \(\hat{G}'\) is a
      witness for \(((t, X, O \setminus \{u, v\}), Q')\) and (ii)
      \(\hat{G}' \cup H'\) is an Eulerian
      Steiner subgraph of \(G\) for the terminal set \(K\).

      Note that \(Q'\) is the partition of set \(X\) defined by the graph
      \(\hat{G}'\), and that the set of odd-degree vertices in \(\hat{G}'\) is
      exactly the set \(O \setminus \{u, v\}\). Suppose both \(u\) and \(v\)
      are in the same block of partition \(Q'\). Then adding the edge \(uv\)
      to \(\hat{G}'\) (i) does not change the partition of \(X\) defined by
      \(\hat{G}'\), and (ii) \emph{does} change the set of odd-degree vertices
      to \(O\). It follows that the graph
      \(\hat{G} = (V(\hat{G}'), E(\hat{G}') \cup \{uv\})\) is a subgraph of
      \(G_{t}\) such that (i) \(\hat{G}\) is a witness for \(((t, X, O , Q')\)
      and (ii) \(\hat{G} \cup H\) is an Eulerian
      Steiner subgraph of \(G\) for the terminal set \(K\). Thus
      \(((t, X, O, Q')\) completes the residual subgraph \(H\). Now notice
      that our algorithm adds the partition \(Q'\) to the set \calA. Thus the
      completeness criterion holds in this case.

      In the remaining case, vertices \(u\) and \(v\) are in distinct blocks of
      partition \(Q'\). Let \(Q\) be the partition obtained from \(Q'\) by
      merging together the two blocks to which vertices \(u\) and \(v\) belong,
      respectively, and leaving the other blocks as they are. Let \(\hat{G}\) be
      defined as in the previous paragraph. Then the partition of \(X\) defined
      by \(\hat{G}\) is \(Q\). It follows that \(\hat{G}\) is a subgraph of
      \(G_{t}\) such that (i) \(\hat{G}\) is a witness for \(((t, X, O , Q)\)
      and (ii) \(\hat{G} \cup H\) is an Eulerian
      Steiner subgraph of \(G\) for the terminal set \(K\). Thus
      \(((t, X, O, Q)\) completes the residual subgraph \(H\). Now notice that
      our algorithm adds the partition \(Q\) to the set \calA. Thus the
      completeness criterion holds in this case as well. \qedhere
    \end{itemize}
  \end{enumerate}
\end{proof}

\begin{lemma}\label{lem:forget_node_ok}
  Let \(t\) be a forget node of the tree decomposition \TT and let
  \(X \subseteq X_{t}, O \subseteq X\) be arbitrary subsets of \(X_{t}, X\)
  respectively. The collection \calA of partitions computed by the DP for the
  combination \((t, X, O)\) satisfies the correctness criteria.
\end{lemma}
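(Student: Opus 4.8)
The plan is to prove this by structural induction on \TT, assuming the correctness criteria (soundness and completeness) hold for the child node \(t'\) at which the vertex \(v\) is forgotten, and showing they are preserved at \(t\). I would begin by recording the structural facts peculiar to a forget node: since neither a vertex nor an edge is introduced, \(G_{t} = G_{t'}\), \(V_{t} = V_{t'}\), and \(E_{t} = E_{t'}\); only the bag shrinks, with \(X_{t} = X_{t'} \setminus \{v\}\), hence \(Y_{t'} = Y_{t} \setminus \{v\}\) (where \(Y_{t} = V_{t} \setminus X_{t}\)) and \(v \in Y_{t}\). A one-line consequence I would establish up front is that every residual subgraph \(H\) with respect to \(t\) (\autoref{def:residual_subgraph}) is also a residual subgraph with respect to \(t'\), that \(v \notin V(H)\), and that \(V(H) \cap X_{t'} = V(H) \cap X_{t} = X\) whenever \(V(H) \cap X_{t} = X\). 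The closing representative-subset step is handled uniformly by \autoref{obs:repset_computation_preserves_correctness}, so it suffices to verify the criteria for the set \calA assembled before that step.

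For soundness I would treat the two kinds of partitions the algorithm adds separately. For a partition \(P = P' - v\) arising from some \(P' \in VP[t', X \cup \{v\}, O]\) with \(|P'(v)| > 1\), the inductive soundness supplies a witness \(G'_{t'}\) for \(((t', X \cup \{v\}, O), P')\); I would verify that the same graph, reread as a subgraph of \(G_{t}\), witnesses \(((t, X, O), P)\). The four conditions of \autoref{def:valid_partitions_witnesses} transfer directly: condition~1 holds because \(v \in V(G'_{t'})\) gives \(X_{t} \cap V(G'_{t'}) = (X \cup \{v\}) \setminus \{v\} = X\); condition~2 holds because eliding \(v\) from the partition of \(X \cup \{v\}\) yields \(P\), while \(|P'(v)| > 1\) guarantees that \(v\)'s component still meets \(X\); conditions~3 and~4 are unchanged, the latter because \(v \notin O\) (as \(O \subseteq X\)) forces \(v\) to have even degree. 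For a partition inherited from \(VP[t', X, O]\) (added only when \(v\) is not a terminal), the witness at \(t'\) omits \(v\), and I would check that it remains a witness at \(t\); the non-terminal hypothesis is exactly what lets condition~3 survive the relabelling of \(v\) as a forgotten, absent vertex.

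For completeness I would start from a residual subgraph \(H\) with \(V(H) \cap X_{t} = X\) and a witness \(G'_{t}\) for \(((t, X, O), P)\) with \(G'_{t} \cup H\) an Eulerian Steiner subgraph, and split on whether \(v \in V(G'_{t})\). If \(v \notin V(G'_{t})\), then \(v\) cannot be a terminal (condition~3 would otherwise force it in), and I would show \(G'_{t}\) witnesses \(((t', X, O), P)\) completing \(H\) (now read as a residual subgraph for \(t'\)); the child's completeness then yields \(Q \in VP[t', X, O] \subseteq VP[t, X, O]\) completing \(H\), and a witness for \(Q\) at \(t'\) (which also omits \(v\)) reread at \(t\) finishes this case. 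If \(v \in V(G'_{t})\), then \(v\) has even degree there, its component meets \(X\), and I would set \(H' = (V(H) \cup \{v\}, E(H))\)---adding \(v\) as an isolated vertex---to obtain a residual subgraph for \(t'\) with \(V(H') \cap X_{t'} = X \cup \{v\}\) and \(G'_{t} \cup H' = G'_{t} \cup H\). Letting \(P'\) be the partition of \(X \cup \{v\}\) defined by \(G'_{t}\), I would observe \(|P'(v)| > 1\) and that \(((t', X \cup \{v\}, O), P')\) completes \(H'\), and invoke the child's completeness to obtain \(Q' \in VP[t', X \cup \{v\}, O]\) completing \(H'\); then \(Q = Q' - v\) is the partition the algorithm stores, and a witness for \(Q'\) reread at \(t\) witnesses \(((t, X, O), Q)\) completing \(H\).

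The main obstacle I anticipate is precisely the side condition \(|Q'(v)| > 1\) that the algorithm imposes in the first branch: the child's completeness guarantees \emph{some} completing \(Q' \in VP[t', X \cup \{v\}, O]\), but not a priori that \(v\) is non-isolated in \(Q'\), and if \(v\) were a singleton block of \(Q'\) the algorithm would discard \(Q' - v\). To close this gap I would argue that \(|Q'(v)| = 1\) is impossible for any \(Q'\) completing \(H'\): if the component \(C\) of \(v\) in a witness \(G''_{t'}\) met \(X \cup \{v\}\) only in \(v\), then the other vertices of \(C\) lie in \(Y_{t'}\) and are therefore absent from the residual subgraph \(H'\), while \(v\) is isolated in \(H'\); hence \(C\) would be a whole connected component of the Eulerian (and thus connected) graph \(G''_{t'} \cup H'\), forcing \(C\) to be the entire graph. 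But \(\vstar \in X\) (since \(\vstar \neq v\) is a terminal present in every bag, so \(\vstar \in X_{t'} \cap V(G''_{t'}) = X \cup \{v\}\)) would then lie in \(C \cap X = \emptyset\), a contradiction. Hence \(|Q'(v)| > 1\), so \(Q' - v\) is indeed stored, which completes the argument.
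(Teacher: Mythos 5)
Your proposal is correct and follows essentially the same route as the paper's proof: soundness by rereading witnesses from \(VP[t', X \cup \{v\}, O]\) (eliding \(v\)) and from \(VP[t', X, O]\), and completeness by splitting on whether \(v\) lies in the witness \(G'_{t}\), forming \(H' = (V(H) \cup \{v\}, E(H))\), and invoking the child's completeness for \((t', X \cup \{v\}, O)\). You correctly identify the one delicate point---ruling out \(|Q'(v)| = 1\)---and your direct connectivity argument (the component of \(v\) would be a whole component of the connected graph \(G''_{t'} \cup H'\), contradicting \(\vstar \in X\)) is a valid, slightly more elementary substitute for the paper's argument via \autoref{lem:partition_join_connectivity}, which derives the same conclusion from \(Q_{H'} \sqcup Q' = \{X \cup \{v\}\}\).
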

\begin{proof}
  Let \(t'\) be the child node of \(t\), and let \(v\) be the vertex forgotten
  at \(t\). Then \(v \in X_{t'}\) and \(X_{t} = X_{t'} \setminus \{v\}\), and
  \(v \notin O\) hold. Recall that \(P(v)\) is the block of partition \(P\)
  which contains element \(v\) and that \(P - v\) is the partition obtained by
  eliding \(v\) from \(P\). The algorithm initializes
  \(\calA = \{P' - v \;;\; P' \in VP[t', X \cup \{v\}, O],\,|P'(v)| > 1\}\). By
  the inductive assumption we have that every partition
  \(P' \in VP[t', X \cup \{v\}, O]\) is valid for the combination
  \((t', X \cup \{v\}, O)\). Note that (i) the graph \(G_{t'}\) is identical to
  the graph \(G_{t}\), and (ii) for any subgraph \(H\) of \(G_{t'} = G_{t}\),
  \((V(H) \cap X_{t'}) = X \cup \{v\}\) implies \((V(H) \cap X_{t}) = X\). It
  follows that if every connected component of a graph \(H\) contains at least
  two vertices from the set \(X \cup \{v\}\) then every connected component of
  \(H\) contains at least one vertex from set \(X\). Using these observations it
  is straightforward to verify that if a subgraph \(G_{t'}'\) of \(G_{t'}\) is a
  witness for \(((t', X \cup \{v\}, O), P')\) where \(v \notin O\) and
  \(|P'(v)| > 1\) hold, then it is also (i) a subgraph of \(G_{t}\), and (ii) a
  witness for \(((t, X, O), P' - v)\). Thus for each partition
  \(P' \in VP[t', X \cup \{v\}, O],\,|P'(v)| > 1\) the partition \(P' - v\) is
  valid for the combination \((t, X, O)\). Hence directly after the
  initialization, all partitions in the set \calA are valid for \((t, X, O)\).

  The algorithm adds zero or more partitions to \(\calA\) depending on whether
  vertex \(v\) is a terminal or not. We analyze each choice made by the
  algorithm:
  \begin{enumerate}
  \item If \(v\) is a terminal vertex then the algorithm does not make further
    changes to \(\calA\). We have shown above that this set \calA satisfies the
    validity criterion. We now argue that it satisfies the completeness
    criterion as well.

    So let \(H\) be a residual subgraph with respect to \(t\) with
    \(V(H) \cap X_{t} = X\), for which there exists a partition
    \(P=\{X^{1},X^{2},\ldots X^{p}\}\) of \(X\) such that \(((t, X, O), P)\)
    completes \(H\). We need to show that the set \calA computed by the
    algorithm contains some partition \(Q\) of \(X\) such that
    \(((t, X, O), Q)\) completes \(H\). Observe that there exists a subgraph
    \(G'_{t}\) of \(G_{t}\)---a witness for \(((t,X,O), P)\)---such that the
    following hold:
    \begin{enumerate}
    \item \(X_{t} \cap V(G'_{t})=X\).
    \item \(G'_{t}\) has exactly \(p\) connected components
      \(C_{1}, C_{2}, \dotsc, C_{p}\) and for each
      \(i \in \{1, 2, \dotsc, p\}$, $X^{i} \subseteq V(C_{i})\) holds.
    \item Every terminal vertex from \(K \cap V_{t}\) is in \(V(G'_{t})\).
    \item The set of odd-degree vertices in \(G'_{t}\) is exactly the set \(O\).
    \item The graph \(G'_{t} \cup H\) is an
      Eulerian Steiner subgraph of \(G\) for the terminal set \(K\).
    \end{enumerate}
    From the definition of a residual subgraph we know that \(v \notin V(H)\)
    holds, and since \(v\) is a terminal vertex, from condition (c) above we get
    that \(v \in V(G'_{t})\) holds. Without loss of generality, let it be the
    case that \(v \in C_{p}\) holds. Since \(X_{t'} = X_{t} \cup \{v\}\) we get
    that \(X_{t'} \cap V(G'_{t}) = X \cup \{v\}\) holds. Let
    \(H' = (V(H) \cup \{v\}, E(H))\) be the graph obtained by adding vertex
    \(v\) (and no extra edges) to graph \(H\). Then it is straightforward to
    verify that (i) \(H'\) is a residual subgraph with respect to \(t'\) with
    \(V(H') \cap X_{t'} = X \cup \{v\}\), (ii) the graph \(G'_{t}\) is a witness
    for the partition \(P' = \{X^{1},X^{2},\ldots (X^{p} \cup \{v\}) \}\) of
    \(X \cup \{v\}\) being valid for the combination \((t', X \cup \{v\}, O)\),
    and (iii) the graph \(G'_{t} \cup H'\) is an
    Eulerian Steiner subgraph of \(G\) for the terminal set \(K\). That is,
    \(((t', X \cup \{v\}, O), P')\) completes \(H'\).

    By the inductive assumption there exists some partition \(Q'\) of
    \(X \cup \{v\}\) in the set \(VP[t', X \cup \{v\}, O]\}\) such that
    \(((t', X \cup \{v\}, O), Q')\) completes \(H'\). So there exists a subgraph
    \(\hat{G}'\) of \(G_{t'}\) such that (i) \(\hat{G}'\) is a witness for
    \(((t', X \cup \{v\}, O), Q')\) and (ii)
    \(\hat{G}' \cup H'\) is an Eulerian Steiner
    subgraph of \(G\) for the terminal set \(K\). Note that
    \(X_{t} \cap V(\hat{G}') = X\) holds.

    Since \(v\) had degree zero in graph \(H'\) we get that \(v\) has a positive
    even degree in \(\hat{G}'\). From the definition of a witness for
    validity---\autoref{def:valid_partitions_witnesses}---we get that \(Q'\) is
    the partition of the set \(X \cup \{v\}\) defined by the graph \(\hat{G}'\).
    Let \(Q_{H'}\) be the partition of the set \(X \cup \{v\}\) defined by the
    graph \(H'\). Since \(deg_{H'}(v) = 0\) holds we get that vertex \(v\)
    appears in a block of size one---namely, \(\{v\}\)---in \(Q_{H'}\). If
    \(\{v\}\) is a block of \(Q'\) as well, then \(\{v\}\) will also be a block
    in their join \(Q_{H'} \sqcup Q'\). But the union of graphs \(H'\) and
    \(\hat{G}'\) is connected and so from
    \autoref{lem:partition_join_connectivity} we know that
    \(Q_{H'} \sqcup Q' = \{\{X \cup \{v\}\}\}\). Thus \(\{v\}\) is \emph{not} a
    block of \(Q_{H'} \sqcup Q'\), or of \(Q'\). So there exists a vertex
    \(v' \in X \) such that \(v,v'\) are in the same block of \(Q'\). In
    particular, this implies that the partition \(Q = Q' - v\), which is the
    partition of set \(X\) defined by graph \(\hat{G}'\), has exactly as many
    blocks as has the partition \(Q'\) of \(X \cup \{v\}\).

    Putting these together we get that the subgraph \(\hat{G}'\) of \(G_{t}\) is
    a witness for \(((t, X, O), Q = Q' - v)\). Now since graph \(H\) can be
    obtained from graph \(H'\) by deleting vertex \(v\), we get that the graphs
    \(\hat{G}' \cup H'\) and \(\hat{G}' \cup H\) are identical. In
    particular, the latter is an Eulerian Steiner subgraph of \(G\) for the
    terminal set \(K\). Thus \(((t, X, O), Q)\) completes the residual graph
    \(H\). Since the algorithm adds partition \(Q\) to the set \calA, we get
    that \calA satisfies the completeness criterion.
    
  \item If \(v\) is not a terminal vertex then the algorithm adds all the
    partitions from \(VP[t', X, O]\) to \(\calA\). By the inductive assumption
    we have that every partition \(P' \in VP[t', X, O]\) is valid for the
    combination \((t', X, O)\). It is once again straightforward to verify that
    if a subgraph \(G_{t'}'\) of \(G_{t'}\) is a witness for
    \(((t', X, O), P')\) then it is also (i) a subgraph of \(G_{t}\), and (ii) a
    witness for \(((t, X, O), P')\). Thus each partition \(P' \in VP[t', X, O]\)
    is valid for the combination \((t, X, O)\). Hence all partitions added to
    the set \calA in this step are valid for \((t, X, O)\).

    We now argue that the set \calA satisfies the completeness criterion. So let
    \(H\) be a residual subgraph with respect to \(t\) with
    \(V(H) \cap X_{t} = X\), for which there exists a partition
    \(P=\{X^{1},X^{2},\ldots X^{p}\}\) of \(X\) such that \(((t, X, O), P)\)
    completes \(H\). We need to show that the set \calA computed by the
    algorithm contains some partition \(Q\) of \(X\) such that
    \(((t, X, O), Q)\) completes \(H\). Observe that there exists a subgraph
    \(G'_{t}\) of \(G_{t}\)---a witness for \(((t,X,O), P)\)---such that the
    following hold:
    \begin{enumerate}
    \item \(X_{t} \cap V(G'_{t})=X\).
    \item \(G'_{t}\) has exactly \(p\) connected components
      \(C_{1}, C_{2}, \dotsc, C_{p}\) and for each
      \(i \in \{1, 2, \dotsc, p\}$, $X^{i} \subseteq V(C_{i})\) holds.
    \item Every terminal vertex from \(K \cap V_{t}\) is in \(V(G'_{t})\).
    \item The set of odd-degree vertices in \(G'_{t}\) is exactly the set \(O\).
    \item The graph \(G'_{t} \cup H\) is an
      Eulerian Steiner subgraph of \(G\) for the terminal set \(K\).
    \end{enumerate}

    Suppose graph \(G'_{t}\) does \emph{not} contain vertex \(v\). Then it is
    easy to verify that \(H\) is a residual subgraph with respect to \(t'\) with
    \(V(H) \cap X_{t'} = X\), and that graph \(G'_{t}\) is a witness for
    \(((t',X,O), P)\) such that the union of graphs \(H\) and \(G'_{t}\) is an
    Eulerian Steiner subgraph of \(G\) for the terminal set \(K\). That is,
    \(((t', X, O), P)\) completes \(H\). By inductive assumption there exists a
    partition \(Q \in VP[t', X, O]\) such that \(((t', X, O), Q)\) completes
    \(H\). Since the algorithm adds this partition \(Q\) to \calA we get that
    \calA satisfies the completeness criterion in this case.

    Now suppose graph \(G'_{t}\) contains vertex \(v\). The analysis from the
    case where \(v\) was a terminal and was thus forced to be in graph
    \(G'_{t}\), applies verbatim in this case. Note that the set \calA in the
    present case is a \emph{superset} of the set \calA computed in that case.
    Thus we get that the current set \calA satisfies the completeness
    criterion.\qedhere
  \end{enumerate}
\end{proof}

\begin{lemma}\label{lem:join_node_ok}
  Let \(t\) be a join node of the tree decomposition \TT and let
  \(X \subseteq X_{t}, O \subseteq X\) be arbitrary subsets of \(X_{t}, X\)
  respectively. The collection \calA of partitions computed by the DP for the
  combination \((t, X, O)\) satisfies the correctness criteria.
\end{lemma}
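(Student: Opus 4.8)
The plan is to establish the two correctness criteria for the set \(\calA\) accumulated in step~2 of the join-node processing; the concluding representative-subset computation then preserves both criteria by Observation~\ref{obs:repset_computation_preserves_correctness}, so it suffices to argue about \(\calA\). Throughout I would rely on two standard properties of the nice tree decomposition at a join node: \(V_{t_1} \cap V_{t_2} = X_t\) (the two subtrees share only the bag) and \(E_{t_1} \cap E_{t_2} = \emptyset\) (each introduced edge lies on exactly one side). Consequently \(G_t = G_{t_1} \cup G_{t_2}\), and any subgraph \(G'\) of \(G_t\) splits canonically into \(G'_{(1)} = (V(G') \cap V_{t_1}, E(G') \cap E_{t_1})\) and \(G'_{(2)} = (V(G') \cap V_{t_2}, E(G') \cap E_{t_2})\), with \(V(G'_{(1)}) \cap V(G'_{(2)}) = V(G') \cap X_t\) and the edge sets disjoint. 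For a vertex of \(X\) the degree in \(G'\) is the sum of its degrees on the two sides, while a vertex outside \(X_t\) keeps its side-degree; this is the single arithmetic fact driving all the odd-degree bookkeeping below.

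For soundness I would take an arbitrary \(Q = Q_1 \sqcup Q_2 \in \calA\), where \(Q_1 \in VP[t_1, X, O_1 \cup \hat{O}]\), \(Q_2 \in VP[t_2, X, O_2 \cup \hat{O}]\), and \(O_2 = O \setminus O_1\), and use inductive soundness at the children to obtain witnesses \(G_1, G_2\) for \(((t_1, X, O_1 \cup \hat{O}), Q_1)\) and \(((t_2, X, O_2 \cup \hat{O}), Q_2)\). I would then show that \(G_1 \cup G_2\) witnesses \(((t, X, O), Q)\). Conditions~1 and~3 of Definition~\ref{def:valid_partitions_witnesses} follow immediately from their side-wise versions; condition~2 follows from Lemma~\ref{lem:partition_join_connectivity} (its hypotheses hold because \(G_1, G_2\) meet exactly in \(X\), are edge-disjoint, and have every component touching \(X\)), giving that \(Q_1 \sqcup Q_2 = Q\) is the partition defined by \(G_1 \cup G_2\). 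For condition~4 the key computation is that, since \(O_1, O_2 \subseteq O\) are disjoint and \(\hat{O} \subseteq X \setminus O\), the set of vertices lying in exactly one of \(O_1 \cup \hat{O}\) and \(O_2 \cup \hat{O}\) equals \(O_1 \cup O_2 = O\), which is exactly the odd-degree set of the union.

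For completeness I would start from a residual subgraph \(H\) with \(V(H) \cap X_t = X\) and a witness \(G'_t\) for some \(((t, X, O), P)\) completing \(H\), and split it into \(G'_{(1)}, G'_{(2)}\). Writing \(O^{(i)}\) for the odd-degree set of \(G'_{(i)}\), I set \(\hat{O} = O^{(1)} \cap O^{(2)}\), \(O_1 = O^{(1)} \setminus O^{(2)}\), \(O_2 = O \setminus O_1\), and check the identities \(O_1 \cup \hat{O} = O^{(1)}\), \(O_2 \cup \hat{O} = O^{(2)}\), together with \(O_1 \subseteq O\) and \(\hat{O} \subseteq X \setminus O\), so that \((O_1, \hat{O})\) is a pair actually enumerated by the algorithm. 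A delicate point is verifying that every component of each side meets \(X\): a side-component avoiding \(X_t\) would be an entire component of \(G'_t\) avoiding \(X\), contradicting that \(G'_t\) is a witness. The crux is a two-stage peeling. First I treat \(H_1 = H \cup G'_{(2)}\) as a residual subgraph with respect to \(t_1\) (the residual conditions hold because \(V_{t_1} \cap V_{t_2} = X_t\) and \(E_{t_1} \cap E_{t_2} = \emptyset\)), and note that the partition \(P_1\) defined by \(G'_{(1)}\) completes \(H_1\) since \(G'_{(1)} \cup H_1 = G'_t \cup H\); inductive completeness at \(t_1\) then yields \(Q_1 \in VP[t_1, X, O_1 \cup \hat{O}]\) with a witness \(G^{*}_{(1)}\) completing \(H_1\). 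I then repeat on the other side with \(H_2 = H \cup G^{*}_{(1)}\), using that \(G'_{(2)} \cup H_2 = G^{*}_{(1)} \cup H \cup G'_{(2)}\) is Eulerian, to obtain \(Q_2 \in VP[t_2, X, O_2 \cup \hat{O}]\) with a witness \(G^{*}_{(2)}\) completing \(H_2\). Finally, the same four-condition check as in soundness shows \(G^{*}_{(1)} \cup G^{*}_{(2)}\) is a witness for \(((t, X, O), Q_1 \sqcup Q_2)\) with \(G^{*}_{(1)} \cup G^{*}_{(2)} \cup H\) an Eulerian Steiner subgraph, and since the algorithm adds \(Q_1 \sqcup Q_2\) for the pair \((O_1, \hat{O})\), the desired \(Q = Q_1 \sqcup Q_2\) lies in \(\calA\).

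The main obstacle I anticipate is precisely this sequential peeling. One cannot replace both \(G'_{(1)}\) and \(G'_{(2)}\) by stored representatives simultaneously, because the inductive completeness guarantee for a child is stated relative to a \emph{fixed} residual subgraph, and that residual must already incorporate the chosen graph on the other side. Hence I must first freeze the \(t_2\)-side as part of the residual for \(t_1\), extract a replacement \(G^{*}_{(1)}\) that keeps the whole union Eulerian, and only then fold \(G^{*}_{(1)}\) into the residual used for the \(t_2\)-side extraction. Keeping the residual-subgraph conditions and the odd-degree split \(O = O_1 \sqcup O_2\) (with the shared part \(\hat{O}\) that becomes even in the union) mutually consistent across both stages, while guaranteeing the Eulerian Steiner property survives each substitution, is where the argument demands the most care.
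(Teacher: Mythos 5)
Your proposal is correct and follows essentially the same route as the paper's proof: the same canonical split of a witness into its two side-parts, the same odd-degree bookkeeping via the disjoint decomposition \(O = O_1 \cup O_2\) with shared part \(\hat{O}\), the same appeal to Lemma~\ref{lem:partition_join_connectivity}, and the same sequential two-stage peeling in which one side is frozen into the residual subgraph before the stored representative for the other side is extracted. The only differences are cosmetic — you peel the \(t_1\)-side first where the paper peels the \(t_2\)-side first, and you derive \((O_1,O_2,\hat{O})\) from the side-\(1\) odd set rather than the side-\(2\) one, which yields the identical triple.
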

\begin{proof}
  Let \(t_{1}, t_{2}\) be the children of \(t\). Then
  \(X_{t} = X_{t_{1}} = X_{t_{2}}\). Note that
  \(V(G_{t}) = V(G_{t_{1}}) \cup V(G_{t_{2}}) \) and
  \(E(G_{t}) = E(G_{t_{1}}) \cup E(G_{t_{2}}) \) hold, and so graph \(G_{t}\) is
  the union of graphs \(G_{t_{1}}\) and \(G_{t_{2}}\). Further, since each edge
  in the graph is introduced at exactly one bag in \TT we get that
  \(E(G_{t_{1}}) \cap E(G_{t_{2}}) = \emptyset\) holds. Moreover,
  \(V(G_{t_{1}}) \cap V(G_{t_{2}}) = X_{t}\) holds as well. The algorithm
  initializes \calA to the empty set. For each way of dividing set \(O\) into
  two disjoint subsets \(O_{1},O_{2}\) (one of which could be empty) and for
  each subset \(\hat{O}\) (which could also be empty) of the set
  \(X \setminus O\), the algorithm picks a number of pairs \((P_{1}, P_{2})\) of
  partitions and adds their joins \(P_{1} \sqcup P_{2}\) to the set \calA. We
  first show that the partition \(P_{1} \sqcup P_{2}\) is valid for the
  combination \((t, X, O)\), for each choice of pairs \((P_{1}, P_{2})\) made by
  the algorithm.

  So let
  \(P_{1} \in VP[t_{1}, X, O_{1} \cup \hat{O}], P_{2} \in VP[t_{2}, X, O_{2}
  \cup \hat{O}]\). By the inductive hypothesis we get that \(P_{1}\) is valid
  for the combination \((t_{1}, X, O_{1} \cup \hat{O})\) and \(P_{2}\) is valid
  for the combination \((t_{2}, X, O_{2} \cup \hat{O})\). So there exist
  subgraphs \(G'_{t_{1}} = (V'_{t_{1}},E'_{t_{1}})\) of \(G_{t_{1}}\) and
  \(G'_{t_{2}} = (V'_{t_{2}},E'_{t_{2}})\) of \(G_{t_{2}}\) such that
    \begin{enumerate}
    \item \(X_{t} \cap V'_{t_{1}} = X = X_{t} \cap V'_{t_{2}}\);
    \item The vertex set of each connected component of \(G'_{t_{1}}\) and of
      \(G'_{t_{2}}\) has a non-empty intersection with set \(X\). Moreover,
      \(P_{1}\) is the partition of \(X\) defined by the subgraph \(G'_{t_{1}}\)
      and \(P_{2}\) is the partition of \(X\) defined by the subgraph
      \(G'_{t_{2}}\);
    \item Every terminal vertex from \(K \cap V(G_{t_{1}})\) is in
      \(V'_{t_{1}}\) and every terminal vertex from \(K \cap V(G_{t_{2}})\) is
      in \(V'_{t_{2}}\); and,
    \item The set of odd-degree vertices in \(G'_{t_{1}}\) is exactly the set
      \(O_{1} \cup \hat{O}\) and the set of odd-degree vertices in
      \(G'_{t_{2}}\) is exactly the set \(O_{2} \cup \hat{O}\).
    \end{enumerate}
    Let \(G'_{t} = G'_{t_{1}} \cup G'_{t_{2}}\). Then \(G'_{t}\) is a subgraph
    of \(G_{t}\), and
    \begin{enumerate}
    \item Since \(X_{t} \cap V'_{t_{1}} = X = X_{t} \cap V'_{t_{2}}\) holds we
      have that \(X_{t} \cap V(G'_{t})=X\) holds as well;
    \item The vertex set of each connected component of \(G'_{t}\) has a
      non-empty intersection with set \(X\). Moreover, from
      \autoref{lem:partition_join_connectivity} we get that
      \(P_{1} \sqcup P_{2}\) is the partition of \(X\) defined by the subgraph
      \(G'_{t}\);
    \item Every terminal vertex from the set \(K \cap V(G_{t})\) is in
      \(V(G'_{t})\); and,
    \item Since \(E(G_{t_{1}}) \cap E(G_{t_{2}}) = \emptyset\) holds we get that
      the degree of any vertex \(v\) in graph \({G'_{t}}\) is the sum of its
      degrees in the two graphs \(G'_{t_{1}}\) and \(G'_{t_{2}}\). Since (i) the
      set of odd-degree vertices in graph \(G'_{t_{1}}\) is exactly the set
      \(O_{1} \cup \hat{O}\), (ii) the set of odd-degree vertices in graph
      \(G'_{t_{2}}\) is exactly the set \(O_{2} \cup \hat{O}\), and (iii) \(O\)
      is the disjoint union of sets \(O_{1}\) and \(O_{2}\), we get that the set
      of odd-degree vertices in graph \(G'_{t}\) is exactly the set \(O\).
    \end{enumerate}
    Thus graph \(G'_{t}\) is a witness for partition \(P_{1} \sqcup P_{2}\)
    being valid for the combination \((t, X, O)\), and so partition
    \(P_{1} \sqcup P_{2} \in \calA\) is valid for the combination \((t, X, O)\).
    This proves that collection \calA satisfies the soundness criterion.

    We now argue that the set \calA satisfies the completeness criterion. So let
    \(H\) be a residual subgraph with respect to \(t\) with
    \(V(H) \cap X_{t} = X\), for which there exists a partition
    \(P=\{X^{1},X^{2},\ldots X^{p}\}\) of \(X\) such that \(((t, X, O), P)\)
    completes \(H\). We need to show that the set \calA computed by the
    algorithm contains some partition \(Q\) of \(X\) such that
    \(((t, X, O), Q)\) completes \(H\). Observe that there exists a subgraph
    \(G'_{t}\) of \(G_{t}\)---a witness for \(((t,X,O), P)\)---such that the
    following hold:
    \begin{enumerate}
    \item \(X_{t} \cap V(G'_{t})=X\).
    \item \(G'_{t}\) has exactly \(p\) connected components
      \(C_{1}, C_{2}, \dotsc, C_{p}\) and for each
      \(i \in \{1, 2, \dotsc, p\}$, $X^{i} \subseteq V(C_{i})\) holds.
    \item Every terminal vertex from \(K \cap V_{t}\) is in \(V(G'_{t})\).
    \item The set of odd-degree vertices in \(G'_{t}\) is exactly the set \(O\).
    \item The graph \(G'_{t} \cup H\) is an
      Eulerian Steiner subgraph of \(G\) for the terminal set \(K\).
    \end{enumerate}
    Let \(G_{1} = (V(G'_{t}) \cap V(G_{t_{1}}), E(G'_{t}) \cap E(G_{t_{1}}))\)
    and \(G_{2} = (V(G'_{t}) \cap V(G_{t_{2}}), E(G'_{t}) \cap E(G_{t_{2}}))\)
    be, respectively, the subgraphs of \(G'_{t}\) defined by the subtrees of \TT
    rooted at nodes \(t_{1}\) and \(t_{2}\), respectively. Then
    \(G'_{t} = G_{1} \cup G_{2}\),
    \(V(G_{1}) \cap X_{t_{1}} = V(G_{2}) \cap X_{t_{2}}= V(G_{1}) \cap V(G_{2})
    = X\), and \(E(G_{1}) \cap E(G_{2}) = \emptyset\) all hold. Let
    \(\tilde{O_{1}}, \tilde{O_{2}}\) be the sets of vertices of odd degree in
    graphs \(G_{1}, G_{2}\), respectively. Since graph
    \((H \cup G_{1}) \cup G_{2}\) is Eulerian and since
    \(V(H \cup G_{1}) \cap V(G_{2}) = X\) holds, we get that (i)
    \(\tilde{O_{2}} \subseteq X\) holds, and (ii) every connected component of
    graph \(G_{2}\) contains at least one vertex from set \(X\). By symmetric
    reasoning we get that (i) \(\tilde{O_{1}} \subseteq X\) holds, and (ii)
    every connected component of graph \(G_{1}\) contains at least one vertex
    from set \(X\). Let \(O_{2} = \tilde{O_{2}} \cap O\) and
    \(\hat{O} = \tilde{O_{2}} \setminus O\). Then
    \(\tilde{O_{2}} = O_{2} \cup \hat{O}\). Define
    \(O_{1} = O \setminus O_{2}\). Since (i) the set of odd-degree vertices in
    graph \(G'_{t}\) is exactly the set \(O\), and (ii)
    \(E(G_{1}) \cap E(G_{2}) = \emptyset\) holds, we get that the set of
    odd-degree vertices in graph \(G_{1}\) is
    \(\tilde{O_{1}} =(O \setminus O_{2}) \cup \hat{O} = O_{1} \cup \hat{O}\).

    Let \(Q_{2}\) be the partition of set \(X\) defined by graph \(G_{2}\), and
    let \(R_{1} = H \cup G_{1}\). It is straightforward to verify the following:
    (i) \(R_{1}\) is a residual subgraph with respect to node \(t_{2}\) with
    \(V(R_{1}) \cap X_{t_{2}} = X\); (ii) graph \(G_{2}\) is a witness for
    partition \(Q_{2}\) being valid for the combination
    \((t_{2}, X, \tilde{O_{2}})\), and (iii) \(G_{2}\) is a certificate for
    \(((t_{2}, X, \tilde{O_{2}}), Q_{2})\) completing the residual graph
    \(R_{1}\). By the inductive assumption there is a partition \(P_{2}\) of
    \(X\) in the set \(VP[t_{2}, X, O_{2} \cup \hat{O}]\) such that
    \(((t_{2}, X, O_{2} \cup \hat{O}), P_{2})\) completes the residual graph
    \(R_{1}\). Let \(H_{2}\) be a certificate for
    \(((t_{2}, X, O_{2} \cup \hat{O}), P_{2})\) completing \(R_{1}\). Note that
    \(H_{2}\) is a subgraph of \(G_{t_{2}}\), and that
    \(R_{1} \cup H_{2} = (H \cup G_{1}) \cup H_{2}\) is an Eulerian Steiner
    subgraph of \(G\) for the terminal set \(K\).

    Let \(Q_{1}\) be the partition of set \(X\) defined by graph \(G_{1}\), and
    let \(R_{2} = H \cup H_{2}\). From \autoref{lem:completion_odd_subset} we
    get that the set of odd-degree vertices of the residual subgraph \(H\) is
    exactly the set \(O\), and from
    Definitions~\ref{def:valid_partitions_witnesses} and~\ref{def:completion} we
    get that the set of odd-degree vertices of graph \(H_{2}\) is the set
    \(O_{2} \cup \hat{O}\). From the definition of a residual subgraph we get
    that \(E(H) \cap E(H_{2}) = \emptyset\) holds. It follows that the set of
    odd-degree vertices of graph \(R_{2}\) is
    \((O \setminus O_{2}) \cup \hat{O} = O_{1} \cup \hat{O}\), which is exactly
    the set of odd-degree vertices of graph \(G_{1}\).

    It is now straightforward to verify the following: (i) \(R_{2}\) is a
    residual subgraph with respect to node \(t_{1}\) with
    \(V(R_{2}) \cap X_{t_{1}} = X\); (ii) graph \(G_{1}\) is a witness for
    partition \(Q_{1}\) being valid for the combination
    \((t_{1}, X, O_{1} \cup \hat{O})\), and (iii) \(G_{1}\) is a certificate for
    \(((t_{1}, X, O_{1} \cup \hat{O}), Q_{1})\) completing the residual graph
    \(R_{2}\). By the inductive assumption there is a partition \(P_{1}\) of
    \(X\) in the set \(VP[t_{1}, X, O_{1} \cup \hat{O}]\) such that
    \(((t_{1}, X, O_{1} \cup \hat{O}), P_{1})\) completes the residual graph
    \(R_{2}\). Let \(H_{1}\) be a certificate for
    \(((t_{1}, X, O_{1} \cup \hat{O}), P_{1})\) completing \(R_{2}\). Note that
    \(H_{1}\) is a subgraph of \(G_{t_{1}}\), and that
    \(R_{2} \cup H_{1} = (H \cup H_{2}) \cup H_{1}\) is an Eulerian Steiner
    subgraph of \(G\) for the terminal set \(K\).

    Let \(\hat{H} = H_{1} \cup H_{2}\). Then \(\hat{H}\) is a subgraph of
    \(G_{t}\), and
    \begin{enumerate}
    \item Since \(X_{t} \cap V(H_{1}) = X = X_{t} \cap V(H_{2})\) holds we have
      that \(X_{t} \cap V(\hat{H})=X\) holds as well;
    \item The vertex set of each connected component of \(\hat{H}\) has a
      non-empty intersection with set \(X\). Moreover, from
      \autoref{lem:partition_join_connectivity} we get that
      \(P_{1} \sqcup P_{2}\) is the partition of \(X\) defined by the subgraph
      \(\hat{H}\);
    \item Every terminal vertex from the set \(K \cap V(G_{t})\) is in
      \(V(\hat{H})\); and,
    \item Since \(E(G_{t_{1}}) \cap E(G_{t_{2}}) = \emptyset\) holds we get that
      the degree of any vertex \(v\) in graph \(\hat{H}\) is the sum of its
      degrees in the two graphs \(H_{1}\) and \(H_{2}\). Since (i) the set of
      odd-degree vertices in graph \(H_{1}\) is exactly the set
      \(O_{1} \cup \hat{O}\), (ii) the set of odd-degree vertices in graph
      \(H_{2}\) is exactly the set \(O_{2} \cup \hat{O}\), and (iii) \(O\) is
      the disjoint union of sets \(O_{1}\) and \(O_{2}\), we get that the set of
      odd-degree vertices in graph \(\hat{H}\) is exactly the set \(O\).
    \end{enumerate}
    Graph \(\hat{H}\) is thus a witness for partition \(P_{1} \sqcup P_{2}\) of
    \(X\) being valid for the combination \((t, X, O)\), and \(H \cup \hat{H}\)
    is an Eulerian Steiner subgraph of \(G\) for the terminal set \(K\). Thus
    \(((t, X, O), P_{1} \sqcup P_{2})\) completes \(H\). Since the algorithm
    adds partition \(P_{1} \sqcup P_{2}\) to the set \(\calA\) we get that \calA
    satisfies the completeness criterion. \qedhere
\end{proof}

We can now prove
\begingroup
\def\thetheorem{\ref{thm:ESS_is_FPT}}
\begin{theorem}
  There is an algorithm which solves an instance \((G,K, \mathcal{T},tw)\) of
  \ESS in \(\OhStar{(1 + 2^{(\omega + 3)})^{tw}}\) time.
\end{theorem}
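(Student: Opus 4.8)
The plan is to assemble the per-node correctness lemmas into one structural induction over the (nearly-nice) decomposition \TT and then read off the answer at the root. Concretely, I would process the nodes in post-order, filling each table \(VP[t,X,O]\) by the procedure prescribed for the type of \(t\). \autoref{lem:leaf_node_ok} supplies the base case, and \autoref{lem:introduce_vertex_node_ok}, \autoref{lem:introduce_edge_node_ok}, \autoref{lem:forget_node_ok} and \autoref{lem:join_node_ok} each supply the inductive step for one node type, every one of them assuming that the child (or children) tables already satisfy the Correctness Criteria. Since every non-leaf step finishes by replacing \calA with a representative subset, I would invoke \autoref{obs:repset_computation_preserves_correctness} to conclude that this final pruning preserves both soundness and completeness. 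Hence by induction the Correctness Criteria hold at \emph{every} node of \TT, in particular at the root \(r\).

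At the root I would decide the instance by testing whether \(VP[r,\{\vstar\},\emptyset]\) is nonempty. The justification fuses two earlier results. Because \(X_{r}=\{\vstar\}\), the only partition of this set is \(\{\{\vstar\}\}\); so if the table is nonempty it contains \(\{\{\vstar\}\}\), which is valid for \((r,\{\vstar\},\emptyset)\) by soundness, and then \autoref{lem:validity_condition_at_root} certifies a \yes instance. Conversely, if the instance is a \yes instance then by \autoref{lem:completion_at_root} the pair \(((r,\{\vstar\},\emptyset),\{\{\vstar\}\})\) completes the residual subgraph \(H=(\{\vstar\},\emptyset)\), so the completeness half of the Correctness Criteria forces \(VP[r,\{\vstar\},\emptyset]\) to contain a partition completing \(H\), hence to be nonempty.

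The remaining—and main—task is the running-time bound, where the join node is the bottleneck. After adding \vstar to every bag each bag has at most \(tw+2\) vertices; write \(k\) for this bound, so for every node there are at most \(3^{|X_{t}|}\le 3^{k}\) relevant pairs \((X,O)\), and \autoref{thm:computing_representative_subsets} keeps each stored table \(VP[\cdot,X,O]\) to at most \(2^{|X|-1}\) partitions. At a join node, for a fixed \((X,O)\) the algorithm ranges over the \(2^{|X|}\) choices of \((O_{1},\hat{O})\) and, for each, over the at most \(2^{2(|X|-1)}\) pairs \((P_{1},P_{2})\) drawn from the two child tables, generating at most \(2^{3|X|-2}\) partitions before pruning. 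Feeding this \calA into \autoref{thm:computing_representative_subsets} costs \(|\calA|\cdot 2^{(\omega-1)|X|}\cdot|X|^{\Oh{1}}=2^{(\omega+2)|X|}\cdot|X|^{\Oh{1}}\) time, so summing over all \((X,O)\) at the node gives
\[
  \sum_{X\subseteq X_{t}} 2^{|X|}\cdot 2^{(\omega+2)|X|}\cdot|X|^{\Oh{1}}
  \;=\; \sum_{j=0}^{k}\binom{k}{j}\,2^{(\omega+3)j}\cdot k^{\Oh{1}}
  \;=\; (1+2^{(\omega+3)})^{k}\cdot k^{\Oh{1}},
\]
by the binomial theorem, which is exactly \(\OhStar{(1+2^{(\omega+3)})^{tw}}\) after absorbing the constant \((1+2^{(\omega+3)})^{2}\).

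An analogous but cheaper count at the introduce-vertex, introduce-edge and forget nodes yields base \(1+2^{(\omega+1)}\) or smaller, so the join node dominates. As \autoref{thm:small_nice_tree_decomposition_computation} produces a nice decomposition with only \(\Oh{tw\cdot|V(G)|}\) nodes, multiplying the per-node bound by the number of nodes (and by the polynomial cost of each join and partition operation) preserves the overall bound. I expect the join-node accounting to be the crux: one must use the \(2^{|X|-1}\) size bound of \autoref{thm:computing_representative_subsets} to control the \((P_{1},P_{2})\) pair count—without such pruning the child tables could be Bell-number sized and the bound would collapse—and then verify that this count, multiplied by the \(2^{(\omega-1)|X|}\) cost of the representative-set subroutine and summed over all \(O\subseteq X\subseteq X_{t}\), telescopes through \(\sum_{j}\binom{k}{j}2^{(\omega+3)j}=(1+2^{(\omega+3)})^{k}\) to the claimed base. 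This is precisely where the constant \(2^{(\omega+3)}\) in the running time originates.
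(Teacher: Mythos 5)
Your proposal is correct and follows essentially the same route as the paper's proof: the same structural induction over the per-node lemmas together with \autoref{obs:repset_computation_preserves_correctness}, the same root test (nonemptiness of \(VP[r,\{\vstar\},\emptyset]\) is equivalent to membership of the unique partition \(\{\{\vstar\}\}\), justified via Lemmas~\ref{lem:validity_condition_at_root} and~\ref{lem:completion_at_root}), and the same join-node-dominated running-time analysis telescoping through \(\sum_{j}\binom{k}{j}2^{(\omega+3)j}=(1+2^{(\omega+3)})^{k}\). Your accounting of the join node (bounding \(|\calA|\) by \(2^{3|X|-2}\) before the single representative-set call per \((X,O)\)) is in fact slightly more faithful to the algorithm as stated than the paper's per-guess bookkeeping, but it yields the identical bound.
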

\addtocounter{theorem}{-1}
\endgroup
\begin{proof}
  We first modify \TT to make it a ``nearly-nice'' tree decomposition rooted at
  \(r\) as described at the start of this section. We then execute the dynamic
  programming steps described above on \TT. We return \yes if the element
  \(\{\{v^{\star}\}\}\) is present in the set
  \(VP[r,X = \{v^{\star}\}, O = \emptyset]\) computed by the DP, and \no
  otherwise.

  From \autoref{lem:completion_at_root} we know that \((G, K, \TT, tw)\) is a
  \yes instance of \ESS if and only if the combination
  \(((r,X = \{v^{\star}\}, O = \emptyset), P = \{\{v^{\star}\}\})\) completes
  the residual graph \(H = (\{v^{\star}\}, \emptyset)\). By induction on the
  structure of the tree decomposition \TT and using
  \autoref{obs:repset_computation_preserves_correctness} and
  Lemmas~\ref{lem:leaf_node_ok},~\ref{lem:introduce_vertex_node_ok},~\ref{lem:introduce_edge_node_ok},~\ref{lem:forget_node_ok},
  and~\ref{lem:join_node_ok} we get that the set
  \(VP[r,X = \{v^{\star}\}, O = \emptyset]\) computed by the algorithm satisfies
  the correctness criteria. And since \(\{\{v^{\star}\}\}\) is the unique
  partition of set \(\{v^{\star}\}\) we get that the set
  \(VP[r,X = \{v^{\star}\}, O = \emptyset]\) computed by the algorithm will
  contain the partition \(\{\{v^{\star}\}\}\) if and only if \((G, K, \TT, tw)\)
  is a \yes instance of \ESS.

  Note that we compute representative subsets as the last step in the
  computation at each bag. So we get, while performing computations at an
  intermediate node \(t\), that the number of partitions in any set
  \(VP[t', X', \cdot]\) for any \emph{child} node \(t'\) of \(t\) and subset
  \(X'\) of \(X_{t'}\) is at most \(2^{(|X'| - 1)}\) (See
  \autoref{thm:computing_representative_subsets}). We use
  \autoref{fac:partition_join_is_union_of_graphs} to perform various operations
  on one or two partitions---such as adding a block to a partition, merging two
  blocks of a partition, eliding an element from a partition, or computing the
  join of two partitions---in polynomial time.

  The computation at each \textbf{leaf node} of \TT can be done in constant
  time. For an \textbf{introduce vertex node} or an \textbf{introduce edge node}
  or a \textbf{forget node} \(t\) and a fixed pair of subsets
  \(X \subseteq X_{t}, O \subseteq X\), the computation of set \calA
  involves---in the worst case---spending polynomial time for each partition
  \(P'\) in some set \(VP[t', X' \subseteq X, \cdot]\). Since the number of
  partitions in this latter set is at most \(2^{(|X'| - 1)} \leq 2^{(|X| - 1)}\)
  we get that the set \calA can be computed in \(\OhStar{2^{(|X| - 1)}}\) time,
  and that the set \calB can be computed---see
  \autoref{thm:computing_representative_subsets}---in
  \(\OhStar{2^{(|X| - 1)} \cdot 2^{(\omega - 1)\cdot |X|}} = \OhStar{2^{\omega
      \cdot |X|}}\) time. Since the number of ways of choosing the subset
  \(O \subseteq X\) is \(2^{|X|}\) the entire computation at an introduce
  vertex, introduce edge, or forget node \(t\) can be done in time
  \begin{align*}
    \sum_{|X| = 0}^{|X_{t}|}\binom{|X_{t}|}{|X|} 2^{|X|} \OhStar{2^{\omega \cdot
    |X|}} &= \OhStar{\sum_{|X| = 0}^{tw + 1}\binom{tw + 1}{|X|} 2^{(\omega + 1)|X|}}\\
          &= \OhStar{(1 + 2^{(\omega + 1)})^{(tw + 1)}}\\
          &= \OhStar{(1 + 2\cdot 2^{\omega })^{tw}}. 
  \end{align*}
  For a \textbf{join node} \(t\) and a fixed subset \(X \subseteq X_{t}\) we
  guess three pairwise disjoint subsets \(\hat{O}, O_{1}, O_{2}\) of \(X\) in
  time \(4^{|X|}\). For each guess we go over all partitions
  \(P_{1} \in VP[t_{1}, X, O_{1} \cup \hat{O}], P_{2} \in VP[t_{2}, X, O_{2}
  \cup \hat{O}]\) and add their join \(P_{1} \sqcup P_{2}\) to the set \calA.
  Since the number of partitions in each of the two sets
  \(VP[t_{1}, X, O_{1} \cup \hat{O}], VP[t_{2}, X, O_{2} \cup \hat{O}]\) is at
  most \(2^{(|X| - 1)}\), the size of set \calA is at most \(2^{(2|X| - 2)}\).
  The entire computation at the join node can be done in time
  \begin{align*}
    \sum_{|X| = 0}^{|X_{t}|}\binom{|X_{t}|}{|X|} 4^{|X|} (2^{(2|X| - 2)} + \OhStar{2^{(2|X| - 2)} \cdot 2^{(\omega - 1)\cdot |X|}})
    &= \OhStar{\sum_{|X| = 0}^{tw + 1}\binom{tw + 1}{|X|} 2^{4|X| - 2 + \omega{}|X| - |X|}}\\
    &= \OhStar{\sum_{|X| = 0}^{tw + 1}\binom{tw + 1}{|X|} 2^{(\omega + 3)|X|}}\\
          &= \OhStar{(1 + 2^{(\omega + 3)})^{(tw + 1)}}\\
          &= \OhStar{(1 + 2^{(\omega + 3)})^{tw}}.
  \end{align*}
  The entire DP over \TT can thus be done in
  \(\OhStar{(1 + 2^{(\omega + 3)})^{tw}}\) time.
\end{proof}


\section{Finding the Hamiltonian Index}\label{sec:hamIndex}
In this section we prove \autoref{thm:hamIndex_is_FPT}: we describe an algorithm
which takes an instance \((G, \mathcal{T}, tw, r)\) of \HI as input and outputs
in \(\OhStar{(1 + 2^{(\omega + 3)})^{tw}}\) time whether graph \(G\) has
Hamiltonian Index at most \(r\). If \(r \geq (|V(G)| - 3)\) holds then our
algorithm returns \yes. If \(r < (|V(G)| - 3)\) then it checks, for each
\(i = 0, 1, \dotsc, r\) in increasing order, whether \(h(G) = i\) holds. From
\autoref{fac:all_graphs_have_finite_hamiltonian_index} we know that this
procedure correctly solves \HI. We now describe how we check if \(h(G) = i\)
holds for increasing values of \(i\). For \(i = 0\) we apply an algorithm of
Bodlaender et al., and for \(i = 1\) we leverage a classical result of Harary
and Nash-Williams.
\begin{theorem}\textup{\cite{bodlaenderCyganKratschNederlof2015deterministic}}\label{fac:graph_hamiltonicity_tw_fpt}
  There is an algorithm which takes a graph \(G\) and a tree decomposition of
  \(G\) of width \(tw\) as input, runs in
  \(\OhStar{(5 + 2^{(\omega + 2)/2})^{tw}}\) time, and tells whether \(G\) is
  Hamiltonian.
\end{theorem}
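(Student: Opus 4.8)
The plan is to reduce Hamiltonicity to a connected-spanning-subgraph problem with exact degree constraints and solve it by rank-based dynamic programming over a nice tree decomposition, in exactly the spirit of the \ESS algorithm of \autoref{sec:ESS}. First I would note that \(G\) is Hamiltonian if and only if \(G\) has a spanning subgraph \(G'\) that is connected and in which every vertex has degree exactly \(2\); such a \(G'\) is necessarily a single spanning cycle. Using \autoref{thm:small_nice_tree_decomposition_computation} I may assume \TT is nice. I would then index the DP table at a node \(t\) by a \emph{degree-demand vector} \(d\colon X_{t}\to\{0,1,2\}\) recording, for each bag vertex, its current degree in the partial solution \(G'_{t}\subseteq G_{t}\), with the invariant that every vertex already forgotten below \(t\) has reached degree \(2\). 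Because \(G'_{t}\) is then a disjoint union of paths whose interior vertices have degree \(2\) and whose endpoints are exactly the degree-\(1\) bag vertices, the connectivity data I need is a \emph{matching} on the degree-\(1\) vertices of \(X_{t}\) pairing the two endpoints of each path (premature closed cycles being forbidden before the root). Validity, witnesses, and completion would be defined just as in Definitions~\ref{def:valid_partitions_witnesses} and~\ref{def:completion}, with "the set of odd-degree vertices is \(O\)" replaced by "the degree profile is \(d\)" and the partition replaced by this endpoint-matching.

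The node-by-node processing and its soundness/completeness proof would then mirror Lemmas~\ref{lem:leaf_node_ok}--\ref{lem:join_node_ok} in structure: introduce-vertex sets a fresh degree-\(0\) endpoint, introduce-edge optionally raises two degrees and merges/creates matched pairs, forget discards a vertex only once it has reached degree \(2\), and join combines two partial solutions. The one ingredient that makes the \emph{stated} running time attainable --- and the reason Hamiltonicity beats the generic connected-subgraph base --- is how I prune stored matchings. Rather than the partition representative sets of \autoref{thm:computing_representative_subsets} (size \(2^{|X_{t}|-1}\)), I would use the \emph{matchings connectivity matrix}, whose rank over \(\mathrm{GF}(2)\) on \(2k\) endpoints is only \(2^{k-1}\); hence a representative family of matchings on the at most \(|X_{t}|\) endpoints has size \(2^{|X_{t}|/2-1}\), a quadratic improvement, and can be computed by fast matrix multiplication on a \(2^{|X_{t}|/2}\times 2^{|X_{t}|/2}\) matrix in \(2^{(\omega/2)|X_{t}|}\) time. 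Pruning to such a family after every node preserves completeness --- the exact analogue of \autoref{lem:repsets_preserve_completion}, whose proof again rests on \autoref{lem:partition_join_connectivity}. This \(2^{\omega/2}\) factor is precisely the source of the term \(2^{(\omega+2)/2}=2\cdot 2^{\omega/2}\) in the claimed base.

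For the running time I would run the same accounting as in the proof of \autoref{thm:ESS_is_FPT}, with the join node as the bottleneck. At a join I must, for each vertex, enumerate how its final demand \(d(v)\in\{0,1,2\}\) splits as a sum of the two child demands, while combining the two endpoint-matchings into one and discarding any combination that closes a cycle prematurely; the matching-representative computation then carries the \(2^{(\omega/2)|X_{t}|}\) cost. Collecting terms by the multinomial summation over the partition of \(X_{t}\) into its degree-\(0\), degree-\(1\) and degree-\(2\) classes,
\[
  \sum_{n_{0}+n_{1}+n_{2}=|X_{t}|}\binom{|X_{t}|}{n_{0},n_{1},n_{2}}\,f_{0}^{\,n_{0}}f_{1}^{\,n_{1}}f_{2}^{\,n_{2}} = (f_{0}+f_{1}+f_{2})^{|X_{t}|},
\]
where the cheap degree-bookkeeping contributes the constant part and the matching-representative step contributes the \(2^{(\omega+2)/2}\) part, yields the target bound \(\OhStar{(5+2^{(\omega+2)/2})^{tw}}\). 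I expect the main obstacle to be the correctness of the join combination: arguing that joining only \emph{representative} matchings, rather than all of them, still certifies a single global Hamiltonian cycle. This is where the rank bound of the matchings connectivity matrix must be invoked in place of the generic partition rank, and where the representative-set preservation argument (the Hamiltonian analogue of \autoref{lem:repsets_preserve_completion}) does the real work; the degree-split enumeration and the "no premature cycle" check are routine by comparison.
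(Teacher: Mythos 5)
The paper does not prove this statement: \autoref{fac:graph_hamiltonicity_tw_fpt} is imported with a citation to Bodlaender et al.~\cite{bodlaenderCyganKratschNederlof2015deterministic} and used as a black box inside the proof of \autoref{thm:hamIndex_is_FPT}, so there is no in-paper argument to compare yours against. Measured against the cited source, your sketch reconstructs the right proof and, importantly, the right \emph{reason} the base drops below the \(1+2^{\omega+3}\) of \autoref{thm:ESS_is_FPT}: Hamiltonicity is a degree-exactly-\(2\) connected spanning subgraph problem, the connectivity state collapses from an arbitrary partition of the bag to a perfect matching on the degree-\(1\) bag vertices (path endpoints), and the matchings connectivity matrix has \(\mathrm{GF}(2)\)-rank \(2^{k-1}\) on \(2k\) endpoints, so representative families of matchings have size roughly \(2^{|X_t|/2}\) instead of the \(2^{|X_t|-1}\) of \autoref{thm:computing_representative_subsets}. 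That square-root saving is indeed what produces the \(2^{(\omega+2)/2}\) term, and nothing in the paper's own partition-based machinery would get you there.

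Two places where your sketch asserts rather than derives. First, the additive constant \(5\): you write the multinomial identity \((f_0+f_1+f_2)^{|X_t|}\) but never compute the per-vertex factors; the constant comes out of the join-node accounting (the number of ways a demand in \(\{0,1,2\}\) splits across the two children, multiplied by the sizes of the two stored matching families restricted to the degree-\(1\) vertices, plus the cost of the reduce step), and until you carry that out you have not established the claimed base. Second, computing a representative family of matchings "by fast matrix multiplication" requires the explicit basis of the matchings connectivity matrix from Cygan, Kratsch and Nederlof and a row-reduction of the stored matchings against that basis; this is a genuinely different algebraic ingredient from \autoref{thm:computing_representative_subsets}, and the analogue of \autoref{lem:repsets_preserve_completion} must be reproved for it (your appeal to \autoref{lem:partition_join_connectivity} handles the graph-union side but not the rank side). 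Both gaps are filled in the cited reference; the skeleton of your argument is the correct one.
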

\begin{theorem}\textup{\cite{harary1965eulerian}}\label{fac:line_graph_hamiltonicity_DES}
  Let \(G\) be a connected graph with at least three edges. Then \(L(G)\) is
  Hamiltonian if and only if \(G\) has a dominating Eulerian subgraph.
\end{theorem}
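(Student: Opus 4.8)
The plan is to reduce the statement to the third characterization in \autoref{fac:line_graph_hamiltonian_characterization}, which already tells us that \(L(G)\) is Hamiltonian if and only if \(G\) contains a closed trail \(T\) such that every edge of \(G\) has at least one endpoint on \(T\). The only thing left to establish is that the existence of such a \emph{dominating closed trail} is equivalent to the existence of a \emph{dominating Eulerian subgraph} in the sense defined earlier, namely an Eulerian subgraph \(G'\) whose vertex set \(V(G')\) contains a vertex cover of \(G\). This equivalence rests on the elementary fact that closed trails and Eulerian subgraphs are two descriptions of the same combinatorial object.

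First I would make the correspondence between closed trails and Eulerian subgraphs precise. Given a closed trail \(T\) in \(G\), consider the subgraph \(H_{T}\) whose edge set is the set of edges traversed by \(T\) and whose vertex set is the set of vertices visited by \(T\). Because a trail repeats no edge, each passage through a vertex contributes exactly two to its degree in \(H_{T}\), so every vertex of \(H_{T}\) has even degree; and because \(T\) is a single closed walk, \(H_{T}\) is connected. Hence \(H_{T}\) is Eulerian. Conversely, any Eulerian subgraph \(H\) of \(G\) admits an Eulerian tour, which is a closed trail \(T\) with \(V(T) = V(H)\) and \(E(T) = E(H)\). The degenerate single-vertex trail corresponds to the one-vertex Eulerian subgraph, which is relevant only when every edge of \(G\) is incident to a common vertex (a star).

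Next I would match the two domination conditions. For a closed trail \(T\), saying that every edge of \(G\) has an endpoint on \(T\) is precisely the statement that \(V(T)\) is a vertex cover of \(G\); and \(V(T) = V(H_{T})\). Since a set of vertices is a vertex cover exactly when it contains a vertex cover, \(T\) is a dominating closed trail if and only if \(V(H_{T})\) contains a vertex cover of \(G\), i.e. if and only if \(H_{T}\) is a dominating Eulerian subgraph. Running the correspondence the other way, an Eulerian subgraph \(H\) whose vertex set contains a vertex cover yields, via its Eulerian tour, a closed trail on the same vertex set, which therefore dominates every edge. Combining these two observations with \autoref{fac:line_graph_hamiltonian_characterization} gives the claimed equivalence.

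The main obstacle lies less in the construction than in handling the boundary cases cleanly. One must check that the single-vertex closed trail is treated consistently with the definition of an Eulerian subgraph, and that the hypothesis of at least three edges is exactly what rules out the small graphs (those \(G\) with at most two edges), for which \(L(G)\) is too small to contain a cycle on three or more vertices and the characterization of \autoref{fac:line_graph_hamiltonian_characterization} would otherwise misbehave. Once these degeneracies are dispatched, the argument is a direct translation between the trail picture and the subgraph picture.
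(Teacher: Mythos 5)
Your proposal is correct and takes essentially the same route as the paper: the paper states this result as a cited fact of Harary and Nash-Williams and justifies the reformulation only by the observation (made in Section~\ref{intro}) that a closed trail in \(G\) is exactly an Eulerian subgraph of \(G\), which is precisely the correspondence you spell out. Your translation between dominating closed trails and Eulerian subgraphs whose vertex set contains a vertex cover, including the treatment of the single-vertex degenerate case and the role of the three-edge hypothesis, is sound.
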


For checking if \(h(G) \in \{2, 3\}\) holds we make use of a structural result
of Hong et al.~\cite{hong2009hamiltonian}. For a connected subgraph \(H\) of
graph \(G\) the \emph{contraction} \(G/H\) is the graph obtained from \(G\) by
replacing all of \(V(H)\) with a single vertex \(v_{H}\) and adding edges
between \(v_{H}\) and \(V(G) \setminus V(H)\) such that the number of edges in
\(G/H\) between \(v_{H}\) and any vertex \(v \in V(G) \setminus V(H)\) is equal
to the number of edges in \(G\) with one end point at \(v\) and the other in
\(V(H)\). Note that the graph \(G/H\) is, in general, a multigraph with
multiedges incident on \(v_{H}\). Let \(V_{2}\) be the set of all vertices of
\(G\) of degree two, and let \(\hat{V} = V(G) \setminus V_{2}\). A \emph{lane}
of \(G\) is either (i) a path whose end-vertices are in \(\hat{V}\) and internal
vertices (if any) are in \(V_{2}\), or (ii) a cycle which contains exactly one
vertex from \(\hat{V}\). The \emph{length} of a lane is the number of edges in
the lane. An \emph{end-lane} is a lane which has a degree-one vertex of \(G\) as
an end-vertex.

For \(i \in \{2, 3\}\) let \(U_{i}\) be the union of lanes of length \emph{less
  than} \(i\). Let \(C_{1}^{i}, C_{2}^{i}, \dotsc, C_{p_{i}}^{i}\) be the
connected components of \(G[\hat{V}] \cup U_{i}\). Then each \(C_{j}^{i}\)
consists of components of \(G[\hat{V}]\) connected by lanes of length less than
\(i\). Let \(H^{(i)}\) be the graph obtained from \(G\) by contracting each of
the connected subgraphs \(C_{1}^{i}, C_{2}^{i}, \dotsc, C_{p_{i}}^{i}\) to a
distinct vertex. Let \(D_{j}^{i}\) denote the vertex of \(H^{(i)}\) obtained by
contracting subgraph \(C_{j}^{i}\) of \(G\). Let \(\tilde{H}^{(i)}\) be the
graph obtained from \(H^{(i)}\) by these steps:
\begin{enumerate}
\item Delete all lanes beginning and ending at the same vertex \(D_{j}^{i}\).
\item If there are two vertices \(D_{j}^{i}, D_{k}^{i}\) in \(H^{(i)}\) which
  are connected by \(\ell_{1}\) lanes of length at least \(i + 2\) and
  \(\ell_{2}\) lanes of length \(i\) or \(i + 1\) such that
  \(\ell_{1} + \ell_{2} \geq 3\) holds, then delete an arbitrary subset of these
  lanes such that there remain \(\ell_{3}\) lanes with length at least \(i + 2\)
  and \(\ell_{4}\) lanes of length \(i\) or \(i + 1\), where
  \[
    (\ell_{3}, \ell_{4}) =
    \begin{cases}
      (2, 0) & \text{if } \ell_{1} \text{ is even and } \ell_{2} = 0;\\
      (1, 0) & \text{if } \ell_{1} \text{ is odd and } \ell_{2} = 0;\\
      (1, 1) & \text{if } \ell_{2} = 1;\\
      (0, 2) & \text{if } \ell_{2} \geq 2.
    \end{cases}
  \]
\item Delete all end-lanes of length \(i\), and replace each lane of length
  \(i\) or \(i+1\) by a single edge.
\end{enumerate}
\begin{theorem}\textup{\cite[See Theorem~3]{hong2009hamiltonian}}\label{fac:hamIndex_from_SES}
  Let \(G\) be a connected graph with \(h(G) \geq 2\) and with at least one
  vertex of degree at least three, and let \(\tilde{H}^{(2)}, \tilde{H}^{(3)}\)
  be graphs constructed from \(G\) as described above. Then
  \begin{itemize}
  \item \(h(G) = 2\) if and only if \(\tilde{H}^{(2)}\) has a spanning
    Eulerian subgraph; and
  \item \(h(G) = 3\) if and only if \(h(G) \neq 2\) and \(\tilde{H}^{(3)}\) has
    a spanning Eulerian subgraph.
  \end{itemize}
\end{theorem}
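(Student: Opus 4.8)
The plan is to derive this characterization from the known combinatorial description of the Hamiltonian index as the existence of a single ``dominating even subgraph'' of the appropriate order, and then to show that each of the three reduction steps defining \(\tilde{H}^{(i)}\) preserves the existence of such a subgraph, so that the question collapses to asking whether \(\tilde{H}^{(i)}\) has a spanning Eulerian subgraph. Concretely, I would start from the classical equivalence behind \autoref{fac:line_graph_hamiltonian_characterization} and \autoref{fac:line_graph_hamiltonicity_DES}: \(L(G)\) is Hamiltonian exactly when \(G\) has a dominating closed trail. Iterating this, \(h(G)\le i\) is equivalent to \(L^{\,i-1}(G)\) having a dominating closed trail, which can in turn be pushed back to a condition on \(G\) itself, since each application of the line-graph operator shortens every lane by one and absorbs the endpoints of short lanes. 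The first step is therefore to fix this translation precisely: \(h(G)\le i\) holds if and only if \(G\) has a connected subgraph \(H\) that is \emph{even} (hence Eulerian), that contains every edge within the line-graph reach, and in which no lane that must be traversed is ``too long'' relative to \(i\). The thresholds \(i,\ i+1,\ i+2\) appearing in the construction are the combinatorial image of this length-reduction performed by \(L(\cdot)\).

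With this characterization in hand, I would prove each biconditional by transporting the witnessing subgraph across the construction. For the contraction producing \(H^{(i)}\): the components \(C^{i}_{j}\) of \(G[\hat{V}]\cup U_{i}\) consist of the high-degree part of \(G\) together with all lanes of length \emph{less than} \(i\); I would argue that any index-\(i\) witness must fully absorb each such component, since short lanes cannot survive as non-trivial obstructions after \(i\) line-graph steps, so contracting each \(C^{i}_{j}\) to a single vertex \(D^{i}_{j}\) neither creates nor destroys a witness. The passage from \(H^{(i)}\) to \(\tilde{H}^{(i)}\) then records exactly which lanes between two contracted vertices a witness can still use: self-lanes at a single \(D^{i}_{j}\) are irrelevant and may be deleted (step~1); for a bundle of parallel long lanes between \(D^{i}_{j}\) and \(D^{i}_{k}\), only the \emph{parity} and the availability of a length-\(i\)-or-\(i{+}1\) lane matter for realizing a connected even subgraph, which is precisely what the four \((\ell_3,\ell_4)\) cases of step~2 encode; and finally lanes of length \(i\) or \(i+1\) behave like single traversable edges while pendant end-lanes of length \(i\) can always be left out, giving step~3. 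After all three steps, a witness for \(h(G)\le i\) corresponds exactly to a connected spanning even subgraph of \(\tilde{H}^{(i)}\), i.e., to a spanning Eulerian subgraph.

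To obtain the two bulleted statements I would then combine the assumption \(h(G)\ge 2\) with the finiteness guaranteed by \autoref{fac:all_graphs_have_finite_hamiltonian_index}: \(h(G)=2\) iff the level-\(2\) witness exists iff \(\tilde{H}^{(2)}\) is supereulerian, and \(h(G)=3\) iff no level-\(2\) witness exists but a level-\(3\) one does, i.e.\ \(h(G)\ne 2\) and \(\tilde{H}^{(3)}\) is supereulerian. The hypothesis that \(G\) has a vertex of degree at least three ensures that \(G\) is not a cycle, so \(\hat{V}\ne\emptyset\) and the contractions are well defined and non-degenerate.

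The hard part, I expect, is the parity analysis of step~2 together with the exact placement of the length thresholds. Proving soundness (every spanning Eulerian subgraph of \(\tilde{H}^{(i)}\) lifts to a genuine index-\(i\) witness in \(G\)) and completeness (no witness is lost) requires a careful case distinction on how a closed even subgraph can enter and leave each lane bundle, and on how the remaining short, medium, and long lanes are ``charged'' against the \(i\) available line-graph steps; the boundary cases of length exactly \(i\) versus exactly \(i+1\) are the most delicate, since these are precisely the lengths at which a lane switches from being effectively contractible to being a mandatory single edge. Verifying that the pruning in step~2 preserves both connectivity and the even-degree parity at the contracted vertices, uniformly across all four \((\ell_3,\ell_4)\) cases, is where the bulk of the technical work would lie.
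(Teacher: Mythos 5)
You should first note that the paper does not prove this statement at all: it is imported verbatim from Hong et al.\ (cited as Theorem~3 of that work) and used as a black box, so there is no in-paper argument to measure your proposal against. Judged on its own terms, your proposal is an outline rather than a proof, and the two places where you explicitly defer the work are precisely where the entire content of the theorem lives. First, the ``translation'' you promise to ``fix precisely'' --- a characterization of \(h(G)\le i\) as the existence of a connected even subgraph of \(G\) subject to length conditions on lanes --- is the Xiong--Liu \(EU_i\)-characterization, and it is a nontrivial theorem in its own right; your justification for it, namely that \(h(G)\le i\) iff \(L^{i-1}(G)\) has a dominating closed trail and that ``each application of the line-graph operator shortens every lane by one and absorbs the endpoints of short lanes,'' is not literally correct as stated. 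Iterating Harary--Nash-Williams is the right intuition, but the correspondence between lanes of \(G\) and lanes of \(L(G)\), and the way the domination requirement transforms under \(L(\cdot)\), both require care (this is exactly why the published characterization carries several separate conditions on branches of different lengths rather than a single clean threshold). Without this lemma stated and proved, nothing downstream can be checked.

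Second, you correctly identify the soundness and completeness of the step-2 pruning (the four \((\ell_3,\ell_4)\) cases) as ``where the bulk of the technical work would lie,'' and then do not do that work. Showing that a bundle of \(\ell_1\) long lanes and \(\ell_2\) medium lanes between \(D^{i}_{j}\) and \(D^{i}_{k}\) can be replaced by the prescribed \((\ell_3,\ell_4)\) without creating or destroying a spanning Eulerian subgraph requires a case analysis on how many lanes of each type a witness traverses and on the parity contributed at each endpoint; asserting that ``only the parity and the availability of a length-\(i\)-or-\(i{+}1\) lane matter'' is a restatement of what must be proved, not an argument for it. So while your plan names all the right ingredients and would likely converge to the published proof if carried out, as written it contains genuine gaps at both of its load-bearing steps.
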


For checking if \(h(G) = i\) holds for \(i \in \{4, 5, \dots\}\) we appeal to a
reduction due to Xiong and Liu~\cite{xiongLiu2002hamiltonian}. Let
\(\LL = \{L_{1}, L_{2}, \dotsc, L_{t}\}\) be a set of lanes (called
\emph{branches} in~\cite{xiongLiu2002hamiltonian}) in \(G\), each of length at
least \(2\). A \emph{contraction} of \(G\) by \LL, denoted \(G//\LL\), is a
graph obtained from \(G\) by contracting one edge of each lane in \LL. Note that
\(G//\LL\) is not, in general, unique. 
\begin{theorem}\textup{\cite[Theorem~20]{xiongLiu2002hamiltonian}}\label{fac:hamIndex_reduction}
  Let \(G\) be a connected graph with \(h(G) \geq 4\) and let \LL be the set of
  all lanes of length at least \(2\) in \(G\). Then \(h(G) = h(G//\LL) + 1.\)
\end{theorem}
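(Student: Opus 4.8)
The plan is to recast the claimed identity into a cleaner equivalent form and then reduce it to a structural correspondence between dominating subgraphs. The starting point is Chartrand's observation that Hamiltonicity is preserved by the line-graph operation (if \(G\) is Hamiltonian then so is \(L(G)\)). Consequently the set \(\{r \ge 0 : \ILG{r}{G} \text{ is Hamiltonian}\}\) is upward closed, so for any connected non-path \(X\) with \(h(X) \ge 1\) we have the trivial identity \(h(L(X)) = h(X) - 1\). Applying this to \(X = G\) (legitimate since \(h(G) \ge 4 \ge 1\)), the target equality \(h(G) = h(G//\LL) + 1\) becomes equivalent to \(h(L(G)) = h(G//\LL)\). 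Thus the entire content of the theorem is that \emph{the line graph \(L(G)\) and the lane-contraction \(G//\LL\) have the same Hamiltonian index}: both operations act on \(G\) by shortening its long lanes, and the theorem asserts that in the high-index regime they do so equivalently.

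To prove \(h(L(G)) = h(G//\LL)\) I would work through the order-\(k\) generalization of the Harary--Nash-Williams criterion (\autoref{fac:line_graph_hamiltonicity_DES}): for each \(k \ge 1\), \(\ILG{k}{X}\) is Hamiltonian if and only if \(X\) contains an \emph{order-\(k\) dominating even subgraph}, that is, an even subgraph (all degrees even) whose ``reach'' into the rest of \(X\) along lanes is at most \(k-1\); the case \(k=1\) is exactly the dominating-closed-trail condition of \autoref{fac:line_graph_hamiltonicity_DES}. Writing \(D_m(X)\) for ``\(X\) has an order-\(m\) dominating even subgraph'' (so \(h(X) = \min\{m : D_m(X)\}\)) and unwinding \(\ILG{m}{L(G)} = \ILG{m+1}{G}\), the desired equality reduces to a single correspondence: for each relevant \(m\), \(G\) has an order-\((m+1)\) dominating even subgraph if and only if \(G//\LL\) has an order-\(m\) one. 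Intuitively this is immediate, since contracting one edge of each lane of length at least \(2\) shortens every such lane by one, so a dominating subgraph that had to reach distance \(m\) along the lanes of \(G\) needs to reach only distance \(m-1\) in \(G//\LL\); the bulk of the work is to make this precise in both directions by transporting a witnessing subgraph across the contraction (and back), tracking how its parity and reachability conditions transform.

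A convenient way to see that the fine local differences between the two operations are harmless is to compare one line-graph level further: since \(h(L(X)) = h(X)-1\), proving \(h(L(G)) = h(G//\LL)\) is the same as proving \(h(L^{2}(G)) = h(L(G//\LL))\), and these two graphs can be shown to carry the \emph{same} lane-length data and the same incidence pattern among their degree-\(\ge 3\) vertices. Indeed, one application of \(L\) turns a lane of length \(\ell\) into a lane of length \(\ell-1\) while keeping its two \(\hat{V}\)-endpoints of degree at least three (so \(\hat{V}\) and the branch-incidence multigraph are preserved), and it absorbs a length-\(1\) lane (an edge joining two branch vertices) into the branch structure; contracting one edge of a length-\(\ge 2\) lane performs the same shortening, while the forced internal degree-two vertex guarantees that both branch endpoints survive. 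After two levels on the \(G\) side and one level on the \(G//\LL\) side, the residual length-\(1\) lanes that \(L\) and the contraction treat differently have been absorbed on both sides, the two ``cores'' coincide, and equal cores force equal index.

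The main obstacle, and the reason the hypothesis \(h(G) \ge 4\) is indispensable, is the behaviour of this correspondence at small orders. The clean length-shift breaks down for short lanes and at the boundary values handled by the ad hoc surgery of \autoref{fac:hamIndex_from_SES}: deleting end-lanes of length \(i\), replacing lanes of length \(i\) or \(i+1\) by single edges, and pruning families of three or more parallel lanes all reflect genuine exceptions to the naive shift. The condition \(h(G) \ge 4\) forces the smallest relevant order on each side to be at least three, keeping the whole comparison strictly inside the regime where these boundary corrections are inactive, and it simultaneously guarantees \(h(G//\LL) \ge 3\) so that no degenerate (\(h \le 2\)) behaviour of \(G//\LL\) can spoil the matching of the two minima. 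The remaining effort is careful lane bookkeeping: verifying the \(\hat{V}\)-preservation and branch-incidence claims for end-lanes and for cycle-lanes (a cycle meeting \(\hat{V}\) in exactly one vertex), confirming that connectivity survives each edge contraction, and checking that the witness transfer respects the even-degree (parity) constraints throughout.
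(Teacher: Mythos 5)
First, a point of calibration: the paper does not prove this statement at all---it is imported verbatim as \autoref{fac:hamIndex_reduction} from Xiong and Liu \cite{xiongLiu2002hamiltonian}, so there is no in-paper argument to compare against. Judged on its own terms, your opening reduction is correct and economical: since Hamiltonicity is preserved by \(L\), the set \(\{r : \ILG{r}{G} \text{ is Hamiltonian}\}\) is upward closed, so \(h(L(X)) = h(X) - 1\) whenever \(h(X) \geq 1\), and the claim becomes \(h(L(G)) = h(G//\LL)\). Your second paragraph then points at essentially the strategy Xiong and Liu themselves use: characterize Hamiltonicity of \(\ILG{k}{X}\) by the existence of a suitable even subgraph and transfer witnesses across the contraction. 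But this is where the proof stops being a proof. The characterization you invoke is stated far too loosely---the actual \(EU_{k}\)-type conditions involve not just a ``reach at most \(k-1\)'' domination requirement but also constraints on the distance between distinct components of the even subgraph, on which branch vertices it must contain as (possibly isolated) vertices, and on long branches it fails to meet. The two-directional witness transfer, which you yourself identify as ``the bulk of the work,'' is precisely where those extra conditions interact with the restriction to lanes of length at least \(2\) and with the hypothesis \(h(G) \geq 4\); deferring it leaves the entire technical content of the theorem unproved. Relatedly, your remark that \(h(G) \geq 4\) ``guarantees \(h(G//\LL) \geq 3\)'' presupposes one direction of the very equality being established.

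The shortcut in your third paragraph is not merely incomplete but rests on a false structural claim. The line-graph operation does \emph{not} preserve \(\hat{V}\) or the branch-incidence multigraph: a branch vertex \(u\) of degree \(d \geq 3\) in \(G\) becomes a clique on \(d\) vertices in \(L(G)\), each of degree at least \(d \geq 3\), so the several lanes that met at \(u\) in \(G\) terminate at \emph{distinct} branch vertices of \(L(G)\) joined by clique edges. Consequently \(\ILG{2}{G}\) and \(L(G//\LL)\) do not ``carry the same lane-length data and incidence pattern,'' and the principle ``equal cores force equal index'' is itself an unproven assertion (it is, in effect, a restatement of the nontrivial fact that for large index the Hamiltonian index depends only on branch data). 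To repair the argument you should drop that paragraph entirely, state the exact \(EU_{n}\) characterization of Xiong and Liu, and carry out the bijective transfer of witnesses between \(EU_{n}(G)\) and \(EU_{n-1}(G//\LL)\) for \(n \geq 4\), checking each of its conditions under the contraction of one edge per long lane.
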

We can now prove
\begingroup
\def\thetheorem{\ref{thm:hamIndex_is_FPT}}
\begin{theorem}
  There is an algorithm which solves an instance \((G,\mathcal{T},tw,r)\) of \HI
  in \(\OhStar{(1 + 2^{(\omega + 3)})^{tw}}\) time.
\end{theorem}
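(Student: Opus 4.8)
The plan is to decide $h(G) \le r$ by computing $h(G)$ value-by-value, leaning entirely on the structural characterizations quoted above and feeding the resulting auxiliary instances to our \SES algorithm and to the cited \DES and Hamiltonicity algorithms. First, if $r \ge |V(G)|-3$ I immediately return \yes by \autoref{fac:all_graphs_have_finite_hamiltonian_index}; so assume $r < |V(G)|-3$. The core is a subroutine $\textsc{Index}$ that, given a connected non-path $G'$ together with a width-$tw$ tree decomposition, reports the exact value of $h(G')$ whenever $h(G') \in \{0,1,2,3\}$ and otherwise reports ``$h(G') \ge 4$''. It first tests whether $G'$ is Hamiltonian (\autoref{fac:graph_hamiltonicity_tw_fpt}) to detect $h=0$; if not, it tests whether $G'$ has a dominating Eulerian subgraph (\autoref{fac:EHP_EHC_DES_FPT}), which by \autoref{fac:line_graph_hamiltonicity_DES} is equivalent to $L(G')$ being Hamiltonian and hence to $h=1$ (note that a connected non-path has at least three edges, so the hypothesis of \autoref{fac:line_graph_hamiltonicity_DES} holds). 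If both fail then $h(G') \ge 2$, so $G'$ is not a cycle (cycles are Hamiltonian) and therefore has a vertex of degree at least three, whence \autoref{fac:hamIndex_from_SES} applies: I build $\tilde H^{(2)}$ and $\tilde H^{(3)}$ and test each for a spanning Eulerian subgraph with \autoref{cor:SES_is_FPT} to detect $h=2$ and then $h=3$. If all four tests fail, $\textsc{Index}$ reports ``$h(G')\ge 4$''.

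For the regime $h \ge 4$ I iterate the reduction of \autoref{fac:hamIndex_reduction}. Starting from $G_{0} = G$, as long as $\textsc{Index}(G_{j})$ reports ``$\ge 4$'' I replace $G_{j}$ by $G_{j+1} = G_{j}/\!/\LL_{j}$, where $\LL_{j}$ is the set of all lanes of $G_{j}$ of length at least two. Each step is legitimate because the precondition $h(G_{j}) \ge 4$ has just been certified, and the theorem then guarantees $h(G_{j+1}) = h(G_{j}) - 1$ and that $G_{j+1}$ is again a connected non-path. Consequently, once $\textsc{Index}(G_{k})$ returns a value $c \in \{0,1,2,3\}$ after $k$ reductions, the telescoping gives $h(G) = k + c$, and I return \yes iff $k + c \le r$; if the loop runs for more than $r$ reductions without terminating then $h(G) > r$ and I return \no. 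Since each reduction strictly decreases the index and $h(G) \le |V(G)| - 3$, the loop makes at most $r + 1 = \Oh{|V(G)|}$ iterations.

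The step I expect to be the crux is controlling the treewidth of every auxiliary graph. Each graph handed to a subroutine---each $\tilde H^{(i)}$ and each contracted graph $G_{j}/\!/\LL_{j}$---is obtained from $G$ (or from $G_{j}$) by a sequence of vertex deletions, edge deletions, and contractions of connected subgraphs, i.e.\ it is a minor. Since treewidth is minor-monotone, each such graph has treewidth at most $tw$; moreover each of these three operations has a standard, width-non-increasing effect on a tree decomposition (delete the vertex from all bags; leave the bags unchanged; or replace every vertex of the contracted connected set by the new vertex in every bag, which preserves the subtree condition because the union of the subtrees of a connected vertex set is connected in $T$). Hence from $\TT$ I can compute in polynomial time a width-$tw$ tree decomposition for every auxiliary graph. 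One subtlety to handle is that contraction produces multigraphs, so the subroutines must run on multigraphs; this is harmless for the \SES and \DES dynamic programming because edges are introduced, and their parity contributions tracked, one at a time, with parallel edges simply introduced at distinct introduce-edge nodes.

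For the running time, the loop performs $\Oh{|V(G)|}$ iterations, each making a constant number of calls to the Hamiltonicity, \DES, and \SES algorithms on graphs of treewidth at most $tw$, together with polynomial-time graph surgery to build the $\tilde H^{(i)}$, perform the contractions, and produce the accompanying tree decompositions. The \DES and \SES calls run in $\OhStar{(1 + 2^{(\omega+3)})^{tw}}$ time by \autoref{fac:EHP_EHC_DES_FPT} and \autoref{cor:SES_is_FPT}, while the Hamiltonicity call runs in $\OhStar{(5 + 2^{(\omega+2)/2})^{tw}}$ time by \autoref{fac:graph_hamiltonicity_tw_fpt}, which is dominated by the former. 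Multiplying by the polynomial number of iterations leaves the bound unchanged, so the whole algorithm runs in $\OhStar{(1 + 2^{(\omega+3)})^{tw}}$ time, as claimed.
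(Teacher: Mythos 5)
Your proposal is correct and follows essentially the same route as the paper: test $h\in\{0,1\}$ via the Hamiltonicity and \DES algorithms, test $h\in\{2,3\}$ via the $\tilde H^{(i)}$ constructions and \SES, and for $h\geq 4$ iterate the lane-contraction reduction of Xiong and Liu, with the paper phrasing the iteration as a recursion on $r$ rather than an explicit loop counting reductions. If anything, you are slightly more explicit than the paper about supplying width-$tw$ tree decompositions for the auxiliary minors $\tilde H^{(i)}$ and about the multigraph issue arising from contraction, both of which the paper's proof passes over.
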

\addtocounter{theorem}{-1}
\endgroup
\begin{proof}
  We first apply \autoref{fac:graph_hamiltonicity_tw_fpt} to check if \(G\) is
  Hamiltonian. If \(G\) is Hamiltonian then we return \yes. If \(G\) is not
  Hamiltonian and \(r = 0\) holds then we return \no. Otherwise we apply
  \autoref{fac:EHP_EHC_DES_FPT} and \autoref{fac:line_graph_hamiltonicity_DES} to
  check if \(L(G)\) is Hamiltonian. If \(L(G)\) is Hamiltonian then we return
  \yes. If \(L(G)\) is not Hamiltonian and \(r = 1\) holds then we return \no.

  At this point we know---since \(G\) is connected, is not a path, and is not
  Hamiltonian---that \(G\) has at least one vertex of degree at least three, and
  that \(h(G) \geq 2\) holds. We construct the graph \(\tilde{H}^{(2)}\) of
  \autoref{fac:hamIndex_from_SES} and use \autoref{cor:SES_is_FPT} to check if
  \(\tilde{H}^{(2)}\) has a spanning Eulerian subgraph. If it does then we
  return \yes. If it does not and \(r = 2\) holds then we return \no. Otherwise
  we construct the graph \(\tilde{H}^{(3)}\) of \autoref{fac:hamIndex_from_SES}
  and use \autoref{cor:SES_is_FPT} to check if \(\tilde{H}^{(3)}\) has a
  spanning Eulerian subgraph. If it does then we return \yes. If it does not and
  \(r = 3\) holds then we return \no.

  At this point we know that \(h(G) \geq 4\) holds. We compute the set \LL of
  all lanes of \(G\) of length at least \(2\), and a contraction
  \(G' = G//\LL\). We construct a tree decomposition \(\TT'\) of \(G'\) from \TT
  as follows: For each edge \(xy\) of \(G\) which is contracted to get \(G'\),
  we introduce a new vertex \(v_{xy}\) to each bag of \TT which contains at
  least one of \(\{x, y\}\). We now delete vertices \(x\) and \(y\) from all
  bags. It is easy to verify that the resulting structure \(\TT'\) is a tree
  decomposition of \(G'\), of width \(tw' \leq tw\). We now recursively invoke
  the algorithm on the instance \((G', \TT', tw', (r - 1))\) and return its
  return value (\yes or \no).

  The correctness of this algorithm follows from
  \autoref{fac:graph_hamiltonicity_tw_fpt}, \autoref{fac:EHP_EHC_DES_FPT},
  \autoref{fac:line_graph_hamiltonicity_DES}, \autoref{fac:hamIndex_from_SES},
  \autoref{cor:SES_is_FPT}, and \autoref{fac:hamIndex_reduction}. As for the
  running time, checking Hamiltonicity takes
  \(\OhStar{(5 + 2^{(\omega + 2)/2})^{tw}}\) time
  (\autoref{fac:graph_hamiltonicity_tw_fpt}). Checking if \(L(G)\) is
  Hamiltonian takes \(\OhStar{(1 + 2^{(\omega + 3)})^{tw}}\) time
  (\autoref{fac:EHP_EHC_DES_FPT}, \autoref{fac:line_graph_hamiltonicity_DES}). The
  graphs \(\tilde{H}^{(2)}\) and \(\tilde{H}^{(3)}\) of
  \autoref{fac:hamIndex_from_SES} can each be constructed in polynomial time,
  and checking if each has a spanning Eulerian subgraph takes
  \(\OhStar{(1 + 2^{(\omega + 3)})^{tw}}\) time (\autoref{cor:SES_is_FPT}). The
  graph \(G'\) and its tree decomposition \(\TT'\) of width \(tw'\) can be
  constructed in polynomial time. Given that
  \(5 + 2^{(\omega + 2)/2} < 1 + 2^{(\omega + 3)}\) and \(tw' \leq tw\) hold, we
  get that the running time of the algorithm satisfies the recurrence
  \(T(r) = \OhStar{(1 + 2^{(\omega + 3)})^{tw}} + T(r - 1)\). Since we recurse
  only if \(r < |V(G)| - 3\) holds we get that the recurrence resolves to
  \(T(r) = \OhStar{(1 + 2^{(\omega + 3)})^{tw}}\).
\end{proof}


\section{Conclusion}\label{sec:conclusion}
The Hamiltonian Index \(h(G)\) of a graph \(G\) is a generalization of the
notion of Hamiltonicity. It was introduced by Chartrand in 1968, and has
received a lot of attention from graph theorists over the years. It is known to
be \NPH to check if \(h(G) = t\) holds for any fixed integer \(t \geq 0\), even
for subcubic graphs \(G\). We initiate the parameterized complexity analysis of
the problem of finding \(h(G)\) with the treewidth \(tw(G)\) of \(G\) as the
parameter. We show that this problem is \FPT and can be solved in
\(\OhStar{(1 + 2^{(\omega + 3)})^{tw(G)}}\) time. This running time matches that
of the current fastest algorithm, due to Misra et
al.~\cite{MisraPanolanSaurabh2019}, for checking if \(h(G) = 1\) holds. We also
derive an algorithm of our own, with the same running time, for checking if
\(h(G) = 1\) holds. A key ingredient of our solution for finding \(h(G)\) is an
algorithm which solves the \ESS problem in
\(\OhStar{(1 + 2^{(\omega + 3)})^{tw(G)}}\) time. This is---to the best of our
knowledge---the first \FPT algorithm for this problem, and it subsumes known
algorithms for the special case of \SES in series-parallel graphs and planar
graphs. We note in passing that it is not clear that the algorithm of Misra et
al. for solving \textsc{LELP} can be adapted to check for larger values of
\(h(G)\). We believe that our \FPT result on \ESS could turn out to be useful
for solving other problems as well.

Two different approaches to checking if \(h(G) = 1\) holds---Misra et al.'s
approach via \textsc{LELP} and our solution using \DES---both run in
\(\OhStar{(1 + 2^{(\omega + 3)})^{tw(G)}}\) time. Does this suggest the
existence of a matching lower bound, or can this be improved? More generally,
can \(h(G)\) be found in the same \FPT running time as it takes to check if
\(G\) is Hamiltonian (currently: \(\OhStar{(5 + 2^{(\omega + 2)/2})^{tw(G)}}\)
due to Bodlaender et al.)? Since \(tw(G) \leq (|V(G)| - 1)\) our algorithm
implies an \(\OhStar{(1 + 2^{(\omega + 3)})^{|V(G)|}}\)-time exact exponential
algorithm for finding \(h(G)\). We ask if this can be improved, as a first step,
to the classical \(\OhStar{2^{|V(G)|}}\) bound for Hamiltonicity.


\newpage
\bibliographystyle{plainurl}
\bibliography{arXiv}\label{references}

\begin{thebibliography}{10}

\bibitem{balakrishnan2000textbook}
R.~Balakrishnan and K.~Ranganathan.
\newblock {\em A Textbook of Graph Theory}.
\newblock Universitext (1979). Springer, 2000.

\bibitem{bertossi1981edge}
Alan~A Bertossi.
\newblock The edge hamiltonian path problem is np-complete.
\newblock {\em Information Processing Letters}, 13(4-5):157--159, 1981.
\newblock \href {http://dx.doi.org/10.1016/0020-0190(81)90048-X}
  {\path{doi:10.1016/0020-0190(81)90048-X}}.

\bibitem{Bezensek:2014:SPD:2589752.2589770}
Mitja Bezen\v{s}ek and Borut Robi\v{c}.
\newblock A survey of parallel and distributed algorithms for the steiner tree
  problem.
\newblock {\em Int. J. Parallel Program.}, 42(2):287--319, April 2014.

\bibitem{bodlaender1998Partial}
Hans~L. Bodlaender.
\newblock A partial $k$-arboretum of graphs with bounded treewidth.
\newblock {\em Theoretical Computer Science}, 209(1):1--45, 1998.

\bibitem{bodlaenderCyganKratschNederlof2015deterministic}
Hans~L Bodlaender, Marek Cygan, Stefan Kratsch, and Jesper Nederlof.
\newblock Deterministic single exponential time algorithms for connectivity
  problems parameterized by treewidth.
\newblock {\em Information and Computation}, 243:86--111, 2015.

\bibitem{Catlin1992SupereulerianGA}
Paul~A. Catlin.
\newblock Supereulerian graphs: A survey.
\newblock {\em Journal of Graph Theory}, 16:177--196, 1992.

\bibitem{catlin1990hamilton}
Paul~A Catlin, T~N Janakiraman, Iqbalunnisa, and N~Srinivasan.
\newblock Hamilton cycles and closed trails in iterated line graphs.
\newblock {\em Journal of graph theory}, 14(3):347--364, 1990.

\bibitem{chartrand1964phdthesis}
Gary Chartrand.
\newblock {\em Graphs and their associated line-graphs}.
\newblock PhD thesis, Michigan State University. Department of Mathematics,
  1964.

\bibitem{chartrand1968hamiltonian}
Gary Chartrand.
\newblock On hamiltonian line-graphs.
\newblock {\em Transactions of the American Mathematical Society},
  134(3):559--566, 1968.

\bibitem{chartrandWall1973hamiltonian}
Gary Chartrand and CURTISS~E Wall.
\newblock On the hamiltonian index of a graph.
\newblock {\em Studia Sci. Math. Hungar}, 8:43--48, 1973.

\bibitem{cheng2013steiner}
Xiuzhen Cheng and Ding-Zhu Du.
\newblock {\em Steiner trees in industry}, volume~11.
\newblock Springer Science \& Business Media, 2013.

\bibitem{cygan2015parameterized}
Marek Cygan, Fedor~V Fomin, {\L}ukasz Kowalik, Daniel Lokshtanov, D{\'a}niel
  Marx, Marcin Pilipczuk, Micha{\l} Pilipczuk, and Saket Saurabh.
\newblock {\em Parameterized algorithms}.
\newblock Springer, 2015.
\newblock \url{https://www.springer.com/in/book/9783319212746}.

\bibitem{cyganNederlofPilipczukPilipczukRooijWojtaszczyk2011solving}
Marek Cygan, Jesper Nederlof, Marcin Pilipczuk, Michal Pilipczuk, Joham~MM van
  Rooij, and Jakub~Onufry Wojtaszczyk.
\newblock Solving connectivity problems parameterized by treewidth in single
  exponential time.
\newblock In {\em 2011 IEEE 52nd Annual Symposium on Foundations of Computer
  Science}, pages 150--159. IEEE, 2011.

\bibitem{downeyFellows2013fundamentals}
Rodney~G Downey and Michael~R Fellows.
\newblock {\em Fundamentals of Parameterized Complexity}.
\newblock Springer Science \& Business Media, 2013.

\bibitem{du2013advances}
Ding-Zhu Du, JM~Smith, and J~Hyam Rubinstein.
\newblock {\em Advances in Steiner trees}, volume~6.
\newblock Springer Science \& Business Media, 2013.

\bibitem{fleischner2001some}
Herbert Fleischner.
\newblock (some of) the many uses of eulerian graphs in graph theory (plus some
  applications).
\newblock {\em Discrete Mathematics}, 230(1-3):23--43, 2001.

\bibitem{fominLokshtanovPanolanSaurabh2016efficient}
Fedor~V. Fomin, Daniel Lokshtanov, Fahad Panolan, and Saket Saurabh.
\newblock Efficient computation of representative families with applications in
  parameterized and exact algorithms.
\newblock {\em Journal of the {ACM}}, 63(4):29:1--29:60, 2016.

\bibitem{gareyJohnsonTarjan1976planar}
Michael~R Garey, David~S. Johnson, and R~Endre Tarjan.
\newblock The planar hamiltonian circuit problem is np-complete.
\newblock {\em SIAM Journal on Computing}, 5(4):704--714, 1976.

\bibitem{gould1981line}
Ronald~J Gould.
\newblock On line graphs and the hamiltonian index.
\newblock {\em Discrete Mathematics}, 34(2):111--117, 1981.

\bibitem{harary1965eulerian}
Frank Harary and C~St~JA Nash-Williams.
\newblock On eulerian and hamiltonian graphs and line graphs.
\newblock {\em Canadian Mathematical Bulletin}, 8(6):701--709, 1965.
\newblock
  \url{https://www.cambridge.org/core/journals/canadian-mathematical-bulletin/article/on-eulerian-and-hamiltonian-graphs-and-line-graphs/96BA451CF0099F38C1B3FA867EC1A835}.

\bibitem{hauptmann2013compendium}
Mathias Hauptmann and Marek Karpi{\'n}ski.
\newblock {\em A compendium on Steiner tree problems}.
\newblock Inst. f{\"u}r Informatik, 2013.
\newblock
  \url{http://citeseerx.ist.psu.edu/viewdoc/summary?doi=10.1.1.392.7444}.

\bibitem{hong2009hamiltonian}
Yi~Hong, Jian-Liang Lin, Zhi-Sui Tao, and Zhi-Hong Chen.
\newblock The hamiltonian index of graphs.
\newblock {\em Discrete Mathematics}, 309(1):288--292, 2009.

\bibitem{hwang1995steiner}
Frank~K Hwang, Dana~S Richards, Pawel Winter, and P~Widmayer.
\newblock The steiner tree problem, annals of discrete mathematics, volume 53.
\newblock {\em ZOR-Methods and Models of Operations Research}, 41(3):382, 1995.

\bibitem{kammerTholey2016Approximate}
Frank Kammer and Torsten Tholey.
\newblock Approximate tree decompositions of planar graphs in linear time.
\newblock {\em Theoretical Computer Science}, 645:60 -- 90, 2016.

\bibitem{Klo94}
T.~Kloks.
\newblock {\em Treewidth. Computations and Approximations}.
\newblock Springer-Verlag LNCS, 1994.

\bibitem{lai1988hamiltonian}
Hong-Jian Lai.
\newblock On the hamiltonian index.
\newblock {\em Discrete mathematics}, 69(1):43--53, 1988.

\bibitem{lai2013update}
Hong-Jian Lai, Yehong Shao, and Huiya Yan.
\newblock An update on supereulerian graphs.
\newblock {\em WSEAS Transactions on Mathematics}, 12(9):926--940, 2013.

\bibitem{lampisMakinoMitsouUno2014hamiltonicity}
Michael Lampis, Kazuhisa Makino, Valia Mitsou, and Yushi Uno.
\newblock Parameterized edge hamiltonicity.
\newblock In {\em Graph-Theoretic Concepts in Computer Science - 40th
  International Workshop, {WG} 2014, Nouan-le-Fuzelier, France, June 25-27,
  2014. Revised Selected Papers}, volume 8747 of {\em Lecture Notes in Computer
  Science}, pages 348--359. Springer, 2014.

\bibitem{lampis2017parameterized}
Michael Lampis, Kazuhisa Makino, Valia Mitsou, and Yushi Uno.
\newblock Parameterized edge hamiltonicity.
\newblock {\em Discrete Applied Mathematics}, 2017.
\newblock \href {http://dx.doi.org/10.1016/j.dam.2017.04.045}
  {\path{doi:10.1016/j.dam.2017.04.045}}.

\bibitem{lesniak1977spanning}
L~Lesniak-Foster and James~E Williamson.
\newblock On spanning and dominating circuits in graphs.
\newblock {\em Canadian Mathematical Bulletin}, 20(2):215--220, 1977.

\bibitem{li2005eulerian}
Dengxin Li, Hong-Jian Lai, and Mingquan Zhan.
\newblock Eulerian subgraphs and hamilton-connected line graphs.
\newblock {\em Discrete Applied Mathematics}, 145(3):422--428, 2005.

\bibitem{MisraPanolanSaurabh2019}
Neeldhara Misra, Fahad Panolan, and Saket Saurabh.
\newblock On the parameterized complexity of edge-linked paths.
\newblock In {\em Computer Science - Theory and Applications - 14th
  International Computer Science Symposium in Russia, {CSR} 2019, Novosibirsk,
  Russia, July 1-5, 2019, Proceedings}, pages 286--298, 2019.

\bibitem{Prisner1996}
Erich Prisner.
\newblock Line graphs and generalizations - a survey.
\newblock In G.~Chartrand and M.~Jacobson, editors, {\em Surveys in Graph
  Theory}, volume 116 of {\em Congressus Numerantium}, pages 193--230. Utilitas
  Mathematica Pub. Incorporated, 1996.

\bibitem{promel2012steiner}
Hans~J{\"u}rgen Pr{\"o}mel and Angelika Steger.
\newblock {\em The Steiner tree problem: a tour through graphs, algorithms, and
  complexity}.
\newblock Springer Science \& Business Media, 2012.

\bibitem{pulleyblank1979note}
William~R Pulleyblank.
\newblock A note on graphs spanned by eulerian graphs.
\newblock {\em Journal of Graph Theory}, 3(3):309--310, 1979.
\newblock \url{https://onlinelibrary.wiley.com/doi/pdf/10.1002/jgt.3190030316}.

\bibitem{richeyParkerRardin1985finding}
MB~Richey, R~Gary Parker, and RL~Rardin.
\newblock On finding spanning eulerian subgraphs.
\newblock {\em Naval Research Logistics Quarterly}, 32(3):443--455, 1985.

\bibitem{ryjacekWoegingerXiong2011hamiltonian}
Zden{\v{e}}k Ryj{\'a}{\v{c}}ek, Gerhard~J Woeginger, and Liming Xiong.
\newblock Hamiltonian index is np-complete.
\newblock {\em Discrete Applied Mathematics}, 159(4):246--250, 2011.
\newblock \href {http://dx.doi.org/10.1016/j.dam.2010.08.027}
  {\path{doi:10.1016/j.dam.2010.08.027}}.

\bibitem{sauThilikos2010subexponential}
Ignasi Sau and Dimitrios~M Thilikos.
\newblock Subexponential parameterized algorithms for degree-constrained
  subgraph problems on planar graphs.
\newblock {\em Journal of Discrete Algorithms}, 8(3):330--338, 2010.

\bibitem{veldman1983existence}
Henk~Jan Veldman.
\newblock Existence of dominating cycles and paths.
\newblock {\em Discrete mathematics}, 43(2-3):281--296, 1983.

\bibitem{west2001introduction}
Douglas~B. West.
\newblock {\em Introduction to graph theory}, volume~2.
\newblock Prentice hall Upper Saddle River, 2009.

\bibitem{williams2012multiplying}
Virginia~Vassilevska Williams.
\newblock Multiplying matrices faster than coppersmith-winograd.
\newblock In {\em Proceedings of the forty-fourth annual ACM symposium on
  Theory of computing}, pages 887--898. ACM, 2012.

\bibitem{xiong2001circuits}
Liming Xiong.
\newblock {\em Circuits in graphs and the hamiltonian index}.
\newblock Twente University Press, 2001.

\bibitem{xiong2004hamiltonian}
Liming Xiong, Haitze~J Broersma, Xueliang Li, and MingChu Li.
\newblock The hamiltonian index of a graph and its branch-bonds.
\newblock {\em Discrete mathematics}, 285(1-3):279--288, 2004.

\bibitem{xiongLiu2002hamiltonian}
Liming Xiong and Zhanhong Liu.
\newblock Hamiltonian iterated line graphs.
\newblock {\em Discrete Mathematics}, 256(1-2):407--422, 2002.

\end{thebibliography}
\appendix
\section{An \FPT Algorithm for \DES}\label{sec:DES}
In this section we derive an alternate algorithm for

\defparproblem{\DES (\DESs)}%
{An undirected graph \(G=(V,E)\) and a tree decomposition
  \(\mathcal{T}=(T,\{X_{t}\}_{t \in V(T)})\) of \(G\), of width \(tw\).}%
{\(tw\)}%
{Does there exist an Eulerian subgraph \(G'\) of \(G\) such that \(V(G')\)
  contains a vertex cover of \(G\)? }

Following established terminology we call such a subgraph \(G'\) a
\emph{dominating Eulerian subgraph} of \(G\). \textbf{Note that the word
  ``dominating'' here denotes the existence of a \emph{vertex cover}, and
  \emph{not} of a \emph{dominating set}.}
\begin{theorem}\label{thm:DES_is_FPT}
  There is an algorithm which solves an instance \((G,\mathcal{T},tw)\) of \DES
  in \(\OhStar{(1 + 2^{(\omega + 3)})^{tw}}\) time.
\end{theorem}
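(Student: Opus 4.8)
The plan is to reuse the dynamic-programming algorithm we developed for \ESS (the leaf/introduce-vertex/introduce-edge/forget/join node processing), modifying it so that the ``terminal'' bookkeeping is replaced by a ``covering'' bookkeeping. The starting observation is that \(V(G')\) contains a vertex cover of \(G\) if and only if \emph{every} edge of \(G\) has at least one endpoint in \(V(G')\); thus a dominating Eulerian subgraph is exactly a connected, even-degree subgraph \(G'\) such that no edge of \(G\) lies entirely outside \(V(G')\). Since the nice tree decomposition introduces every edge \(uv\) at \emph{exactly one} introduce-edge node, at which both \(u\) and \(v\) belong to the bag \(X_{t}\), this global covering requirement localises perfectly to introduce-edge nodes.

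First I would fix the connectivity anchor. Unlike \ESS there is no prescribed terminal, so I pick an arbitrary edge \(xy \in E(G)\) and observe that, because a vertex cover must cover \(xy\), \emph{at least one} of \(x,y\) lies in \(V(G')\) for every dominating Eulerian subgraph \(G'\). Hence I would run the algorithm twice, once with \(\vstar = x\) and once with \(\vstar = y\), in each run forcing \(\vstar\) into the solution exactly as the single terminal of \ESS (adding it to every bag, and rooting the connectivity check at \(P = \{\{\vstar\}\}\), \(O = \emptyset\)), returning \yes if either run succeeds. If \(G\) has no edge I return \yes immediately, since the empty subgraph is then a dominating Eulerian subgraph.

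The only substantive change to the \ESS dynamic program is at the introduce-edge node for an edge \(uv\): I would strengthen the notion of validity so that a witness is additionally required to leave no introduced edge uncovered, which amounts to the local rule ``if \(\{u,v\} \cap X = \emptyset\) then set \(VP[t,X,O] = \emptyset\); otherwise proceed exactly as in \ESS.'' The correctness of this single rule rests on the identity that, for \(u \in X_{t}\), one has \(u \in V(G')\) if and only if \(u \in X\) (because \(u\) is never reintroduced after being forgotten); consequently, if neither endpoint of \(uv\) is selected into \(X\) at its introduce-edge node then the edge is permanently uncovered and the combination \((t,X,O)\) cannot be completed to a dominating Eulerian subgraph, whereas if at least one endpoint is in \(X\) the edge is covered and the remaining \ESS bookkeeping (which may actually \emph{use} the edge only when \emph{both} endpoints are selected) is untouched. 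With this strengthened definition the soundness and completeness arguments for the leaf, introduce-vertex, forget, and join nodes carry over verbatim, since none of those steps creates or covers an edge, and only \autoref{lem:introduce_edge_node_ok} must be re-examined to fold in the covering bookkeeping.

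I expect the main obstacle to be exactly this re-proof of the introduce-edge lemma under the strengthened notion of validity: I must verify that killing a combination precisely when \(\{u,v\} \cap X = \emptyset\) neither destroys completable solutions (completeness) nor admits uncoverable ones (soundness), and that the completion-based characterisation and the representative-set step (\autoref{thm:computing_representative_subsets}, \autoref{lem:repsets_preserve_completion}) remain valid with ``dominating'' inserted into ``Eulerian Steiner subgraph.'' The running-time analysis is unchanged: the modification adds only a constant-time test per combination, the representative-set sizes are still bounded by \(2^{\vert X \vert - 1}\), and the two anchor choices contribute only a factor of \(2\), so the whole procedure runs in \(\OhStar{(1 + 2^{(\omega + 3)})^{tw}}\) time as required.
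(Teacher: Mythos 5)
Your proposal is correct and follows essentially the same route as the paper's own proof in the appendix: the paper likewise anchors connectivity by choosing \(\vstar\) as one of the two endpoints of an edge (trying both choices), strengthens the validity definition so that a witness's vertex set must be a vertex cover of \(G_{t}\), and implements this by exactly your local rule at introduce-edge nodes (set \(VP[t,X,O]=\emptyset\) when \(\{u,v\}\cap X=\emptyset\)), with the terminal conditions at introduce-vertex and forget nodes dropped and the rest of the \ESS machinery, representative-set computation, and running-time analysis carried over unchanged.
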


We describe an algorithm which takes an instance \((G, \mathcal{T}, tw)\) of
\DES as input and tells in \(\OhStar{(1 + 2^{(\omega + 3)})^{tw}}\) time whether
graph \(G\) has a subgraph which is Eulerian, and whose vertex set is a vertex
cover of \(G\). The algorithm is a DP over a tree decomposition, very similar to
the one in \autoref{sec:ESS}. As before we assume that \TT is a nice tree
decomposition.
Our proofs are simplified if we assume that we know of a vertex \(\vstar\) which
is definitely part of the unknown dominating Eulerian subgraph which we are
trying to find. Note that for any edge \(xy\) of \(G\) at least one of the two
vertices \(\{x, y\}\) must be part of \emph{any} dominating Eulerian subgraph of
\(G\). So one of the two choices \(\vstar = x\) and \(\vstar = y\) will satisfy
our requirement on \(\vstar\). Hence we assume, without loss of generality, that
we have picked a correct choice for \(\vstar\). We add the vertex \(\vstar\) to
every bag of \(\mathcal{T}\); \emph{from now on we use \(\mathcal{T}\) to refer
  to the resulting ``nearly-nice'' tree decomposition in which the bags at all
  the leaves and the root are equal to \(\{\vstar\}\)}. 

If the graph \(G\) has a dominating Eulerian subgraph $G'=(V',E')$ then it
interacts with the structures defined by node $t$ in the following way: The part
of $G'$ contained in $G_{t}$ is a collection \(\CC = \{C_{1},\dotsc,C_{\ell}\}\)
of pairwise vertex-disjoint connected subgraphs of $G_{t}$ where \emph{each
  element} \(C_{i}\) of \(\CC\) has a non-empty intersection with $X_{t}$.
Further, the union of the vertex sets of the components in \CC forms a vertex
cover of graph \(G_{t}\).

\begin{definition}[Valid partitions, witness for validity for \DES]\label{def:valid_partitions_witnesses_des}
  For a bag \(X_{t}\) and subsets \(X \subseteq X_{t}\), \(O \subseteq X\), we
  say that a partition \(P=\{X^{1},X^{2},\ldots X^{p}\}\) of \(X\) is
  \emph{valid for the combination} \((t,X,O)\) if there exists a subgraph
  \(G'_{t} = (V'_{t},E'_{t})\) of \(G_{t}\) such that
  \begin{enumerate}
  \item \(X_{t} \cap V(G'_{t})=X\).
  \item \(G'_{t}\) has exactly \(p\) connected components
    \(C_{1},C_{2},\ldots, C_{p}\) and for each \(i \in \{1,2,\ldots,p\}\),
    \(X^{i} \subseteq V(C_{i})\). That is, the vertex set of each connected
    component of \(G'_{t}\) has a non-empty intersection with set \(X\), and
    \(P\) is the partition of \(X\) defined by the subgraph
    \(G'_{t}\). 
  \item \(v^{\star} \in V(G'_{t})\) holds, and \(V(G'_{t})\) is a vertex cover
    of graph \(G_{t}\).
  \item The set of odd-degree vertices in $G'_{t}$ is exactly the set $O$.
  \end{enumerate}
  Such a subgraph \(G'_{t}\) of $G_{t}$ is a \emph{witness} for partition \(P\)
  being valid for the combination \((t,X,O)\) or, in short: \(G'_{t}\) \emph{is
    a witness for \(((t,X,O), P)\)}.
\end{definition}

\begin{definition}[Completion for \DES]\label{def:completion_des}
  For a bag $X_{t}$ and subsets $X \subseteq X_{t}$, $O \subseteq X$ let \(P\)
  be a partition of \(X\) which is valid for the combination \((t,X,O)\). Let
  \(H\) be a residual subgraph with respect to \(t\) such that
  \(V(H) \cap X_{t} = X\). We say that \(((t,X,O), P)\) \emph{completes} \(H\)
  if there exists a subgraph \(G'_{t}\) of \(G_{t}\) which is a witness for
  \(((t,X,O), P)\), such that the graph \(G'_{t} \cup H\) is a dominating
  Eulerian subgraph of \(G\). We say that \(G'_{t}\) is a \emph{certificate} for
  \(((t,X,O), P)\) completing \(H\).
\end{definition}

\begin{observation}\label{obs:residual_completion_vertexcover_des}
  If \(((t,X,O), P)\) completes \(H\) then every edge in the set
  \(E(G) \setminus E(G_{t})\) has at least one end-point in the vertex set
  \(V(H)\).
\end{observation}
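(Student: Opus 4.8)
The plan is to unfold the definition of completion and then combine the vertex-cover guarantee with the subtree-connectivity property of tree decompositions. Since \(((t,X,O),P)\) completes \(H\), \autoref{def:completion_des} gives a witness \(G'_{t}\) for \(((t,X,O),P)\) such that \(G^{\star} = G'_{t} \cup H\) is a dominating Eulerian subgraph of \(G\). By definition the vertex set \(V(G^{\star}) = V(G'_{t}) \cup V(H)\) then contains a vertex cover of \(G\), so every edge of \(G\)---and in particular every edge \(e \in E(G) \setminus E(G_{t})\)---has at least one end-point in \(V(G'_{t}) \cup V(H)\).

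First I would fix such an edge \(e = xy \in E(G) \setminus E(G_{t})\). If one of its end-points already lies in \(V(H)\) we are done, so the task reduces to showing that any end-point of \(e\) which lies in \(V(G'_{t})\) must also lie in \(V(H)\); I would prove this as a standalone claim. Suppose \(x \in V(G'_{t})\). Since \(G'_{t}\) is a subgraph of \(G_{t} = (V_{t}, E_{t})\) we have \(x \in V_{t}\), i.e., \(x\) appears in some bag of the subtree \(T_{t}\). On the other hand, because \(e \notin E(G_{t}) = E_{t}\), the unique introduce-edge node \(s\) at which \(e\) is introduced lies outside \(T_{t}\), and there \(\{x,y\} \subseteq X_{s}\); in particular \(x\) appears in the bag \(X_{s}\) with \(s \notin T_{t}\).

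The key step is the appeal to the third (connectedness) axiom of a tree decomposition: the set of nodes whose bags contain \(x\) induces a connected subtree of \(T\). Since \(t\) is the root of \(T_{t}\), every path in \(T\) from a node of \(T_{t}\) to a node outside \(T_{t}\) runs through \(t\); as this subtree contains both a node of \(T_{t}\) and the node \(s \notin T_{t}\), it must contain \(t\), giving \(x \in X_{t}\). Condition~1 of \autoref{def:valid_partitions_witnesses_des} then yields \(x \in X_{t} \cap V(G'_{t}) = X\), and since \(H\) is a residual subgraph with \(V(H) \cap X_{t} = X\) we conclude \(x \in V(H)\), which establishes the claim and hence the observation. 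I expect the only delicate point to be this routing-through-\(t\) argument; the remaining steps are direct substitutions into the relevant definitions.
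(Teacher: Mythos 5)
Your proof is correct. The paper states this as an \emph{observation} and supplies no proof at all, and your argument is exactly the intended justification: the vertex set of the dominating Eulerian subgraph \(G'_{t} \cup H\) covers every edge of \(G\), and the routing-through-\(t\) argument is precisely the standard separator property of bags showing that any vertex of \(V(G'_{t}) \subseteq V_{t}\) incident to an edge introduced outside \(T_{t}\) must lie in \(X_{t} \cap V(G'_{t}) = X = V(H) \cap X_{t} \subseteq V(H)\).
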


\begin{lemma}\label{lem:completion_odd_subset_des}
  Let \((G, \TT, tw)\) be an instance of \DES. Let \(t\) be an arbitrary node
  of \TT, let $X \subseteq X_{t}$, $O \subseteq X$, let \(P\) be a partition of
  \(X\) which is valid for the combination \((t,X,O)\), and let \(H\) be a
  residual subgraph with respect to \(t\) with \(V(H) \cap X_{t} = X\). If
  \(((t,X,O), P)\) completes \(H\) then the set of odd-degree vertices in \(H\)
  is exactly the set \(O\).
\end{lemma}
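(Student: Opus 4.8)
The plan is to mirror the argument used for the \ESS version of this statement (\autoref{lem:completion_odd_subset}), since the only structural fact about $G^{\star} = G'_{t} \cup H$ that the degree-counting argument needs is that $G^{\star}$ is \emph{Eulerian}---and a dominating Eulerian subgraph is, in particular, Eulerian. The ``dominating'' (vertex-cover) condition plays no role in controlling parities, so it can be safely ignored for this lemma.

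First I would let $H_{odd} \subseteq V(H)$ denote the set of odd-degree vertices in $H$, and unpack the hypothesis that $((t,X,O),P)$ completes $H$: by \autoref{def:completion_des} there is a witness $G'_{t} = (V'_{t}, E'_{t})$ for $((t,X,O),P)$ such that $G^{\star} = G'_{t} \cup H$ is a dominating Eulerian subgraph of $G$. From the fourth condition in the definition of a witness (\autoref{def:valid_partitions_witnesses_des}) the set of odd-degree vertices of $G'_{t}$ is exactly $O$. From the definition of a residual subgraph (\autoref{def:residual_subgraph}) we have $E'_{t} \cap E(H) = \emptyset$, so for every vertex $v$ the degree is additive: $deg_{G^{\star}}(v) = deg_{G'_{t}}(v) + deg_{H}(v)$. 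Finally, since $G^{\star}$ is Eulerian, $deg_{G^{\star}}(v)$ is even for every $v \in V(G^{\star})$.

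The two inclusions then follow by parity arithmetic. For $v \in H_{odd}$ we have $v \in V(G^{\star})$ and $deg_{G'_{t}}(v) = deg_{G^{\star}}(v) - deg_{H}(v)$ equals (even $-$ odd), hence is odd, so $v \in O$; this gives $H_{odd} \subseteq O$. Conversely, for $x \in O \subseteq V'_{t}$ we have $deg_{H}(x) = deg_{G^{\star}}(x) - deg_{G'_{t}}(x)$ equals (even $-$ odd), hence is odd, so $x \in H_{odd}$; this gives $O \subseteq H_{odd}$. Combining the two inclusions yields $H_{odd} = O$, which is exactly the claim.

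There is no genuine obstacle here; the single point to state carefully is the additivity $deg_{G^{\star}}(v) = deg_{G'_{t}}(v) + deg_{H}(v)$, which holds precisely because $H$ being a residual subgraph forces the edge sets $E'_{t}$ and $E(H)$ to be disjoint. Everything else is the same parity bookkeeping as in the \ESS proof, with ``Eulerian Steiner subgraph'' replaced throughout by ``dominating Eulerian subgraph''.
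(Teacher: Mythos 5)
Your proposal is correct and follows essentially the same argument as the paper's proof: unpack the completion hypothesis to get a witness $G'_{t}$ with odd-degree set exactly $O$, use the edge-disjointness $E'_{t} \cap E(H) = \emptyset$ from the residual-subgraph definition to get degree additivity in $G^{\star} = G'_{t} \cup H$, and conclude both inclusions by parity. The observation that the vertex-cover condition is irrelevant here and only the Eulerian property of $G^{\star}$ is used matches the paper, whose proof is a verbatim adaptation of the \ESS version.
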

\begin{proof}
  Let \(H_{odd} \subseteq V(H)\) be the set of odd-degree vertices in \(H\).
  Since \(((t,X,O), P)\) completes \(H\) we know that there is a subgraph
  \(G'_{t} = (V'_{t}, E'_{t})\) of \(G_{t}\) which is a witness for
  \(((t,X,O), P)\), such that the graph \(G^{\star} = G'_{t} \cup H\) is a
  dominating Eulerian subgraph of \(G\). Since \(G'_{t}\) is a witness for
  \(((t,X,O), P)\) we get that the set of odd-degree vertices in \(G'_{t}\) is
  exactly the set \(O\). Since \(H\) is a residual subgraph with respect to
  \(t\) we have that \(E'_{t} \cap E(H) = \emptyset\). Thus the degree of any
  vertex \(v\) in the graph \(G^{\star}\) is the sum of its degrees in the two
  subgraphs \(H\) and \(G'_{t}\):
  \(deg_{G^{\star}}(v) = deg_{H}(v) + deg_{G'_{t}}(v)\). And since \(G^{\star}\)
  is Eulerian we have that \(deg_{G^{\star}}(v)\) is even for every vertex
  \(v \in V(G^{\star})\).

  Now let \(v \in H_{odd} \subseteq V(H)\) be a vertex of odd degree in \(H\).
  Then \(v \in V(G^{\star})\) and we get that
  \(deg_{G'_{t}}(v) = deg_{G^{\star}}(v) - deg_{H}(v)\) is odd. Thus
  \(v \in O\), and so \(H_{odd} \subseteq O\). Conversely, let
  \(x \in O \subseteq V'_{t}\) be a vertex of odd degree in \(G'_{t}\). Then
  \(x \in V(G^{\star})\) and we get that
  \(deg_{H}(x) = deg_{G^{\star}}(x) - deg_{G'_{t}}(x)\) is odd. Thus
  \(x \in H_{odd}\), and so \(O \subseteq H_{odd}\). Thus the set of odd-degree
  vertices in \(H\) is exactly the set \(O\).
\end{proof}

The next lemma tells us that it is safe to apply the representative set
computation to collections of valid partitions.
\begin{lemma}\label{lem:repsets_preserve_completion_des}
  Let \((G, \TT, tw)\) be an instance of \DES, and let \(t\) be an arbitrary
  node of \TT. Let $X \subseteq X_{t}$, $O \subseteq X$, and let \calA be a
  collection of partitions of \(X\), each of which is valid for the combination
  \((t,X,O)\). Let \calB be a representative subset of \calA, and let \(H\) be
  an arbitrary residual subgraph of \(G\) with respect to \(t\) such that
  \(V(H) \cap X_{t} = X\) holds. If there is a partition \(P \in \calA\) such
  that \(((t,X,O), P)\) completes \(H\) then there is a partition
  \(Q \in \calB\) such that \(((t,X,O), Q)\) completes \(H\).
\end{lemma}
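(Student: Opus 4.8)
The plan is to mirror the proof of the \ESS analogue (\autoref{lem:repsets_preserve_completion}) almost verbatim, replacing the terminal-coverage bookkeeping by vertex-cover bookkeeping and leaning on \autoref{obs:residual_completion_vertexcover_des} to handle edges lying outside \(G_{t}\). First I would assume that some \(P \in \calA\) satisfies that \(((t,X,O), P)\) completes \(H\), and extract from \autoref{def:completion_des} a certificate \(G'_{t}\): a witness for \(((t,X,O), P)\) whose union with \(H\) is a dominating Eulerian subgraph of \(G\). Unpacking \autoref{def:valid_partitions_witnesses_des}, this witness satisfies \(X_{t} \cap V(G'_{t}) = X\), defines the partition \(P\) on \(X\), has \(V(G'_{t})\) a vertex cover of \(G_{t}\) containing \(v^{\star}\), and has odd-degree set exactly \(O\).

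Next I would let \(R\) be the partition of \(X\) defined by the residual subgraph \(H\). Because \(G'_{t} \cup H\) is Eulerian, hence connected, \autoref{lem:partition_join_connectivity} gives \(P \sqcup R = \{X\}\). Since \calB is a representative subset of \calA, by definition there is a partition \(Q \in \calB\) with \(Q \sqcup R = \{X\}\). As \(\calB \subseteq \calA\), the partition \(Q\) is valid for \((t,X,O)\), so \autoref{def:valid_partitions_witnesses_des} supplies a witness \(G^{\star}_{t}\) that defines \(Q\) on \(X\), has odd-degree set exactly \(O\), and---crucially---has \(V(G^{\star}_{t})\) a vertex cover of \(G_{t}\). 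It then remains to verify that \(G^{\star}_{t} \cup H\) is a dominating Eulerian subgraph of \(G\). Connectivity follows from \(Q \sqcup R = \{X\}\) via \autoref{lem:partition_join_connectivity}; evenness of all degrees follows because the edge sets \(E(G^{\star}_{t})\) and \(E(H)\) are disjoint while the odd-degree sets of both \(G^{\star}_{t}\) and \(H\) equal \(O\)---the latter by \autoref{lem:completion_odd_subset_des}.

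The one genuinely new ingredient---and the step I expect to be the main obstacle---is verifying that \(V(G^{\star}_{t} \cup H)\) is a vertex cover of \(G\), since this is precisely what differs from the terminal-containment argument in the \ESS case. I would split the edges of \(G\) into those in \(E(G_{t})\) and those in \(E(G) \setminus E(G_{t})\). The former are all covered because \(V(G^{\star}_{t})\) is a vertex cover of \(G_{t}\). For the latter, I would invoke \autoref{obs:residual_completion_vertexcover_des}, applied to the hypothesis that \(((t,X,O), P)\) completes \(H\): it guarantees that every edge in \(E(G) \setminus E(G_{t})\) already has an endpoint in \(V(H)\), a fact that depends only on \(H\) and not on the particular witness we pair it with. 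Hence every edge of \(G\) is covered by \(V(G^{\star}_{t}) \cup V(H)\), so \(G^{\star}_{t} \cup H\) is a dominating Eulerian subgraph of \(G\), witnessing that \(((t,X,O), Q)\) completes \(H\).
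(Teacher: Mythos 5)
Your proposal is correct and follows essentially the same route as the paper's own proof: extract a certificate for \(((t,X,O),P)\), pass to \(Q \in \calB\) via \(P \sqcup R = \{X\}\) and the representative-subset property, and verify that any witness for \(((t,X,O),Q)\) combined with \(H\) is connected, even-degreed, and dominating, using \autoref{obs:residual_completion_vertexcover_des} for the edges outside \(G_{t}\). Your explicit appeal to \autoref{lem:completion_odd_subset_des} for the odd-degree set of \(H\) makes precise a step the paper leaves implicit, but the argument is the same.
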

\begin{proof}
  Suppose there is a partition \(P \in \calA\) such that \(((t,X,O), P)\)
  completes the residual subgraph \(H\). Then there exists a subgraph
  \(G'_{t} = (V'_{t},E'_{t})\) of $G_{t}$--- \(G'_{t}\) being a witness for
  \(((t,X,O), P)\)---such that (i) \(X_{t} \cap V(G'_{t})=X\), (ii) \(P\) is the
  partition of \(X\) defined by \(G'_{t}\), (iii) \(V(G'_{t})\) is a vertex
  cover of of $G_{t}$, (iv) the set of odd-degree vertices in \(G'_{t}\) is
  exactly the set \(O\), and (v) the graph \(G'_{t} \cup H\) is a dominating
  Eulerian subgraph of \(G\). Observe that every edge in
  \(E(G) \setminus E(G_{t})\) has at least one end-point in the set \(V(H)\).
  Let \(R\) be the partition of the set \(X\) defined by the residual subgraph
  \(H\). Since the union of \(G'_{t}\) and \(H\) is connected we
  get---\autoref{lem:partition_join_connectivity}---that \(P \sqcup R = \{X\}\)
  holds. Since \calB is a representative subset of \calA we get that there
  exists a partition \(Q \in \calB\) such that \(Q \sqcup R = \{X\}\) holds.
  Since \(\calB \subseteq \calA\) we have that the partition \(Q\) of \(X\) is
  valid for the combination \((t,X,O)\). So there exists a subgraph
  \(G^{\star}_{t} = (V^{\star}_{t},E^{\star}_{t})\) of
  $G_{t}$---\(G^{\star}_{t}\) being a witness for \(((t,X,O), Q)\)---such that
  (i) \(X_{t} \cap V(G^{\star}_{t})=X\), (ii) \(Q\) is the partition of \(X\)
  defined by \(G^{\star}_{t}\), (iii) \(V(G^{\star}_{t})\) is a vertex cover of
  \(G_{t}\) with \(v^{\star} \in V(G^{\star}_{t})\) , and (iv) the set of
  odd-degree vertices in \(G^{\star}_{t}\) is exactly the set \(O\). Now:
  \begin{enumerate}
  \item The vertex set of the graph \(G^{\star}_{t} \cup H\) is a vertex cover
    of graph \(G\). This follows from
    \autoref{obs:residual_completion_vertexcover_des}, since
    \(V(G^{\star}_{t})\) is a vertex cover of \(G_{t}\).
  \item The graph \(G^{\star}_{t} \cup H\) has all degrees even, because (i) the
    edge sets \(E(G^{\star}_{t})\) and \(E(H)\) are disjoint, and (ii) the sets
    of odd-degree vertices in the two graphs \(G^{\star}_{t}\) and \(H\) are
    identical---namely, the set \(O\).
  \item \(G^{\star}_{t} \cup H\) is connected---by
    \autoref{lem:partition_join_connectivity}---because \(Q \sqcup R = \{X\}\)
    holds.
  \end{enumerate}
  Thus the subgraph \(G^{\star}_{t}\) of \(G_{t}\) is a witness for
  \(((t,X,O), Q)\) such that the graph \(G^{\star}_{t} \cup H\) is a dominating
  Eulerian subgraph of \(G\). Hence \(((t,X,O), Q)\) completes \(H\).
\end{proof}

\begin{lemma}\label{lem:validity_condition_at_root_des}
  Let \((G, \TT, tw)\) be an instance of \DES, let \(r\) be the root node of
  \TT, and let \(v^{\star}\) be the vertex which is present in every bag of \TT.
  Then \((G, \TT, tw)\) is a \yes instance of \DES if and only if the partition
  \(P = \{\{v^{\star}\}\}\) is valid for the combination
  \((r,X = \{v^{\star}\}, O = \emptyset)\).
\end{lemma}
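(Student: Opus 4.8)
The plan is to mirror the proof of \autoref{lem:validity_condition_at_root} almost verbatim, the only substantive change being that the ``contains all terminals'' requirement there is replaced by the ``$v^{\star} \in V(G'_{t})$ and $V(G'_{t})$ is a vertex cover'' requirement of \autoref{def:valid_partitions_witnesses_des}. The starting observation is that, since $r$ is the root of the (modified) tree decomposition, we have $X_{r} = \{v^{\star}\}$, $V_{r} = V(G)$, and $G_{r} = G$; thus any witness for a combination at the root is simply a subgraph of the whole graph $G$ whose intersection with the bag $\{v^{\star}\}$ is controlled by the first validity condition.

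For the forward direction I would assume $(G, \TT, tw)$ is a \yes instance and fix a dominating Eulerian subgraph $G'$ of $G$. By the (without loss of generality) choice of $v^{\star}$ made at the start of this section, $v^{\star} \in V(G')$ holds. Setting $G'_{r} = G'$, I would verify the four conditions of \autoref{def:valid_partitions_witnesses_des} in turn: condition (1) holds because $X_{r} \cap V(G') = \{v^{\star}\}$; condition (2) holds because $G'$, being Eulerian, is connected and hence defines the single-block partition $\{\{v^{\star}\}\}$ of $\{v^{\star}\}$; condition (3) holds because $v^{\star} \in V(G')$ and, since $V(G')$ \emph{contains} a vertex cover of $G = G_{r}$, the set $V(G')$ is itself a vertex cover of $G_{r}$; and condition (4) holds because every degree in the Eulerian graph $G'$ is even, so its odd-degree set equals $\emptyset = O$. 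Hence $P = \{\{v^{\star}\}\}$ is valid for $(r, \{v^{\star}\}, \emptyset)$.

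For the reverse direction I would assume $P = \{\{v^{\star}\}\}$ is valid for $(r, \{v^{\star}\}, \emptyset)$ and extract from \autoref{def:valid_partitions_witnesses_des} a witness subgraph $G'_{r}$ of $G_{r} = G$. Conditions (2) and (4) tell me that $G'_{r}$ is connected and has all even degrees, hence is Eulerian; condition (3) tells me that $V(G'_{r})$ is a vertex cover of $G_{r} = G$, which in particular means that $V(G'_{r})$ contains a vertex cover of $G$. Thus $G'_{r}$ is a dominating Eulerian subgraph of $G$, and so $(G, \TT, tw)$ is a \yes instance.

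I expect no genuine obstacle here: the argument is routine once the correspondence between ``dominating Eulerian subgraph of $G$'' and ``connected, all-even-degree subgraph of $G = G_{r}$ whose vertex set is a vertex cover'' is made explicit. The only point meriting care is the direction of the vertex-cover condition---the \DES definition asks that $V(G')$ \emph{contain} a vertex cover, whereas condition (3) of validity asks that $V(G'_{t})$ \emph{be} a vertex cover---and these are reconciled by the elementary fact that any superset of a vertex cover is again a vertex cover.
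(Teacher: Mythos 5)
Your proof is correct and follows essentially the same route as the paper's: set $G'_{r} = G'$ at the root, check the four validity conditions directly, and reverse the argument for the converse. The one point you flag---that a superset of a vertex cover is itself a vertex cover, reconciling ``contains a vertex cover'' with ``is a vertex cover''---is glossed over in the paper but handled correctly by you.
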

\begin{proof}
  Let \((G, \TT, tw)\) be a \yes instance of \DES and let \(G'\) be a dominating
  Eulerian subgraph of \(G\). By assumption vertex \(v^{\star}\) is in
  \(V(G')\). Since \(r\) is the root node of \TT we have that
  \(X_{r} = \{v^{\star}\}\), \(V_{r} = V(G)\) and \(G_{r} = G\). We set
  \(G'_{r} = G'\). Then (i) $X_{r} \cap V(G'_{r}) = \{v^{\star}\} = X$, (ii)
  $G'_{r} = G'$ has exactly one connected component \(C_{1} = V(G')\) and the
  partition \(P = \{\{v^{\star}\}\}\) of \(X = \{v^{\star}\}\) is the partition
  of \(X\) defined by $G'_{r}$, (iii) \(V(G'_{r}) = V(G')\) is a vertex cover of
  graph \(G\) with \(v^{\star} \in V(G'_{r})\), and (iv) the set of odd-degree
  vertices in $G'_{r}$ is exactly the empty set. Thus the partition
  \(P = \{\{v^{\star}\}\}\) is valid for the combination
  \((r,X = \{v^{\star}\}, O = \emptyset)\). This completes the forward
  direction.

  For the reverse direction, suppose the partition \(P = \{\{v^{\star}\}\}\) is
  valid for the combination \((r,X = \{v^{\star}\}, O = \emptyset)\). Then by
  definition there exists a subgraph \(G'_{r} = (V'_{r},E'_{r})\) of $G_{r} = G$
  such that (i) $X_{r} \cap V(G'_{r}) = X = \{v^{\star}\}$, (ii) \(G'_{r}\) has
  exactly one connected component \(C_{1} = V(G'_{r})\), (iii) \(V(G'_{r})\) is
  a vertex cover of graph \(G\) with \(v^{\star} \in V(G'_{r})\) , and (iv) the
  set of odd-degree vertices in \(G'_{r}\) is exactly the empty set \(O\). Thus
  \(G'_{r}\) is a connected subgraph of \(G\) whose vertex set is a vertex cover
  of \(G\), and whose degrees are all even. But \(G'_{r}\) is then a dominating
  Eulerian subgraph of \(G\), and so \((G, \TT, tw)\) is a \yes instance of
  \DES.
\end{proof}

\begin{lemma}\label{lem:completion_at_root_des}
  Let \((G, \TT, tw)\) be an instance of \DES, let \(r\) be the \emph{root} node
  of \TT, and let \(v^{\star}\) be the vertex which is present in every bag of
  \TT. Let \(H = (\{v^{\star}\}, \emptyset)\), \(X = \{v^{\star}\}\),
  \(O = \emptyset\), and \(P = \{\{v^{\star}\}\}\). Then \((G, \TT, tw)\) is a
  \yes instance if and only if \(((r,X,O), P)\) completes \(H\).
\end{lemma}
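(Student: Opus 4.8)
The plan is to mirror the proof of the analogous statement for \ESS, namely \autoref{lem:completion_at_root}, adapting the Eulerian Steiner conditions to the dominating-Eulerian conditions. First I would record the two structural facts that make the argument go through at the root: since \(r\) is the root node we have \(G_{r} = G\), and the graph \(H = (\{v^{\star}\}, \emptyset)\) is trivially a residual subgraph with respect to \(r\), because its single vertex \(v^{\star}\) lies in \(X_{r} = \{v^{\star}\}\) (so \(V(H) \cap Y_{r} = \emptyset\)) and it contains no edges (so \(E(H) \cap E_{r} = \emptyset\)); both defining conditions of a residual subgraph thus hold vacuously.

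For the forward direction, I would assume \((G, \TT, tw)\) is a \yes instance and fix a dominating Eulerian subgraph \(G'\) of \(G\). By the convention fixed when \(v^{\star}\) was chosen---namely that \(v^{\star}\) is guaranteed to lie in \emph{any} dominating Eulerian subgraph we seek---we have \(v^{\star} \in V(G')\). I would then invoke \autoref{lem:validity_condition_at_root_des}, and in particular the construction inside its proof, to conclude that \(P = \{\{v^{\star}\}\}\) is valid for the combination \((r, X, O)\) with \(G'\) itself serving as a witness (it is connected, its vertex set is a vertex cover of \(G = G_{r}\), it contains \(v^{\star}\), and all its degrees are even). The key simple observation is that \(G' \cup H = G'\): since \(V(H) = \{v^{\star}\} \subseteq V(G')\) and \(E(H) = \emptyset\), taking the union with \(H\) adds nothing. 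Hence \(G' \cup H\) is a dominating Eulerian subgraph of \(G\), and by \autoref{def:completion_des} this exhibits \(G'\) as a certificate witnessing that \(((r, X, O), P)\) completes \(H\).

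The reverse direction is essentially immediate from the definition of completion. If \(((r, X, O), P)\) completes \(H\), then by \autoref{def:completion_des} there is a subgraph \(G'_{r}\) of \(G_{r} = G\) which is a witness for \(((r, X, O), P)\) and for which \(G'_{r} \cup H\) is a dominating Eulerian subgraph of \(G\). The existence of such a subgraph is exactly what it means for \((G, \TT, tw)\) to be a \yes instance.

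I do not anticipate any genuine obstacle here; the only point needing care is the identity \(G' \cup H = G'\) in the forward direction, which relies on \(v^{\star}\) already belonging to \(V(G')\). This is precisely why \(v^{\star}\) was chosen at the start of the section to be a vertex guaranteed to lie in a dominating Eulerian subgraph, so the single-vertex residual graph \(H\) is automatically subsumed by any candidate solution.
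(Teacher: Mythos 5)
Your proposal is correct and follows essentially the same route as the paper's own proof: both directions rest on \autoref{lem:validity_condition_at_root_des} (with \(G'\) itself as witness), the observation that \(G' \cup H = G'\), and the fact that the reverse direction is immediate from \autoref{def:completion_des}. The only stylistic difference is that you spell out why \(H\) is a residual subgraph with respect to \(r\), which the paper leaves to inspection.
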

\begin{proof}
  Note that \(G_{r} = G\). It is easy to verify by inspection that \(H\) is a
  residual subgraph with respect to \(r\).

  Let \((G, \TT, tw)\) be a \yes instance of \DES and let \(G'\) be a dominating
  Eulerian subgraph of \(G\). By assumption vertex \(v^{\star}\) is in
  \(V(G')\). From \autoref{lem:validity_condition_at_root_des} we get that
  partition \(P\) is valid for the combination \((r,X,O)\), and from the proof
  of \autoref{lem:validity_condition_at_root_des} we get that the dominating
  Eulerian subgraph \(G'\) is itself a witness for \(((r,X,O), P)\). Now
  \(((V(G') \cup V(H)), (E(G') \cup E(H))) = (V(G'), E(G')) = G'\), and so
  \(G' \cup H\) is a dominating Eulerian subgraph of \(G\). Thus
  \(((r,X,O), P)\) completes \(H\).

  The reverse direction is trivial: if \(((r,X,O), P)\) completes \(H\) then by
  definition there exists a dominating Eulerian subgraph of \(G\), and so
  \((G, \TT, tw)\) is a \yes instance.
\end{proof}

As in \autoref{sec:ESS} we use the completion-based alternate characterization
of \yes instances---\autoref{lem:completion_at_root_des}---and representative
subset computations---\autoref{lem:repsets_preserve_completion_des}---to speed
up our DP. We now describe the steps of the DP algorithm for each type of node
in \TT. We use \(VP[t, X, O]\) to denote the set of \textbf{v}alid
\textbf{p}artitions for the combination \((t, X, O)\) which we store in the DP
table for node \(t\).
\begin{description}
\item[Leaf node \(t\):] In this case \(X_{t} = \{v^{\star}\}\). Set
  \(VP[t, \{v^{\star}\}, \{v^{\star}\}] = \emptyset\),
  \(VP[t, \{v^{\star}\}, \emptyset] = \{\{\{v^{\star}\}\}\}\), and
  \(VP[t, \emptyset, \emptyset] = \{\emptyset\}\).
\item[Introduce vertex node \(t\):] Let \(t'\) be the child node of \(t\), and
  let \(v\) be the vertex introduced at \(t\). Then \(v \notin X_{t'}\) and
  \(X_{t} = X_{t'} \cup \{v\}\). For each \(X \subseteq X_{t}\) and
  \(O \subseteq X\),
  \begin{enumerate}
  \item if \(v \in O\) then set \(VP[t, X, O] = \emptyset\)
  \item if \(v \in (X \setminus O)\) then for each partition \(P'\) in
    \(VP[t', X \setminus \{v\}, O]\), add the partition
    \(P = P' \cup \{\{v\}\}\) to the set \(VP[t, X, O]\)
  \item if \(v \notin X\) then set \(VP[t, X, O] = VP[t', X, O]\)
  \item Set \(\calA = VP[t, X, O]\). Compute a representative subset
    \(\calB \subseteq \calA\) and set \(VP[t, X, O] = \calB\).
  \end{enumerate}
\item[Introduce edge node \(t\):] Let \(t'\) be the child node of \(t\), and let
  \(uv\) be the edge introduced at \(t\). Then \(X_{t} = X_{t'}\) and
  \(uv \in (E(G_{t}) \setminus E(G_{t'}))\). For each \(X \subseteq X_{t}\) and
  \(O \subseteq X\), 
  \begin{enumerate}
  \item If \(\{u, v\} \cap X = \emptyset\) then set \(VP[t, X, O] = \emptyset\);
    else set \(VP[t, X, O] = VP[t', X, O]\).
  \item If \(\{u,v\} \subseteq X\) then:
    \begin{enumerate}
    \item Construct a set of \emph{candidate partitions} \PP as follows.
      Initialize \(\PP = \emptyset\).
      \begin{itemize}
      \item if \(\{u,v\} \subseteq O\) then add all the partitions in
        \(VP[t', X, O \setminus \{u,v\}]\) to \PP.
      \item if \(\{u,v\} \cap O = \{u\}\) then add all the partitions in
        \(VP[t', X, (O \setminus \{u\}) \cup \{v\}]\) to \PP.
      \item if \(\{u,v\} \cap O = \{v\}\) then add all the partitions in
        \(VP[t', X, (O \setminus \{v\}) \cup \{u\}]\) to \PP.
      \item if \(\{u,v\} \cap O = \emptyset\) then add all the partitions in
        \(VP[t', X, O \cup \{u, v\}]\) to \PP.
      \end{itemize}
    \item For each candidate partition \(P' \in \PP\), if vertices \(u, v\) are
      in different blocks of \(P'\)---say
      \(u \in P'_{u}, v \in P'_{v}\;;\;P'_{u} \neq P'_{v}\)---then merge these
      two blocks of \(P'\) to obtain \(P\). That is, set
      \(P = (P' \setminus \{P'_{u}, P'_{v}\}) \cup (P'_{u} \cup P'_{v})\). Now
      set \(\PP = (\PP \setminus \{P'\}) \cup P\).
    \item Add all of \(\PP\) to the list \(VP[t, X, O]\).
    \end{enumerate}
  \item Set \(\calA = VP[t, X, O]\). Compute a representative subset
    \(\calB \subseteq \calA\) and set \(VP[t, X, O] = \calB\).
  \end{enumerate}
\item[Forget node \(t\):] Let \(t'\) be the child node of \(t\), and let \(v\)
  be the vertex forgotten at \(t\). Then \(v \in X_{t'}\) and
  \(X_{t} = X_{t'} \setminus \{v\}\). Recall that \(P(v)\) is the block of
  partition \(P\) which contains element \(v\), and that \(P - v\) is the
  partition obtained by eliding \(v\) from \(P\). For each \(X \subseteq X_{t}\)
  and \(O \subseteq X\),
  \begin{enumerate}
  \item Set \(VP[t, X, O] = \{P' - v \;;\; P' \in VP[t', X \cup \{v\},
    O],\,|P'(v)| > 1\} \cup VP[t', X, O]\).
  \item Set \(\calA = VP[t, X, O]\). Compute a representative subset
    \(\calB \subseteq \calA\) and set \(VP[t, X, O] = \calB\).
  \end{enumerate}
\item[Join node \(t\):] Let \(t_{1}, t_{2}\) be the children of \(t\). Then
  \(X_{t} = X_{t_{1}} = X_{t_{2}}\). For each
  \(X \subseteq X_{t}, O \subseteq X\):
  \begin{enumerate}
  \item Set \(VP[t, X, O] = \emptyset\)
  \item For each \(O_{1} \subseteq O\) and
    \(\hat{O} \subseteq (X \setminus O)\): 
    \begin{enumerate}
    \item Let \(O_{2} = O \setminus O_{1}\).
    \item For each pair of partitions
      \(P_{1} \in VP[t_{1}, X, O_{1} \cup \hat{O}], P_{2} \in VP[t_{2}, X, O_{2}
      \cup \hat{O}]\), add their join \(P_{1} \sqcup P_{2}\) to the set
      \(VP[t, X, O]\).
    \end{enumerate}
  \item Set \(\calA = VP[t, X, O]\). Compute a representative subset
    \(\calB \subseteq \calA\) and set \(VP[t, X, O] = \calB\).
  \end{enumerate}
\end{description}

As before, we prove by induction on the structure of \TT that every node in \TT
preserves the Correctness Criteria (see \autopageref{correctness_criteria}).
The processing at each of the non-leaf nodes computes a representative subset as
a final step. This step does not negate the correctness criteria.
  \begin{observation}\label{obs:repset_computation_preserves_correctness_des}
    Let \(t\) be a node of \TT, let \(X \subseteq X_{t}, O \subseteq X\), and
    let \calA be a set of partitions which satisfies the correctness criteria
    for the combination \((t, X, O)\). Let \calB be a representative subset of
    \calA. Then \calB satisfies the correctness criteria for the combination
    \((t, X, O)\).
  \end{observation}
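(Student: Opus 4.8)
The plan is to mirror exactly the proof of the corresponding \ESS statement (\autoref{obs:repset_computation_preserves_correctness}), since the two versions of the correctness criteria have the same logical shape and the \DES analogue of the key representative-set lemma is already in hand. The correctness criteria split into a soundness part and a completeness part, and I would dispatch them separately.

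For soundness, I would simply note that \(\calB \subseteq \calA\) holds by the definition of a representative subset. Since \calA is assumed to satisfy the correctness criteria for \((t, X, O)\)---in particular its soundness half---every partition in \calA is valid for \((t, X, O)\); hence so is every partition in the subset \calB. This step is immediate.

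For completeness, I would invoke \autoref{lem:repsets_preserve_completion_des}. Fix an arbitrary residual subgraph \(H\) with respect to \(t\) with \(V(H) \cap X_{t} = X\), and suppose there is a partition \(P\) of \(X\) such that \(((t,X,O), P)\) completes \(H\). The completeness half of the hypothesis on \calA yields a partition \(P' \in \calA\) with \(((t,X,O), P')\) completing \(H\). Because \calA satisfies soundness, all of its members are valid partitions for \((t,X,O)\), so \calA meets the hypothesis of \autoref{lem:repsets_preserve_completion_des}; applying that lemma to the representative subset \(\calB \subseteq \calA\) produces a partition \(Q \in \calB\) with \(((t,X,O), Q)\) completing \(H\). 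This is precisely the completeness criterion for \calB.

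I do not anticipate a genuine obstacle here: the representative-set machinery for \DES was already shown to preserve completion in \autoref{lem:repsets_preserve_completion_des}, and this observation merely packages that fact together with the trivial subset-containment argument for soundness. The only point needing a moment's care is verifying that \calA satisfies the hypothesis of the preservation lemma---namely, that every one of its partitions is valid for \((t,X,O)\)---which follows directly from the assumed soundness of \calA.
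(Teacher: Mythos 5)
Your proof is correct and follows exactly the paper's own argument: soundness is immediate from \(\calB \subseteq \calA\), and completeness is obtained by applying \autoref{lem:repsets_preserve_completion_des}. Your extra remark that the soundness of \(\calA\) is what licenses the application of that lemma is a fair point of care that the paper leaves implicit.
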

  \begin{proof}
    Since \(\calB \subseteq \calA\) holds we get that \calB satisfies the
    soundness criterion. From \autoref{lem:repsets_preserve_completion_des} we
    get that \calB satisfies the completeness criterion as well.
  \end{proof}

\begin{lemma}\label{lem:leaf_node_ok_des}
  Let \(t\) be a leaf node of the tree decomposition \TT and let
  \(X \subseteq X_{t}, O \subseteq X\) be arbitrary subsets of \(X_{t}, X\)
  respectively. The collection \calA of partitions computed by the DP for the
  combination \((t, X, O)\) satisfies the correctness criteria.
\end{lemma}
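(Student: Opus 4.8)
The plan is to reuse the template of the proof of \autoref{lem:leaf_node_ok}, the \ESSs analogue of this statement, since at a leaf node the two settings differ only in the third condition of the validity definition: \autoref{def:valid_partitions_witnesses_des} demands \(\vstar \in V(G'_{t})\) together with \(V(G'_{t})\) being a vertex cover of \(G_{t}\), in place of the terminal-containment condition used for \ESSs. First I would record the two facts that drive every case: because \(t\) is a leaf we have \(X_{t} = \{\vstar\}\), and the graph \(G_{t}\) consists of the single vertex \(\vstar\) with no edges. I would then verify soundness and completeness (the two halves of the Correctness Criteria, see \autopageref{correctness_criteria}) separately for each of the three combinations the DP handles, namely \((X,O)\) equal to \((\{\vstar\},\{\vstar\})\), \((\{\vstar\},\emptyset)\), and \((\emptyset,\emptyset)\).

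For the two combinations \((\{\vstar\},\{\vstar\})\) and \((\emptyset,\emptyset)\) the DP yields the empty collection, so soundness is vacuous, exactly as in \autoref{lem:leaf_node_ok}. For completeness I would show that no partition is valid for either combination, so that the hypothesis of the completeness criterion is never met: in the first, no subgraph of the edgeless graph \(G_{t}\) can give \(\vstar\) an odd degree, violating condition~(4) of \autoref{def:valid_partitions_witnesses_des}; in the second, condition~(1) forces \(\vstar \notin V(G'_{t})\) while condition~(3) demands \(\vstar \in V(G'_{t})\), an outright contradiction. Hence completeness holds vacuously in both cases.

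The only combination requiring positive work is \((X,O) = (\{\vstar\},\emptyset)\), where the DP stores \(\calA = \{\{\{\vstar\}\}\}\). For soundness I would exhibit \(G'_{t} = G_{t}\) itself as a witness for \(((t,X,O),\{\{\vstar\}\})\): it meets condition~(1) trivially, realises the partition \(\{\{\vstar\}\}\) through its single component (condition~(2)), contains \(\vstar\) with \(\{\vstar\}\) serving as a vertex cover of the edgeless \(G_{t}\) (condition~(3)), and has all degrees even, so its odd-degree set is \(O = \emptyset\) (condition~(4)). For completeness I would note that \(\{\{\vstar\}\}\) is the \emph{unique} partition of the singleton \(\{\vstar\}\), so any partition that validly completes a given residual subgraph must coincide with the one already present in \(\calA\).

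I do not anticipate a genuine obstacle, as this is the base of the induction and every check reduces to inspecting a one-vertex edgeless graph. The one point that merits care is the \emph{trivial} satisfaction of the vertex-cover clause in condition~(3) by \(\{\vstar\}\), which holds precisely because \(G_{t}\) has no edges; this is the sole place where the \DESs validity definition differs in substance from its \ESSs counterpart, and it introduces no difficulty.
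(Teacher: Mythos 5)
Your proposal is correct and follows essentially the same case analysis as the paper's proof: the same three combinations, the same vacuous soundness/completeness arguments for the two empty collections (including the clash between conditions~(1) and~(3) when \(X = \emptyset\)), and the same witness \(G'_{t} = G_{t}\) with the uniqueness of the singleton partition for the one non-trivial case. Your added remark that the vertex-cover clause is satisfied trivially because \(G_{t}\) is edgeless is exactly the right observation and matches the paper's treatment.
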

\begin{proof}
  Here \(X_{t} = \{v^{\star}\}\). Note that the graph \(G_{t}\) consists of (i)
  the one vertex \(v^{\star}\), and (ii) no edges. We verify the conditions for
  all the three possible cases:
  \begin{itemize}
  \item \(X = \{v^{\star}\}, O = \{v^{\star}\}\). The algorithm sets
    \(\calA = \emptyset\). The soundness criterion holds vacuously.

    Observe that there is no subgraph \(G_{t'}\) of \(G_{t}\) in which vertex
    \(v^{\star}\) has an odd degree. This means that there can exist no subgraph
    \(G_{t'}\) of \(G_{t}\) for which the fourth condition in the definition of
    a valid partition---~\autoref{def:valid_partitions_witnesses_des}---holds.
    Thus there is no partition which is valid for the combination \((t, X, O)\).
    Hence the completeness criterion holds vacuously as well.
  \item \(X = \{v^{\star}\}, O = \emptyset\). The algorithm sets
    \(\calA = \{\{\{v^{\star}\}\}\}\). It is easy to verify by inspection that
    the subgraph \(G_{t'} = G_{t}\) of \(G_{t}\) is a witness for the partition
    \(\{\{v^{\star}\}\}\) being valid for the combination \((t, X, O)\). Hence
    the soundness criterion holds.

    Since \(X\) is the set \(\{v^{\star}\}\), the \emph{only} valid partition
    for the combination \((t, X, O)\) is \(\{\{v^{\star}\}\}\). Hence the
    completeness criterion holds trivially.
  \item \(X = \emptyset, O = \emptyset\). The algorithm sets
    \(\calA = \emptyset\). The soundness criterion holds vacuously.

    Since \(X = \emptyset\) holds there can exist no subgraph \(G_{t'}\) of
    \(G_{t}\) for which both the conditions (1) and (3) of the definition of a
    valid partition---~\autoref{def:valid_partitions_witnesses_des}---hold
    simultaneously. Thus there is no partition which is valid for the
    combination \((t, X, O)\). Hence the completeness criterion holds vacuously
    as well. \qedhere
  \end{itemize}
\end{proof}

\begin{lemma}\label{lem:introduce_vertex_node_ok_des}
  Let \(t\) be an introduce vertex node of the tree decomposition \TT and let
  \(X \subseteq X_{t}, O \subseteq X\) be arbitrary subsets of \(X_{t}, X\)
  respectively. The collection \calA of partitions computed by the DP for the
  combination \((t, X, O)\) satisfies the correctness criteria.
\end{lemma}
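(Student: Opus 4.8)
The plan is to carry out a structural induction mirroring the proof of the corresponding \ESS statement (\autoref{lem:introduce_vertex_node_ok}), with the terminal-membership bookkeeping there replaced by the vertex-cover bookkeeping of \autoref{def:valid_partitions_witnesses_des}. The two facts that drive everything are: (i) since \(v\) is introduced at \(t\), no edge incident with \(v\) has yet been introduced, so \(deg_{G_{t}}(v) = 0\) and \(E_{t} = E_{t'}\) hold; and (ii) since \vstar lies in every bag, \(v \neq \vstar\). Fact (i) forces \(v\) to be isolated in every subgraph of \(G_{t}\) and, crucially, makes the vertex-cover requirement insensitive to \(v\): a vertex set covers \(G_{t} = (V_{t}, E_{t})\) if and only if it covers \(G_{t'} = (V_{t'}, E_{t'})\), because \(E_{t} = E_{t'}\) and \(v\) is incident with no edge of this common edge set. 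This is precisely where the \DES argument is \emph{simpler} than its \ESS counterpart: the algorithm makes no terminal-based case split, so only three branches---\(v \in O\), \(v \in (X \setminus O)\), and \(v \notin X\)---need to be analysed.

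I would dispatch the two easy branches first. When \(v \in O\) the algorithm sets \(\calA = \emptyset\): soundness is vacuous, and completeness is vacuous because \(deg_{G_{t}}(v) = 0\) prevents \(v\) from lying in the odd-degree set of any witness (condition~4 of \autoref{def:valid_partitions_witnesses_des}), so no partition is valid for \((t, X, O)\). When \(v \notin X\) the algorithm copies \(VP[t', X, O]\); here condition~1 of \autoref{def:valid_partitions_witnesses_des} forces \(v \notin V(G'_{t})\) for every witness \(G'_{t}\), so any such witness is already a subgraph of \(G_{t'}\). Using \(E_{t} = E_{t'}\) together with the vertex-cover insensitivity noted above, I would verify the standard four equivalences (of validity, of witnesses, of residual subgraphs with respect to \(t\) versus \(t'\), and of completion) and then invoke the inductive hypothesis, exactly as in the \ESS proof.

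The substantive branch is \(v \in (X \setminus O)\), where the algorithm appends the singleton block \(\{v\}\) to each partition of \(VP[t', X \setminus \{v\}, O]\). Soundness is routine: adjoining the isolated vertex \(v\) to a witness for \(((t', X \setminus \{v\}, O), P')\) yields a witness for \(((t, X, O), P' \cup \{\{v\}\})\), the vertex cover of \(G_{t}\) being unaffected since \(v\) covers nothing. The step I expect to be the main obstacle is completeness. Given a residual subgraph \(H\) with \(V(H) \cap X_{t} = X\) and a certificate \(G'_{t}\) witnessing that \(((t, X, O), P)\) completes \(H\), I would argue that \(v\) is isolated in \(G'_{t}\) (so \(\{v\}\) is a block of \(P\)) and that, because \(v \in X \subseteq V(H)\) and \(G'_{t} \cup H\) is Eulerian, \(v\) has positive even degree inside \(H\). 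The same \(H\), now reinterpreted as a residual subgraph with respect to \(t'\) with \(V(H) \cap X_{t'} = (X \setminus \{v\})\), is completed by \(G'_{t'} = G'_{t} - v\), since \(G'_{t'} \cup H = G'_{t} \cup H\) remains a dominating Eulerian subgraph of \(G\) (vertex \(v\) already lies in \(V(H)\)). The inductive hypothesis then yields some \(Q' \in VP[t', X \setminus \{v\}, O]\) completing \(H\); re-adjoining the isolated \(v\) to its certificate produces a witness for \(((t, X, O), Q)\), where \(Q = Q' \cup \{\{v\}\} \in \calA\), completing \(H\). The delicate checks are that \(H\) genuinely satisfies the residual-subgraph conditions of \autoref{def:residual_subgraph} for \(t'\) (using \(V_{t'} \setminus X_{t'} = V_{t} \setminus X_{t}\) and \(E_{t'} \subseteq E_{t}\)) and that the vertex-cover condition of \autoref{def:valid_partitions_witnesses_des} migrates correctly in both directions, which again follows from \(E_{t} = E_{t'}\). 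Finally, \autoref{obs:repset_computation_preserves_correctness_des} ensures that the concluding representative-subset computation preserves both soundness and completeness.
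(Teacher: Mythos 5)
Your proposal is correct and follows essentially the same route as the paper's proof: the same three-way case split, the same vacuous-soundness/vacuous-completeness arguments for \(v \in O\) and the copy-over equivalences for \(v \notin X\), and the same completeness argument for \(v \in (X \setminus O)\) via deleting the isolated \(v\) from the certificate, reinterpreting \(H\) as a residual subgraph with respect to \(t'\), applying the inductive hypothesis, and re-adjoining \(v\). Your explicit observation that \(E_{t} = E_{t'}\) makes the vertex-cover condition insensitive to \(v\) is a slightly cleaner packaging of what the paper leaves as "straightforward to verify", but it is not a different argument.
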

\begin{proof}
  Let \(t'\) be the child node of \(t\), and let \(v\) be the vertex introduced
  at \(t\). Then \(v \notin X_{t'}\) and \(X_{t} = X_{t'} \cup \{v\}\) hold.
  Note that no edges incident with \(v\) have been introduced so far; so we have
  that \(deg_{G_{t}}(v) = 0\) holds. We analyze each choice made by the
  algorithm:
  \begin{enumerate}
  \item If \(v \in O\) holds then the algorithm sets \(\calA = \emptyset\). The
    soundness criterion holds vacuously.

    Since \(deg_{G_{t}}(v) = 0\) holds, there can exist no subgraph \(G_{t'}\)
    of \(G_{t}\) for which the fourth condition of the definition of a valid
    partition---~\autoref{def:valid_partitions_witnesses_des}---holds. Thus
    there is no partition which is valid for the combination \((t, X, O)\).
    Hence the completeness condition holds vacuously as well.

  \item If \(v \in (X \setminus O)\) holds then the algorithm takes each
    partition \(P'\) in \(VP[t', X \setminus \{v\}, O]\) and adds the partition
    \(P = (P' \cup \{\{v\}\})\) to the set \(\calA\). By inductive assumption we
    have that the set \(VP[t', X \setminus \{v\}, O]\) of partitions is sound
    and complete for the combination \((t', X \setminus \{v\}, O)\).

    Let \(P = (P' \cup \{\{v\}\})\) be an arbitrary partition in the set
    \(\calA\), where \(P'\) is a partition from the set
    \(VP[t', X \setminus \{v\}, O]\). Then the partition \(P'\) is valid for the
    combination \((t', X \setminus \{v\}, O)\), and so there exists a subgraph
    \(H\) of the graph \(G_{t'}\) such that \(H\) is a witness for
    \(((t', X \setminus \{v\}, O), P')\). It is easy to verify by inspection
    that the graph \(G_{t}' = ( V(H) \cup \{v\}, E(H))\) is a subgraph of
    \(G_{t}\) which satisfies all the four conditions of
    \autoref{def:valid_partitions_witnesses_des} for being a witness for
    \(((t, X, O), P)\). Thus the soundness condition holds for the set
    \(\calA\).
    
    Now we prove completeness. So let \(H\) be a residual subgraph with respect
    to \(t\) with \(V(H) \cap X_{t} = X\), for which there exists a partition
    \(P=\{X^{1},X^{2},\ldots X^{p}\}\) of \(X\) such that \(((t, X, O), P)\)
    completes \(H\). We need to show that the set \calA computed by the
    algorithm contains some partition \(Q\) of \(X\) such that
    \(((t, X, O), Q)\) completes \(H\). Observe that there exists a subgraph
    \(G'_{t}\) of \(G_{t}\)---a witness for \(((t,X,O), P)\)---such that the
    following hold:
    \begin{enumerate}
    \item \(X_{t} \cap V(G'_{t})=X\).
    \item \(G'_{t}\) has exactly \(p\) connected components
      \(C_{1}, C_{2}, \dotsc, C_{p}\) and for each
      \(i \in \{1, 2, \dotsc, p\}$, $X^{i} \subseteq V(C_{i})\) holds.
    \item \(v^{\star} \in V(G'_{t})\) holds, and \(V(G'_{t})\) is a vertex cover
    of graph \(G_{t}\).
    \item The set of odd-degree vertices in \(G'_{t}\) is exactly the set \(O\).
    \item The graph \(G'_{t} \cup H\) is a dominating Eulerian subgraph of
      \(G\).
    \end{enumerate}
    Since \(deg_{G_{t}}(v) = 0\) holds, we get that \(deg_{G'_{t}}(v) = 0\)
    holds as well. Thus vertex \(v\) forms a connected component by itself in
    graph \(G'_{t}\). Without loss of generality, let this component by
    \(C_{p}\). Then we get that \(X^{p} = V(C_{p}) = \{v\}\), and that
    \(P'=\{X^{1},X^{2},\ldots X^{(p - 1)}\}\) is a partition of the set
    \(X \setminus \{v\}\).

    Since \(v \in X\) and \(V(H) \cap X_{t} = X\) hold, and since the graph
    \(G'_{t} \cup H\) is Eulerian, we get that vertex \(v\) has a positive even
    degree in graph \(H\). Since \(H\) is a residual subgraph with respect to
    \(t\) we have that (i) \(V(H) \cap (V_{t} \setminus X_{t}) = \emptyset\) and
    (ii) \(E(H) \cap E_{t} = \emptyset\) hold. Since
    \(X_{t} = X_{t'} \cup \{v\}\) holds, we get that
    \(V_{t'} = V_{t} \setminus \{v\}\) and hence
    \(V_{t'} \setminus X_{t'} = V_{t} \setminus X_{t}\) holds. Hence
    \(V(H) \cap (V_{t'} \setminus X_{t'}) = \emptyset\) holds. Further, since
    \(E_{t'} \subseteq E_{t}\) holds we get that
    \(E(H) \cap E_{t'} = \emptyset\) holds as well. Thus \(H\) is a residual
    subgraph with respect to node \(t'\) which (i) contains vertex \(v\) and
    (ii) satisfies \(V(H) \cap X_{t'} = (X \setminus \{v\})\).

    Now let \(G'_{t'}\) be the graph obtained from \(G'_{t}\) by deleting vertex
    \(v\). Then \(G'_{t'}\) is a subgraph of \(G_{t'}\), and it is
    straightforward to verify that the following hold:
    \begin{enumerate}
    \item \(X_{t'} \cap V(G'_{t'}) = (X \setminus \{v\})\).
    \item \(G'_{t'}\) has exactly \(p - 1\) connected components
      \(C_{1}, C_{2}, \dotsc, C_{(p - 1)}\) and for each
      \(i \in \{1, 2, \dotsc, p - 1\}$, $X^{i} \subseteq V(C_{i})\) holds.
    \item \(v^{\star} \in V(G'_{t'})\) holds, and \(V(G'_{t'})\) is a vertex
      cover of graph \(G_{t'}\).
    \item The set of odd-degree vertices in \(G'_{t'}\) is exactly the set
      \(O\).
    \item The graph \(G'_{t'} \cup H\) is identical to the graph
      \(G'_{t} \cup H\), and hence is a dominating Eulerian subgraph of \(G\).
    \end{enumerate}
    Thus \(H\) is a residual subgraph with respect to \(t'\) with
    \(V(H) \cap X_{t'} = (X \setminus \{v\})\), and
    \(P'=\{X^{1},X^{2},\ldots X^{(p - 1)}\}\) is a partition of
    \(X \setminus \{v\}\) such that \(((t', X \setminus \{v\}, O), P')\)
    completes \(H\). From the inductive assumption we know that the set
    \(VP[t', X \setminus \{v\}, O]\) contains a partition
    \(Q'=\{Y^{1},Y^{2},\ldots Y^{q}\}\) of \(X \setminus \{v\}\) such that
    \(((t', X \setminus \{v\}, O), Q')\) completes \(H\). So there is a subgraph
    \(G''_{t'}\) of \(G_{t'}\)---a witness for
    \(((t', X \setminus \{v\}, O), Q')\)---such that the following hold:
    \begin{enumerate}
    \item \(X_{t'} \cap V(G''_{t'})=X \setminus \{v\}\).
    \item \(G''_{t'}\) has exactly \(q\) connected components
      \(D_{1}, D_{2}, \dotsc, D_{q}\) and for each
      \(i \in \{1, 2, \dotsc, q\}$, $Y^{i} \subseteq V(D_{i})\) holds.
    \item \(v^{\star} \in V(G''_{t'})\) holds, and \(V(G''_{t'})\) is a vertex
      cover of graph \(G_{t'}\).
    \item The set of odd-degree vertices in \(G''_{t'}\) is exactly the set
      \(O\).
    \item The graph \(G''_{t'} \cup H\) is a dominating Eulerian subgraph of
      \(G\).
    \end{enumerate}

    Now the algorithm adds the partition
    \(Q = Q' \cup \{\{v\}\} = =\{Y^{1},Y^{2},\ldots Y^{q}, \{v\}\}\) of set
    \(X\) to the set \calA. It is straightforward to verify that the graph
    \(\hat{G}_{t} = (V(G''_{t'}) \cup \{v\}, E(G''_{t'}))\) is a subgraph of
    graph \(G_{t}\) for which the following hold:
    \begin{enumerate}
    \item \(X_{t} \cap V(\hat{G}_{t})=X\).
    \item \(\hat{G}_{t}\) has exactly \(q+1\) connected components
      \(D_{1}, D_{2}, \dotsc, D_{q}, D_{q+1} = (\{v\}, \emptyset)\) and for each
      \(i \in \{1, 2, \dotsc, q+1\}$, $Y^{i} \subseteq V(D_{i})\) holds.
    \item \(v^{\star} \in V(\hat{G}_{t})\) holds, and \(V(\hat{G}_{t})\) is a
      vertex cover of graph \(G_{t}\).
    \item The set of odd-degree vertices in \(\hat{G}_{t}\) is exactly the set
      \(O\).
    \item The graph \(\hat{G}_{t} \cup H\) is a dominating Eulerian subgraph of
      \(G\).
    \end{enumerate}
    Thus \calA contains a partition \(Q\) of \(X\) such that \(((t, X, O), Q)\)
    completes \(H\), as was required to be shown for completeness.
  \item If \(v \notin X\) holds then the algorithm sets
    \(\calA = VP[t', X, O]\). It is straightforward to verify using
    Definitions~\ref{def:residual_subgraph},
    \ref{def:valid_partitions_witnesses_des}, and~\ref{def:completion_des} that:
    \begin{itemize}
    \item a partition \(P\) of set \(X\) is valid for the combination
      \((t, X, O)\) if and only if it is valid for the combination
      \((t', X, O)\);
    \item a subgraph of \(G_{t}\) is a witness for \(((t, X, O), P)\) if and
      only if it is (i) a subgraph of \(G_{t'}\) and (ii) a witness for
      \(((t', X, O), P)\);
    \item a graph \(H\) is a residual subgraph with respect to \(t\) with
      \(V(H) \cap X_{t} = X\) if and only if \(H\) is a residual subgraph with
      respect to \(t'\) with \(V(H) \cap X_{t'} = X\); and,
    \item for any residual subgraph \(H\) with respect to \(t\) with
      \(V(H) \cap X_{t} = X\) and any partition \(P\) of \(X\),
      \(((t, X, O), P)\) completes \(H\) if and only if \(((t', X, O), P)\)
      completes \(H\).
    \end{itemize}
    By the inductive assumption we have that the set \(VP[t', X, O]\) of
    partitions is sound and complete for the combination \((t', X, O)\). It
    follows from the above equivalences that the set \(\calA = VP[t', X, O]\) is
    sound and complete for the combination \((t, X, O)\). \qedhere
  \end{enumerate}
\end{proof}

\begin{lemma}\label{lem:introduce_edge_node_ok_des}
  Let \(t\) be an introduce edge node of the tree decomposition \TT and let
  \(X \subseteq X_{t}, O \subseteq X\) be arbitrary subsets of \(X_{t}, X\)
  respectively. The collection \calA of partitions computed by the DP for the
  combination \((t, X, O)\) satisfies the correctness criteria.
\end{lemma}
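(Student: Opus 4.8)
The plan is to follow the template of the introduce-edge lemma for \ESS (\autoref{lem:introduce_edge_node_ok}) essentially verbatim, replacing ``Eulerian Steiner subgraph of \(G\) for \(K\)'' by ``dominating Eulerian subgraph of \(G\)'' and the terminal-containment condition of a witness by the vertex-cover condition~(3) of \autoref{def:valid_partitions_witnesses_des}. The one genuinely new ingredient is the empty-table branch \(\{u,v\} \cap X = \emptyset\) of the algorithm, which has no counterpart in the \ESS proof and which I would dispatch first. Throughout I would let \(t'\) be the child, \(uv\) the introduced edge, and record that \(X_{t} = X_{t'}\), \(V_{t} = V_{t'}\), and \(uv \in E(G_{t}) \setminus E(G_{t'})\), so that both \(u\) and \(v\) lie in the bag \(X_{t}\).

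First I would treat the case \(\{u,v\} \cap X = \emptyset\), where the algorithm sets \(\calA = \emptyset\). Soundness is vacuous. For completeness I would observe that, since \(u,v \in X_{t}\) and \(X = X_{t} \cap V(G'_{t})\) holds for every candidate witness \(G'_{t}\), neither endpoint of \(uv\) can lie in \(V(G'_{t})\); hence \(V(G'_{t})\) fails to cover the edge \(uv \in E(G_{t})\), so condition~(3) of \autoref{def:valid_partitions_witnesses_des} can never hold and there is no valid partition for \((t, X, O)\)---making the completeness premise unsatisfiable. Next, when exactly one of \(u, v\) lies in \(X\) the algorithm leaves \(\calA = VP[t', X, O]\). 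Here the missing endpoint is absent from \(V(G'_{t})\), so \(uv\) cannot appear in any witness, while the present endpoint already covers \(uv\); consequently the families of valid partitions, witnesses, residual subgraphs, and completions at \(t\) and at \(t'\) coincide (as they do in the corresponding \ESS case), and both criteria transfer from the inductive hypothesis.

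The substantive case is \(\{u,v\} \subseteq X\), which I would argue exactly as in \autoref{lem:introduce_edge_node_ok}. For soundness, I would take a candidate \(P'\) from the appropriate child table \(VP[t', X, O']\)---where \(O'\) is \(O\) with the \(uv\)-parities pre-flipped, namely \(O \setminus \{u,v\}\), \((O \setminus \{u\}) \cup \{v\}\), and so on---lift its witness \(G''_{t'}\) to \(G'_{t} = (V(G''_{t'}), E(G''_{t'}) \cup \{uv\})\), and check the four conditions: adding \(uv\) flips exactly the degrees of \(u\) and \(v\) so that the odd-degree set becomes \(O\); the vertex cover of \(G_{t}\) is preserved because \(u, v \in X \subseteq V(G'_{t})\) and \(E(G_{t}) = E(G_{t'}) \cup \{uv\}\); and the merge of the blocks of \(u\) and \(v\) records the new connection, giving the stored partition \(P\). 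For completeness I would start from a witness \(G'_{t}\) for \(((t,X,O),P)\) completing \(H\) and split on whether \(uv \in E(G'_{t})\): if not, \(G'_{t}\) is already a witness at \(t'\) and the inductive hypothesis supplies the required \(Q \in VP[t', X, O] \subseteq \calA\); if so, I would move \(uv\) from \(G'_{t}\) into \(H\), forming \(H' = (V(H), E(H) \cup \{uv\})\) and \(G'_{t'} = (V(G'_{t}), E(G'_{t}) \setminus \{uv\})\) whose odd-degree set is \(O\) with the \(u,v\) parities flipped, apply the inductive hypothesis at \(t'\) to obtain \(Q'\), and then re-add \(uv\) (merging the blocks containing \(u\) and \(v\) when they differ) to produce the \(Q\) that the algorithm stores. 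Connectivity and the dominating property at each step I would certify using \autoref{lem:partition_join_connectivity} and \autoref{obs:residual_completion_vertexcover_des}.

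I expect the main obstacle to be the completion direction of this last case: carefully tracking how deleting and re-inserting the single edge \(uv\) shifts the odd-degree set between \(O\) and its parity-flipped variants, while simultaneously verifying---via \autoref{obs:residual_completion_vertexcover_des}---that the vertex-cover property of the reconstructed witness-plus-residual graph survives the surgery. This is precisely the point where the \DES argument departs from the \ESS one, since covering (rather than merely containing terminals) must be re-established for \(uv\) and for all of \(E(G) \setminus E(G_{t})\).
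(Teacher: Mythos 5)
Your proposal is correct and follows essentially the same route as the paper's proof: the new branch \(\{u,v\} \cap X = \emptyset\) is dispatched exactly as in the paper (conditions (1) and (3) of \autoref{def:valid_partitions_witnesses_des} cannot hold simultaneously since the edge \(uv \in E(G_{t})\) would be uncovered), and the remaining cases reuse the \ESS edge-surgery argument with the vertex-cover condition in place of terminal containment. The only remark is that the final "obstacle" you anticipate is lighter than you fear: since \(G'_{t'} \cup H' = G'_{t} \cup H\) as graphs when the edge \(uv\) is merely moved between the two sides, the dominating Eulerian property is preserved for free, which is how the paper handles it.
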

\begin{proof}
  Let \(t'\) be the child node of \(t\), and let \(uv\) be the edge introduced
  at \(t\). Then \(X_{t} = X_{t'}\), \(V_{t} = V_{t'}\) and
  \(uv \in (E(G_{t}) \setminus E(G_{t'}))\). If \(\{u, v\} \cap X = \emptyset\)
  holds then the algorithm sets \(\calA = \emptyset\). The soundness criterion
  holds vacuously. Since edge \(uv\) is present in graph \(G_{t}\) there can
  exist no subgraph \(G_{t'}\) of \(G_{t}\) for which the first and third
  conditions of the definition of a valid
  partition---\autoref{def:valid_partitions_witnesses_des}---hold
  simultaneously. Thus there is no partition which is valid for the combination
  \((t, X, O)\). Hence the completeness condition holds vacuously as well.

  If \(\{u, v\} \cap X \neq \emptyset\) holds then the algorithm initializes
  \(\calA = VP[t', X, O]\). By the inductive assumption we have that every
  partition \(P' \in \calA = VP[t', X, O]\) is valid for the combination
  \((t', X, O)\). Note that while edge \(uv\) is \emph{available} for use in
  constructing a witness for \(((t, X, O), P)\), it is not \emph{mandatory} to
  use this edge in any such witness. Applying this observation, it is
  straightforward to verify that if a subgraph \(G_{t'}'\) of \(G_{t'}\) is a
  witness for \(((t', X, O), P')\) then it is also (i) a subgraph of \(G_{t}\),
  and (ii) a witness for \(((t, X, O), P')\). Thus all partitions in
  \(VP[t', X, O]\) are valid for the combination \((t, X, O)\).

  The algorithm adds zero or more partitions to \(\calA\) depending on how the
  set \(\{u, v\}\) intersects the sets \(X\) and \(O\). We analyze each choice
  made by the algorithm:
  \begin{enumerate}
  \item If \(u \notin X\) or \(v \notin X\) holds then the algorithm does not
    make further changes to \(\calA\): it sets \(\calA = VP[t', X, O]\). Since
    (i) the criteria for
    validity---\autoref{def:valid_partitions_witnesses_des}---are based only on
    graphs whose intersection with \(X_{t}\) is exactly the set \(X\), and (ii)
    the new edge \(uv\) does not have both end points in this set, it is
    intuitively clear that the relevant set of valid partitions should not
    change in this case. Formally, it is straightforward to verify using
    Definitions~\ref{def:residual_subgraph},
    \ref{def:valid_partitions_witnesses_des}, and~\ref{def:completion_des} that:
    \begin{itemize}
    \item a partition \(P\) of set \(X\) is valid for the combination
      \((t, X, O)\) if and only if it is valid for the combination
      \((t', X, O)\);
    \item a subgraph of \(G_{t}\) is a witness for \(((t, X, O), P)\) if and
      only if it is (i) a subgraph of \(G_{t'}\) and (ii) a witness for
      \(((t', X, O), P)\);
    \item a graph \(H\) is a residual subgraph with respect to \(t\) with
      \(V(H) \cap X_{t} = X\) if and only if \(H\) is a residual subgraph with
      respect to \(t'\) with \(V(H) \cap X_{t'} = X\); and,
    \item for any residual subgraph \(H\) with respect to \(t\) with
      \(V(H) \cap X_{t} = X\) and any partition \(P\) of \(X\),
      \(((t, X, O), P)\) completes \(H\) if and only if \(((t', X, O), P)\)
      completes \(H\).
    \end{itemize}
    By the inductive assumption we have that the set \(VP[t', X, O]\) of
    partitions is sound and complete for the combination \((t', X, O)\). It
    follows from the above equivalences that the set \(\calA = VP[t', X, O]\) is
    sound and complete for the combination \((t, X, O)\).
  \item If \(\{u,v\} \subseteq O\) then for each partition
    \(P' \in VP[t', X, O \setminus \{u,v\}]\),
    \begin{itemize}
    \item If vertices \(u,v\) are in the same block of \(P'\) then the algorithm
      adds \(P = P'\) to the set \calA.
    \item If vertices \(u,v\) are in different blocks of \(P'\) then the
      algorithm merges these two blocks of \(P'\) and adds the resulting
      partition \(P\)---with one fewer block than \(P'\)---to the set \calA.
    \end{itemize}
    In either case, by the inductive assumption we have that partition \(P'\) is
    valid for the combination \((t', X, O \setminus \{u,v\})\). Let \(G_{t'}''\)
    be (i) a subgraph of \(G_{t'}\) and (ii) a witness for
    \(((t', X, O \setminus \{u,v\}), P')\), and let
    \(G_{t}' = (V(G_{t'}''), E(G_{t'}'') \cup \{uv\})\) be the graph obtained
    from \(G_{t'}''\) by adding the edge \(uv\). Then \(G_{t}'\) is a subgraph
    of \(G_{t}\). Vertices \(u,v\) have even degrees in \(G_{t'}''\), and hence
    they have odd degrees in \(G_{t}'\). It is straightforward to verify that
    \(G_{t}'\) is a witness for \(((t, X, O), P)\). Thus the addition of
    partition \(P\) to \calA preserves the soundness of \calA.

    Now we prove completeness. So let \(H\) be a residual subgraph with respect
    to \(t\) with \(V(H) \cap X_{t} = X\), for which there exists a partition
    \(P=\{X^{1},X^{2},\ldots X^{p}\}\) of \(X\) such that \(((t, X, O), P)\)
    completes \(H\). We need to show that the set \calA computed by the
    algorithm contains some partition \(Q\) of \(X\) such that
    \(((t, X, O), Q)\) completes \(H\). Observe that there exists a subgraph
    \(G'_{t}\) of \(G_{t}\)---a witness for \(((t,X,O), P)\)---such that the
    following hold:
    \begin{enumerate}
    \item \(X_{t} \cap V(G'_{t})=X\).
    \item \(G'_{t}\) has exactly \(p\) connected components
      \(C_{1}, C_{2}, \dotsc, C_{p}\) and for each
      \(i \in \{1, 2, \dotsc, p\}$, $X^{i} \subseteq V(C_{i})\) holds.
    \item \(v^{\star} \in V(G'_{t})\) holds, and \(V(G'_{t})\) is a vertex cover
    of graph \(G_{t}\).
    \item The set of odd-degree vertices in \(G'_{t}\) is exactly the set
      \(O\).
    \item The graph \(G'_{t} \cup H\) is a dominating Eulerian subgraph of
      \(G\).
    \end{enumerate}
    Note that by the definition of a residual subgraph, graph \(H\) (i) does
    \emph{not} contain edge \(uv\), and (ii) is a residual subgraph with respect
    to node \(t'\) as well. We consider two cases.
    \begin{itemize}
    \item Suppose edge \(uv\) is not present in graph \(G'_{t}\). Then it is
      straightforward to verify that \(G'_{t}\) is a witness for
      \(((\mathbf{t'},X,O), P)\) as well. By the inductive hypothesis there
      exists some partition \(Q\) of \(X\) in the set \(VP[t', X, O]\) such that
      \(((t', X, O), Q)\) completes \(H\). So there exists a subgraph
      \(G'_{t'}\) of \(G_{t'}\) which is a certificate for \(((t',X,O), Q)\)
      completing \(H\). It is straightforward to verify that \(G'_{t'}\) is a
      certificate for \(((t,X,O), Q)\) completing \(H\) as well. The algorithm
      adds partition \(Q\) to the set \calA during the initialization, so the
      completeness criterion is satisfied in this case.
    \item Suppose edge \(uv\) \emph{is} present in graph \(G'_{t}\). Let
      \(H' = (V(H), (E(H) \cup \{uv\}))\) be the graph obtained by adding edge
      \(uv\) to graph \(H\), and let
      \(G'_{t'} = (V(G'_{t}), (E(G'_{t}) \setminus \{uv\}))\) be the graph
      obtained by deleting edge \(uv\) from graph \(G'_{t}\). Then it is
      straightforward to verify that (i) the set of odd-degree vertices in
      \(G'_{t'}\) is exactly the set \(O \setminus \{u,v\}\), (ii) \(H'\) is a
      residual subgraph for node \(t'\), and (iii) \(G'_{t'}\) is a subgraph of
      \(G_{t'}\) such that the graph \(G'_{t'} \cup H' = G'_{t} \cup H\) is a
      dominating Eulerian subgraph of \(G\). Let \(P'\) be the partition of
      \(X\) defined by graph \(G'_{t'}\). Then \(G'_{t'}\) is a witness for
      \(((t', X, O \setminus \{u,v\}), P')\) such that the union of \(G'_{t'}\)
      and the residual subgraph \(H'\) of \(t'\) is a dominating Eulerian
      subgraph of \(G\). That is, \(((t', X, O \setminus \{u,v\}), P')\)
      completes \(H'\). So by the inductive assumption there exists some
      partition \(Q'\) of \(X\) in the set \(VP[t', X, O \setminus \{u,v\}]\)
      such that \(((t, X, O \setminus \{u,v\}), Q')\) completes \(H'\). So there
      exists a subgraph \(\hat{G}'\) of \(G_{t'}\) such that (i) \(\hat{G}'\) is
      a witness for \(((t, X, O \setminus \{u,v\}), Q')\) and (ii)
      \(\hat{G}' \cup H'\) is a dominating Eulerian subgraph of \(G\).

      Note that \(Q'\) is the partition of set \(X\) defined by the graph
      \(\hat{G}'\). Suppose both \(u\) and \(v\) are in the same block of
      partition \(Q'\). Then adding the edge \(uv\) to \(\hat{G}'\) does not
      change the partition of \(X\) defined by \(\hat{G}'\). It follows that the
      graph \(\hat{G} = (V(\hat{G}'), E(\hat{G}') \cup \{uv\})\) is a subgraph
      of \(G_{t}\) such that (i) \(\hat{G}\) is a witness for
      \(((t, X, O , Q')\) and (ii) \(\hat{G} \cup H\) is a dominating Eulerian
      subgraph of \(G\). Thus \(((t, X, O, Q')\) completes the residual subgraph
      \(H\). Now notice that our algorithm adds the partition \(Q'\) to the set
      \calA. Thus the completeness criterion holds in this case.

      In the remaining case, vertices \(u\) and \(v\) are in distinct blocks of
      partition \(Q'\). Let \(Q\) be the partition obtained from \(Q'\) by
      merging together the two blocks to which vertices \(u\) and \(v\) belong,
      respectively, and leaving the other blocks as they are. Let \(\hat{G}\) be
      defined as in the previous paragraph. Then the partition of \(X\) defined
      by \(\hat{G}\) is \(Q\). It follows that \(\hat{G}\) is a subgraph of
      \(G_{t}\) such that (i) \(\hat{G}\) is a witness for \(((t, X, O , Q)\)
      and (ii) \(\hat{G} \cup H\) is a dominating Eulerian subgraph of \(G\).
      Thus \(((t, X, O, Q)\) completes the residual subgraph \(H\). Now notice
      that our algorithm adds the partition \(Q\) to the set \calA. Thus the
      completeness criterion holds in this case as well.
    \end{itemize}

  \item If \(\{u,v\} \cap O = \{u\}\) then for each partition
    \(P' \in VP[t', X, (O \setminus \{u\}) \cup \{v\}]\),
    \begin{itemize}
    \item If vertices \(u,v\) are in the same block of \(P'\) then the
      algorithm adds \(P = P'\) to the set \calA.
    \item If vertices \(u,v\) are in different blocks of \(P'\) then the
      algorithm merges these two blocks of \(P'\) and adds the resulting
      partition \(P\)---with one fewer block than \(P'\)---to the set \calA.
    \end{itemize}
    In either case, by the inductive assumption we have that partition \(P'\)
    is valid for the combination \((t', X, (O \setminus \{u\}) \cup \{v\})\).
    Let \(G_{t'}''\) be (i) a subgraph of \(G_{t'}\) and (ii) a witness for
    \(((t', X, (O \setminus \{u\}) \cup \{v\}), P')\), and let
    \(G_{t}' = (V(G_{t'}''), E(G_{t'}'') \cup \{uv\})\) be the graph obtained
    from \(G_{t'}''\) by adding the edge \(uv\). Then \(G_{t}'\) is a subgraph
    of \(G_{t}\). In \(G_{t'}''\) the degree of vertex \(u\) is even, and the
    degree of vertex \(v\) is odd. So in \(G_{t}'\) vertex \(u\) has an odd
    degree, and vertex \(v\) has an even degree. It is straightforward to
    verify that \(G_{t}'\) is a witness for \(((t, X, O), P)\). Thus the
    addition of partition \(P\) to \calA preserves the soundness of \calA.

    Now we prove completeness. So let \(H\) be a residual subgraph with
    respect to \(t\) with \(V(H) \cap X_{t} = X\), for which there exists a
    partition \(P=\{X^{1},X^{2},\ldots X^{p}\}\) of \(X\) such that
    \(((t, X, O), P)\) completes \(H\).  We
    need to show that the set \calA computed by the algorithm contains some
    partition \(Q\) of \(X\) such that \(((t, X, O), Q)\) completes \(H\).
    Observe that there exists a subgraph \(G'_{t}\) of \(G_{t}\)---a witness
    for \(((t,X,O), P)\)---such that the following hold:
    \begin{enumerate}
    \item \(X_{t} \cap V(G'_{t})=X\).
    \item \(G'_{t}\) has exactly \(p\) connected components
      \(C_{1}, C_{2}, \dotsc, C_{p}\) and for each
      \(i \in \{1, 2, \dotsc, p\}$, $X^{i} \subseteq V(C_{i})\) holds.
    \item \(v^{\star} \in V(G'_{t})\) holds, and \(V(G'_{t})\) is a vertex cover
    of graph \(G_{t}\).
    \item The set of odd-degree vertices in \(G'_{t}\) is exactly the set
      \(O\).
    \item The graph \(G'_{t} \cup H\) is a dominating Eulerian subgraph of
      \(G\).
    \end{enumerate}
    Note that by the definition of a residual subgraph, graph \(H\) (i) does
    \emph{not} contain edge \(uv\), and (ii) is a residual subgraph with
    respect to node \(t'\) as well. We consider two cases.
    \begin{itemize}
    \item Suppose edge \(uv\) is not present in graph \(G'_{t}\). Then it is
      straightforward to verify that \(G'_{t}\) is a witness for
      \(((\mathbf{t'},X,O), P)\) as well. By the inductive hypothesis there
      exists some partition \(Q\) of \(X\) in the set \(VP[t', X, O]\) such
      that \(((t, X, O), Q)\) completes \(H\). This same partition \(Q\) is
      present in the set \calA as well.
    \item Suppose edge \(uv\) \emph{is} present in graph \(G'_{t}\). Let
      \(H' = (V(H), (E(H) \cup \{uv\}))\) be the graph obtained by adding edge
      \(uv\) to graph \(H\), and let
      \(G'_{t'} = (V(G'_{t}), (E(G'_{t}) \setminus \{uv\}))\) be the graph
      obtained by deleting edge \(uv\) from graph \(G'_{t}\). Then it is
      straightforward to verify that (i) the set of odd-degree vertices in
      \(G'_{t'}\) is exactly the set \((O \setminus \{u\}) \cup \{v\}\), (ii)
      \(H'\) is a residual subgraph for node \(t'\), and (iii) \(G'_{t'}\) is a
      subgraph of \(G_{t'}\) such that the graph
      \(G'_{t'} \cup H' = G'_{t} \cup H\) is a dominating Eulerian subgraph of
      \(G\). Let \(P'\) be the partition of \(X\) defined by graph \(G'_{t'}\).
      Then \(G'_{t'}\) is a witness for
      \(((t', X, (O \setminus \{u\}) \cup \{v\}), P')\) such that the union of
      \(G'_{t'}\) and the residual subgraph \(H'\) of \(t'\) is a dominating
      Eulerian subgraph of \(G\). That is,
      \(((t', X, (O \setminus \{u\}) \cup \{v\}), P')\) completes \(H'\). So by
      the inductive assumption there exists some partition \(Q'\) of \(X\) in
      the set \(VP[t', X, (O \setminus \{u\}) \cup \{v\}]\) such that
      \(((t, X, (O \setminus \{u\}) \cup \{v\}), Q')\) completes \(H'\). So
      there exists a subgraph \(\hat{G}'\) of \(G_{t'}\) such that (i)
      \(\hat{G}'\) is a witness for
      \(((t, X, (O \setminus \{u\}) \cup \{v\}), Q')\) and (ii)
      \(\hat{G}' \cup H'\) is a dominating Eulerian subgraph of \(G\).

      Note that \(Q'\) is the partition of set \(X\) defined by the graph
      \(\hat{G}'\), and that the set of odd-degree vertices in \(\hat{G}'\) is
      exactly the set \((O \setminus \{u\}) \cup \{v\}\). Suppose both \(u\) and
      \(v\) are in the same block of partition \(Q'\). Then adding the edge
      \(uv\) to \(\hat{G}'\) (i) does not change the partition of \(X\) defined
      by \(\hat{G}'\), and (ii) \emph{does} change the set of odd-degree
      vertices to \(O\). It follows that the graph
      \(\hat{G} = (V(\hat{G}'), E(\hat{G}') \cup \{uv\})\) is a subgraph of
      \(G_{t}\) such that (i) \(\hat{G}\) is a witness for \(((t, X, O , Q')\)
      and (ii) \(\hat{G} \cup H\) is a dominating Eulerian subgraph of \(G\).
      Thus \(((t, X, O, Q')\) completes the residual subgraph \(H\). Now notice
      that our algorithm adds the partition \(Q'\) to the set \calA. Thus the
      completeness criterion holds in this case.

      In the remaining case, vertices \(u\) and \(v\) are in distinct blocks of
      partition \(Q'\). Let \(Q\) be the partition obtained from \(Q'\) by
      merging together the two blocks to which vertices \(u\) and \(v\) belong,
      respectively, and leaving the other blocks as they are. Let \(\hat{G}\) be
      defined as in the previous paragraph. Then the partition of \(X\) defined
      by \(\hat{G}\) is \(Q\). It follows that \(\hat{G}\) is a subgraph of
      \(G_{t}\) such that (i) \(\hat{G}\) is a witness for \(((t, X, O , Q)\)
      and (ii) \(\hat{G} \cup H\) is a dominating Eulerian subgraph of \(G\).
      Thus \(((t, X, O, Q)\) completes the residual subgraph \(H\). Now notice
      that our algorithm adds the partition \(Q\) to the set \calA. Thus the
      completeness criterion holds in this case as well.
    \end{itemize}
  \item The case when \(\{u,v\} \cap O = \{v\}\) is symmetrical to the previous
    case, so we leave out the arguments for this case.
  \item If \(\{u,v\} \cap O = \emptyset\) then for each partition
    \(P' \in VP[t', X, O \cup \{u, v\}]\),
    \begin{itemize}
    \item If vertices \(u,v\) are in the same block of \(P'\) then the algorithm
      adds \(P = P'\) to the set \calA.
    \item If vertices \(u,v\) are in different blocks of \(P'\) then the
      algorithm merges these two blocks of \(P'\) and adds the resulting
      partition \(P\)---with one fewer block than \(P'\)---to the set \calA.
    \end{itemize}
    In either case, by the inductive assumption we have that partition \(P'\) is
    valid for the combination \((t', X, O \cup \{u,v\})\). Let \(G_{t'}''\) be
    (i) a subgraph of \(G_{t'}\) and (ii) a witness for
    \(((t', X, O \cup \{u, v\}), P')\), and let
    \(G_{t}' = (V(G_{t'}''), E(G_{t'}'') \cup \{uv\})\) be the graph obtained
    from \(G_{t'}''\) by adding the edge \(uv\). Then \(G_{t}'\) is a subgraph
    of \(G_{t}\). Vertices \(u,v\) have odd degrees in \(G_{t'}''\), and hence
    they have even degrees in \(G_{t}'\). It is straightforward to verify that
    \(G_{t}'\) is a witness for \(((t, X, O), P)\). Thus the addition of
    partition \(P\) to \calA preserves the soundness of \calA.

    Now we prove completeness. So let \(H\) be a residual subgraph with respect
    to \(t\) with \(V(H) \cap X_{t} = X\), for which there exists a partition
    \(P=\{X^{1},X^{2},\ldots X^{p}\}\) of \(X\) such that \(((t, X, O), P)\)
    completes \(H\). We need to show that the set \calA computed by the
    algorithm contains some partition \(Q\) of \(X\) such that
    \(((t, X, O), Q)\) completes \(H\). Observe that there exists a subgraph
    \(G'_{t}\) of \(G_{t}\)---a witness for \(((t,X,O), P)\)---such that the
    following hold:
    \begin{enumerate}
    \item \(X_{t} \cap V(G'_{t})=X\).
    \item \(G'_{t}\) has exactly \(p\) connected components
      \(C_{1}, C_{2}, \dotsc, C_{p}\) and for each
      \(i \in \{1, 2, \dotsc, p\}$, $X^{i} \subseteq V(C_{i})\) holds.
    \item \(v^{\star} \in V(G'_{t})\) holds, and \(V(G'_{t})\) is a vertex cover
    of graph \(G_{t}\).
    \item The set of odd-degree vertices in \(G'_{t}\) is exactly the set
      \(O\).
    \item The graph \(G'_{t} \cup H\) is a dominating Eulerian subgraph of
      \(G\).
    \end{enumerate}
    Note that by the definition of a residual subgraph, graph \(H\) (i) does
    \emph{not} contain edge \(uv\), and (ii) is a residual subgraph with
    respect to node \(t'\) as well. We consider two cases.
    \begin{itemize}
    \item Suppose edge \(uv\) is not present in graph \(G'_{t}\). Then it is
      straightforward to verify that \(G'_{t}\) is a witness for
      \(((\mathbf{t'},X,O), P)\) as well. By the inductive hypothesis there
      exists some partition \(Q\) of \(X\) in the set \(VP[t', X, O]\) such
      that \(((t, X, O), Q)\) completes \(H\). This same partition \(Q\) is
      present in the set \calA as well.
    \item Suppose edge \(uv\) \emph{is} present in graph \(G'_{t}\). Let
      \(H' = (V(H), (E(H) \cup \{uv\}))\) be the graph obtained by adding edge
      \(uv\) to graph \(H\), and let
      \(G'_{t'} = (V(G'_{t}), (E(G'_{t}) \setminus \{uv\}))\) be the graph
      obtained by deleting edge \(uv\) from graph \(G'_{t}\). Then it is
      straightforward to verify that (i) the set of odd-degree vertices in
      \(G'_{t'}\) is exactly the set \(O \setminus \{u, v\}\), (ii) \(H'\) is a
      residual subgraph for node \(t'\), and (iii) \(G'_{t'}\) is a subgraph of
      \(G_{t'}\) such that the graph \(G'_{t'} \cup H' = G'_{t} \cup H\) is a
      dominating Eulerian subgraph of \(G\). Let \(P'\) be the partition of
      \(X\) defined by graph \(G'_{t'}\). Then \(G'_{t'}\) is a witness for
      \(((t', X, O \setminus \{u, v\}), P')\) such that the union of \(G'_{t'}\)
      and the residual subgraph \(H'\) of \(t'\) is a dominating Eulerian
      subgraph of \(G\). That is, \(((t', X, O \setminus \{u, v\}), P')\)
      completes \(H'\). So by the inductive assumption there exists some
      partition \(Q'\) of \(X\) in the set \(VP[t', X, O \setminus \{u, v\}]\)
      such that \(((t, X, O \setminus \{u, v\}), Q')\) completes \(H'\). So
      there exists a subgraph \(\hat{G}'\) of \(G_{t'}\) such that (i)
      \(\hat{G}'\) is a witness for \(((t, X, O \setminus \{u, v\}), Q')\) and
      (ii) \(\hat{G}' \cup H'\) is a dominating Eulerian subgraph of \(G\).

      Note that \(Q'\) is the partition of set \(X\) defined by the graph
      \(\hat{G}'\), and that the set of odd-degree vertices in \(\hat{G}'\) is
      exactly the set \(O \setminus \{u, v\}\). Suppose both \(u\) and \(v\) are
      in the same block of partition \(Q'\). Then adding the edge \(uv\) to
      \(\hat{G}'\) (i) does not change the partition of \(X\) defined by
      \(\hat{G}'\), and (ii) \emph{does} change the set of odd-degree vertices
      to \(O\). It follows that the graph
      \(\hat{G} = (V(\hat{G}'), E(\hat{G}') \cup \{uv\})\) is a subgraph of
      \(G_{t}\) such that (i) \(\hat{G}\) is a witness for \(((t, X, O , Q')\)
      and (ii) \(\hat{G} \cup H\) is a dominating Eulerian subgraph of \(G\).
      Thus \(((t, X, O, Q')\) completes the residual subgraph \(H\). Now notice
      that our algorithm adds the partition \(Q'\) to the set \calA. Thus the
      completeness criterion holds in this case.

      In the remaining case, vertices \(u\) and \(v\) are in distinct blocks of
      partition \(Q'\). Let \(Q\) be the partition obtained from \(Q'\) by
      merging together the two blocks to which vertices \(u\) and \(v\) belong,
      respectively, and leaving the other blocks as they are. Let \(\hat{G}\) be
      defined as in the previous paragraph. Then the partition of \(X\) defined
      by \(\hat{G}\) is \(Q\). It follows that \(\hat{G}\) is a subgraph of
      \(G_{t}\) such that (i) \(\hat{G}\) is a witness for \(((t, X, O , Q)\)
      and (ii) \(\hat{G} \cup H\) is a dominating Eulerian subgraph of \(G\).
      Thus \(((t, X, O, Q)\) completes the residual subgraph \(H\). Now notice
      that our algorithm adds the partition \(Q\) to the set \calA. Thus the
      completeness criterion holds in this case as well. \qedhere
    \end{itemize}
  \end{enumerate}
\end{proof}

\begin{lemma}\label{lem:forget_node_ok_des}
  Let \(t\) be a forget node of the tree decomposition \TT and let
  \(X \subseteq X_{t}, O \subseteq X\) be arbitrary subsets of \(X_{t}, X\)
  respectively. The collection \calA of partitions computed by the DP for the
  combination \((t, X, O)\) satisfies the correctness criteria.
\end{lemma}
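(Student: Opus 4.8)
The plan is to mirror the forget-node argument already carried out for \ESS (\autoref{lem:forget_node_ok}), transplanting it to the \DES setting where the terminal-containment requirement is replaced by condition (3) of \autoref{def:valid_partitions_witnesses_des}: that \(V(G'_{t})\) be a vertex cover of \(G_{t}\) containing \(v^{\star}\). Let \(t'\) be the child of \(t\) and \(v\) the forgotten vertex, so \(X_{t} = X_{t'} \setminus \{v\}\), \(G_{t} = G_{t'}\), and---by the nice-decomposition structure---\emph{every} edge of \(G\) incident with \(v\) already lies in \(E_{t}\). The genuine departure from \ESS is that there are no terminals to forget (the sole special vertex \(v^{\star}\) sits in every bag and is never forgotten), so the DP unconditionally forms \(\calA\) as the union of two families: the ``\(v\)-in-the-solution'' family \(\{P' - v : P' \in VP[t', X \cup \{v\}, O],\ |P'(v)| > 1\}\) and the ``\(v\)-out-of-the-solution'' family \(VP[t', X, O]\). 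I would establish both correctness criteria for this \(\calA\) and then invoke \autoref{obs:repset_computation_preserves_correctness_des} to conclude that taking a representative subset preserves them.

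For soundness I would argue that each partition placed in \(\calA\) is valid for \((t, X, O)\). A witness \(G'_{t'}\) for \(((t', X \cup \{v\}, O), P')\) with \(|P'(v)| > 1\) is simultaneously a witness for \(((t, X, O), P' - v)\): conditions (1), (2) and (4) transfer exactly as in \ESS, while condition (3) is immediate because \(G_{t} = G_{t'}\), so a vertex cover of \(G_{t'}\) containing \(v^{\star}\) is also one of \(G_{t}\); the hypothesis \(|P'(v)| > 1\) guarantees that eliding \(v\) from its block does not drop a component below its one required \(X\)-vertex. Likewise any witness for \(((t', X, O), P')\) is verbatim a witness for \(((t, X, O), P')\). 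Hence every partition in \(\calA\) is valid.

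For completeness, suppose \(H\) is residual with respect to \(t\), \(V(H) \cap X_{t} = X\), and \(((t, X, O), P)\) completes \(H\) via a certificate \(G'_{t}\). I would split on whether \(v \in V(G'_{t})\). If \(v \notin V(G'_{t})\), then \(H\) is also residual with respect to \(t'\) with \(V(H) \cap X_{t'} = X\), and \(G'_{t}\) witnesses \(((t', X, O), P)\) completing \(H\); the inductive hypothesis yields some \(Q \in VP[t', X, O] \subseteq \calA\) completing \(H\). If \(v \in V(G'_{t})\), then since \(v \in Y_{t}\) it satisfies \(v \notin V(H)\), so \(\deg_{G'_{t}}(v) = \deg_{G'_{t} \cup H}(v)\) is even and \(v \notin O\); I would lift to \(t'\) by setting \(H' = (V(H) \cup \{v\}, E(H))\), which is residual with respect to \(t'\) with \(V(H') \cap X_{t'} = X \cup \{v\}\), and note that \(G'_{t}\) witnesses validity of its own partition \(P'\) of \(X \cup \{v\}\) and that \(G'_{t} \cup H' = G'_{t} \cup H\) is dominating Eulerian, so \(((t', X \cup \{v\}, O), P')\) completes \(H'\). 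The inductive hypothesis then supplies \(Q' \in VP[t', X \cup \{v\}, O]\) completing \(H'\) with certificate \(\hat{G}'\). It remains to show \(|Q'(v)| > 1\), so that \(Q = Q' - v\) is exactly the partition placed in \(\calA\) by the first family: since \(\deg_{H'}(v) = 0\), the block of \(v\) in the partition of \(X \cup \{v\}\) defined by \(H'\) is the singleton \(\{v\}\), and were \(\{v\}\) also a block of \(Q'\) it would survive in the join, contradicting the connectivity of \(\hat{G}' \cup H'\) by \autoref{lem:partition_join_connectivity}. Thus \(v\) shares its \(Q'\)-block with a vertex of \(X\), so \(Q' - v\) has the same block count, equals the partition of \(X\) defined by \(\hat{G}'\), and \(\hat{G}'\) witnesses \(((t, X, O), Q' - v)\) with \(\hat{G}' \cup H = \hat{G}' \cup H'\) dominating Eulerian; hence \(((t, X, O), Q' - v)\) completes \(H\).

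The main obstacle---and the only place this proof is not a line-by-line translation of \ESS---is the vertex-cover bookkeeping in the \(v \notin V(G'_{t})\) branch: one must confirm that every edge of \(G_{t}\) incident with \(v\) is still covered. This is precisely condition (3) for \((t', X, O)\), which holds by hypothesis because all such edges already lie in \(G_{t} = G_{t'}\), so no covering obligation is silently deferred to the residual part \(H\). Verifying this, together with the even-degree and connectivity deductions about the forgotten vertex \(v\), is the crux; the remaining steps are routine graph-surgery checks of Definitions~\ref{def:residual_subgraph}, \ref{def:valid_partitions_witnesses_des} and~\ref{def:completion_des}.
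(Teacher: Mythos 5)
Your proposal is correct and follows essentially the same route as the paper's own proof: unconditional union of the two families (elide-$v$ and $v$-absent), soundness by transferring witnesses across $G_{t}=G_{t'}$, and completeness by casing on $v\in V(G'_{t})$, lifting to $H'=(V(H)\cup\{v\},E(H))$, and using the join-connectivity argument to rule out $\{v\}$ being a singleton block of $Q'$. Your closing remark on the vertex-cover bookkeeping is handled implicitly in the paper (condition (3) transfers verbatim since $G_{t}=G_{t'}$), but making it explicit is a reasonable addition.
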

\begin{proof}
  Let \(t'\) be the child node of \(t\), and let \(v\) be the vertex forgotten
  at \(t\). Then \(v \in X_{t'}\) and \(X_{t} = X_{t'} \setminus \{v\}\), and
  \(v \notin O\) hold. Recall that \(P(v)\) is the block of partition \(P\)
  which contains element \(v\) and that \(P - v\) is the partition obtained by
  eliding \(v\) from \(P\). The algorithm adds all partitions in the set
  \(\{P' - v \;;\; P' \in VP[t', X \cup \{v\}, O],\,|P'(v)| > 1\}\) to \calA. By
  the inductive assumption we have that every partition
  \(P' \in VP[t', X \cup \{v\}, O]\) is valid for the combination
  \((t', X \cup \{v\}, O)\). Note that (i) the graph \(G_{t'}\) is identical to
  the graph \(G_{t}\), and (ii) for any subgraph \(H\) of \(G_{t'} = G_{t}\),
  \((V(H) \cap X_{t'}) = X \cup \{v\}\) implies \((V(H) \cap X_{t}) = X\). It
  follows that if every connected component of a graph \(H\) contains at least
  two vertices from the set \(X \cup \{v\}\) then every connected component of
  \(H\) contains at least one vertex from set \(X\). Using these observations it
  is straightforward to verify that if a subgraph \(G_{t'}'\) of \(G_{t'}\) is a
  witness for \(((t', X \cup \{v\}, O), P')\) where \(v \notin O\) and
  \(|P'(v)| > 1\) hold, then it is also (i) a subgraph of \(G_{t}\), and (ii) a
  witness for \(((t, X, O), P' - v)\). Thus for each partition
  \(P' \in VP[t', X \cup \{v\}, O],\,|P'(v)| > 1\) the partition \(P' - v\) is
  valid for the combination \((t, X, O)\). Thus all partitions in the set
  \(\{P' - v \;;\; P' \in VP[t', X \cup \{v\}, O]\}\) are valid for
  \((t, X, O)\).

  The algorithm also adds all the partitions from \(VP[t', X, O]\) to \(\calA\).
  By the inductive assumption we have that every partition
  \(P' \in VP[t', X, O]\) is valid for the combination \((t', X, O)\). It is
  once again straightforward to verify that if a subgraph \(G_{t'}'\) of
  \(G_{t'}\) is a witness for \(((t', X, O), P')\) then it is also (i) a
  subgraph of \(G_{t}\), and (ii) a witness for \(((t, X, O), P')\). Thus each
  partition \(P' \in VP[t', X, O]\) is valid for the combination \((t, X, O)\).
  Hence all partitions added to the set \calA by the algorithm are valid for
  \((t, X, O)\).

  We now argue that the set \calA satisfies the completeness criterion. So let
  \(H\) be a residual subgraph with respect to \(t\) with
  \(V(H) \cap X_{t} = X\), for which there exists a partition
  \(P=\{X^{1},X^{2},\ldots X^{p}\}\) of \(X\) such that \(((t, X, O), P)\)
  completes \(H\). We need to show that the set \calA computed by the algorithm
  contains some partition \(Q\) of \(X\) such that \(((t, X, O), Q)\) completes
  \(H\). Observe that there exists a subgraph \(G'_{t}\) of \(G_{t}\)---a
  witness for \(((t,X,O), P)\)---such that the following hold:
    \begin{enumerate}
    \item \(X_{t} \cap V(G'_{t})=X\).
    \item \(G'_{t}\) has exactly \(p\) connected components
      \(C_{1}, C_{2}, \dotsc, C_{p}\) and for each
      \(i \in \{1, 2, \dotsc, p\}$, $X^{i} \subseteq V(C_{i})\) holds.
    \item \(v^{\star} \in V(G'_{t})\) holds, and \(V(G'_{t})\) is a vertex cover
    of graph \(G_{t}\).
    \item The set of odd-degree vertices in \(G'_{t}\) is exactly the set \(O\).
    \item The graph \(G'_{t} \cup H\) is a dominating Eulerian subgraph of
      \(G\).
    \end{enumerate}

    Suppose graph \(G'_{t}\) does \emph{not} contain vertex \(v\). Then it is
    easy to verify that \(H\) is a residual subgraph with respect to \(t'\) with
    \(V(H) \cap X_{t'} = X\), and that graph \(G'_{t}\) is a witness for
    \(((t',X,O), P)\) such that the union of graphs \(H\) and \(G'_{t}\) is a
    dominating Eulerian subgraph of \(G\). That is, \(((t', X, O), P)\)
    completes \(H\). By inductive assumption there exists a partition
    \(Q \in VP[t', X, O]\) such that \(((t', X, O), Q)\) completes \(H\). Since
    the algorithm adds this partition \(Q\) to \calA we get that \calA satisfies
    the completeness criterion in this case.

    Now suppose graph \(G'_{t}\) contains vertex \(v\). From the definition of a
    residual subgraph we know that \(v \notin V(H)\) holds. Without loss of
    generality, let it be the case that \(v \in C_{p}\) holds. Since
    \(X_{t'} = X_{t} \cup \{v\}\) we get that
    \(X_{t'} \cap V(G'_{t}) = X \cup \{v\}\) holds. Let
    \(H' = (V(H) \cup \{v\}, E(H))\) be the graph obtained by adding vertex
    \(v\) (and no extra edges) to graph \(H\). Then it is straightforward to
    verify that (i) \(H'\) is a residual subgraph with respect to \(t'\) with
    \(V(H') \cap X_{t'} = X \cup \{v\}\), (ii) the graph \(G'_{t}\) is a witness
    for the partition \(P' = \{X^{1},X^{2},\ldots (X^{p} \cup \{v\}) \}\) of
    \(X \cup \{v\}\) being valid for the combination \((t', X \cup \{v\}, O)\),
    and (iii) the graph \(G'_{t} \cup H'\) is a dominating Eulerian subgraph of
    \(G\). That is, \(((t', X \cup \{v\}, O), P')\) completes \(H'\).

    By the inductive assumption there exists some partition \(Q'\) of
    \(X \cup \{v\}\) in the set \(VP[t', X \cup \{v\}, O]\}\) such that
    \(((t', X \cup \{v\}, O), Q')\) completes \(H'\). So there exists a subgraph
    \(\hat{G}'\) of \(G_{t'}\) such that (i) \(\hat{G}'\) is a witness for
    \(((t', X \cup \{v\}, O), Q')\) and (ii) \(\hat{G}' \cup H'\) is a
    dominating Eulerian subgraph of \(G\). Note that
    \(X_{t} \cap V(\hat{G}') = X\) holds.

    Since \(v\) had degree zero in graph \(H'\) we get that \(v\) has a positive
    even degree in \(\hat{G}'\). From the definition of a witness for
    validity---\autoref{def:valid_partitions_witnesses_des}---we get that \(Q'\)
    is the partition of the set \(X \cup \{v\}\) defined by the graph
    \(\hat{G}'\). Let \(Q_{H'}\) be the partition of the set \(X \cup \{v\}\)
    defined by the graph \(H'\). Since \(deg_{H'}(v) = 0\) holds we get that
    vertex \(v\) appears in a block of size one---namely, \(\{v\}\)---in
    \(Q_{H'}\). If \(\{v\}\) is a block of \(Q'\) as well, then \(\{v\}\) will
    also be a block in their join \(Q_{H'} \sqcup Q'\). But the union of graphs
    \(H'\) and \(\hat{G}'\) is connected and so from
    \autoref{lem:partition_join_connectivity} we know that
    \(Q_{H'} \sqcup Q' = \{\{X \cup \{v\}\}\}\). Thus \(\{v\}\) is \emph{not} a
    block of \(Q_{H'} \sqcup Q'\), or of \(Q'\). So there exists a vertex
    \(v' \in X \) such that \(v,v'\) are in the same block of \(Q'\). In
    particular, this implies that the partition \(Q = Q' - v\), which is the
    partition of set \(X\) defined by graph \(\hat{G}'\), has exactly as many
    blocks as has the partition \(Q'\) of \(X \cup \{v\}\).

    Putting these together we get that the subgraph \(\hat{G}'\) of \(G_{t}\) is
    a witness for \(((t, X, O), Q = Q' - v)\). Now since graph \(H\) can be
    obtained from graph \(H'\) by deleting vertex \(v\), we get that the graphs
    \(\hat{G}' \cup H'\) and \(\hat{G}' \cup H\) are identical. In particular,
    the latter is a dominating Eulerian subgraph of \(G\). Thus
    \(((t, X, O), Q)\) completes the residual graph \(H\). Since the algorithm
    adds partition \(Q\) to the set \calA, we get that \calA satisfies the
    completeness criterion.\qedhere
\end{proof}

\begin{lemma}\label{lem:join_node_ok_des}
  Let \(t\) be a join node of the tree decomposition \TT and let
  \(X \subseteq X_{t}, O \subseteq X\) be arbitrary subsets of \(X_{t}, X\)
  respectively. The collection \calA of partitions computed by the DP for the
  combination \((t, X, O)\) satisfies the correctness criteria.
\end{lemma}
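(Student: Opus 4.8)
The plan is to mirror the proof of \autoref{lem:join_node_ok} for \ESS almost verbatim, since the join-node processing for \DES is syntactically identical to that for \ESS and only the side conditions differ: terminal-containment is replaced by the requirement that the witness vertex set be a vertex cover of the corresponding subgraph (condition~3 of \autoref{def:valid_partitions_witnesses_des}). First I would record the structural facts at a join node with children \(t_{1}, t_{2}\): namely \(G_{t} = G_{t_{1}} \cup G_{t_{2}}\), \(E(G_{t_{1}}) \cap E(G_{t_{2}}) = \emptyset\), and \(V(G_{t_{1}}) \cap V(G_{t_{2}}) = X_{t}\). These let me transfer connectivity through \autoref{lem:partition_join_connectivity} and degrees through the disjointness of edge sets.

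For soundness, I would take any pair \(P_{1} \in VP[t_{1}, X, O_{1} \cup \hat{O}]\), \(P_{2} \in VP[t_{2}, X, O_{2} \cup \hat{O}]\) picked by the algorithm, with witnesses \(G'_{t_{1}}, G'_{t_{2}}\), set \(G'_{t} = G'_{t_{1}} \cup G'_{t_{2}}\), and verify the four conditions of \autoref{def:valid_partitions_witnesses_des}. Conditions 1, 2, and 4 are handled exactly as in the \ESS case: condition~2 via \autoref{lem:partition_join_connectivity} (so \(P_{1} \sqcup P_{2}\) is the partition of \(X\) defined by \(G'_{t}\)), and condition~4 via \(E(G_{t_{1}}) \cap E(G_{t_{2}}) = \emptyset\) together with \(O\) being the disjoint union \(O_{1} \cup O_{2}\) (the copies of \(\hat{O}\) cancel). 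The only genuinely \DES-specific point is condition~3: \(\vstar \in X \subseteq V(G'_{t_{1}})\) gives \(\vstar \in V(G'_{t})\), and since each edge of \(G_{t}\) lies in exactly one of \(G_{t_{1}}, G_{t_{2}}\) and \(V(G'_{t_{i}})\) is a vertex cover of \(G_{t_{i}}\), the union \(V(G'_{t}) = V(G'_{t_{1}}) \cup V(G'_{t_{2}})\) is a vertex cover of \(G_{t}\).

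For completeness, given a residual subgraph \(H\) with \(V(H) \cap X_{t} = X\) and a partition \(P\) with witness \(G'_{t}\) such that \(((t,X,O),P)\) completes \(H\), I would split \(G'_{t}\) into \(G_{1} = (V(G'_{t}) \cap V(G_{t_{1}}), E(G'_{t}) \cap E(G_{t_{1}}))\) and \(G_{2} = (V(G'_{t}) \cap V(G_{t_{2}}), E(G'_{t}) \cap E(G_{t_{2}}))\). The crucial observation is that \(V(G_{i})\) is a vertex cover of \(G_{t_{i}}\) and contains \(\vstar\): any edge of \(G_{t_{i}}\) is covered by \(V(G'_{t})\) through an endpoint that, being incident with that edge, necessarily lies in \(V(G_{t_{i}})\), hence in \(V(G_{i})\). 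Then, exactly as in the \ESS proof, I would let \(\tilde{O}_{i}\) be the odd-degree set of \(G_{i}\), set \(O_{2} = \tilde{O}_{2} \cap O\), \(\hat{O} = \tilde{O}_{2} \setminus O\), \(O_{1} = O \setminus O_{2}\), and apply the inductive hypothesis twice: first at \(t_{2}\) with residual \(R_{1} = H \cup G_{1}\) (using that \(G_{2}\) completes \(R_{1}\), as \(G_{2} \cup R_{1} = G'_{t} \cup H\)) to obtain \(P_{2} \in VP[t_{2}, X, O_{2} \cup \hat{O}]\) with certificate \(H_{2}\), then at \(t_{1}\) with residual \(R_{2} = H \cup H_{2}\) to obtain \(P_{1} \in VP[t_{1}, X, O_{1} \cup \hat{O}]\) with certificate \(H_{1}\); here \autoref{lem:completion_odd_subset_des} pins down the odd-degree set of \(H\), and \autoref{obs:residual_completion_vertexcover_des} ensures edges of \(G\) outside \(G_{t}\) are covered by \(V(H)\). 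Setting \(\hat{H} = H_{1} \cup H_{2}\), I would verify that \(\hat{H}\) is a witness for \(P_{1} \sqcup P_{2}\) being valid for \((t,X,O)\)---the vertex-cover condition again following because \(V(H_{1})\) covers \(G_{t_{1}}\) and \(V(H_{2})\) covers \(G_{t_{2}}\)---and that \(H \cup \hat{H}\) is a dominating Eulerian subgraph of \(G\), so \(((t,X,O), P_{1} \sqcup P_{2})\) completes \(H\); since the algorithm adds \(P_{1} \sqcup P_{2}\) to \calA, completeness holds. The one place needing care---the main obstacle---is precisely this vertex-cover bookkeeping that replaces terminal-containment: distributing the covering responsibility between the two child subgraphs without double-counting, using the disjointness of their edge sets; the connectivity and parity arguments carry over unchanged from \autoref{lem:join_node_ok}.
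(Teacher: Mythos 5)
Your proposal is correct and follows essentially the same route as the paper's own proof: the same union/split of witnesses at the join node, the same choice of $O_{1}, O_{2}, \hat{O}$ from the odd-degree sets of the two halves, the same two sequential applications of the inductive hypothesis with residual subgraphs $R_{1} = H \cup G_{1}$ and $R_{2} = H \cup H_{2}$, and the same use of \autoref{lem:partition_join_connectivity} and \autoref{lem:completion_odd_subset_des}. Your explicit handling of the vertex-cover condition (each edge of $G_{t}$ lies in exactly one child subgraph, and each $V(G_{i})$ inherits the covering of $G_{t_{i}}$ from $V(G'_{t})$) is exactly the bookkeeping the paper leaves as "straightforward to verify," so there is nothing to add.
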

\begin{proof}
  Let \(t_{1}, t_{2}\) be the children of \(t\). Then
  \(X_{t} = X_{t_{1}} = X_{t_{2}}\). Note that
  \(V(G_{t}) = V(G_{t_{1}}) \cup V(G_{t_{2}}) \) and
  \(E(G_{t}) = E(G_{t_{1}}) \cup E(G_{t_{2}}) \) hold, and so graph \(G_{t}\) is
  the union of graphs \(G_{t_{1}}\) and \(G_{t_{2}}\). Further, since each edge
  in the graph is introduced at exactly one bag in \TT we get that
  \(E(G_{t_{1}}) \cap E(G_{t_{2}}) = \emptyset\) holds. Moreover,
  \(V(G_{t_{1}}) \cap V(G_{t_{2}}) = X_{t}\) holds as well. The algorithm
  initializes \calA to the empty set. For each way of dividing set \(O\) into
  two disjoint subsets \(O_{1},O_{2}\) (one of which could be empty) and for
  each subset \(\hat{O}\) (which could also be empty) of the set
  \(X \setminus O\), the algorithm picks a number of pairs \((P_{1}, P_{2})\) of
  partitions and adds their joins \(P_{1} \sqcup P_{2}\) to the set \calA. We
  first show that the partition \(P_{1} \sqcup P_{2}\) is valid for the
  combination \((t, X, O)\), for each choice of pairs \((P_{1}, P_{2})\) made by
  the algorithm.

  So let
  \(P_{1} \in VP[t_{1}, X, O_{1} \cup \hat{O}], P_{2} \in VP[t_{2}, X, O_{2}
  \cup \hat{O}]\). By the inductive hypothesis we get that \(P_{1}\) is valid
  for the combination \((t_{1}, X, O_{1} \cup \hat{O})\) and \(P_{2}\) is valid
  for the combination \((t_{2}, X, O_{2} \cup \hat{O})\). So there exist
  subgraphs \(G'_{t_{1}} = (V'_{t_{1}},E'_{t_{1}})\) of \(G_{t_{1}}\) and
  \(G'_{t_{2}} = (V'_{t_{2}},E'_{t_{2}})\) of \(G_{t_{2}}\) such that
    \begin{enumerate}
    \item \(X_{t} \cap V'_{t_{1}} = X = X_{t} \cap V'_{t_{2}}\).
    \item The vertex set of each connected component of \(G'_{t_{1}}\) and of
      \(G'_{t_{2}}\) has a non-empty intersection with set \(X\). Moreover,
      \(P_{1}\) is the partition of \(X\) defined by the subgraph \(G'_{t_{1}}\)
      and \(P_{2}\) is the partition of \(X\) defined by the subgraph
      \(G'_{t_{2}}\).
    \item Both \(v^{\star} \in V'_{t_{1}}\) and \(v^{\star} \in V'_{t_{2}}\)
      hold. Further, \(V'_{t_{1}}\) is a vertex cover of graph \(G_{t_{1}}\) and
      \(V'_{t_{2}}\) is a vertex cover of graph \(G_{t_{2}}\).
    \item The set of odd-degree vertices in \(G'_{t_{1}}\) is exactly the set
      \(O_{1} \cup \hat{O}\) and the set of odd-degree vertices in
      \(G'_{t_{2}}\) is exactly the set \(O_{2} \cup \hat{O}\).
    \end{enumerate}
    Let \(G'_{t} = G'_{t_{1}} \cup G'_{t_{2}}\). Then \(G'_{t}\) is a subgraph
    of \(G_{t}\), and
    \begin{enumerate}
    \item Since \(X_{t} \cap V'_{t_{1}} = X = X_{t} \cap V'_{t_{2}}\) holds we
      have that \(X_{t} \cap V(G'_{t})=X\) holds as well.
    \item The vertex set of each connected component of \(G'_{t}\) has a
      non-empty intersection with set \(X\). Moreover, from
      \autoref{lem:partition_join_connectivity} we get that
      \(P_{1} \sqcup P_{2}\) is the partition of \(X\) defined by the subgraph
      \(G'_{t}\).
    \item \(v^{\star} \in V(G'_{t})\) holds, and \(V(G'_{t})\) is a vertex cover
    of graph \(G_{t}\).
    \item Since \(E(G_{t_{1}}) \cap E(G_{t_{2}}) = \emptyset\) holds we get that
      the degree of any vertex \(v\) in graph \({G'_{t}}\) is the sum of its
      degrees in the two graphs \(G'_{t_{1}}\) and \(G'_{t_{2}}\). Since (i) the
      set of odd-degree vertices in graph \(G'_{t_{1}}\) is exactly the set
      \(O_{1} \cup \hat{O}\), (ii) the set of odd-degree vertices in graph
      \(G'_{t_{2}}\) is exactly the set \(O_{2} \cup \hat{O}\), and (iii) \(O\)
      is the disjoint union of sets \(O_{1}\) and \(O_{2}\), we get that the set
      of odd-degree vertices in graph \(G'_{t}\) is exactly the set \(O\).
    \end{enumerate}
    Thus graph \(G'_{t}\) is a witness for partition \(P_{1} \sqcup P_{2}\)
    being valid for the combination \((t, X, O)\), and so partition
    \(P_{1} \sqcup P_{2} \in \calA\) is valid for the combination \((t, X, O)\).
    This proves that collection \calA satisfies the soundness criterion.

    We now argue that the set \calA satisfies the completeness criterion. So let
    \(H\) be a residual subgraph with respect to \(t\) with
    \(V(H) \cap X_{t} = X\), for which there exists a partition
    \(P=\{X^{1},X^{2},\ldots X^{p}\}\) of \(X\) such that \(((t, X, O), P)\)
    completes \(H\). We need to show that the set \calA computed by the
    algorithm contains some partition \(Q\) of \(X\) such that
    \(((t, X, O), Q)\) completes \(H\). Observe that there exists a subgraph
    \(G'_{t}\) of \(G_{t}\)---a witness for \(((t,X,O), P)\)---such that the
    following hold:
    \begin{enumerate}
    \item \(X_{t} \cap V(G'_{t})=X\).
    \item \(G'_{t}\) has exactly \(p\) connected components
      \(C_{1}, C_{2}, \dotsc, C_{p}\) and for each
      \(i \in \{1, 2, \dotsc, p\}$, $X^{i} \subseteq V(C_{i})\) holds.
    \item \(v^{\star} \in V(G'_{t})\) holds, and \(V(G'_{t})\) is a vertex cover
    of graph \(G_{t}\).
    \item The set of odd-degree vertices in \(G'_{t}\) is exactly the set \(O\).
    \item The graph \(G'_{t} \cup H\) is a dominating Eulerian subgraph of
      \(G\).
    \end{enumerate}
    Let \(G_{1} = (V(G'_{t}) \cap V(G_{t_{1}}), E(G'_{t}) \cap E(G_{t_{1}}))\)
    and \(G_{2} = (V(G'_{t}) \cap V(G_{t_{2}}), E(G'_{t}) \cap E(G_{t_{2}}))\)
    be, respectively, the subgraphs of \(G'_{t}\) defined by the subtrees of \TT
    rooted at nodes \(t_{1}\) and \(t_{2}\), respectively. Then
    \(G'_{t} = G_{1} \cup G_{2}\),
    \(V(G_{1}) \cap X_{t_{1}} = V(G_{2}) \cap X_{t_{2}}= V(G_{1}) \cap V(G_{2})
    = X\), and \(E(G_{1}) \cap E(G_{2}) = \emptyset\) all hold. Let
    \(\tilde{O_{1}}, \tilde{O_{2}}\) be the sets of vertices of odd degree in
    graphs \(G_{1}, G_{2}\), respectively. Since graph
    \((H \cup G_{1}) \cup G_{2}\) is Eulerian and since
    \(V(H \cup G_{1}) \cap V(G_{2}) = X\) holds, we get that (i)
    \(\tilde{O_{2}} \subseteq X\) holds, and (ii) every connected component of
    graph \(G_{2}\) contains at least one vertex from set \(X\). By symmetric
    reasoning we get that (i) \(\tilde{O_{1}} \subseteq X\) holds, and (ii)
    every connected component of graph \(G_{1}\) contains at least one vertex
    from set \(X\). Let \(O_{2} = \tilde{O_{2}} \cap O\) and
    \(\hat{O} = \tilde{O_{2}} \setminus O\). Then
    \(\tilde{O_{2}} = O_{2} \cup \hat{O}\). Define
    \(O_{1} = O \setminus O_{2}\). Since (i) the set of odd-degree vertices in
    graph \(G'_{t}\) is exactly the set \(O\), and (ii)
    \(E(G_{1}) \cap E(G_{2}) = \emptyset\) holds, we get that the set of
    odd-degree vertices in graph \(G_{1}\) is
    \(\tilde{O_{1}} =(O \setminus O_{2}) \cup \hat{O} = O_{1} \cup \hat{O}\).

    Let \(Q_{2}\) be the partition of set \(X\) defined by graph \(G_{2}\), and
    let \(R_{1} = H \cup G_{1}\). It is straightforward to verify the following:
    (i) \(R_{1}\) is a residual subgraph with respect to node \(t_{2}\) with
    \(V(R_{1}) \cap X_{t_{2}} = X\); (ii) graph \(G_{2}\) is a witness for
    partition \(Q_{2}\) being valid for the combination
    \((t_{2}, X, \tilde{O_{2}})\), and (iii) \(G_{2}\) is a certificate for
    \(((t_{2}, X, \tilde{O_{2}}), Q_{2})\) completing the residual graph
    \(R_{1}\). By the inductive assumption there is a partition \(P_{2}\) of
    \(X\) in the set \(VP[t_{2}, X, O_{2} \cup \hat{O}]\) such that
    \(((t_{2}, X, O_{2} \cup \hat{O}), P_{2})\) completes the residual graph
    \(R_{1}\). Let \(H_{2}\) be a certificate for
    \(((t_{2}, X, O_{2} \cup \hat{O}), P_{2})\) completing \(R_{1}\). Note that
    \(H_{2}\) is a subgraph of \(G_{t_{2}}\), and that
    \(R_{1} \cup H_{2} = (H \cup G_{1}) \cup H_{2}\) is a dominating Eulerian
    subgraph of \(G\).

    Let \(Q_{1}\) be the partition of set \(X\) defined by graph \(G_{1}\), and
    let \(R_{2} = H \cup H_{2}\). From \autoref{lem:completion_odd_subset_des}
    we get that the set of odd-degree vertices of the residual subgraph \(H\) is
    exactly the set \(O\), and from
    Definitions~\ref{def:valid_partitions_witnesses_des}
    and~\ref{def:completion_des} we get that the set of odd-degree vertices of
    graph \(H_{2}\) is the set \(O_{2} \cup \hat{O}\). From the definition of a
    residual subgraph we get that \(E(H) \cap E(H_{2}) = \emptyset\) holds. It
    follows that the set of odd-degree vertices of graph \(R_{2}\) is
    \((O \setminus O_{2}) \cup \hat{O} = O_{1} \cup \hat{O}\), which is exactly
    the set of odd-degree vertices of graph \(G_{1}\).

    It is now straightforward to verify the following: (i) \(R_{2}\) is a
    residual subgraph with respect to node \(t_{1}\) with
    \(V(R_{2}) \cap X_{t_{1}} = X\); (ii) graph \(G_{1}\) is a witness for
    partition \(Q_{1}\) being valid for the combination
    \((t_{1}, X, O_{1} \cup \hat{O})\), and (iii) \(G_{1}\) is a certificate for
    \(((t_{1}, X, O_{1} \cup \hat{O}), Q_{1})\) completing the residual graph
    \(R_{2}\). By the inductive assumption there is a partition \(P_{1}\) of
    \(X\) in the set \(VP[t_{1}, X, O_{1} \cup \hat{O}]\) such that
    \(((t_{1}, X, O_{1} \cup \hat{O}), P_{1})\) completes the residual graph
    \(R_{2}\). Let \(H_{1}\) be a certificate for
    \(((t_{1}, X, O_{1} \cup \hat{O}), P_{1})\) completing \(R_{2}\). Note that
    \(H_{1}\) is a subgraph of \(G_{t_{1}}\), and that
    \(R_{2} \cup H_{1} = (H \cup H_{2}) \cup H_{1}\) is a dominating Eulerian
    subgraph of \(G\).

    Let \(\hat{H} = H_{1} \cup H_{2}\). Then \(\hat{H}\) is a subgraph of
    \(G_{t}\), and
    \begin{enumerate}
    \item Since \(X_{t} \cap V(H_{1}) = X = X_{t} \cap V(H_{2})\) holds we have
      that \(X_{t} \cap V(\hat{H})=X\) holds as well.
    \item The vertex set of each connected component of \(\hat{H}\) has a
      non-empty intersection with set \(X\). Moreover, from
      \autoref{lem:partition_join_connectivity} we get that
      \(P_{1} \sqcup P_{2}\) is the partition of \(X\) defined by the subgraph
      \(\hat{H}\).
    \item \(v^{\star} \in V(\hat{H})\) holds, and \(V(\hat{H})\) is a vertex
      cover of graph \(G_{t}\).
    \item Since \(E(G_{t_{1}}) \cap E(G_{t_{2}}) = \emptyset\) holds we get that
      the degree of any vertex \(v\) in graph \(\hat{H}\) is the sum of its
      degrees in the two graphs \(H_{1}\) and \(H_{2}\). Since (i) the set of
      odd-degree vertices in graph \(H_{1}\) is exactly the set
      \(O_{1} \cup \hat{O}\), (ii) the set of odd-degree vertices in graph
      \(H_{2}\) is exactly the set \(O_{2} \cup \hat{O}\), and (iii) \(O\) is
      the disjoint union of sets \(O_{1}\) and \(O_{2}\), we get that the set of
      odd-degree vertices in graph \(\hat{H}\) is exactly the set \(O\).
    \end{enumerate}
    Graph \(\hat{H}\) is thus a witness for partition \(P_{1} \sqcup P_{2}\) of
    \(X\) being valid for the combination \((t, X, O)\), and \(H \cup \hat{H}\)
    is a dominating Eulerian subgraph of \(G\). Thus
    \(((t, X, O), P_{1} \sqcup P_{2})\) completes \(H\). Since the algorithm
    adds partition \(P_{1} \sqcup P_{2}\) to the set \(\calA\) we get that \calA
    satisfies the completeness criterion. \qedhere
\end{proof}

We can now prove
\begingroup
\def\thetheorem{\ref{thm:DES_is_FPT}}
\begin{theorem}
  There is an algorithm which solves an instance \((G,\mathcal{T},tw)\) of
  \DES in \(\OhStar{(1 + 2^{(\omega + 3)})^{tw}}\) time.
\end{theorem}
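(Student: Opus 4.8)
The plan is to mirror the proof of \autoref{thm:ESS_is_FPT} essentially verbatim, since all the node-processing lemmas for \DES have already been established. First I would modify \TT into the ``nearly-nice'' tree decomposition described at the start of this appendix: pick an edge $xy$ of $G$, set $\vstar$ to one of its endpoints (safe because every dominating Eulerian subgraph must contain at least one endpoint of every edge), and add $\vstar$ to every bag so that the leaf and root bags equal $\{\vstar\}$. Then I would run the DP described above, and return \yes precisely when the partition $\{\{\vstar\}\}$ appears in the computed table entry $VP[r, \{\vstar\}, \emptyset]$, and \no otherwise.

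For correctness I would invoke \autoref{lem:completion_at_root_des}, which tells us that $(G,\TT,tw)$ is a \yes instance of \DES if and only if $((r, \{\vstar\}, \emptyset), \{\{\vstar\}\})$ completes the trivial residual subgraph $H = (\{\vstar\}, \emptyset)$. I would then argue by induction on the structure of \TT, using \autoref{obs:repset_computation_preserves_correctness_des} together with the five node lemmas~\ref{lem:leaf_node_ok_des},~\ref{lem:introduce_vertex_node_ok_des},~\ref{lem:introduce_edge_node_ok_des},~\ref{lem:forget_node_ok_des}, and~\ref{lem:join_node_ok_des}, that $VP[r, \{\vstar\}, \emptyset]$ satisfies the Correctness Criteria. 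Since $\{\{\vstar\}\}$ is the \emph{unique} partition of $\{\vstar\}$, soundness and completeness at the root together force that this partition is present in the final table entry if and only if the instance is a \yes instance.

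For the running time I would use that representative-subset computation is the last step at every non-leaf node, so by \autoref{thm:computing_representative_subsets} each child table $VP[t', X', \cdot]$ has at most $2^{(|X'|-1)}$ partitions, and that the partition operations (adding or merging blocks, eliding, and joins via \autoref{fac:partition_join_is_union_of_graphs}) cost only polynomial overhead. The introduce-vertex, introduce-edge, and forget nodes then each cost $\OhStar{(1 + 2^{(\omega+1)})^{(tw+1)}} = \OhStar{(1 + 2\cdot 2^{\omega})^{tw}}$, while the join node, which guesses the three disjoint sets $\hat{O}, O_{1}, O_{2} \subseteq X$ in time $4^{|X|}$ and forms joins of pairs drawn from two bounded tables, costs
\begin{align*}
  \sum_{|X| = 0}^{tw + 1}\binom{tw + 1}{|X|} 4^{|X|} \OhStar{2^{(2|X| - 2)} \cdot 2^{(\omega - 1)|X|}}
  &= \OhStar{(1 + 2^{(\omega + 3)})^{(tw + 1)}}\\
  &= \OhStar{(1 + 2^{(\omega + 3)})^{tw}}.
\end{align*}

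The main obstacle, to the extent there is one, is not conceptual but bookkeeping: I must make sure that the \DES-specific third validity condition (that $V(G'_t)$ is a vertex cover of $G_t$ with $\vstar \in V(G'_t)$) is threaded consistently through the induction, and in particular that \autoref{obs:residual_completion_vertexcover_des} is what guarantees the ESS-style connectivity argument still yields a vertex cover of all of $G$ at the root. This is exactly where the \DES proof diverges from \ESS, and verifying it at the join and introduce-edge nodes is the most delicate part; fortunately the relevant node lemmas have already discharged this, so the theorem follows by assembling them as above.
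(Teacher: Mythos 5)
Your proposal matches the paper's proof of this theorem essentially verbatim: the same ``nearly-nice'' modification of \TT, the same root-level characterization via \autoref{lem:completion_at_root_des}, the same structural induction through \autoref{obs:repset_computation_preserves_correctness_des} and the five node lemmas, and the same running-time accounting with the join node dominating at \(\OhStar{(1 + 2^{(\omega + 3)})^{tw}}\). The only nit is that fixing \(\vstar\) as \emph{one} arbitrary endpoint of \(xy\) is not by itself safe --- the algorithm should try both choices (at most doubling the running time) --- but the paper glosses over this in exactly the same ``assume a correct choice was made'' fashion, so your argument is at the same level of rigor as the original.
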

\addtocounter{theorem}{-1}
\endgroup
\begin{proof}
  We first modify \TT to make it a ``nearly-nice'' tree decomposition rooted at
  \(r\) as described at the start of this section. We then execute the dynamic
  programming steps described above on \TT. We return \yes if the element
  \(\{\{v^{\star}\}\}\) is present in the set
  \(VP[r,X = \{v^{\star}\}, O = \emptyset]\) computed by the DP, and \no
  otherwise.

  From \autoref{lem:completion_at_root_des} we know that \((G, \TT, tw)\) is
  a \yes instance of \DES if and only if the combination
  \(((r,X = \{v^{\star}\}, O = \emptyset), P = \{\{v^{\star}\}\})\) completes
  the residual graph \(H = (\{v^{\star}\}, \emptyset)\). By induction on the
  structure of the tree decomposition \TT and using
  \autoref{obs:repset_computation_preserves_correctness_des} and
  Lemmas~\ref{lem:leaf_node_ok_des},~\ref{lem:introduce_vertex_node_ok_des},~\ref{lem:introduce_edge_node_ok_des},~\ref{lem:forget_node_ok_des},
  and~\ref{lem:join_node_ok_des} we get that the set
  \(VP[r,X = \{v^{\star}\}, O = \emptyset]\) computed by the algorithm satisfies
  the correctness criteria. And since \(\{\{v^{\star}\}\}\) is the unique
  partition of set \(\{v^{\star}\}\) we get that the set
  \(VP[r,X = \{v^{\star}\}, O = \emptyset]\) computed by the algorithm will
  contain the partition \(\{\{v^{\star}\}\}\) if and only if \((G, \TT, tw)\)
  is a \yes instance of \DES.

  Note that we compute representative subsets as the last step in the
  computation at each bag. So we get, while performing computations at an
  intermediate node \(t\), that the number of partitions in any set
  \(VP[t', X', \cdot]\) for any \emph{child} node \(t'\) of \(t\) and subset
  \(X'\) of \(X_{t'}\) is at most \(2^{(|X'| - 1)}\) (See
  \autoref{thm:computing_representative_subsets}). We use
  \autoref{fac:partition_join_is_union_of_graphs} to perform various operations
  on one or two partitions---such as adding a block to a partition, merging two
  blocks of a partition, eliding an element from a partition, or computing the
  join of two partitions---in polynomial time.

  The computation at each \textbf{leaf node} of \TT can be done in constant
  time.
  
  For an \textbf{introduce vertex node} or an \textbf{introduce edge node} or a
  \textbf{forget node} \(t\) and a fixed pair of subsets
  \(X \subseteq X_{t}, O \subseteq X\), the computation of set \calA
  involves---in the worst case---spending polynomial time for each partition
  \(P'\) in some set \(VP[t', X' \subseteq X, \cdot]\). Since the number of
  partitions in this latter set is at most \(2^{(|X'| - 1)} \leq 2^{(|X| - 1)}\)
  we get that the set \calA can be computed in \(\OhStar{2^{(|X| - 1)}}\) time,
  and that the set \calB can be computed---see
  \autoref{thm:computing_representative_subsets}---in
  \(\OhStar{2^{(|X| - 1)} \cdot 2^{(\omega - 1)\cdot |X|}} = \OhStar{2^{\omega
      \cdot |X|}}\) time. Since the number of ways of choosing the subset
  \(O \subseteq X\) is \(2^{|X|}\) the entire computation at an introduce
  vertex, introduce edge, or forget node \(t\) can be done in time
  \begin{align*}
    \sum_{|X| = 0}^{|X_{t}|}\binom{|X_{t}|}{|X|} 2^{|X|} \OhStar{2^{\omega \cdot
    |X|}} &= \OhStar{\sum_{|X| = 0}^{tw + 1}\binom{tw + 1}{|X|} 2^{(\omega + 1)|X|}}\\
          &= \OhStar{(1 + 2^{(\omega + 1)})^{(tw + 1)}}\\
          &= \OhStar{(1 + 2\cdot 2^{\omega })^{tw}}. 
  \end{align*}

  For a \textbf{join node} \(t\) and a fixed subset \(X \subseteq X_{t}\) we
  guess three pairwise disjoint subsets \(\hat{O}, O_{1}, O_{2}\) of \(X\) in
  time \(4^{|X|}\). For each guess we go over all partitions
  \(P_{1} \in VP[t_{1}, X, O_{1} \cup \hat{O}], P_{2} \in VP[t_{2}, X, O_{2}
  \cup \hat{O}]\) and add their join \(P_{1} \sqcup P_{2}\) to the set \calA.
  Since the number of partitions in each of the two sets
  \(VP[t_{1}, X, O_{1} \cup \hat{O}], VP[t_{2}, X, O_{2} \cup \hat{O}]\) is at
  most \(2^{(|X| - 1)}\), the size of set \calA is at most \(2^{(2|X| - 2)}\).
  The entire computation at the join node can be done in time

  \begin{align*}
    \sum_{|X| = 0}^{|X_{t}|}\binom{|X_{t}|}{|X|} 4^{|X|} (2^{(2|X| - 2)} + \OhStar{2^{(2|X| - 2)} \cdot 2^{(\omega - 1)\cdot |X|}})
    &= \OhStar{\sum_{|X| = 0}^{tw + 1}\binom{tw + 1}{|X|} 2^{4|X| - 2 + \omega{}|X| - |X|}}\\
    &= \OhStar{\sum_{|X| = 0}^{tw + 1}\binom{tw + 1}{|X|} 2^{(\omega + 3)|X|}}\\
          &= \OhStar{(1 + 2^{(\omega + 3)})^{(tw + 1)}}\\
          &= \OhStar{(1 + 2^{(\omega + 3)})^{tw}}.
  \end{align*}

  The entire DP over \TT can thus be done in \(\OhStar{(1 + 2^{(\omega +
      3)})^{tw}}\) time.
\end{proof}


\end{document}